\newcommand{\tp}[1]{\left(#1\right)}
\newcommand{\sqtp}[1]{\left[#1\right]}
\newcommand{\floor}[1]{{\lfloor #1 \rfloor}}
\newcommand{\ceil}[1]{{\lceil #1 \rceil}}
\newcommand{\ul}{\underline}
\newcommand{\ol}{\overline}
\newcommand{\eps}{\varepsilon}
\newcommand{\set}[1]{\left\{#1\right\}}
\newcommand{\abs}[1]{{\left| #1 \right|}}
\theoremstyle{plain}
\newtheorem{theorem}{Theorem}
\newtheorem{lemma}[theorem]{Lemma}
\theoremstyle{definition}
\def\*#1{\mathbf{#1}}
\def\-#1{\mathrm{#1}}
\def\+#1{\mathcal{#1}}
\def\=#1{\mathbb{#1}}
\newcommand{\setG}{\+G^{\-{bip}}_{n,\Delta}}
\newcommand{\clusterOfColoring}[1][]{ \ifthenelse{\isempty{#1}}
  {\+C_X(G)}
  {\+C_{#1}(G)} }
\newcommand{\clusterOfIS}[1][]{ \ifthenelse{\isempty{#1}}
  {\+I_{\+X}(G)}
  {\+I_{#1}(G)}
}
\newcommand{\cmid}{:\,}
\newcommand{\lowerBoundOfDeltaForIS}{53}
\newcommand{\conf}[1][]{ \ifthenelse{\isempty{#1}}
  {\omega_{\ol \gamma}}
  {\omega_{\ol #1}}
}
\newcommand{\aabs}[1]{{\left| \ol{#1} \right|}}
\newcommand{\tOfIS}{2.9}
\newcommand{\valueOfZeta}{1.28}
\newcommand{\sOfColorings}{s}
\newcommand{\graphClassAtHighFugacity}{{\+G^{\Delta}_{\alpha,\beta}}}
\newcommand{\graphClassAtLowFugacity}{{\+G^{\Delta}_{\alpha,\alpha,\beta}}}
\newcommand{\graphClassOfColorings}{{\+G^{\Delta}_{q,\sOfColorings,\alpha,\beta}}}
\newcommand{\valueOfS}{\frac{1}{18\ol q^5}}
\newcommand{\valueOfAlphaForColorings}{\frac{1}{\Delta^{1/2}}}
\newcommand{\upperBoundOfAlphaForColorings}{\frac{1}{10\ol q^5}}
\newcommand{\valueOfBetaForColorings}{\frac{\Delta^{1/2}}{3}}
\newcommand{\lowerBoundOfDeltaForColorings}{100\ol q^{10}}
\newcommand{\uniquenessThreshold}{\lambda_c(\Delta)}
\newcommand{\valueOfUniquenessThreshold}{\frac{(\Delta-1)^{\Delta-1}}{(\Delta-2)^\Delta}}
\newcommand{\lowerBoundOfLambdaInOmegaForm}{\widetilde\Omega\tp{\frac{1}{\Delta}}}
\newcommand{\valueOfLambdaL}{\frac{\tp{\ln \Delta}^4}{\Delta}}
\newcommand{\valueOfAlphaAtLowFugacity}{\frac{(\ln \Delta)^2}{\Delta}}
\newcommand{\valueOfBetaAtLowFugacity}{\frac{1}{3\alpha}}
\begin{document}

\title{Counting independent sets and colorings on random regular bipartite graphs}

\author{Chao Liao}
\address[Chao Liao]{CSE, Shanghai
Jiao Tong University, No.800 Dongchuan Road, Minhang District, Shanghai, China.}
\email{chao.liao.95@gmail.com}

\author{Jiabao Lin}
\address[Jiabao Lin]{ITCS, Shanghai University of Finance and Economics, No.100
Wudong Road, Yangpu District, Shanghai, China.}
\email{lin.jiabao@mail.shufe.edu.cn}

\author{Pinyan Lu}
\address[Pinyan Lu]{ITCS, Shanghai University of Finance and Economics, No.100
Wudong Road, Yangpu District, Shanghai, China.}
\email{lu.pinyan@mail.shufe.edu.cn}

\author{Zhenyu Mao}
\address[Zhenyu Mao]{ITCS, Shanghai University of Finance and Economics, No.100
Wudong Road, Yangpu District, Shanghai, China.}
\email{zhenyu.mao.17@gmail.com}

\begin{abstract}
%The complexity of approximating the number of independent sets on bipartite graphs is a major open problem in the field of approximate counting.
We give a fully polynomial-time approximation scheme (FPTAS) to count the number of independent sets on almost every $\Delta$-regular bipartite graph if $\Delta\ge \lowerBoundOfDeltaForIS$.
In the weighted case, for all sufficiently large integers $\Delta$ and weight parameters $\lambda=\lowerBoundOfLambdaInOmegaForm$, we also obtain an FPTAS on almost every $\Delta$-regular bipartite graph.
Our technique is based on the recent work of Jenssen, Keevash and Perkins (SODA, 2019) and we also apply it to confirm an open question raised there:
For all $q\ge 3$ and sufficiently large integers $\Delta=\Delta(q)$, there is an FPTAS to count the number of $q$-colorings on almost every $\Delta$-regular bipartite graph.
\end{abstract}
\maketitle

\section{Introduction}

Counting independent sets on bipartite graphs (\#BIS) plays a significant role in the field of approximate counting.
A wide range of counting problems in the study of counting CSPs \cite{DGJ10,BDGJM13,GGY17} and spin systems \cite{GJ12,GJ15,GSVY16,CGGGJSV16}, have been proved to be \#BIS-equivalent or \#BIS-hard under approximation-preserving reductions (AP-reductions) \cite{DGGJ04}.
Despite its great importance, it is still unknown whether \#BIS admits a fully polynomial-time approximation scheme (FPTAS) or it is as hard as counting the number of satisfying assignments of Boolean formulas (\#SAT) under AP-reduction.

%\#BIS plays a similar role to that of the Unique Games \cite{Khot02a,ABS15} in optimization problems.

In this paper, we consider the problem of approximating \#BIS (and its weighted version) on random regular biparite graphs.
Random regular bipartite graphs frequently appear in the analysis of hardness of counting independent sets \cite{MWW09,DFJ02,Sly10,SS12,GSVY16}.
Therefore, understanding the complexity of \#BIS on such graphs is potentially useful for gaining insights into the general case.
Let $Z(G,\lambda) = \sum_{I\in \+I(G)}\lambda^{\abs I}$ where $\+I(G)$ is the set of all independent sets of a graph $G$ and $\lambda > 0$ is the weight parameter.
This function also arises in the study of the hardcore model of lattice gas systems in statistical mechanics.
Hence we usually call $Z(G,\lambda)$ the partition function of the hardcore model with fugacity $\lambda$.

In the case where input graphs are allowed to be nonbipartite, the approximability for counting the number of independent sets (\#IS) is well understood.
Exploiting the correlation decay properties of $Z(G,\lambda)$, Weitz \cite{Weitz06} presented an FPTAS for graphs of maximum degree $\Delta$ at fugacity $\lambda < \uniquenessThreshold=\valueOfUniquenessThreshold$.
On the hardness side, Sly \cite{Sly10} proved that, unless $\text{NP}=\text{RP}$, there is a constant $\eps=\eps(\Delta)$ that no polynomial-time approximation scheme exists for $Z(G,\lambda)$ on graphs of maximum degree $\Delta$ at fugacity $\lambda_c(\Delta) < \lambda < \lambda_c(\Delta) + \eps(\Delta)$.
Later, this result was improved at any fugacity $\lambda > \uniquenessThreshold$ \cite{SS12,GSV16}.
In particular, these results state that if $\Delta\le 5$, there is an FPTAS for \#IS on graphs of maximum degree $\Delta$, otherwise there is no efficient approximation algorithm unless $\text{NP}=\text{RP}$.

The situation is different on bipartite graphs.
To the best of our knowledge, no NP-hardness result is known even on graphs with unbounded degree.
Surprisingly, Liu and Lu \cite{LL15} designed an FPTAS for \#BIS which only requires one side of the vertex partition to be of
maximum degree $\Delta\leq 5$.
On the other hand, it is \#BIS-hard to approximate $Z(G,\lambda)$ at fugacity $\lambda>\lambda_c(\Delta)$ on biparite graphs of maximum degree $\Delta\ge 3 $ \cite{CGGGJSV16}.
%\#BIS-hardness may be considered as a weak evidence for intractability (see \cite{DGGJ04,GJ12} for more discussions).

Recently, Helmuth, Perkins, and Regts \cite{HPR19} developed a new approach via the polymer model and gave efficient counting and sampling algorithms for the hardcore model at high fugacity on certain finite regions of the lattice $\=Z^d$ and on the torus $(\=Z/ n\=Z)^d$.
Their approach is based on a long line of work \cite{PS75, PS76, KP86, Bar16, BS16, PR17}.
Shortly after that, Jessen, Keevash, and Perkins \cite{JKP19} designed an FPTAS for the hardcore model at high fugacity on bipartite expander graphs of bounded degree. And they further extended the result to random $\Delta$-regular bipartite graphs with $\Delta\ge 3$ at fugacity $\lambda>(2e)^{250}$. This is the first efficient algorithm for the hardcore model at fugacity $\lambda>\lambda_c(\Delta)$ on random regular bipartite graphs. A natural question is, can we  design FPTAS for lower fugacity and in particular the problem \#BIS on random regular bipartite graphs?
Indeed, we obtain such results.
Let $\setG$ denote the set of all $\Delta$-regular bipartite graphs with $n$ vertices on both sides.

\begin{theorem}\label{thm:BIS}
   For $\Delta\ge 53$ and fugacity $\lambda\ge 1$, with high probability (tending to $1$ as $n\to \infty$) for a graph $G$ chosen uniformly at random from $\setG$, there is an FPTAS for the partition function $Z(G,\lambda)$.
\end{theorem}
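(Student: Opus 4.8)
The plan is to run the polymer‑model / cluster‑expansion machinery of Jenssen, Keevash and Perkins \cite{JKP19}, but to replace every crude estimate there by a sharp one so that the argument survives down to $\lambda=1$. Write the bipartition of $G$ as $\+O\cup\+E$ with $\abs{\+O}=\abs{\+E}=n$; the two maximum independent sets $\+O$ and $\+E$ are the two ``ground states''. Fix a truncation scale $\alpha=\alpha(\Delta)$ and let $\hat Z_{\+E}$ (resp.\ $\hat Z_{\+O}$) be the contribution to $Z(G,\lambda)$ of the independent sets $I$ with $\abs{I\cap\+O}\le\alpha n$ (resp.\ $\abs{I\cap\+E}\le\alpha n$). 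Writing $S=I\cap\+O$, decomposing $S$ into its connected components in the ``square graph'' $H$ on $\+O$ (two vertices adjacent when they share a neighbour in $\+E$), and summing over the free choices of $I$ inside $\+E\setminus N(S)$, exhibits $\hat Z_{\+E}$ as a truncated polymer partition function
\[
  \hat Z_{\+E}=(1+\lambda)^n\sum_{\substack{\{\gamma_1,\dots,\gamma_k\}\ \text{pairwise compatible}\\ \sum_i\abs{\gamma_i}\le\alpha n}} \prod_{i=1}^{k} w(\gamma_i),\qquad w(\gamma)=\frac{\lambda^{\abs{\gamma}}}{(1+\lambda)^{\abs{N(\gamma)}}},
\]
where a polymer $\gamma$ is an $H$‑connected subset of $\+O$ and two polymers are compatible iff their neighbourhoods in $\+E$ are disjoint; symmetrically for $\hat Z_{\+O}$.

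First I would show that with probability $1-o(1)$ a uniformly random $G\in\setG$ simultaneously satisfies, via first‑moment bounds and a union bound in the permutation model: (i) \emph{sharp small‑set expansion}: every one‑sided set $S$ of size $s\le\alpha n$ has $\abs{N(S)}\ge\beta_s\, s$, where $\beta_s\approx(1-e^{-\Delta s/n})/(s/n)$ is the expansion factor predicted by the first moment, so in particular $\beta_s\approx\Delta$ for $s\ll n/\Delta$; and (ii) \emph{saturation}: every one‑sided set of size $>\alpha n$ has at least $(1-\alpha)n$ neighbours. Property (ii) holds for $\alpha$ of order $(\ln\Delta)/\Delta$, since in a random $\Delta$‑regular bipartite graph every linear‑sized one‑sided set sees almost all of the other side.

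Given (ii), no independent set can have more than $\alpha n$ vertices on each side: if $\abs{I\cap\+O}>\alpha n$ then $I\cap\+E\subseteq\+E\setminus N(I\cap\+O)$ has size less than $\alpha n$. Hence every independent set is counted by $\hat Z_{\+E}$ or $\hat Z_{\+O}$, and $Z(G,\lambda)=\hat Z_{\+E}+\hat Z_{\+O}-Z_{\mathrm{ov}}$ where $Z_{\mathrm{ov}}$ sums over the $I$ with at most $\alpha n$ vertices on \emph{both} sides; such $I$ have $\abs I\le 2\alpha n$ and number at most $2^{2nH(\alpha)}$ (with $H$ the binary entropy), so $Z_{\mathrm{ov}}\le 2^{2nH(\alpha)}\lambda^{2\alpha n}$, which is $e^{-\Omega(n)}\cdot Z(G,\lambda)$ once $\alpha$ is small (using $Z(G,\lambda)\ge(1+\lambda)^n$). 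It therefore suffices to approximate $\hat Z_{\+E}$ and $\hat Z_{\+O}$ each to relative error $\eps/2$. For that I would use (i): capping individual polymer sizes at $gn$ for a suitable small constant $g$ changes $\hat Z_{\+E}$ by only a factor $1+e^{-\Omega(n)}$ --- by (i) together with the standard bound $(e\Delta^2)^{\abs\gamma}$ on the number of $H$‑connected polymers of a given size through a fixed vertex, the total weight of polymers of size between $gn$ and $\alpha n$, and of configurations of total defect exceeding $\alpha n$, is exponentially small relative to the empty configuration --- and for the capped model one verifies a cluster‑expansion convergence criterion (Koteck\'y--Preiss, or the sharper Fern\'andez--Procacci inequality): the binding inequality is that $(1+\lambda)^{\beta_s}$ dominate a fixed polynomial in $\Delta$ uniformly over $s\le gn$, which holds for $\lambda\ge 1$ because $\beta_s=\Omega(\Delta)$ there once $\Delta$ is large enough. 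The cluster expansion of $\log\hat Z_{\+E}$ then converges absolutely; truncating it to clusters spanning $O(\log(n/\eps))$ vertices costs relative error $\le\eps/2$, the truncated sum has $n^{O(\log\Delta)}$ terms and is computable in time polynomial in $n$ and $1/\eps$ (equivalently, run the Barvinok--Patel--Regts polynomial‑interpolation algorithm on the zero‑free polymer partition function), and likewise for $\hat Z_{\+O}$; returning $\hat Z_{\+E}+\hat Z_{\+O}$ yields the FPTAS.

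The hard part is making (i) and (ii) coexist at $\lambda=1$. Property (ii) pushes the truncation scale $\alpha$ up to order $(\ln\Delta)/\Delta$, whereas the convergence criterion needs the expansion factor $\beta_s$ --- which at scale $s\approx\alpha n$ is only about $(1-e^{-\Delta\alpha})/\alpha\approx\Delta/\ln\Delta$, and degrades further once one insists on a bound valid for \emph{all} sets of that size --- to beat a $\Delta^2$‑sized ``entropy'' of polymers; at $\lambda=1$ these two requirements pull $\alpha$ in opposite directions, and keeping the admissible window for $(\alpha,\beta)$ nonempty is exactly what forces $\Delta$ to be a moderately large constant. Getting that constant down to $53$ needs the sharp first‑moment expansion estimates above --- including the correct dependence of the deviation probability on the number of sets of each size, which eats into $\beta_s$ --- together with the sharp form of the convergence criterion, rather than the lossy bounds that sufficed for the $\lambda>(2e)^{250}$ result of \cite{JKP19}.
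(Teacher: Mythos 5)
Your overall skeleton — two ground clusters $\+L$ and $\+R$, a polymer model on the square graph with weight $\lambda^{\abs{\gamma}}(1+\lambda)^{-\abs{N(\gamma)}}$, a Koteck\'y--Preiss verification, and Barvinok-style interpolation of the zero-free polymer partition function — is exactly the paper's route, and those parts of your argument are sound. The genuine gap is in your structural decomposition $Z=\hat Z_{\+E}+\hat Z_{\+O}-Z_{\mathrm{ov}}$, which hinges on the saturation property (ii). At $\lambda=1$ and $\Delta=53$ there is \emph{no} admissible truncation scale $\alpha$: the first-moment/union bound for (ii) requires roughly $2H(\alpha)<\Delta\alpha^2\log_2 e$, which for $\Delta=53$ forces $\alpha\gtrsim 0.12$ (your stated scale $(\ln\Delta)/\Delta\approx 0.075$ fails this), while the overlap bound $Z_{\mathrm{ov}}\le 2^{2H(\alpha)n}\lambda^{2\alpha n}=e^{-\Omega(n)}Z$ requires $2H(\alpha)+2\alpha<1$, i.e.\ $\alpha\lesssim 0.085$. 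These windows are disjoint, so the step ``hence every independent set is counted by $\hat Z_{\+E}$ or $\hat Z_{\+O}$'' cannot be established this way at the claimed degree bound; you flag the tension at the end but do not resolve it, and no amount of sharpening the first-moment estimate closes a gap that is already present at the level of the exponents.

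The paper's fix is to drop saturation entirely for $\lambda\ge 1$ and use only a single uniform expansion pair $\alpha=2.9/\Delta$, $\beta=\Delta/(2.9\cdot 1.28)$, so that $\alpha\beta\approx 0.78$: any independent set with at least $\alpha n$ vertices on both sides then satisfies $\abs{I\cap\+L},\abs{I\cap\+R}\le(1-\alpha\beta)n\approx 0.219n$, and an entropy count shows the total weight of such ``balanced'' sets is at most $n\,2^{H(0.219)n}(\lambda^2+\lambda)^{0.219n}$, which is exponentially small against $(1+\lambda)^n$ for all $\lambda\ge 1$. The decomposition is therefore only approximate ($Z_{\+L}+Z_{\+R}$ is a $C^{-n}$-relative approximation to $Z$, not exact), but it needs nothing beyond the expander property that Bassalygo's bound certifies at $\Delta\ge 53$. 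Your scale-dependent expansion factors $\beta_s$ and the Fern\'andez--Procacci refinement are likewise unnecessary: the same fixed $(\alpha,\beta)$, which gives $\beta=\Omega(\Delta)$ on all polymers (each of size $<\alpha n$ by construction), already satisfies Koteck\'y--Preiss with room to spare at $\Delta=53$. If you want to salvage your exact-decomposition route, it works only for larger $\Delta$ (this is essentially what the paper does in its low-fugacity section, with $\alpha=(\ln\Delta)^2/\Delta$ and ``$\Delta$ sufficiently large''); to reach $53$ you must replace step (ii) by an estimate of the above kind.
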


\begin{theorem}\label{thm:weightedBIS}
  For all sufficiently large integers $\Delta$ and fugacity $\lambda=\lowerBoundOfLambdaInOmegaForm$, with high probability (tending to $1$ as $n\to \infty$) for a graph $G$ chosen uniformly at random from $\setG$, there is an FPTAS for the partition function $Z(G,\lambda)$.
\end{theorem}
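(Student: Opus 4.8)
The plan is to run the polymer-model/cluster-expansion framework of Jenssen--Keevash--Perkins~\cite{JKP19} at the much lower fugacity $\lambda=\lowerBoundOfLambdaInOmegaForm$. Since $\lambda\ge 1$ is covered by \Cref{thm:BIS}, assume $\lambda\le 1$; the extreme point $\lambda=\valueOfLambdaL$ will be the hardest case, as all relevant polymer weights are decreasing in $\lambda$ here. Write $G=(L\cup R,E)$, $\abs L=\abs R=n$, and let $\+L$ be the graph on $L$ joining two vertices whenever they share a neighbour in $R$; distinct $\+L$-components of any $S\subseteq L$ have disjoint neighbourhoods in $R$. Summing $\lambda^{\abs I}$ over independent sets with $I\cap L=S$ gives $\lambda^{\abs S}(1+\lambda)^{n-\abs{N(S)}}$, and grouping $S$ into its components yields the exact identity
\[
  Z_R(G,\lambda)\ :=\!\!\sum_{\substack{I\in\+I(G),\ \text{each }\+L\text{-component}\\ \text{of }I\cap L\text{ has size}\le\alpha n}}\!\!\lambda^{\abs I}\ =\ (1+\lambda)^n\,\Xi_L,
\]
where $\Xi_L$ is the partition function of the polymer model with polymers the connected subsets $\gamma$ of $\+L$ of size $\le\alpha n$, weight $w(\gamma)=\lambda^{\abs\gamma}(1+\lambda)^{-\abs{N(\gamma)}}$, and $\gamma\sim\gamma'\iff N(\gamma)\cap N(\gamma')=\emptyset$; define $Z_L:=(1+\lambda)^n\Xi_R$ symmetrically. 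If $G\in\graphClassAtLowFugacity$ with $\alpha=\valueOfAlphaAtLowFugacity$, $\beta=\valueOfBetaAtLowFugacity$, then a component of size $>\alpha n$ on one side has (by the covering property of the class) a neighbourhood missing $\le\alpha n$ vertices on the other side, so no independent set carries a large component on both sides; hence $Z=Z_L+Z_R-\Sigma$, where $\Sigma$ sums $\lambda^{\abs I}$ over the doubly-counted $I$, namely those with all components small on both sides. Applying small-set expansion componentwise to such an $I$ gives $\abs{N(I\cap L)}\ge\beta\abs{I\cap L}$, so $\abs{I\cap L}\le n/\beta=3\alpha n$, likewise $\abs{I\cap R}\le 3\alpha n$; therefore $\Sigma\le\sum_{k\le 6\alpha n}\binom{2n}{k}\lambda^k\le\exp\!\big(O\big(\tfrac{(\ln\Delta)^2\ln\ln\Delta}{\Delta}n\big)\big)$, which is $e^{-\Omega(n)}(1+\lambda)^n\le e^{-\Omega(n)}Z$ for large $\Delta$. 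So it suffices to approximate $\Xi_L$ and $\Xi_R$.

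The crux is verifying the Koteck\'y--Preiss criterion~\cite{KP86} for these models on $G\in\graphClassAtLowFugacity$. As $\+L$ has maximum degree below $\Delta^2$, at most $(e\Delta^2)^k$ polymers of size $k$ pass through a fixed vertex, and any polymer incompatible with $\gamma$ contains one of fewer than $(1+\Delta^2)\abs\gamma$ vertices of $L$. Expansion gives $\abs{N(\gamma)}\ge\beta\abs\gamma$, hence $w(\gamma)\le\lambda^{\abs\gamma}e^{-\lambda\beta\abs\gamma/2}$ for $\lambda\le1$; so with the test function $a(\gamma)=\abs\gamma$ it suffices that
\[
  \sup_{v\in L}\ \sum_{\gamma\ni v} w(\gamma)\,e^{2\abs\gamma}\ \le\ \sum_{k\ge 1}\big(e^3\,\Delta^2\,\lambda\,e^{-\lambda\beta/2}\big)^k\ \le\ \frac{1}{1+\Delta^2}.
\]
With $\lambda=\valueOfLambdaL$ and $\beta\asymp\Delta/(\ln\Delta)^2$ one has $\lambda\beta\asymp(\ln\Delta)^2$, so $e^{-\lambda\beta/2}=\Delta^{-\Omega(\ln\Delta)}$ overwhelms the polynomial factor $e^3\Delta^2\lambda$ and the bound holds for all sufficiently large $\Delta$. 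This is exactly where $\lambda=\lowerBoundOfLambdaInOmegaForm$ enters: up to the $(\ln\Delta)^{O(1)}$ factor, it is the smallest fugacity at which the energetic gain $\lambda\abs{N(\gamma)}\gtrsim\lambda\beta\abs\gamma$ from expansion beats the entropic cost $\log(\Delta^2)$ per polymer vertex (a bound ignoring expansion would only reach $\lambda\lesssim\Delta^{-2}$). I expect this quantitative matching --- and the calibration of $\alpha,\beta$ it forces --- to be the main obstacle.

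Granted convergence, $\log\Xi_L$ and $\log\Xi_R$ equal their absolutely convergent cluster expansions; truncating at clusters of total size $\Theta(\log(n/\eps))$ incurs additive error $\eps$, and since each polymer is incompatible only with polymers meeting $O(\Delta^2\abs\gamma)$ vertices of $L$, the bounded-size clusters through a vertex can be enumerated and the truncated sums evaluated in time $\mathrm{poly}(n,1/\eps)$. Combined with $Z_L+Z_R=(1+e^{-\Omega(n)})Z$ from above, this yields a multiplicative $(1\pm\eps)$-approximation of $Z(G,\lambda)$, i.e.\ an FPTAS, for every $G\in\graphClassAtLowFugacity$.

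Finally, a uniformly random $G\in\setG$ lies in $\graphClassAtLowFugacity$ with high probability. Small-set expansion ($\abs{N(S)}\ge\beta\abs S$ for $\abs S\le\alpha n$) holds because, in the configuration model, a random $\Delta$-regular bipartite graph whp has no small subgraph with significantly more edges than vertices, so sets of size $\le\alpha n$ are essentially forests and expand by almost $\Delta-1\gg\beta$. The covering property (every $S$ with $\abs S\ge\alpha n$ leaves $\le\alpha n$ uncovered on the other side) is a first-moment bound: a union bound over the $\le\binom{n}{\alpha n}^2$ choices of $S$ and of an uncovered set $U$ with $\abs U=\alpha n$, against the probability $(1-\alpha)^{\Delta\alpha n}=e^{-\Theta(\alpha^2\Delta n)}$ that $U$ avoids $S$, closes precisely because $\alpha=\valueOfAlphaAtLowFugacity$ makes $\alpha^2\Delta=(\ln\Delta)^4/\Delta$ dominate $\alpha\ln(1/\alpha)\asymp(\ln\Delta)^3/\Delta$ --- which is also the reason $\beta$ can only be taken $\asymp\Delta/(\ln\Delta)^2$ and hence why $\lambda$ carries the extra $(\ln\Delta)^{O(1)}$ factor, leaving a (presumably removable) gap down to $\uniquenessThreshold$.
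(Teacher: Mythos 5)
Your proposal follows essentially the same route as the paper: the same ground-cluster decomposition $Z\approx Z_{\+L}+Z_{\+R}$ (using the cover property to show nothing is missed and a binomial-entropy bound on the doubly-counted sets), the same polymer model on the squared graph with weight $\lambda^{\aabs{\gamma}}(1+\lambda)^{-\abs{N(\ol\gamma)}}$, the same Koteck\'y--Preiss verification in which $\lambda\beta\asymp(\ln\Delta)^2$ beats the $\ln(e\Delta^2)$ entropy cost per polymer vertex, and the same first-moment union bound for the cover property. Two minor variations: your per-component truncation (each polymer of size $\le\alpha n$, no bound on the total) makes $Z_{\+R}=(1+\lambda)^n\Xi_L$ an exact identity, whereas the paper truncates the total deviation $\abs{I\cap\+X}<\alpha n$ and must separately control the tail $\aabs{\Gamma}\ge\alpha n$ (\Cref{lem:approxRepAtLowFugacity}); and you evaluate $\log\Xi$ by truncating the cluster expansion rather than invoking the interpolation algorithm of \Cref{thm:alg} --- both are standard once the KP condition holds. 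One justification does not stand as written: for the expansion property you argue that sets of size $\le\alpha n=(\ln\Delta)^2n/\Delta$ are ``essentially forests'' and hence expand by almost $\Delta-1$; that heuristic is valid only for sublinear sets, while here the sets have linear size and the property is a large-deviation statement requiring the entropy computation of \Cref{lem:conditionOfExpanders} (Bassalygo), exactly parallel to the calculation you already carry out for the cover property. Finally, your intermediate bound $\exp\tp{O\tp{(\ln\Delta)^2\ln\ln\Delta\, n/\Delta}}$ on $\Sigma$ should read $\exp\tp{O\tp{(\ln\Delta)^3 n/\Delta}}$, but since $(1+\lambda)^n=e^{\Theta((\ln\Delta)^4n/\Delta)}$ the comparison still closes, as you correctly note in your last paragraph.
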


For notational convenience, we use the term ``on almost every $\Delta$-regular bipartite graph'' to denote that a property holds with high probability (tending to $1$ as $n\to\infty$) for randomly chosen graphs from $\setG$.

Counting proper $q$-colorings on a graph is another extensively
studied problem in the field of approximate counting~\cite{jerrum1995very,bubley1997path,bubley1999approximately,dyer2003randomly,hayes2003non,hayes2003randomly,molloy2004glauber,dyer2006randomly,hayes2006coupling,gamarnik2012correlation,dyer2013randomly,lu2013improved,GLLZ18}, which is also shown to be \#BIS-hard but unknown to be \#BIS-equivalent \cite{DGGJ04}.
In general graphs, if the number of colors $q$ is no more than the maximum degree $\Delta$, there may not be any proper coloring over the graph. Therefore, approximate counting is studied in the range that $q\geq \Delta+1$. It was conjectured that there is an FPTAS if $q\geq \Delta+1$, but the current best result is $q\geq \alpha \Delta+1$ with a constant $\alpha$ slightly below $\frac{11}{6}$~\cite{vigoda2000improved,ChenDMPP19}. The conjecture was only confirmed for the special case $\Delta=3$ \cite{LuYZZ17}.

On bipartite graphs, the situation is quite different.
For any $q\geq 2$, we know that there always exist proper $q$-colorings for every bipartite graph.
So it is natural to wonder under which relations between $q$ and $\Delta$ there is an FPTAS to count the number of $q$-colorings on biparite graphs.
Using a technique analogous to that for \#BIS,
we obtain an FPTAS to count the number of $q$-colorings on random $\Delta$-regular bipartite graphs for all sufficiently large integers $\Delta=\Delta(q)$ for any $q\ge 3$.
\begin{theorem}\label{thm:q-coloring}
    For $q\ge 3$ and $\Delta \ge \lowerBoundOfDeltaForColorings$ where $\ol q = \ceil{q/2}$, with high probability (tending to $1$ as $n\to \infty $) for a graph chosen uniformly at random from $\setG$, there is an FPTAS to count the number of $q$-colorings.
\end{theorem}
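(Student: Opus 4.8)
The plan is to adapt the polymer-model and cluster-expansion machinery that Jenssen, Keevash and Perkins \cite{JKP19} developed for independent sets to the problem of counting proper $q$-colorings. Fix a $\Delta$-regular bipartite graph $G\in\setG$ with parts $L,R$ of size $n$. The structural starting point is that the proper $q$-colorings that dominate the count are the ``phase-like'' ones: the color set $[q]$ splits into a block $A$ favored on $L$ and its complement $B$ favored on $R$, and since $a(q-a)$ is maximized at $a=\ol q=\ceil{q/2}$ the relevant phases are indexed by the $\binom{q}{\ol q}$ ordered partitions $\gamma=(A,B)$ of $[q]$ with $\abs A=\ol q$. For each such $\gamma$ I would set up a polymer model in which the empty configuration corresponds to the colorings that respect $\gamma$ freely --- $L$ colored arbitrarily from $A$, $R$ arbitrarily from $B$, which is automatically proper --- contributing the background weight $\ol q^{\,n}(q-\ol q)^{\,n}$; a \emph{polymer} is a connected subgraph of $G$ together with a coloring of its vertices that deviates from $\gamma$ (a vertex of $L$ colored from $B$, or a vertex of $R$ colored from $A$), and its weight is the ratio to the background of the number of $\gamma$-colorings it is responsible for, with the color restrictions it forces on its boundary folded in. Writing $\Xi_\gamma$ for the resulting polymer partition function, one gets $Z(G)=\ol q^{\,n}(q-\ol q)^{\,n}\sum_\gamma \Xi_\gamma$ up to an additive term coming from colorings with a macroscopic defect, which is exponentially negligible on an expander.

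\emph{(Graph class.)} Next I would isolate the expansion hypotheses that make this work --- essentially good small-set expansion on each side --- and show a uniformly random $G\in\setG$ satisfies them with high probability. Concretely one asks that every set $S$ inside one part with $\abs S\le\alpha n$ have $\abs{N(S)}\ge\beta\abs S$ for $\alpha=\valueOfAlphaForColorings$ and $\beta=\valueOfBetaForColorings$; this is the graph class $\graphClassOfColorings$ in the notation fixed above. The probabilistic claim is a first-moment computation over $\setG$ of the same kind used in \cite{JKP19}, with the parameters adjusted, and I regard it as routine.

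\emph{(Convergence and the algorithm.)} The technical heart is to verify a Koteck\'y--Preiss-type convergence criterion for the polymer model of each $\gamma$. The entropy side is easy: the number of size-$k$ polymers rooted at a fixed vertex is at most $(e\Delta)^k$, and each admits at most $q^k$ internal color assignments, an entropy factor $q^{O(k)}$. For the energy side, a defect vertex of $L$ (colored from $B$) forces each of its $\Delta$ neighbors in $R$ to avoid that color and hence, if such a neighbor is itself non-defect, to lose one of its $q-\ol q$ admissible colors; small-set expansion guarantees that a size-$k$ polymer has at least $\beta k$ boundary vertices, so its weight decays like $\exp(-\Omega(\beta k/\ol q))=\exp(-\Omega(\sqrt\Delta\,k/\ol q))$. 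Since $\Delta\ge\lowerBoundOfDeltaForColorings$ we have $\sqrt\Delta/\ol q\ge 10\ol q^4$, which dominates $\log q$ with room to spare --- this slack is quantified by $s=\valueOfS$ in the definition of $\graphClassOfColorings$ --- so the cluster expansion of $\log\Xi_\gamma$ converges absolutely. Truncating it after $O(\log n)$ terms yields, via the algorithmic framework of \cite{HPR19,JKP19}, for each of the $O(1)$ phases $\gamma$ a deterministic $\eps$-approximation of $\log\Xi_\gamma$ in time polynomial in $n$ and $1/\eps$; combining the $\Xi_\gamma$ and the negligible macroscopic-defect term gives the claimed FPTAS.

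\emph{(Main obstacle.)} The step I expect to be hardest is the energy estimate for polymer weights. Unlike the independent-set case, the background here is not a single configuration but the $\ol q^{\,n}(q-\ol q)^{\,n}$-dimensional family of free $\gamma$-colorings, so a polymer weight is genuinely a ratio of counts of \emph{restricted} colorings, and one must argue carefully that the constraints a defect imposes on its (expanding) boundary really cost a multiplicative $\exp(-\Omega(\sqrt\Delta/\ol q))$ per defect vertex --- in particular ruling out the scenario in which boundary vertices each see only one defect neighbor and coincidental colors make them pay almost nothing, which is exactly what $\abs{N(S)}\ge\beta\abs S$ is invoked to prevent. A secondary nuisance is bookkeeping the $\binom{q}{\ol q}$ phases so that each coloring is counted once (transparent for $q$ odd; for $q$ even the symmetric pair $(A,B),(B,A)$ must be reconciled) and checking that colorings lying in no single phase contribute only an exponentially small fraction of $Z(G)$.
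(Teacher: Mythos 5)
Your overall architecture matches the paper's: phases indexed by balanced color splits, a polymer model per phase with the free phase colorings as background, a Koteck\'y--Preiss verification using small-set expansion, and the algorithmic framework of \cite{HPR19}. But there is a genuine gap in your polymer definition. You take a polymer to be a \emph{connected subgraph of $G$} of deviating vertices with boundary restrictions ``folded in.'' With that choice the identity $Z=\ol q^{\,n}(q-\ol q)^{\,n}\sum_\gamma\Xi_\gamma$ fails: two deviating components at $G$-distance exactly $2$ share a neighbor $w$, and the number of admissible colors for $w$ is determined by the \emph{union} of the constraints, which is not the product of the two individual restriction ratios (e.g.\ two defect neighbors of $w$ using the same forbidden color cost $w$ one color, not two). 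So either the weights do not factorize over compatible polymers, or (if you demand $G$-distance $>2$ for compatibility while keeping $G$-connected supports) some deviation sets are not representable at all. The paper's fix --- and one of its main technical points beyond \cite{JKP19} --- is to define polymers as connected components in $G^2$ and declare two polymers compatible iff their supports are at $G$-distance greater than $2$; then the closed neighborhoods $\ol\gamma\sqcup N_G(\ol\gamma)$ of distinct polymers are disjoint and the product formula (the paper's \Cref{lem:polymerIntuition}) holds exactly. This costs only replacing $\Delta$ by $\Delta^2$ in the enumeration bounds, which the parameter choices absorb.

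Two secondary points. First, small-set expansion alone is not what the paper uses to kill the ``macroscopic defect'' term: the class $\graphClassOfColorings$ additionally requires the $(s,\alpha/q)$-\emph{cover} property (every set of size $\ge sn$ on one side has neighborhood of size $>(1-\alpha/q)n$), and it is this property --- not expansion of small sets --- that forces the majority color sets of the two sides to be disjoint and hence every proper coloring outside $\bigcup_X\+C_X(G)$ to be entropically negligible; your parameter $s$ plays that role, not the role of ``slack'' in the KP condition. Second, your phase bookkeeping is backwards: for $q$ even, taking all ordered splits $(A,B)$ with $\abs A=q/2$ is exactly right, whereas for $q$ odd you must include phases of both sizes $\ul q$ and $\ol q$ on the $\+L$-side (a coloring with $\+L$ colored from a $\ul q$-set lies in no cluster $\+C_Y$ with $\abs Y=\ol q$, since some color of $Y$ then appears $\Theta(n/\ul q)$ times on $\+R$), so restricting to $\abs A=\ol q$ undercounts by a constant factor. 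Both are fixable, but the $G^2$ issue is a missing idea rather than a detail.
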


This result confirms a conjecture in \cite{JKP19}.
%It is in sharp constrat to the case of counting $q$-colorings on $k$-uniform hypergraphs of maximum degree $\Delta$, where $q > C\Delta^{\frac{1}}$.

\subsection{Our Technique}
The classical approach to designing approximate counting algorithms is random sampling
via Markov chain Monte Carlo (MCMC). However, it is known that the Markov chains are slowly mixing on random
bipartite graphs for both independent set and coloring if the degree $\Delta$ is not too small.
Taking \#BIS as an example, a typical independent set of a random regular bipartite graph of degree at least $6$ is
unbalanced: it either chooses most of its vertices from the left side or the right side. Thus, starting from an independent set with most vertices from the left side, a Markov chain is unlikely to reach an independent set with most of its vertices from the right side in polynomial time.

Even so, a recent beautiful work exactly makes use of the above separating property to design approximately counting algorithm~\cite{JKP19}. By making the fugacity $\lambda > (2e)^{250}$ sufficiently large, they proved that most contribution of the partition function comes from extremely unbalanced independent sets, those which occupy almost no vertices on one side and almost all vertices on the other side. In particular, for a bipartite graph $G=(\+L,\+R,E)$ with $n$ vertices on both sides, they identified two independent sets $I=\+L$ and $I=\+R$ as ground states as they have the largest weight $\lambda^{n}$ among all the independent sets.
They proved that one only needs to sum up the weights of states which are close to one of the ground states, for no state is close to both ground states and the contribution from the states which are far away from both ground states is exponentially small.

However, the ground state idea cannot be directly applied to counting independent sets and counting colorings since each valid configuration is of the same weight. We extend the idea of ground states to ground clusters, which is not a single configuration but a family of configurations.
For example, we identify two ground clusters for independent sets, those which are entirely chosen from vertices on the left side and those which are entirely chosen entirely from vertices on the right side.
If a set of vertices is entirely chosen from vertices on one side, it is obviously an independent set.
Thus each cluster contains $2^n$ different independent sets.
Similarly, we want to prove that we can count the configurations which are close to one of the ground clusters and then add them up.
For counting colorings, there are multiple ground clusters indexed by a subset of colors $\emptyset\subsetneq X\subsetneq [q]$: colorings which color $\+L$ only with colors from $X$ and color $\+R$ only with colors from $[q]\setminus X$.

Unlike the ground states in \cite{JKP19}, our ground clusters may overlap with each other and some configurations are close to more than one ground clusters.
In addition to proving that the number of configurations which are far away from all ground clusters are exponentially small, we also need to prove that the number of double counted configurations are small.

After identifying ground states and with respect to a fixed ground state,  Jessen, Keevash, and
Perkins \cite{JKP19} defined a polymer model representing deviations from the ground state and rewrote the original partition function as a polymer partition function. We follow this idea and define a polymer model representing deviations from a ground cluster. However, deviation from a ground cluster is much subtler than deviation from a single ground state. For example, if we define polymer as connected components from the deviated vertices in the graph, we cannot recover the original partition function from the polymer partition function.
We overcome this by defining polymer as connected components in graph $G^2$, where an edge of $G^2$ corresponds to a path of length at most $2$ in the original graph.
Here, a compatible set of polymers also corresponds to a family of configurations in the original problem, while it corresponds to a single configuration in \cite{JKP19}.

It is much more common in counting problems that most contribution is from a neighborhood of some clusters rather than a few isolated states. So, we believe that our development of the technique makes it suitable for a much broader family of problems.

\subsection*{Independent work}
Towards the end of this project, we learned that the authors of ~\cite{JKP19} obtained similar results in their upcoming journal version submission.

\section{Preliminaries}

In this section, we review some basic definitions and concepts, introduce necessary notations and set up some facts and tools.

\subsection{Independent sets and colorings}
All graphs considered in this paper are unweighted, undirected, with no loops but may have multiple edges\footnote{There is no essential difference from keeping the graphs simple. We allow multiple edges just for writing convenience.}.
Let $G=(V,E)$ be a graph.
We use $d_G(u,w)$ to denote the distance between two vertices $u,w$ in the graph $G$.
For $\emptyset\subsetneq U,W\subseteq V$, define $d_G(U,W) = \min_{u\in U,w\in W} d_G(u,w)$.
Let $U\subseteq V$ be a nonempty set.
We define $N_G(U) = \set{v\in V\cmid d_G(\set{v},U)=1}$ to be the neighborhood of $U$ and emphasize that $N_G(U)\cap U=\emptyset$.
We use $G[U]$ to denote the induced subgraph of $G$ on $U$.
Let $E^2$ be the set of unordered pairs $(u,v)$ such that $u\neq v$ and $d_G(u,v)\le 2$.
We define $G^2$ to be the graph $(V,E^2)$. It is clear that if the maximum degree of $G$ is at most $\Delta$, then the maximum degree of $G^2$ is at most $\Delta^2$.

An independent set of the graph $G$ is a subset $U\subseteq V$ such that $(u,w)\not\in E$ for any $u,w\in U$.
We use $\+I(G)$ to denote the set of all independent sets of $G$.
The weight of an independent set $I$ is $\lambda^{\abs I}$ where $\lambda>0$ is a paramter called fugacity.
We use $Z(G, \lambda) = \sum_{I\in \+I(G)}\lambda^{\abs I}$ to denote the partition function of the graph $G$. Clearly, $Z(G, 1)$ is the number of indepndent sets of $G$.

For any positive integer $i$, we use $[i]$ to denote the set $\set{1,2,\ldots,i}$.
Let $q \ge 3$ be an integer.
Define $\ul q = \floor{q/2}$ and $\ol q = \ceil{q/2}$.
A coloring $\sigma:V\to [q]$ over the graph $G$ is a mapping which assigns to each vertex of $G$ a color from $[q]$.
We say $\sigma$ is proper if $\sigma(u)\neq \sigma(v)$ for any edge $(u,v)\in E$.
We use $\+C(G)$ to denote the set of all proper colorings over $G$.
Sometimes we need to consider the rescriction of a coloring and we use $\sigma|_U$ to denote the coloring obtained by restricting $\sigma$ over a subset $U\subseteq V$.
Whenever $G=(\+L,\+R,E)$ is a bipartite graph and $\sigma$ is coloring over $G$, we simply write $\sigma_\+X$ instead of $\sigma|_\+X$ for all $\+X\in\set{\+L,\+R}$.
For a number of disjoint sets $S_1,S_2,\ldots,S_k$, we use $\sqcup_{i=1}^k S_i$ to denote their union and stress the fact that they are disjoint.
For a number of colorings $\sigma_1:V_1\to[q],\sigma_2:V_2\to[q],\ldots,\sigma_k:V_k\to [q]$, if $V_i$ and $V_j$ are disjoint for any $1\le i \neq j\le k$, then $\cup_{i=1}^k\sigma_i$ is the coloring over $\sqcup_{i=1}^k V_i$ such that its resctriction over $V_i$ is $\sigma_i$ for any $1\le i \le k$.
%In addition to the Greek letter $\sigma$, we also use $\omega$ to denote colorings.

For two positive real numbers $a$ and $b$, we say $a$ is an $\eps$-relative approximation to $b$ for some $\eps > 0$ if $\exp(-\eps)b \le a \le \exp(\eps)b$, or equivalently $\exp(-\eps)a\le b \le \exp(\eps)a$.
A fully polynomial-time approximation scheme (FPTAS) is an algorithm that for every $\eps > 0$ outputs an $\eps$-relative approximation to $Z(G)$ in time $\tp{\abs{G}/\eps}^C$ for some constant $C>0$, where $Z(G)$ is some quantity, like the number of independent sets, of graphs $G$ that we would like to compute.

\subsection{Random regular bipartite graphs}

We follow the model of random regular bipartite graphs in \cite{MWW09}.
Let $\Delta$ be a positive integer.
We use $G\sim \setG$ to denote sampling a bipartite graph $G$ in the following way.
At the beginning, the two sides of $G$ both have exactly $n$ vertices and there are no edges between them.
In the $i$-th round, we sample a perfect matching $M_i$ over the complete bipartite graph $K_{n,n}$ uniformly at random and independently of previous rounds.
We repeat this process for $\Delta$ rounds and add the edges in $M_1,M_2,\ldots,M_\Delta$ to the graph $G$.
We do not merge multiple edges in $G$ to keep it $\Delta$-regular.
We remark that this distribution of random graphs is contiguous with a uniformly random $\Delta$-regular simple (without multiple edges) bipartite graph, which implies that \Cref{lem:conditionOfExpanders} and similar results also apply to the latter distribution.
See \cite{1factor} for more information.
In the following, we discuss the property of random regular bipartite graphs.

We say a $\Delta$-regular bipartite graph $G=(\+L,\+R,E)$ with $n$ vertices on both sides is an $(\alpha,\beta)$-expander if for all subsets $U\subseteq \+L$ or $U\subseteq \+R$ with $\abs{U}\le \alpha n$, $\abs{N(U)}\ge \beta \abs{U}$.
This property is called the expansion property of $G$.
We use $\graphClassAtHighFugacity$ to denote the set of all $\Delta$-regular bipartite $(\alpha,\beta)$-expander.
The following lemma states that under certain conditions almost every $\Delta$-regular graph is an $(\alpha,\beta)$-expander.

\begin{lemma}[\cite{Bassalygo1981}] \label{lem:conditionOfExpanders}
  If $0 < \alpha < 1/\beta < 1$ and
  $\displaystyle\Delta > \frac{H(\alpha) + H(\alpha \beta)}{H(\alpha) - \alpha \beta H(1/\beta)}$,
  then
  \[\lim_{n\to\infty}\Pr_{G\sim \setG}\sqtp{G\in \graphClassAtHighFugacity} = 1.\]
\end{lemma}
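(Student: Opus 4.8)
The plan is a first-moment (union bound) argument over potential witnesses to non-expansion, exploiting that $G\sim\setG$ is by construction the union of $\Delta$ independent uniformly random perfect matchings $M_1,\dots,M_\Delta$ of $K_{n,n}$. By the symmetry between $\+L$ and $\+R$ it is enough to control the probability that some $U\subseteq\+L$ with $\abs U=k$, $1\le k\le\alpha n$, has $\abs{N(U)}<\beta k$; such a $U$ forces $N(U)\subseteq W$ for some $W\subseteq\+R$ of size $w_k:=\ceil{\beta k}-1<\beta k$. For fixed $U$ of size $k$ and fixed $W$ of size $w_k$, the probability that a single uniform perfect matching sends all of $U$ into $W$ is $\binom{w_k}{k}/\binom{n}{k}$, so by independence it happens in all $\Delta$ rounds with probability $\bigl(\binom{w_k}{k}/\binom{n}{k}\bigr)^{\Delta}$. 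A union bound over the two sides, over $k$, over $U$, and over $W$ therefore bounds the failure probability by $2\sum_{k=1}^{\floor{\alpha n}}T_k$ with $T_k:=\binom nk^{1-\Delta}\binom n{w_k}\binom{w_k}{k}^{\Delta}$, and the whole task reduces to showing $\sum_k T_k\to 0$.

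Next I would feed in the Stirling estimate $\binom{n}{\theta n}=e^{nH(\theta)(1+o(1))}$ with $H$ the natural-logarithm binary entropy. Setting $x=k/n$ and using $w_k/n\to\beta x$ and $k/w_k\to 1/\beta$, this turns $\tfrac1n\ln T_k$ into $-g(x)+o(1)$ where $g(x):=\Delta\bigl(H(x)-\beta x\,H(1/\beta)\bigr)-\bigl(H(x)+H(\beta x)\bigr)$, and the hypothesis on $\Delta$ is precisely the statement $g(\alpha)>0$. The constraint $0<\alpha<1/\beta<1$ is exactly what guarantees that $H(x)-\beta x\,H(1/\beta)$ is strictly positive on $(0,1/\beta)$: writing $H(x)=H\bigl(\beta x\cdot\tfrac1\beta+(1-\beta x)\cdot 0\bigr)$ and using strict concavity of $H$ with $H(0)=0$ gives $H(x)>\beta x\,H(1/\beta)$ there. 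Hence on $(0,1/\beta)$ the condition $g(x)>0$ is equivalent to $\Delta>r(x)$, where $r(x):=\dfrac{H(x)+H(\beta x)}{H(x)-\beta x\,H(1/\beta)}$.

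The crux — and the step I expect to take the most care — is to upgrade ``$g(\alpha)>0$'' to ``$g>0$ on all of $(0,\alpha]$, with quantitative control''. The clean route is the analytic fact that $r$ is strictly increasing on $(0,1/\beta)$, which I would verify by a direct derivative computation. Granting it: $g(\alpha)>0$ implies $\Delta>r(x)$ for every $x\in(0,\alpha]$, hence $g>0$ throughout $(0,\alpha]$; moreover $\Delta>\lim_{x\to 0^+}r(x)=1+\beta$ (in particular $\Delta-1-\beta>0$), $g$ is bounded below by a positive constant on each fixed $[\delta,\alpha]$, and $g(x)=(\Delta-1-\beta)\,x\ln(1/x)(1+o(1))$ as $x\to 0^+$.

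With this in place, showing $\sum_{k=1}^{\floor{\alpha n}}T_k\to 0$ is routine but should be split into three ranges for a large constant $K_0$ and a small constant $\delta$. For the finitely many $k\le K_0$ the crude estimate $T_k=\Theta\bigl(n^{k(1-\Delta)+w_k}\bigr)=O\bigl(n^{-(\Delta-1-\beta)}\bigr)$ makes that part $o(1)$. For $\delta n\le k\le\alpha n$, the estimates above give $T_k\le\mathrm{poly}(n)\,e^{-\frac12 n\min_{[\delta,\alpha]}g}$, which is $o(1)$ after summing the at most $n$ terms. For the intermediate range $K_0<k<\delta n$, the elementary inequalities $\binom{w_k}{k}/\binom nk\le(w_k/n)^k$ and $\binom n{w_k}\le(en/w_k)^{w_k}$ give $T_k\le\bigl(C_1(k/n)^{\Delta-1-\beta}\bigr)^k$ for an explicit $C_1=C_1(\Delta,\beta)$; choosing $\delta$ so small that $C_1\delta^{\Delta-1-\beta}\le\tfrac12$ makes $T_k\le 2^{-k}$ on this range, so its contribution is at most $2^{-K_0}$ uniformly in $n$. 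Letting $n\to\infty$ and then $K_0\to\infty$ yields $\sum_k T_k\to0$, which proves the lemma (and, by the contiguity remark preceding the statement, the same conclusion then holds for the uniform random simple $\Delta$-regular bipartite graph).
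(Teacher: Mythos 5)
The paper does not prove this lemma at all --- it is quoted from Bassalygo (1981) --- so there is no in-paper argument to compare against; your proposal has to stand on its own. Most of it does: the per-matching probability $\binom{w_k}{k}/\binom{n}{k}$ and its $\Delta$-fold independence are exactly right for this model, the union bound $2\sum_k T_k$ is set up correctly, and the three-range summation (constant $k$ via polynomial decay $n^{-(\Delta-1-\beta)}$, intermediate $k$ via $T_k\le\bigl(C_1(k/n)^{\Delta-1-\beta}\bigr)^k\le 2^{-k}$, linear $k$ via the entropy rate $-g(x)$) is a complete and correct way to finish \emph{once} one knows that $g>0$ on all of $(0,\alpha]$ and that $\Delta>1+\beta$.

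The genuine gap is the step you yourself flag as the crux and then do not carry out: the strict monotonicity of $r(x)=\bigl(H(x)+H(\beta x)\bigr)/\bigl(H(x)-\beta x\,H(1/\beta)\bigr)$ on $(0,1/\beta)$. Everything downstream --- $g>0$ on $(0,\alpha]$, the inequality $\Delta>1+\beta$ used in both the small and intermediate ranges, and the asymptotic $g(x)\sim(\Delta-1-\beta)x\ln(1/x)$ --- is derived from this one unproven claim, so as written the proof reduces the lemma to an assertion. Moreover, the promised ``direct derivative computation'' is not routine: the denominator $D(x)=H(x)-\beta x H(1/\beta)$ has $D'(0^+)=+\infty$ but $D'\le 0$ near $1/\beta$ (since $D>0$ vanishes there), so $D'$ changes sign and the standard monotone-ratio tools (quotient rule with a sign argument, or the monotone form of L'H\^opital) do not apply directly; expanding $N'D-ND'$ produces terms of both signs that must be balanced by hand. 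The claim does appear to be true (it is what makes $k=\alpha n$ the binding case in Bassalygo's bound, and numerics confirm it), but you need to either complete that verification or route around it, e.g.\ by proving directly that $h(x)=\Delta D(x)-N(x)$ is positive on $(0,\alpha]$ using that $h(0)=0$, $h'(0^+)=+\infty$ and that $h''$ changes sign exactly once on the interval, together with $h(\alpha)>0$.
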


In addition to the expansion property, random regular graphs may also have the following property.
For $0<a,b<1$, we say a bipartite graph $G=(\+L,\+R,E)$ with $n$ vertices on both sides has the $(a,b)$-cover property if $\abs{N_G(U)}> (1-b)n$ for all $U\subseteq \+L$ or $U\subseteq \+R$ with $\abs{U}\ge an$.

\subsection{The polymer model} \label{sec:polymer-model}

We follow the way in \cite{HPR19} to introduce the polymer model and related tools.
For a complete introduction to this model, see this wonderful book \cite{friedli_velenik_2017}.
Let $G$ be a graph and $\Omega$ be a finite set.
A polymer $\gamma=(\ol \gamma, \conf)$ consists of a support $\ol \gamma$ which is a connected subgraph of $G$ and a mapping $\conf$ which assigns to each vertex in $\ol\gamma$ some value in $\Omega$.
We use $\aabs{\gamma}$ to denote the number of vertices of $\ol\gamma$.
There is also a weight function $w(\gamma, \cdot): \=C \to \=C$ for each polymer $\gamma$.
There can be many polymers defined on the graph $G$ and we use $\Gamma^*=\Gamma^*(G)$ to denote the set of all polymers defined on it.
However, at the moment we do not give a constructive definition of polymers.
Such definitions are presented when they are needed, see \Cref{sec:polymerOfIS} and \Cref{sec:polymerOfColorings}.
We say two polymers $\gamma_1$ and $\gamma_2$ are compatible if $d_G(\ol{\gamma_1}, \ol{\gamma_2}) > 1$ and we use $\gamma_1\sim \gamma_2$ to denote that they are compatible.
For a subset $\Gamma\subseteq\Gamma^*$ of polymers, it is compatible if any two different polymers in this set are compatible.
We define $\+S(\Gamma^*) = \set{\Gamma\subseteq \Gamma^*\cmid \text{$\Gamma$ is compatible}}$ to be the collection of all compatible subsets of polymers.
For any $\Gamma\in\+S(\Gamma^*)$, we define $\ol \Gamma$ to be the the subgraph of $G$ by putting together the support of all polymers in $\Gamma$.
It is well defined since $\Gamma$ is compatible.
We also define $\aabs{\Gamma}$ to be the number of vertices of the subgraph $\ol\Gamma$ and $\conf[\Gamma]=\cup_{\gamma\in\Gamma}\conf$.
We say $(\Gamma^*,w)$ is a polymer model defined on the graph $G$ and the partition function of this polymer model is
\begin{align*}
  \Xi(G, z) = \sum_{\Gamma\in \+S(\Gamma^*)} \prod_{\gamma\in\Gamma} w(\gamma,z),
\end{align*}
where $z$ is a complex variable and $\prod_{\gamma\in\emptyset} w(\gamma,z)=1$ by convention. The following theorem\footnote{Here we only need a special case of the original theorem.} states conditions that $\Xi(G, z)$ can be approximated efficiently.

\begin{theorem}[\cite{HPR19}, Theorem 2.2] \label{thm:alg}
  Fix $\Delta$ and let $\+G$ be a set of graphs of degree at most $\Delta$. Suppose:
  \begin{itemize}
    \item There is a constant $C$ such that for all $G\in \+G$, the degree of $\Xi(G, z)$ is at most $C\abs{G}$.
    \item For all $G\in\+G$ and $\gamma\in\Gamma^*(G)$, $w(\gamma,z)=a_\gamma z^{\aabs \gamma}$ where $a_\gamma\neq 0$ can be computed in time $\exp(O(\aabs{\gamma} + \log_2 \abs{G}))$.
    \item For every connected subgraph $G'$ of every $G\in \+G$, we can list all polymers $\gamma\in\Gamma^*(G)$ with $\ol \gamma = G'$ in time $\exp(O(\abs{G'}))$.
    \item There is a constant $R > 0$ such that for all $G\in\+G$ and $z\in\=C$ with $\abs{z}<R$, $\Xi(G, z)\neq 0$.
  \end{itemize}
  Then for every $z$ with $\abs{z} < R$, there is an FPTAS for $\Xi(G, z)$ for all $G\in \+G$.
\end{theorem}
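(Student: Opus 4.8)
The plan is to prove \Cref{thm:alg} by the polynomial interpolation method of Barvinok~\cite{Bar16}, combined with the Patel--Regts algorithm~\cite{PR17} for the low-order Taylor coefficients of $\log\Xi(G,z)$. Fix $G\in\+G$ and a target $z_0$ with $\abs{z_0}<R$. By the second hypothesis each polymer contributes to $\Xi(G,z)$ a factor $a_\gamma z^{\aabs{\gamma}}$ with $\aabs{\gamma}\ge1$, so $p(z):=\Xi(G,z)$ is a polynomial with $p(0)=1$, and by the first hypothesis $\deg p=N\le C\abs G$. By the fourth hypothesis $p$ has no root in $\set{z\cmid\abs z<R}$, so all roots $\zeta_1,\dots,\zeta_N$ of $p$ satisfy $\abs{\zeta_i}\ge R$; hence a branch of $g(z):=\log p(z)$ with $g(0)=0$ is analytic on that disk, equals $-\sum_i\sum_{k\ge1}k^{-1}(z/\zeta_i)^k$ there, and its Taylor coefficients obey $\abs{g_k}\le N/(kR^k)$. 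Truncating at order $m$,
\[
  \abs{\,g(z_0)-\sum_{k=1}^{m}g_kz_0^{k}\,}\le\sum_{k>m}\frac{N}{k}\tp{\frac{\abs{z_0}}{R}}^{k}\le\frac{N}{1-\abs{z_0}/R}\tp{\frac{\abs{z_0}}{R}}^{m},
\]
which is below $\eps$ once $m=O\tp{\log\tp{\abs G/\eps}}$, the constant depending only on $\Delta$, $C$, $R$ and $\abs{z_0}$. So $\exp\tp{\sum_{k\le m}g_kz_0^{k}}$ is an $\eps$-relative approximation to $\Xi(G,z_0)$ (interpreting the guarantee via the chosen branch of $\log$ when $z_0$ is complex), and it remains to compute $g_1,\dots,g_m$ in time $\tp{\abs G/\eps}^{O(1)}$.

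For this I would follow Patel and Regts~\cite{PR17}. The cluster expansion is a formal power-series identity writing the $k$-th Taylor coefficient $g_k$ of $\log\Xi(G,z)$ as a sum of Ursell contributions of connected clusters of polymers of total size $k$; the crucial point is that such a cluster is confined to a connected subgraph of $G$ on at most $k$ vertices, because incompatibility of two polymers forces the $G$-distance between their supports to be at most $1$, so the union of the supports is connected. A graph of maximum degree $\Delta$ has at most $\abs G\cdot(e\Delta)^{k}$ connected subgraphs on at most $k$ vertices. For each such subgraph $H$ with $\abs H\le k\le m$ one lists, by the third hypothesis, the $\exp(O(k))$ polymers supported inside $H$, computes their weights $a_\gamma$ by the second hypothesis, forms the degree-$\le k$ truncation modulo $z^{k+1}$ of the polymer partition function restricted to those polymers (itself computable in $\exp(O(k))$ time), and recovers $g_k$ by inclusion--exclusion over the subgraphs $H$. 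Since $m=O(\log(\abs G/\eps))$ and all quantities are rationals of polynomial bit-length, this yields $g_1,\dots,g_m$, and hence $\exp\tp{\sum_{k\le m}g_kz_0^{k}}$, in time $\tp{\abs G/\eps}^{O(1)}$.

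The main obstacle is precisely making this second step run in genuinely polynomial — not merely quasi-polynomial — time. Reading off the coefficients of $\Xi$ directly fails: a single compatible family of total size $m$ may consist of $\Omega(m)$ single-vertex polymers scattered across $G$, and there are $\abs G^{\Omega(m)}$ of those, which for $m=\Theta(\log(\abs G/\eps))$ is only quasi-polynomially many. Passing to $\log\Xi$ repairs this, because \emph{connected} clusters localise to bounded connected regions of $G$ whereas arbitrary compatible families do not. Turning this qualitative localisation into the quantitative statement ``$O(\abs G)\cdot C(\Delta)^{k}$ clusters of size $k$, each recoverable from its small connected support by listing the polymers on it and doing $\exp(O(k))$ local work'' is the technical heart, and it is exactly where the degree bound, the computability of the $a_\gamma$, and the listability of polymers (the first three hypotheses of \Cref{thm:alg}) are all invoked. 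The remaining pieces — the interpolation tail bound above, the elementary recursion relating the $g_k$ to the coefficients of $\Xi$ if one also wants $\Xi$ itself, and the bit-complexity bookkeeping — are routine.
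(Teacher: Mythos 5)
Your proposal follows essentially the same route as the proof of the cited result ([HPR19], Theorem~2.2), which this paper does not reprove: Barvinok-style interpolation using zero-freeness in $\abs z<R$ and the degree bound to truncate $\log\Xi$ at $m=O(\log(\abs G/\eps))$ terms, then Patel--Regts-type computation of the coefficients by localizing connected clusters to connected subgraphs of $G$ of size at most $m$, enumerated via the bounded-degree subgraph-counting bound together with the listability of polymers and computability of the weights $a_\gamma$. Your outline is correct; the only part left at sketch level (the $\mathrm{poly}(\abs G)\cdot\exp(O(m))$-time computation of $g_1,\dots,g_m$) is exactly the content of the corresponding lemma in [HPR19], and you identify it and the hypotheses it uses accurately.
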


The following condition by Koteck\`y and Preiss (KP-condition) is useful to show that $\Xi(G, z)$ is zero-free in certain regions.

\begin{lemma}[\cite{KP86}] \label{lem:KP-condition}
  Suppose there is a function $a:\Gamma^*\to \=R_{> 0}$ and for every $\gamma^*\in\Gamma^*$,
  \begin{align*}
    \sum_{\gamma\cmid\gamma\not\sim \gamma^*} e^{a(\gamma)}\abs{w(\gamma,z)} \le a(\gamma^*).
  \end{align*}
  Then $\Xi(G, z)\neq 0$.
\end{lemma}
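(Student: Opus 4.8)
The plan is to prove directly that $\Xi(G,z)\neq 0$ under the displayed hypothesis. Since $\Gamma^*(G)$ is a finite set (finitely many connected subgraphs of the finite graph $G$, each with finitely many $\Omega$-labellings), $\Xi(G,z)$ is a finite sum, indeed a polynomial in $z$, and for every $\Lambda\subseteq\Gamma^*$ we may form the restricted partition function $\Xi_\Lambda=\sum_{\Gamma\in\+S(\Lambda)}\prod_{\gamma\in\Gamma}w(\gamma,z)$; write $N_\Lambda[\gamma^*]=\set{\gamma\in\Lambda\cmid\gamma\not\sim\gamma^*}$, which contains $\gamma^*$ itself when $\gamma^*\in\Lambda$. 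Splitting $\+S(\Lambda)$ according to whether a fixed $\gamma_0\in\Lambda$ is used yields the basic recursion
\[
  \Xi_\Lambda=\Xi_{\Lambda\setminus\set{\gamma_0}}+w(\gamma_0,z)\,\Xi_{\Lambda\setminus N_\Lambda[\gamma_0]},
\]
which reduces $\Xi_\Lambda$ to partition functions of strictly smaller polymer sets and is the engine of the whole argument.

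I would induct on $\abs\Lambda$, proving simultaneously that $\Xi_\Lambda\neq 0$ and a uniform ratio bound, of a suitable form such as $\abs{\Xi_{\Lambda\setminus N_\Lambda[\gamma^*]}}\le e^{a(\gamma^*)}\abs{\Xi_\Lambda}$ for every $\gamma^*\in\Gamma^*$, so that division by $\Xi_\Lambda$ is always legitimate. The base case $\Lambda=\emptyset$ is immediate. For the step, fix $\gamma_0\in\Lambda$, rewrite the recursion as $\Xi_\Lambda=\Xi_{\Lambda\setminus\set{\gamma_0}}\bigl(1+w(\gamma_0,z)\,\Xi_{\Lambda\setminus N_\Lambda[\gamma_0]}/\Xi_{\Lambda\setminus\set{\gamma_0}}\bigr)$, apply the induction hypothesis to the smaller sets $\Lambda\setminus\set{\gamma_0}$, $\Lambda\setminus N_\Lambda[\gamma_0]$ and to the intermediate sets obtained by peeling the polymers of $N_\Lambda[\gamma_0]$ off one at a time, and bound the correction term $w(\gamma_0,z)\,\Xi_{\Lambda\setminus N_\Lambda[\gamma_0]}/\Xi_{\Lambda\setminus\set{\gamma_0}}$. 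The Koteck\`y--Preiss inequality $\sum_{\gamma\not\sim\gamma_0}e^{a(\gamma)}\abs{w(\gamma,z)}\le a(\gamma_0)$ enters here --- in particular its $\gamma=\gamma_0$ term gives $\abs{w(\gamma_0,z)}\le a(\gamma_0)e^{-a(\gamma_0)}\le e^{-1}$, using $\sup_{x>0}xe^{-x}=e^{-1}$, while the remaining terms control the product of single-polymer ratios appearing in $\Xi_{\Lambda\setminus N_\Lambda[\gamma_0]}/\Xi_{\Lambda\setminus\set{\gamma_0}}$; combined, these force the correction term to have modulus strictly below $1$, so the bracketed factor is nonzero and $\Xi_\Lambda\neq 0$. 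Re-establishing the ratio bound for $\Lambda$ is done the same way. Taking $\Lambda=\Gamma^*(G)$ gives $\Xi(G,z)=\Xi_{\Gamma^*}\neq 0$.

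The hard part is getting the strengthened induction hypothesis exactly right so that it closes: the naive estimate coming straight out of the recursion only gives $\abs{\Xi_\Lambda}\ge(1-a(\gamma_0))\abs{\Xi_{\Lambda\setminus\set{\gamma_0}}}$, which is vacuous once $a(\gamma_0)\ge 1$, so one must genuinely track how the incompatibility neighbourhoods interact with the Koteck\`y--Preiss sum through the iterated recursion, and tune the ratio bound (and possibly which $\gamma_0$ is peeled) accordingly. An alternative, and the route I would take if the stronger quantitative conclusion (a convergent cluster expansion, useful for the algorithmic applications) were wanted, is to write $\log\Xi(G,z)=\sum_{W}\phi(W)\prod_{\gamma}w(\gamma,z)$ as a sum over clusters $W$ (tuples of polymers whose incompatibility graph is connected) with $\phi$ the Ursell function; this is a formal identity, and under the Koteck\`y--Preiss hypothesis the series converges absolutely by the Penrose tree-graph inequality, which bounds $\abs{\phi(W)}$ in terms of the spanning trees of the incompatibility graph of $W$, after which $\Xi(G,z)=\exp(\log\Xi(G,z))\neq 0$, the sum being identified with a branch of $\log\Xi$ by deforming $w\mapsto tw$ for $t\in[0,1]$ and matching formal power series in $t$. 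Here the analogous obstacle is the tree-graph bound itself together with the resulting tree-weighted sum estimate, everything else being routine.
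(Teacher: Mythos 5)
The paper never proves this lemma: it is quoted from Koteck\`y--Preiss \cite{KP86} and used as a black box, so your sketch has to stand on its own, and as written it does not. Your overall strategy --- the recursion $\Xi_\Lambda=\Xi_{\Lambda\setminus\set{\gamma_0}}+w(\gamma_0,z)\,\Xi_{\Lambda\setminus N_\Lambda[\gamma_0]}$ together with a strengthened induction on $\abs{\Lambda}$ --- is indeed the standard inductive proof of the KP criterion, but the single step that constitutes the whole content of the lemma, namely choosing a strengthening that closes, is exactly the step you leave open (``the hard part is getting the strengthened induction hypothesis exactly right''). Moreover, the specific hypothesis you propose, $\abs{\Xi_{\Lambda\setminus N_\Lambda[\gamma^*]}}\le e^{a(\gamma^*)}\abs{\Xi_\Lambda}$, does not close by itself: feeding it into the recursion bounds the correction term only by $\abs{w(\gamma_0,z)}e^{a(\gamma_0)}\le a(\gamma_0)$, which is precisely the vacuous-for-$a(\gamma_0)\ge 1$ estimate you flag, and the claim that the ``remaining terms'' of the KP sum repair this is asserted, not shown. (The observation that the $\gamma=\gamma_0$ term forces $\abs{w(\gamma_0,z)}\le e^{-1}$ does not help, since the neighbourhood ratio may still be as large as $e^{a(\gamma_0)}$.) The alternative cluster-expansion route you mention has the same status: the Penrose tree-graph bound and the convergence estimate are the entire content there, and you only name them.

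The standard way to make your first plan work is to strengthen \emph{per polymer} rather than per neighbourhood: prove by induction on $\abs{\Lambda}$ that $\Xi_\Lambda\neq 0$ and that $\abs{\Xi_{\Lambda\setminus\set{\gamma}}}\le e^{e^{a(\gamma)}\abs{w(\gamma,z)}}\abs{\Xi_\Lambda}$ for every $\gamma\in\Lambda$. In the induction step, peel the polymers of $N_\Lambda[\gamma_0]\setminus\set{\gamma_0}$ one at a time and apply this bound to the (strictly smaller) intermediate sets; since a polymer is incompatible with itself, the KP hypothesis with the $\gamma_0$ term removed gives
\begin{align*}
  \abs{w(\gamma_0,z)}\cdot\abs{\frac{\Xi_{\Lambda\setminus N_\Lambda[\gamma_0]}}{\Xi_{\Lambda\setminus\set{\gamma_0}}}}
  \le \abs{w(\gamma_0,z)}\exp\tp{a(\gamma_0)-e^{a(\gamma_0)}\abs{w(\gamma_0,z)}}
  = x e^{-x},\qquad x:=e^{a(\gamma_0)}\abs{w(\gamma_0,z)} .
\end{align*}
The elementary inequality $xe^{-x}\le 1-e^{-x}<1$ (valid for all $x\ge 0$) then shows the bracketed factor $1+w(\gamma_0,z)\Xi_{\Lambda\setminus N_\Lambda[\gamma_0]}/\Xi_{\Lambda\setminus\set{\gamma_0}}$ has modulus at least $e^{-x}$, so $\Xi_\Lambda\neq 0$ and the per-polymer ratio bound is re-established for $\Lambda$; your neighbourhood bound then follows by telescoping, and taking $\Lambda=\Gamma^*(G)$ finishes. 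Without this (or an equivalent) quantitative closing step, the proposal is a correct plan but not a proof.
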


To verify the KP-condition, usually we need to enumerate polymers and the following lemma is useful to bound the number of enumerated polymers.

\begin{lemma}[\cite{DBLP:journals/rsa/BorgsCKL13}]\label{lem:EnumSubgraph}
  For any graph $G=(V,E)$ with maximum degree $\Delta$ and $v\in V$, the number of connected induced subgraphs of size $k\ge 2$ containing $v$ is at most $(e\Delta)^{k-1}/2$.
  As a corollary, the number of connected induced subgraphs of size $k\ge 1$ containing $v$ is at most $(e\Delta)^{k-1}$.
\end{lemma}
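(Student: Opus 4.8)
The plan is to count something more generous — connected subgraphs of $G$ that happen to be trees — and then to identify those with rooted subtrees of an infinite $\Delta$-ary tree, where a clean closed form is available. First, fix once and for all a total order on $V$. Every connected induced subgraph $S$ with $|S|=k$ and $v\in S$ is determined by its vertex set, and $G[S]$ has a canonical spanning tree $T(S)$, say the depth-first search tree rooted at $v$ that always explores the lowest available neighbour. Since a tree determines its vertex set, $S\mapsto T(S)$ is injective, so it suffices to bound $t_k(v)$, the number of subtrees of $G$ on $k$ vertices containing $v$ (which we regard as rooted at $v$).

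Next I would bound $t_k(v)$ by the number $a_k$ of $k$-vertex subtrees containing the root of the infinite rooted tree $\widehat T$ in which every node has exactly $\Delta$ children. Embed a subtree $\tau$ of $G$ rooted at $v$ into $\widehat T$ canonically: at the root $v$, list its $G$-neighbours in increasing order $u_1<\dots<u_d$ (so $d\le\Delta$) and send a tree-child $u_i$ of $v$ to the $i$-th child of the root of $\widehat T$; at a non-root vertex $u$ with tree-parent $p$, list the $G$-neighbours of $u$ other than $p$ in increasing order and send the $j$-th of them that is a tree-child of $u$ to the $j$-th child of the image of $u$; recurse down $\tau$. Distinct vertices of $\tau$ acquire distinct sequences of occupied child-slots along their root paths, so the image is a genuine $k$-vertex subtree of $\widehat T$ containing its root, and reading the occupied-slot indices off the image reconstructs $\tau$ uniquely; hence the embedding is injective and $t_k(v)\le a_k$.

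Then I would evaluate $a_k$ in closed form. Writing $h(x)=\sum_{k\ge1}a_kx^k$, a subtree of $\widehat T$ rooted at the root consists of the root together with, in each of its $\Delta$ child-slots, either nothing or a subtree of the corresponding branch, so $h=x(1+h)^{\Delta}$. By Lagrange inversion (equivalently the cycle lemma) this gives the Fuss--Catalan number
\[
  a_k=\frac{1}{k}\binom{\Delta k}{k-1},\qquad k\ge1 .
\]
Finally, using $\binom{n}{m}\le(en/m)^m$, one gets $a_k\le\frac{1}{k}\left(\frac{k}{k-1}\right)^{k-1}(e\Delta)^{k-1}$, and the prefactor $\tfrac1k\left(1+\tfrac1{k-1}\right)^{k-1}=\frac{k^{k-2}}{(k-1)^{k-1}}$ is strictly decreasing in $k$ (the ratio of consecutive terms is $(1-k^{-2})^{k-1}<1$) and is already $\le\tfrac12$ at $k=5$; for $k\in\{2,3,4\}$ I would substitute the exact value of $\binom{\Delta k}{k-1}$ into the target inequality, where there is ample room (for instance $a_2=\Delta\le e\Delta/2$ and $a_3=\Delta(3\Delta-1)/2\le e^2\Delta^2/2$). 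This establishes $a_k\le(e\Delta)^{k-1}/2$ for $k\ge2$; since $a_1=1=(e\Delta)^0$, the corollary for $k\ge1$ follows as $(e\Delta)^{k-1}/2\le(e\Delta)^{k-1}$.

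The steps above are all fairly routine. The two points that need genuine care are the injectivity of the canonical embedding into $\widehat T$ — which is exactly where fixing the vertex order pays off, turning an unordered subtree into a pattern of slot indices that can be read back — and the small-$k$ end of the final estimate: the lemma leaves little slack there (the crude exponential bound is a factor of $2$ too weak at $k=2$), so one is forced to use the exact Fuss--Catalan value together with the direct check of $k\in\{1,2,3,4\}$ to pin down the constant $1/2$. I expect that bit of constant-chasing to be the only real obstacle.
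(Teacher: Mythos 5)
Your proof is correct, and since the paper does not prove this lemma itself (it is quoted from Borgs, Chayes, Kahn and Lov\'asz), there is no in-paper argument to compare against; your chain---canonical spanning tree, injection of rooted subtrees of $G$ into the infinite $\Delta$-ary tree via neighbour ranks, the Fuss--Catalan count $a_k=\frac{1}{k}\binom{\Delta k}{k-1}$, then $\binom{n}{m}\le (en/m)^m$ together with exact values at small $k$---is essentially the standard proof of the cited result, and the constant chasing checks out (the prefactor $k^{k-2}/(k-1)^{k-1}$ is decreasing and equals $125/256\le 1/2$ at $k=5$, while $a_2=\Delta$, $a_3=\Delta(3\Delta-1)/2$ and $a_4=\Delta(4\Delta-1)(4\Delta-2)/6\le 8\Delta^3/3$ sit comfortably below $(e\Delta)^{k-1}/2$). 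One phrasing nit: at a non-root vertex $u$ the tree-child must be sent to the slot given by its rank among \emph{all} $G$-neighbours of $u$ other than the parent (not its rank among the tree-children alone), which is what your reconstruction step implicitly uses; the sentence as written could be misread, though the root case and the explicit reconstruction make the intended injective map clear.
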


\subsection{Some useful lemmas}

Throughout this paper, we use $H(x)$ to denote the binary entropy function
\begin{align*}
    H(x)= -x\log_2 x - (1-x)\log_2 (1-x),\quad x\in(0,1).
\end{align*}
Moreover, we extend this function to the interval $[0,1]$ by defining $H(0)=H(1)=0$. This is reasonable since $\lim_{x\to 0^+}H(x)=\lim_{x\to 1^-}H(x)=0$.

\begin{lemma}\label{lem:polyUpperBoundOfEntropy}
  It holds that $H(x)\le 2\sqrt{x(1-x)}\le 2\sqrt{x}$ for all $0 \le x \le 1$.
\end{lemma}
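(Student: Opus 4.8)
The second inequality is immediate, since $1-x\le 1$ gives $2\sqrt{x(1-x)}\le 2\sqrt x$. So the work is in proving $H(x)\le 2\sqrt{x(1-x)}$. Both sides are invariant under $x\mapsto 1-x$ (using $H(x)=H(1-x)$), so it suffices to treat $x\in[0,1/2]$. The plan is to set $f(x)=2\sqrt{x(1-x)}-H(x)$, note that $f$ is continuous on $[0,1]$ (with the convention $H(0)=0$) and smooth on $(0,1)$ with $f(0)=f(1/2)=0$, and show $f\ge 0$ on $[0,1/2]$ by a short one-variable calculus argument driven by the sign of $f''$.

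First I would compute $f'(x)=\dfrac{1-2x}{\sqrt{x(1-x)}}-\log_2\dfrac{1-x}{x}$, which records two useful facts: $f'(1/2)=0$, and $f'(x)\to+\infty$ as $x\to 0^+$ (the $x^{-1/2}$ term dominates $\log_2(1/x)$). Differentiating once more and substituting $u=x(1-x)$ — using $(1-2x)^2=1-4u$ — the expression collapses to the clean identity
\[
  u^{3/2}f''(x)=\frac{\sqrt u}{\ln 2}-\frac12 .
\]
Since $0<\ln 2<1$ we have $(\ln 2)^2/4<1/4$, and as $x$ runs over $(0,1/2)$ the quantity $u=x(1-x)$ increases strictly from $0$ to $1/4$, crossing the value $(\ln 2)^2/4$ exactly once, say at $x_0\in(0,1/2)$. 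Hence $f''<0$ on $(0,x_0)$ and $f''>0$ on $(x_0,1/2)$.

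From here I would track $f'$. On $(x_0,1/2)$ the function $f'$ is increasing and attains $0$ at $x=1/2$, so $f'\le 0$ there; on $(0,x_0)$ the function $f'$ is decreasing, starts positive (it tends to $+\infty$) and ends at $f'(x_0)\le 0$. Therefore $f'$ is positive on an initial interval $(0,c)$ and nonpositive on $(c,1/2)$ for some $c\in(0,1/2)$, so $f$ increases on $[0,c]$ and is nonincreasing on $[c,1/2]$. Consequently $\min_{[0,1/2]}f=\min\{f(0),f(1/2)\}=0$, which gives $f\ge0$ on $[0,1/2]$ and, by the symmetry noted above, on all of $[0,1]$.

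The only place that needs care is the second-derivative computation and the substitution that reduces $f''$ to $\frac{\sqrt u}{\ln 2}-\frac12$, exhibiting a single sign change; once that identity is in hand the rest is bookkeeping about the sign of $f'$. I expect that calculation to be the main (still routine) obstacle.
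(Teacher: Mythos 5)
Your proposal is correct; I checked the key identity and it does hold: with $u=x(1-x)$ one has $\frac{d^2}{dx^2}\,2\sqrt{u}=-\tfrac12 u^{-3/2}$ and $H''(x)=-\tfrac1{u\ln 2}$, so indeed $u^{3/2}f''(x)=\frac{\sqrt u}{\ln 2}-\frac12$, which changes sign exactly once on $(0,1/2)$ since $u$ increases from $0$ to $1/4>(\ln 2)^2/4$. The subsequent sign-tracking of $f'$ (decreasing from $+\infty$ to a nonpositive value on $(0,x_0)$, then increasing up to $f'(1/2)=0$) is sound, and the conclusion that $f$ attains its minimum at the endpoints $f(0)=f(1/2)=0$ finishes the argument. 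Your route differs from the paper's: the paper studies the \emph{ratio} $2\sqrt{x(1-x)}/H(x)$, uses the symmetry $x\mapsto 1-x$ to restrict to $[1/2,1)$, and reduces monotonicity of that ratio to nonnegativity of $g(x)=(1-x)\log_2\frac1{1-x}-x\log_2\frac1x$, which it gets from $g(1/2)=0$, $g(x)\to 0$ as $x\to 1^-$, and concavity of $g$. You instead work with the \emph{difference} and locate a single inflection point of it. The two arguments are comparable in length and both are routine calculus; the ratio trick in the paper avoids having to discuss the boundary behavior of $f'$ near $0$ (where both terms blow up), while your version avoids dividing by $H(x)$ and yields an arguably cleaner closed form for the second derivative. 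Either is acceptable.
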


\begin{proof}
  Let $f(x) = \frac{2\sqrt{x(1-x)}}{H(x)}$.
  Since $f(x) = f(1-x)$ and $f(1/2)=1$, it suffices to show that $\partial f / \partial x \ge 0$ for any $1/2 \le x < 1$.
  It holds that
  \begin{align*}
    \frac{\partial f}{\partial x}
    = \frac{(1-x)\log_2 1/(1-x) - x\log_2 1/x}{H(x)^2 \sqrt{x(1-x)}}
    \triangleq \frac{g(x)}{H(x)^2\sqrt{x(1-x)}}
    \ge 0
  \end{align*}
  for all $1/2 \le x < 1$, since $g(1/2)=0,\lim_{x\to 1^-} g(x)=0$ and $g$ is concave over $[1/2, 1)$.
  The concavity of $g$ follows from
  \begin{align*}
    \frac{\partial^2 g}{\partial x^2} = \frac{(1-2x)\log_2 e}{(1-x)x} \le 0
  \end{align*}
  for $1/2\le x < 1$.
\end{proof}

\begin{lemma}\label{lem:nonPolyUpperBoundOfEntropy}
  It holds that $H(x)\le -2x\log_2 x$ for all $0 < x \le 1/2$.
\end{lemma}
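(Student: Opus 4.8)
The plan is to reduce the claimed inequality to the nonnegativity of a single one–variable function and then analyze that function by elementary calculus. Define
\[
  h(x) \;=\; -2x\log_2 x - H(x) \;=\; (1-x)\log_2(1-x) - x\log_2 x, \qquad x\in[0,1/2],
\]
where $h(0)=0$ under the conventions $0\log_2 0 = 0$ and $H(0)=0$. With this notation the lemma is exactly the assertion that $h(x)\ge 0$ for all $0<x\le 1/2$.

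First I would record the boundary values $h(0)=0$ and $h(1/2)=0$ (the latter since $H(1/2)=1 = -2\cdot\tfrac12\log_2\tfrac12$, so equality holds at the right endpoint). Next I would differentiate: for $x\in(0,1/2)$,
\[
  h'(x) \;=\; -\log_2\!\bigl(x(1-x)\bigr) - 2\log_2 e .
\]
Since $x\mapsto x(1-x)$ is strictly increasing on $(0,1/2)$, the function $h'$ is strictly decreasing on that interval; moreover $h'(x)\to +\infty$ as $x\to 0^+$, while $h'(1/2) = 2 - 2\log_2 e < 0$ because $\log_2 e = 1/\ln 2 > 1$. Hence $h'$ has a unique zero $x_0\in(0,1/2)$, is positive on $(0,x_0)$ and negative on $(x_0,1/2)$, so $h$ is increasing on $[0,x_0]$ and decreasing on $[x_0,1/2]$. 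Combined with $h(0)=h(1/2)=0$, this forces $h\ge 0$ on $[0,1/2]$, which is precisely the claim.

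The only point that requires a little care is that $h$ is \emph{not} monotone on $[0,1/2]$ (it has an interior maximum at $x_0\approx 0.16$, the solution of $x(1-x)=e^{-2}$), so checking the endpoint values alone does not suffice; the argument genuinely needs the sign pattern of $h'$ — strictly decreasing, positive near $0$, negative near $1/2$ — to conclude nonnegativity on the whole interval. Everything else is routine, and I would keep the derivative computation and the limit $h'(0^+)=+\infty$ as the two explicit checks.
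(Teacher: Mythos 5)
Your proof is correct and is essentially the paper's argument: the paper works with $f(x)=H(x)+2x\log_2 x=-h(x)$ and concludes $f\le 0$ from convexity ($f''\ge 0$ on $(0,1/2]$) together with the endpoint values $\lim_{x\to 0^+}f(x)=f(1/2)=0$, whereas you unpack that same convexity as the strict decrease of $h'$ and track its single sign change. Both routes rest on the same one-variable function and the same two endpoint evaluations, so there is nothing further to add.
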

\begin{proof}
  Let $f(x)=H(x)+2x\log_2 x = x\log_2 x - (1-x)\log_2 (1-x)$, it suffices to show that $f(x) \le 0$ for $x\in(0,1/2]$. In fact, $\lim_{x\to 0^+} f(x)=0, f(1/2)=0$ and $f$ is convex over $(0,1/2]$. The convexity of $f$ follows from
  \begin{align*}
  	\frac{\partial^2 f}{\partial x^2} = \frac{(1-2x)\log_2 e}{x(1-x)} \ge 0
  \end{align*}
  for $0 < x \le 1/2$.
\end{proof}

\begin{lemma}\label{lem:monoOfEntropyVariant}
  For all $a \ge 1$, $H(x) - 1 / a H(ax) \ge x(\ln a - x)\log_2 e$ for all $0 \le x \le 1/a$.
\end{lemma}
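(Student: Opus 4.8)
The plan is to cancel the leading singular terms on the left-hand side, which reduces the claim to an elementary one-variable monotonicity statement in $x$, with $a\ge 1$ treated as a parameter. It is convenient to pass to natural logarithms: writing $h(y)=-y\ln y-(1-y)\ln(1-y)$ so that $H(y)=h(y)\log_2 e$, and dividing the asserted inequality by $\log_2 e>0$, it becomes equivalent to
\[
  h(x)-\frac1a h(ax)\ \ge\ x(\ln a-x),\qquad 0\le x\le 1/a.
\]
Expanding $\frac1a h(ax)=-x\ln(ax)-\frac{1-ax}{a}\ln(1-ax)$ and using $\ln(ax)=\ln a+\ln x$, the $-x\ln x$ terms cancel, giving the clean identity
\[
  h(x)-\frac1a h(ax)=x\ln a-(1-x)\ln(1-x)+\frac{1-ax}{a}\ln(1-ax).
\]
Hence it suffices to prove $F(x)\ge 0$ on $[0,1/a]$, where $F(x)=x^2-(1-x)\ln(1-x)+\frac{1-ax}{a}\ln(1-ax)$.

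Next I would check the endpoints and the derivative. The function $F$ is continuous on $[0,1/a]$ (interpreting $t\ln t\to 0$ as $t\to 0^+$, so the last term vanishes at $x=1/a$), and $F(0)=0$. Differentiating on $(0,1/a)$, using $\frac{d}{dx}\big[-(1-x)\ln(1-x)\big]=\ln(1-x)+1$ and $\frac{d}{dx}\big[\tfrac{1-ax}{a}\ln(1-ax)\big]=-\ln(1-ax)-1$, the constants cancel and
\[
  F'(x)=2x+\ln(1-x)-\ln(1-ax)=2x+\ln\frac{1-x}{1-ax}.
\]
Since $a\ge 1$ and $0\le x<1/a$ force $0<1-ax\le 1-x$, we have $\frac{1-x}{1-ax}\ge 1$, so $\ln\frac{1-x}{1-ax}\ge 0$; together with $2x\ge 0$ this gives $F'(x)\ge 0$ on $(0,1/a)$. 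Therefore $F$ is nondecreasing on $[0,1/a)$, so $F(x)\ge F(0)=0$ there, and by continuity $F(1/a)\ge 0$ as well, which proves the inequality. (For $a=1$ this is consistent: there $F(x)=x^2$ on $[0,1]$.)

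I do not expect a genuine obstacle here: the whole argument is a short calculus computation. The two points that need a little care are (i) the cancellation of the singular terms $-x\ln x$, which is exactly what makes $F$ smooth at $x=0$ and lets the monotonicity argument start there, and (ii) the right endpoint $x=1/a$, where $h(ax)$ — equivalently $\tfrac{1-ax}{a}\ln(1-ax)$ — must be treated via its limiting value $0$ rather than by differentiation.
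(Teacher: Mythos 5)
Your proof is correct. Both you and the paper start from the same algebraic identity obtained by cancelling the $-x\ln x$ terms, namely $h(x)-\tfrac1a h(ax)=x\ln a-(1-x)\ln(1-x)+\tfrac{1-ax}{a}\ln(1-ax)$, so the decomposition is identical; the difference is only in how the residual inequality is finished. The paper applies the standard two-sided bound $\tfrac{-x}{1-x}\le\ln(1-x)\le -x$ once to each logarithmic term (the upper bound to $-(1-x)\ln(1-x)$ and the lower bound to $\tfrac{1-ax}{a}\ln(1-ax)$), which yields $x(\ln a-x)$ in a single line. You instead package the residual as $F(x)=x^2-(1-x)\ln(1-x)+\tfrac{1-ax}{a}\ln(1-ax)$ and show $F(0)=0$, $F'(x)=2x+\ln\tfrac{1-x}{1-ax}\ge 0$; your derivative computation and the sign argument are both right, and your handling of the endpoint $x=1/a$ by continuity matches the paper's remark that the endpoints are trivial. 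The paper's route is marginally shorter; yours is self-contained in that it does not invoke the two-sided logarithm bound, and the identity $F(x)=x^2$ at $a=1$ is a nice sanity check. Either argument is complete.
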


\begin{proof}
  Recall that $\frac{-x}{1-x} \le \ln (1-x) \le -x$ for any $0 < x < 1$. Thus for any $0 < x < 1/a$,
  \begin{align*}
    H(x) - 1/a H(ax)
    &= \tp{x\ln a  - (1-x)\ln (1-x) + 1/a(1-ax)\ln(1-ax)}\cdot\log_2 e \\
    &\ge \tp{x\ln a - (1-x)(-x) + 1/a (1-ax) (-ax)/(1-ax)}\cdot \log_2 e \\
    &= x(\ln a -x)\cdot \log_2 e
  \end{align*}
And the inequality holds trivially for $x=0$ and $x=1/a$.
\end{proof}

\begin{lemma}\label{lem:upperBoundOfAVariantOfEntropy}
  It holds that $\displaystyle H\tp{\frac{x}{1-y}}(1-y) - H(x) \le -xy\log_2 e$ for all $0 \le x,y< 1$ with $x+y< 1$.
\end{lemma}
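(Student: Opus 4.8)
The plan is to fix $x\in[0,1)$ and regard the left-hand side, with the target term added back in, as a function of the single variable $y$ on the interval $[0,1-x)$. Concretely, set
\[
F(y)=\tp{1-y}\,H\!\tp{\frac{x}{1-y}}-H(x)+xy\log_2 e.
\]
The hypothesis $x+y<1$ forces $x/(1-y)\in[0,1)$, so $F$ is smooth on $[0,1-x)$; moreover $F(0)=H(x)-H(x)=0$. Thus it suffices to show $F'(y)\le 0$ for all $y\in[0,1-x)$, for then $F(y)\le F(0)=0$, which is exactly the claimed inequality. The degenerate case $x=0$ needs no work, since there $F\equiv 0$.

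The second step is to compute $F'$. Writing $t=t(y)=x/(1-y)$ one has $t'=t/(1-y)$, and since $H'(t)=\log_2\frac{1-t}{t}$ the contribution of $(1-y)H(t)$ collapses after cancellation:
\[
\frac{d}{dy}\sqtp{(1-y)H(t)}=-H(t)+t\,H'(t)=\log_2(1-t)=\log_2\tp{1-\frac{x}{1-y}}.
\]
Hence $F'(y)=\log_2\tp{1-\frac{x}{1-y}}+x\log_2 e$. Now apply the elementary bound $\ln(1-u)\le -u$, i.e.\ $\log_2(1-u)\le -u\log_2 e$, with $u=x/(1-y)\in[0,1)$: this gives
\[
F'(y)\le -\frac{x}{1-y}\log_2 e+x\log_2 e=-\frac{xy}{1-y}\log_2 e\le 0,
\]
because $x\ge 0$ and $0\le y<1$. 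Integrating, $F(y)\le F(0)=0$, which finishes the proof.

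The only place that calls for care is the cancellation $\frac{d}{dy}[(1-y)H(x/(1-y))]=\log_2(1-t)$; the rest is routine, and since the final estimate of $F'$ is comfortably strict there is nothing to optimize. I note in passing that the left-hand side admits the manifestly symmetric rewriting $-(1-x-y)\ln(1-x-y)+(1-x)\ln(1-x)+(1-y)\ln(1-y)$ (up to the factor $\log_2 e$), but deriving the bound from this form is less direct than the one-variable monotonicity argument above, so I would not pursue it.
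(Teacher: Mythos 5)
Your proof is correct and follows essentially the same route as the paper's: both fix $x$, observe the expression vanishes at $y=0$, and show the $y$-derivative equals $\bigl(\ln\frac{1-x-y}{1-y}+x\bigr)\log_2 e\le 0$ via $\ln(1-u)\le -u$. The only difference is cosmetic — you differentiate the entropy form directly, while the paper first rewrites everything in natural logarithms.
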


\begin{proof}
  It holds that for any $0 \le x, y < 1$ with $x+y<1$,
  \begin{align*}
    &~\phantom{=}~H\tp{x/(1-y)}(1-y) - H(x) + xy\log_2e \\
    &= \tp{(1-x)\ln (1-x) + (1-y)\ln (1-y) - (1-x-y)\ln (1-x-y) + xy}\log_2 e \\
    &\triangleq f(x,y)\log_2 e.
  \end{align*}
  Thus it suffices to show that $f(x,y)\le 0$ for $0 \le x, y < 1$ with $x + y < 1$.
  Fix $0 \le x < 1$.
  We verify that $f(x,0) = 0$ and
  \begin{align*}
    \frac{\partial f}{\partial y}
    = - \ln (1-y) + \ln (1-x-y) + x
    = \ln \frac{1-x-y}{1-y} + x
    \le -x / (1-y) + x
    \le 0
  \end{align*}
  for any $0 \le y < 1-x$.
\end{proof}

\begin{lemma}[{\cite[Lemma 10.2]{probandcompbook}}]\label{lem:boundsOfBinomCoefficients}
  Suppose that $n$ is a positive integer and $k\in [0,1]$ is a number such that $kn$ is an integer. Then
  \begin{align*}
    \frac{2^{H(k)n}}{n+1} \le \binom{n}{kn} \le 2^{H(k)n}.
  \end{align*}
\end{lemma}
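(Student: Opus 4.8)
The plan is to apply the binomial theorem to the identity $1 = \big(k + (1-k)\big)^n = \sum_{j=0}^{n} \binom{n}{j} k^{j}(1-k)^{n-j}$ and to isolate the single summand with $j = kn$, which is a legitimate index because $kn$ is assumed to be an integer. Writing $t_j := \binom{n}{j} k^{j}(1-k)^{n-j}$ for $0\le j\le n$, one uses $k^{kn}(1-k)^{(1-k)n} = 2^{\,n\left(k\log_2 k + (1-k)\log_2(1-k)\right)} = 2^{-H(k)n}$ to get $t_{kn} = \binom{n}{kn}\, 2^{-H(k)n}$. The upper bound is then immediate: $t_{kn}$ is one term of a sum of nonnegative terms equal to $1$, hence $t_{kn}\le 1$, i.e.\ $\binom{n}{kn}\le 2^{H(k)n}$.

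For the lower bound, the key step is to show that $t_{kn} = \max_{0\le j\le n} t_j$. I would verify this by the ratio test: for $0\le j\le n-1$,
\begin{align*}
  \frac{t_{j+1}}{t_j} = \frac{n-j}{j+1}\cdot\frac{k}{1-k},
\end{align*}
which is $\ge 1$ exactly when $j \le (n+1)k - 1$. Hence $t_j$ is non-decreasing for $j \le kn-1$ and non-increasing for $j \ge kn$, so $t_{kn}$ is a maximizer. Since the $n+1$ nonnegative terms $t_0,\dots,t_n$ sum to $1$, their maximum is at least $\tfrac1{n+1}$, so $t_{kn}\ge \tfrac1{n+1}$, which rearranges to $\binom{n}{kn}\ge \tfrac{2^{H(k)n}}{n+1}$.

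The only care needed is with the boundary cases $k\in\{0,1\}$, where $H(k)=0$, $\binom{n}{kn}=1$, and the ratio $k/(1-k)$ degenerates so the unimodality argument does not literally apply; one checks these directly, and in both cases both claimed inequalities reduce to $\tfrac1{n+1}\le 1\le 1$. No step here is a genuine obstacle — the statement is classical and is exactly why the excerpt cites it — but the unimodality argument underlying the lower bound is the part that warrants the most attention, together with the bookkeeping that $kn$, $kn-1$, and $kn+1$ are valid indices in the non-degenerate range $0<k<1$.
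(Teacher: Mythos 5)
Your proof is correct and is essentially the standard argument behind the cited result (the paper itself gives no proof, deferring entirely to Lemma 10.2 of the referenced textbook): the upper bound by isolating the term $j=kn$ in the expansion of $\bigl(k+(1-k)\bigr)^n=1$, and the lower bound by the ratio test showing $t_{kn}$ is the largest of the $n+1$ nonnegative summands, hence at least $\frac{1}{n+1}$. The ratio computation, the threshold $j\le (n+1)k-1$, and the degenerate cases $k\in\{0,1\}$ are all handled correctly, so there is nothing to add.
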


\begin{lemma} \label{lem:coreMonotoneFunc}
  For $b>a>0$, the function $f(\lambda) = \lambda^a / (\lambda+1)^b$ is monotonically increasing on $[0,\frac{a}{b-a}]$ and monotonically decreasing on $[\frac{a}{b-a},+\infty)$.
\end{lemma}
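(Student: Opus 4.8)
The plan is to study the sign of $f'$ through the logarithmic derivative of $f$ on the open half-line $(0,+\infty)$, where everything is smooth and positive, and then to recover the behavior at the left endpoint $\lambda=0$ by a separate continuity argument. Since $f(\lambda)=\lambda^a/(\lambda+1)^b>0$ for every $\lambda>0$, on $(0,+\infty)$ the sign of $f'(\lambda)$ agrees with the sign of $(\ln f)'(\lambda)$, so it suffices to analyze the latter.

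Next I would compute $\ln f(\lambda)=a\ln\lambda-b\ln(\lambda+1)$, so that
\[
(\ln f)'(\lambda)=\frac{a}{\lambda}-\frac{b}{\lambda+1}=\frac{a(\lambda+1)-b\lambda}{\lambda(\lambda+1)}=\frac{a-(b-a)\lambda}{\lambda(\lambda+1)}.
\]
Because $b>a>0$, the denominator $\lambda(\lambda+1)$ is strictly positive for $\lambda>0$, so the sign of $(\ln f)'(\lambda)$ is exactly the sign of the affine numerator $a-(b-a)\lambda$: it is positive for $0<\lambda<\frac{a}{b-a}$, vanishes at $\lambda=\frac{a}{b-a}$, and is negative for $\lambda>\frac{a}{b-a}$. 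Hence $f$ is strictly increasing on $\tp{0,\frac{a}{b-a}}$ and strictly decreasing on $\sqtp{\frac{a}{b-a},+\infty}$.

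Finally, to include the endpoint $\lambda=0$ in the increasing interval, I would note that $f$ is continuous on $[0,+\infty)$ with $f(0)=0<f(\lambda)$ for all $\lambda>0$; together with monotonicity on $\tp{0,\frac{a}{b-a}}$ this yields that $f$ is monotonically increasing on the closed interval $\sqtp{0,\frac{a}{b-a}}$. There is no real obstacle in this argument; the only point requiring any care is that $\ln f$ is undefined at $\lambda=0$, which is precisely why the left endpoint is handled via continuity and the value $f(0)=0$ rather than by differentiating directly.
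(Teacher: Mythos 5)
Your proposal is correct and follows essentially the same route as the paper, which also computes $f'(\lambda)=e^{\ln f(\lambda)}\cdot\frac{a-(b-a)\lambda}{\lambda(\lambda+1)}$ and reads off the sign from the numerator. Your additional continuity argument at $\lambda=0$ is a minor extra care the paper omits but does not change the substance.
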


\begin{proof}
    It holds that
    \begin{align*}
        \frac{\partial f}{\partial \lambda} = e^{\ln f(\lambda)} \cdot \frac{a-(b-a)\lambda}{\lambda(\lambda+1)}
    \end{align*}
    for all $\lambda>0$.
\end{proof}

\section{Counting independent sets for $\lambda \ge 1$} \label{sec:highFugacity}

Throughout this section, we consider integers $\Delta \ge \lowerBoundOfDeltaForIS$, fugacity $\lambda\ge 1$ and set parameters $\zeta,\alpha,\beta$ to be
\begin{align*}
  \zeta=\valueOfZeta,\alpha=\frac{\tOfIS}{\Delta},\beta=\frac{\Delta}{\tOfIS\zeta}.
\end{align*}
%For $\Delta\ge 53$, we show that almost every $\Delta$-regular bipartite graph is an $(\alpha,\beta)$-expander.
\begin{lemma} \label{lem:almostAtHighFugacity}
  For $\Delta \ge \lowerBoundOfDeltaForIS$, $\displaystyle\lim_{n\to\infty}\Pr_{G\sim \setG}\sqtp{G\in \graphClassAtHighFugacity}=1$.
\end{lemma}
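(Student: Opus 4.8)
The plan is to derive \Cref{lem:almostAtHighFugacity} from \Cref{lem:conditionOfExpanders}, so the whole proof amounts to checking the two hypotheses of that lemma for the parameters $\alpha=\tOfIS/\Delta$ and $\beta=\Delta/(\tOfIS\zeta)$ fixed in this section. The hypothesis $0<\alpha<1/\beta<1$ is immediate: $\alpha=\tOfIS/\Delta$ and $1/\beta=\tOfIS\zeta/\Delta$ with $\zeta=\valueOfZeta>1$, and $\Delta\ge\lowerBoundOfDeltaForIS$ puts both in $(0,1)$ with $\alpha<1/\beta$. The real content is the inequality
\[
  \Delta > \frac{H(\alpha) + H(\alpha\beta)}{H(\alpha) - \alpha\beta H(1/\beta)}.
\]

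The first simplification is to observe that for these parameters $\alpha\beta=1/\zeta$ does not depend on $\Delta$ and $1/\beta=\zeta\alpha$. Writing $x=\alpha=\tOfIS/\Delta$ (so $\Delta=\tOfIS/x$), the numerator above equals $H(x)+H(1/\zeta)$ and the denominator equals $H(x)-\zeta^{-1}H(\zeta x)$; by \Cref{lem:monoOfEntropyVariant} with $a=\zeta$ the latter is at least $x(\ln\zeta-x)\log_2 e$, which is strictly positive on the relevant range $0<x\le \tOfIS/\lowerBoundOfDeltaForIS<\ln\zeta$. Hence, clearing this positive denominator and multiplying by $x>0$, the inequality is equivalent to $\phi(x)>0$, where
\[
  \phi(x) := \tOfIS\,\bigl(H(x) - \zeta^{-1}H(\zeta x)\bigr) - x\,\bigl(H(x) + H(1/\zeta)\bigr).
\]
Because $\{\tOfIS/\Delta : \Delta\in\=Z,\ \Delta\ge\lowerBoundOfDeltaForIS\}\subseteq(0,x_0]$ with $x_0:=\tOfIS/\lowerBoundOfDeltaForIS$, it suffices to prove $\phi(x)>0$ for every real $x\in(0,x_0]$.

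I would prove this by concavity. Extend $\phi$ continuously to $[0,x_0]$ using $H(0)=0$, so $\phi(0)=0$, and evaluate numerically to get $\phi(x_0)\approx1.2\times10^{-4}>0$ (equivalently, the ratio $\bigl(H(\alpha)+H(\alpha\beta)\bigr)/\bigl(H(\alpha)-\alpha\beta H(1/\beta)\bigr)$ at $\Delta=\lowerBoundOfDeltaForIS$ is $\approx52.9$, just under $53$). Using $H'(x)=\log_2\tfrac{1-x}{x}$ and $H''(x)=-\tfrac{\log_2 e}{x(1-x)}$ one computes
\[
  \phi''(x) = \frac{\tOfIS(\zeta-1)\log_2 e}{(1-x)(1-\zeta x)} + \frac{\log_2 e}{1-x} - 2\log_2\frac{1-x}{x}.
\]
On $(0,x_0]$ the first two summands are increasing and hence bounded by their values at $x_0$, which are below $1.34$ and $1.53$, while the third is at most $-2\log_2\tfrac{1-x_0}{x_0}<-8$; hence $\phi''<0$ on $(0,x_0]$ and $\phi$ is concave on $[0,x_0]$. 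Then $\phi(0)=0$ and $\phi(x_0)>0$ give $\phi(x)\ge(x/x_0)\phi(x_0)>0$ for all $x\in(0,x_0]$. Applying this with $x=\tOfIS/\Delta$ for each integer $\Delta\ge\lowerBoundOfDeltaForIS$ establishes the required inequality, and \Cref{lem:conditionOfExpanders} finishes the proof.

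The step I expect to be the main obstacle is precisely this numerical verification, because the inequality is genuinely tight at $\Delta=\lowerBoundOfDeltaForIS$ (the ratio is only about $0.1$ below $53$): the estimate from \Cref{lem:monoOfEntropyVariant} by itself is too lossy to conclude, so the argument must be organized to make the large-$\Delta$ regime automatic — which is what $\phi(0)=0$ does after the reparametrization $x=\tOfIS/\Delta$ — and to reduce everything to the single honest check $\phi(x_0)>0$. The only care needed is in tracking the constants in that check and in the upper bound on $\phi''$.
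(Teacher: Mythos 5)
Your proposal is correct, and it follows the paper's skeleton (reduce to \Cref{lem:conditionOfExpanders}, check $0<\alpha<1/\beta<1$, then verify the entropy inequality), but it handles the key numerical verification by a genuinely different route. The paper splits into two regimes: for $\Delta\ge 1000$ it gets a clean lower bound on the denominator from \Cref{lem:monoOfEntropyVariant} and concludes $f(\Delta)\ge 0.2\Delta$, while for $\lowerBoundOfDeltaForIS\le\Delta<1000$ it resorts to a computer check of each integer value (reporting $f(\lowerBoundOfDeltaForIS)\approx 0.11$). You instead reparametrize by $x=\tOfIS/\Delta$, observe that $\alpha\beta=1/\zeta$ is constant so the condition becomes $\phi(x)>0$ with $\phi(x)=\tOfIS\bigl(H(x)-\zeta^{-1}H(\zeta x)\bigr)-x\bigl(H(x)+H(1/\zeta)\bigr)$, and then use $\phi(0)=0$, concavity of $\phi$ on $[0,x_0]$, and a single endpoint evaluation $\phi(x_0)>0$. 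I checked your second-derivative formula (the algebra combining $(\tOfIS-x)H''(x)$ with $-\tOfIS\zeta H''(\zeta x)$ does yield the three displayed summands) and your numerical bounds ($\approx 1.33$, $\approx 1.53$, $\approx -8.22$ at $x_0$), and your value $\phi(x_0)\approx 1.1\text{--}1.2\times 10^{-4}$ is consistent with the paper's $f(\lowerBoundOfDeltaForIS)\approx 0.11$ via $f(\Delta)=\phi(x)/(xD(x))$. What your approach buys is the elimination of the brute-force enumeration over $947$ values of $\Delta$, replacing it with one honest numerical evaluation plus calculus — the single check is unavoidable in either proof since the inequality is tight at $\Delta=\lowerBoundOfDeltaForIS$; what the paper's approach buys is avoiding the second-derivative computation entirely at the cost of ``use computers to verify.''
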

\begin{proof}
  We verify that the conditions in \Cref{lem:conditionOfExpanders} are satisfied.
  Recall that $\zeta = \valueOfZeta, \alpha = \tOfIS / \Delta, \beta = \Delta / (\tOfIS\zeta)$ and $\Delta \ge \lowerBoundOfDeltaForIS$.
  Clearly $0 < \alpha < 1/\beta < 1$.
  Let $f(\Delta)= \Delta-\frac{H(\alpha) + H(\alpha \beta)}{H(\alpha) - \alpha \beta H(1/\beta)}$.
  It follows from \Cref{lem:monoOfEntropyVariant} that
  \begin{align*}
    H(\alpha) - \alpha \beta H(1/\beta)
    = H(\tOfIS/\Delta) - 1/\zeta H(\tOfIS\zeta / \Delta)
    &\ge \tOfIS/\Delta(\ln \zeta - \tOfIS/\Delta)\log_2 e \\
    &\ge \tOfIS/\Delta(\ln \valueOfZeta - \tOfIS / 1000)\log_2 e\\
    &\ge 1/\Delta
  \end{align*}
  for any $\Delta\ge 1000$.
  Then
  \begin{align*}
    f(\Delta)
    \ge \Delta - \frac{H(\tOfIS/1000) + H(1/\zeta)}{1/\Delta}
    \ge 0.2\Delta > 0
  \end{align*}
  for $\Delta \ge 1000$.
  For $\lowerBoundOfDeltaForIS\le \Delta < 1000$, we can use computers to verify that $f(\Delta) > 0$.
  Actually, in the current setting of parameters, $f(52) \approx -0.06 < 0 < f(\lowerBoundOfDeltaForIS)\approx 0.11$.
\end{proof}

In the rest of this section, whenever possible, we will simplify notations by omitting superscripts, subscripts and brackets with the symbols between (but this will not happen in the statement of lemmas and theorems).
For example, $Z(G, \lambda)$ may be written as $Z$ if $G$ and $\lambda$ are clear from context.

\subsection{Approximating $Z(G, \lambda)$}
For all $G=(\+L,\+R,E)\in\graphClassAtHighFugacity,\+X\in\set{\+L,\+R}$ and $\lambda\ge 1$, we define
\begin{align*}
  \clusterOfIS &= \set{I\in \+I(G)\cmid \abs{I\cap \+X}< \alpha n},
  Z_{\+X}(G,\lambda) = \sum_{I\in \clusterOfIS} \lambda^{\abs I}.
\end{align*}
The main result in this part is that we can use $Z_\+L(G,\lambda) + Z_\+R(G,\lambda)$ to approximate $Z(G, \lambda)$.

\begin{lemma}\label{lem:strucApproxOfIS}
  For $\Delta\ge\lowerBoundOfDeltaForIS$ and $\lambda \ge 1$, there are constants $C=C(\Delta)>1$ and $N=N(\Delta)$ so that for all  $G\in\graphClassAtHighFugacity$ with $n > N$ vertices on both sides, $Z_\+L(G,\lambda)+Z_\+R(G,\lambda)$ is a $C^{-n}$-relative approximation to $Z(G, \lambda)$.
\end{lemma}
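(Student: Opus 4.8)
The plan is to show that every independent set $I\in\+I(G)$ which is counted \emph{neither} in $\clusterOfIS[\+L]$ \emph{nor} in $\clusterOfIS[\+R]$ — i.e.\ one with $\abs{I\cap\+L}\ge\alpha n$ \emph{and} $\abs{I\cap\+R}\ge\alpha n$ — contributes a negligible fraction of the total weight, so that $Z_\+L+Z_\+R$ recovers $Z$ up to an exponentially small relative error. First I would observe the trivial lower bound $Z(G,\lambda)\ge 2\lambda^n - 1 \ge \lambda^n$, since $\+L$ and $\+R$ are each independent sets of weight $\lambda^n$ (and $G$ being $\Delta$-regular with $\Delta\ge 1$ has no isolated vertices, so these are distinct from $\emptyset$; more carefully $Z\ge \lambda^n$ suffices). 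Next, since $I\in \clusterOfIS[\+L]\cup\clusterOfIS[\+R]$ unless $\abs{I\cap\+L}\ge\alpha n$ and $\abs{I\cap\+R}\ge\alpha n$, and since $\clusterOfIS[\+L]\cap\clusterOfIS[\+R]$ may be nonempty, I would write
\begin{align*}
  \abs{Z_\+L + Z_\+R - Z}
  \le \sum_{\substack{I\in\+I(G)\\ \abs{I\cap\+L}\ge\alpha n\\ \abs{I\cap\+R}\ge\alpha n}} \lambda^{\abs I}
     \;+\; \sum_{I\in \clusterOfIS[\+L]\cap\clusterOfIS[\+R]} \lambda^{\abs I}.
\end{align*}
I expect the first (``far from both clusters'') sum to be the main term and the main obstacle; the second (``double-counted'') sum will turn out to be even smaller and handled by a similar but easier argument.

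For the far-from-both term, fix $a=\abs{I\cap\+L}$ and $b=\abs{I\cap\+R}$ with $a,b\ge\alpha n$; WLOG $a\le b$, so $a\le n/2$. The key structural fact is expansion: if $U=I\cap\+L$ then $N_G(U)\subseteq \+R\setminus(I\cap\+R)$ because $I$ is independent, so $\abs{I\cap\+R}\le n-\abs{N_G(U)}$. Since $\alpha n\le \abs U=a\le n/2$ and, with the chosen parameters, $a\le n/2 = \tfrac12 n$ may exceed $\alpha n$, I would split on the size of $a$: for $\alpha n\le a$, either $a\le \alpha n$ is false so I instead use that $\abs{N_G(U)}\ge \beta a$ whenever $a\le\alpha n$ — but here $a$ can be large, so I would instead invoke the $(a,b)$-cover-type consequence of expansion, or more simply iterate: the precise bound I want is $\abs{N_G(U)}\ge \min(\beta a, (1-o(1))n)$, which follows from \Cref{lem:conditionOfExpanders}/the expansion property (for $\abs U\le\alpha n$ directly, and for larger $\abs U$ by noting $N_G(\cdot)$ is monotone so $\abs{N_G(U)}\ge\beta\alpha n$, and $\beta\alpha = \Delta/(\zeta\Delta)\cdot 1 = 1/\zeta$, wait — $\beta\alpha n = (\Delta/(\tOfIS\zeta))(\tOfIS/\Delta)n = n/\zeta$). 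Since $\zeta=\valueOfZeta<2$, $n/\zeta > n/2 \ge a$, so in fact $b\le n - \abs{N_G(U)} \le n - \min(\beta a, n/\zeta)$ forces $b \le (1-1/\zeta)n < a$ whenever $a>(1-1/\zeta)n$, contradicting $a\le b$; hence $a\le (1-1/\zeta)n$ and then $\abs{N_G(U)}\ge\beta a$ is unavailable for $a>\alpha n$... — at this point the clean route is: the number of independent sets with $\abs{I\cap\+L}=a$ is at most $\binom{n}{a}\binom{n-\abs{N_G(U)}}{b}\le \binom na\binom{n-\beta\min(a,\alpha n)}{b}$, summed over the relevant $a,b$, and bound the whole thing by $2^{o(n)}\lambda^{a+b}$ with $a+b\le n - c n$ for an absolute constant $c>0$ coming from expansion (this is where one uses $\alpha,\beta,\zeta$ quantitatively). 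Comparing with $Z\ge\lambda^n$ and using $\lambda\ge 1$ so that $\lambda^{a+b}\le \lambda^{n-cn}=\lambda^n\lambda^{-cn}$, together with the entropy bounds \Cref{lem:boundsOfBinomCoefficients} and \Cref{lem:polyUpperBoundOfEntropy} to absorb the binomial prefactors, gives a bound of the form $C^{-n}\lambda^n \le C^{-n}Z$ for a suitable $C=C(\Delta)>1$ and all large $n$.

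For the double-counted term $\sum_{I\in\clusterOfIS[\+L]\cap\clusterOfIS[\+R]}\lambda^{\abs I}$, every such $I$ has $\abs{I\cap\+L}<\alpha n$ and $\abs{I\cap\+R}<\alpha n$, hence $\abs I<2\alpha n$, so this sum is at most $\binom{2n}{<2\alpha n}\cdot 2^{2\alpha n}\cdot \lambda^{2\alpha n}\le 2^{(H(2\alpha)+2\alpha)\cdot O(n)}\lambda^{2\alpha n}$, which is $C^{-n}\lambda^n$ for large $\Delta$ since $2\alpha=2\tOfIS/\Delta\to 0$ makes the exponent negative relative to $\log\lambda\cdot 1$ — wait, this needs $\lambda$ not too large, but here we only claim $\lambda\ge 1$ with no upper bound, so I must instead note $\lambda^{2\alpha n}=\lambda^{n}\lambda^{-(1-2\alpha)n}$ and... no: the issue is $\lambda^{2\alpha n}$ could be comparable to $\lambda^n$ when $\lambda$ is huge only if $2\alpha$ close to $1$, but $2\alpha = 5.8/\Delta \le 5.8/53 < 1$, and crucially the comparison is $\lambda^{2\alpha n}$ vs.\ $\lambda^n$, ratio $\lambda^{-(1-2\alpha)n}\le 1$, so this term is $\le 2^{(H(2\alpha)+2\alpha)2n}\lambda^n\cdot\lambda^{-(1-2\alpha)n}\le 2^{O(n/\Delta)}\lambda^n$, and to beat $C^{-n}\lambda^n$ we need $2^{O(n/\Delta)}$ small, which fails for fixed $\Delta$. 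The fix: combine with expansion here too — if $\abs{I\cap\+L}<\alpha n$ then $\abs{I\cap\+R}\le n-\beta\abs{I\cap\+L}$, but also $\abs{I\cap\+R}<\alpha n$, and more importantly I should bound $\sum\lambda^{\abs I}$ over this class against $Z_\+L$ itself (not against $\lambda^n$): it is a sub-sum of $Z_\+L$ and I only need it relative to $Z$; so it suffices that $Z \gg$ (this term), which again reduces to the expansion count above with $a,b<\alpha n$ giving $\abs I$ small and $\binom{n}{a}\binom{n}{b}\le 2^{2H(\alpha)n}$ tiny. I expect the cleanest unified treatment is: define the ``bad'' set $\+B=\{I:\abs{I\cap\+L}\ge\alpha n \text{ and }\abs{I\cap\+R}\ge\alpha n\}$ for the overcount and handle double-counting by instead using $Z_\+L+Z_\+R\le Z + (\text{double count})$ and $Z_\+L+Z_\+R\ge Z - (\text{far-from-both})$, bounding \emph{both} error terms by the single expansion-based estimate, then setting $C$ from the worst of the two exponents. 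The main obstacle throughout is getting the expansion parameters $\alpha=\tOfIS/\Delta,\beta=\Delta/(\tOfIS\zeta),\zeta=\valueOfZeta$ to make the entropy exponent $H(a/n)+H(b/n)+(a+b)/n\cdot\log_2\lambda$ strictly less than $\log_2\lambda$ uniformly — but since $\lambda\ge 1$ is unbounded, the $\log_2\lambda$ terms cancel favorably (the bad sets have $a+b\le(1-c)n$), reducing everything to a purely combinatorial entropy inequality in $\Delta$, which is exactly what the constants $53$, $\valueOfZeta$ are tuned for and what \Cref{lem:polyUpperBoundOfEntropy}, \Cref{lem:nonPolyUpperBoundOfEntropy}, \Cref{lem:monoOfEntropyVariant} are set up to verify.
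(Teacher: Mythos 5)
Your decomposition of the error into a ``far from both clusters'' term and a ``double-counted'' term is exactly the paper's (it splits the claim into \Cref{lem:strucApprox1OfIS} and \Cref{lem:strucApprox2OfIS}), and your use of expansion to force $\abs{I\cap\+L},\abs{I\cap\+R}\le(1-1/\zeta)n$ for the bad sets is also the paper's argument. But there is a genuine gap at the very first step: the lower bound $Z(G,\lambda)\ge\lambda^n$ is far too weak, and everything downstream fails because of it. The whole point of the ``ground cluster'' idea is that \emph{every} subset of $\+L$ is an independent set, so
\begin{align*}
  Z(G,\lambda)\;\ge\;\sum_{I\subseteq\+L}\lambda^{\abs I}\;=\;(\lambda+1)^n,
\end{align*}
and it is this $(\lambda+1)^n$ (equal to $2^n$ at $\lambda=1$) that must sit in the denominator. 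With your denominator $\lambda^n$, the target inequality you state at the end, namely $H(a/n)+H(b/n)+\frac{a+b}{n}\log_2\lambda<\log_2\lambda$, reads $H(a/n)+H(b/n)<0$ at $\lambda=1$, which is impossible; the binomial prefactors $2^{H(\cdot)n}$ are exponentially large and are never ``absorbed.'' The same failure occurs in your double-count bound, where you yourself notice that $2^{O(n/\Delta)}\lambda^n$ does not beat $C^{-n}\lambda^n$ --- but your proposed fix (``it suffices that $Z\gg$ this term'') is circular without a stronger lower bound on $Z$, and $\binom{n}{a}\binom{n}{b}\le 2^{2H(\alpha)n}$ is not ``tiny'' in absolute terms, only relative to $2^n$. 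Once the denominator is $(\lambda+1)^n$, both estimates go through: the far term is at most $n\,2^{H(1-1/\zeta)n}(\lambda^2+\lambda)^{(1-1/\zeta)n}/(\lambda+1)^n$, which by \Cref{lem:coreMonotoneFunc} is maximized over $\lambda\ge1$ at $\lambda=1$ where it equals $n\,2^{(H(1-1/\zeta)-1/\zeta)n}$ with a negative exponent, and the intersection term is at most $n^2\bigl(4^{H(\alpha)}\lambda^{2\alpha}/(\lambda+1)\bigr)^n$, again decreasing in $\lambda$ and below $0.76^n$ at $\lambda=1$. A secondary, fixable issue is your detour about applying expansion to sets $U$ with $\abs U>\alpha n$: you only ever need to apply it to a subset of $I\cap\+X$ of size exactly $\floor{\alpha n}$ and use monotonicity of $N_G(\cdot)$, which already yields $\abs{N_G(I\cap\+X)}\ge\beta\floor{\alpha n}\approx n/\zeta$.
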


\begin{proof}
  %Fix $\Delta \ge \lowerBoundOfDeltaForIS, G=(\+L,\+R,E)\in \graphClassAtHighFugacity$ with $n$ vertices on both sides and $\lambda \ge 1$.
  Let $N_1,C_1,N_2,C_2$ be the constants in \Cref{lem:strucApprox1OfIS} and \Cref{lem:strucApprox2OfIS}, respectively.
  It follows from these lemmas that
  \begin{align*}
    \exp(-(C_1^{-n} + C_2^{-n})) Z \le Z_\+L + Z_\+R \le \exp(C_1^{-n} + C_2^{-n}) Z
  \end{align*}
  for all $n > \max(N_1,N_2)$.
  It is clear that $C_1^{-n} + C_2^{-n} \le 2\min(C_1,C_2)^{-n} = \tp{\min(C_1,C_2)/2^{1/n}}^{-n} < C^{-n}$ for another constant $C=C(\Delta)>1$ and for all $n > N\ge \max(N_1,N_2)$ where $N=N(\Delta)$ is another sufficiently large constant.
  Therefore we obtain
  \begin{align*}
    \exp(-C^{-n}) Z \le Z_\+L + Z_\+R \le \exp(C^{-n}) Z
  \end{align*}
  for all $n > N$.
\end{proof}

\begin{lemma} \label{lem:strucApprox1OfIS}
    For $\Delta\ge 3$ and $\lambda \ge 1$, there are constants $C=C(\Delta)>1$ and $N=N(\Delta)$ so that for all  $G\in\graphClassAtHighFugacity$ with $n > N$ vertices on both sides, $\sum_{I\in \+I_\+L(G)\cup \+I_\+R(G)}\lambda^{\abs{I}}$ is a $C^{-n}$-relative approximation to $Z(G,\lambda)$.
\end{lemma}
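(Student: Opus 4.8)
The plan is to reduce the statement to an estimate on the total weight of the ``balanced'' independent sets and then kill that weight by a union/entropy argument driven by the expansion of $G$. Since $\+I_\+L(G)\cup\+I_\+R(G)\subseteq\+I(G)$, the inequality $\sum_{I\in\+I_\+L(G)\cup\+I_\+R(G)}\lambda^{\abs I}\le Z(G,\lambda)$ is immediate, so only the reverse direction needs work. Write
\[
  S\ :=\ Z(G,\lambda)-\!\!\sum_{I\in\+I_\+L(G)\cup\+I_\+R(G)}\!\!\lambda^{\abs I}\ =\!\!\sum_{\substack{I\in\+I(G)\\ \abs{I\cap\+L}\ge\alpha n,\ \abs{I\cap\+R}\ge\alpha n}}\!\!\lambda^{\abs I}
\]
for the weight of the independent sets close to neither ground cluster. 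Because every $U\subseteq\+L$ is an independent set with $\abs{U\cap\+R}=0<\alpha n$, the whole lattice $2^{\+L}$ lies inside $\+I_\+R(G)$, so $\sum_{I\in\+I_\+L(G)\cup\+I_\+R(G)}\lambda^{\abs I}\ge(1+\lambda)^n$ and likewise $Z(G,\lambda)\ge(1+\lambda)^n$. Hence it suffices to show $S\le C_1^{-n}(1+\lambda)^n$ for a constant $C_1>1$ once $n$ is large: then $S\le C_1^{-n}\sum_{I\in\+I_\+L(G)\cup\+I_\+R(G)}\lambda^{\abs I}$, so $Z=\bigl(\sum_{I\in\+I_\+L(G)\cup\+I_\+R(G)}\lambda^{\abs I}\bigr)+S\le\bigl(\sum_{I\in\+I_\+L(G)\cup\+I_\+R(G)}\lambda^{\abs I}\bigr)(1+C_1^{-n})$, and $(1+x)^{-1}\ge e^{-x}$ upgrades this to a $C^{-n}$-relative approximation (for $1<C\le C_1$; here and throughout we treat $\lambda\ge1$ as a fixed parameter, so the constants may depend on it and on $\Delta$).

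To control $S$ I would first convert expansion into a pigeonhole bound. For any set $U$ contained in one side with $\abs U\ge\alpha n$, pick $U'\subseteq U$ with $\abs{U'}=\floor{\alpha n}$; applying the $(\alpha,\beta)$-expansion to $U'$ and using $N_G(U)\supseteq N_G(U')$ together with $\alpha\beta=1/\zeta$ gives $\abs{N_G(U)}\ge\beta\floor{\alpha n}\ge n/\zeta-\beta$. Applying this with $U=I\cap\+R$ for a balanced $I$, and recalling $I\cap\+L\subseteq\+L\setminus N_G(I\cap\+R)$, forces $\abs{I\cap\+L}\le n(1-1/\zeta)+\beta=:m$, and symmetrically $\abs{I\cap\+R}\le m$. (If $\Delta$ is too small for $\alpha<1-1/\zeta$ there are no balanced sets and $S=0$, so we may assume $m$ is bounded away from $n$.) Now group the balanced sets by $a=\abs{I\cap\+L}$, $b=\abs{I\cap\+R}$: having fixed $I\cap\+L=U$ with $\abs U=a\ge\alpha n$, the part $I\cap\+R$ is an arbitrary $b$-subset of $\+R\setminus N_G(U)$, a set of size $n-\abs{N_G(U)}\le m$, so the number of balanced $I$ with these parameters is at most $\binom{n}{a}\binom{\floor m}{b}$, giving
\[
  S\ \le\ \left(\sum_{a=\ceil{\alpha n}}^{\floor m}\binom{n}{a}\lambda^a\right)\left(\sum_{b=0}^{\floor m}\binom{\floor m}{b}\lambda^b\right)\ \le\ (n+1)\binom{n}{\floor m}\lambda^{\floor m}(1+\lambda)^{\floor m}.
\]
Here the second sum is exactly $(1+\lambda)^{\floor m}$, and for $\lambda\ge1$ and $n$ large (so $m<n/2$) the terms $\binom{n}{a}\lambda^a$ are increasing in $a$ on $[0,m]$, so the first sum is at most $(n+1)$ times its top term.

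It remains to estimate the right-hand side. By \Cref{lem:boundsOfBinomCoefficients}, $\binom{n}{\floor m}\le2^{H(\floor m/n)n}$, and since $\floor m/n=(1-1/\zeta)+O(\beta/n)$ with $H$ Lipschitz near $1-1/\zeta\in(0,1)$ this is $2^{O(\beta)}2^{H(1-1/\zeta)n}$; similarly $\lambda^{\floor m}(1+\lambda)^{\floor m}\le(1+\lambda)^{O(\beta)}\lambda^{n(1-1/\zeta)}(1+\lambda)^{n(1-1/\zeta)}$. Dividing by $(1+\lambda)^n$,
\[
  \frac{S}{(1+\lambda)^n}\ \le\ (n+1)\,2^{O(\beta)}(1+\lambda)^{O(\beta)}\left(\frac{2^{H(1-1/\zeta)}\lambda^{1-1/\zeta}}{(1+\lambda)^{1/\zeta}}\right)^{\!n}.
\]
A short calculation shows that $\lambda\mapsto(1-1/\zeta)\ln\lambda-(1/\zeta)\ln(1+\lambda)$ has a single critical point, at $\lambda=(1-1/\zeta)/(2/\zeta-1)$, which is below $1$ since $\zeta<3/2$; hence on $[1,\infty)$ the base above is decreasing, attaining its maximum at $\lambda=1$, namely $2^{H(1-1/\zeta)-1/\zeta}$. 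This is strictly less than $1$ precisely because $\zeta=\valueOfZeta$: one checks $H(1-1/1.28)\approx0.758<0.781\approx1/1.28$, so the base is at most $2^{-\delta}$ for a constant $\delta>0$, whence $S/(1+\lambda)^n\le(n+1)\,2^{O(\beta)}(1+\lambda)^{O(\beta)}2^{-\delta n}\le C_1^{-n}$ for some constant $C_1>1$ and all large $n$, as required.

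The one genuinely delicate point is the last inequality $H(1-1/\zeta)<1/\zeta$: $\zeta$ must be small enough for it to hold (it already fails around $\zeta\approx1.29$) while being large enough — together with $\alpha=\tOfIS/\Delta$ — for $\graphClassAtHighFugacity$ to be nonempty for almost every $G$ (\Cref{lem:almostAtHighFugacity}); the value $\zeta=\valueOfZeta$ is the choice that makes the threshold $\Delta\ge\lowerBoundOfDeltaForIS$ work, and everything else above is a routine entropy and geometric-series estimate. I expect the bookkeeping of the $2^{O(\beta)}$ and $(1+\lambda)^{O(\beta)}$ prefactors (which are absorbed only because $n\to\infty$ with $\Delta$ and $\lambda$ fixed) to be the only other mildly fiddly part.
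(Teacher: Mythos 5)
Your proposal is correct and follows essentially the same route as the paper's proof: split off $\+B=\+I\setminus(\+I_\+L\cup\+I_\+R)$, lower-bound the main term by $(\lambda+1)^n$, use expansion to force both sides of a balanced independent set below roughly $(1-1/\zeta)n$, and then beat the resulting entropy term $2^{H(1-1/\zeta)}\lambda^{1-1/\zeta}(\lambda+1)^{-1/\zeta}$ by monotonicity in $\lambda$ and the numerical check $H(1-1/\zeta)<1/\zeta$ at $\zeta=\valueOfZeta$. The only difference is presentational: you track the floor/boundary corrections as explicit $O(\beta)$ prefactors, where the paper absorbs them by locally redefining $1/\zeta$ as $\beta\floor{\alpha n}/n$ and taking $n$ large.
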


\begin{proof}
  %Fix $\Delta\ge\lowerBoundOfDeltaForIS, G=(\+L,\+R,E) \in \graphClassAtHighFugacity$ with $n$ vertices on both sides and $\lambda \ge 1$.
  It is clear that
  \begin{align}
    \label{eq:midstep7}
    \sum_{I\in \+I_\+L\cup \+I_\+R} \lambda^{\abs I}
    \ge (\lambda+1)^n.
  \end{align}
  Let $\+B = \+I \setminus \tp{\+I_\+L\cup \+I_\+R}$.
  For any $I\in \+B$, it follows from the definition of $\+B$ that $\abs{I\cap \+L}\ge \alpha n$ and $\abs{I\cap \+R}\ge \alpha n$.
  Using the expansion property, we obtain $\abs{N(I\cap \+L)}\ge \beta\lfloor \alpha n\rfloor$ and thus $\abs{I\cap \+R}\le n - \abs{N(I\cap \+L)}\le (1-1/\zeta)n$ where $1/\zeta = \beta\lfloor \alpha n\rfloor/n \geq \alpha\beta - \beta/n$.
  Analogously, it holds that $\abs{I\cap \+L}\le (1-1/\zeta)n$.
  In the following, we assume $n\ge N_1$ for some $N_1=N_1(\Delta)>0$, such that
  \begin{align}
    1-1/\zeta\le 0.219.
  \end{align}
  We obtain an upper bound of $\sum_{I\in \+B}\lambda^{\abs I}$ as follows:
  \begin{enumerate}[label=\alph*)]
    \item Consider an independent set $I\in \+B$.
      Recall that $\alpha n \le \abs{I\cap \+L} \le (1-1/\zeta)n$.
      We first enumerate a subset $U\subseteq \+L$ with $\alpha n \le \abs U \le (1-1/\zeta)n$ and then enumerate all independent sets $I$ with $I\cap \+L = U$.
      Since $1-1/\zeta<1/2$, there are at most
      \begin{align*}
        n\binom{n}{\floor{(1-1/\zeta)n}}
        %\le n2^{H(\floor{(1-1/\zeta)n}/n)n}
        \le n2^{H(1-1/\zeta)n}
      \end{align*}
      ways to enumerate such a set $U$, where the inequality follows from \Cref{lem:boundsOfBinomCoefficients}.
    \item Now fix a set $U\subseteq \+ L$.
      Recall that every independent set $I\in \+B$ satisfies $\abs{I\cap \+R}\le (1-1/\zeta)n$.
      Therefore
      \begin{align*}
        \sum_{I\in\+B\cmid \abs{I\cap \+L}=U} \lambda^{\abs{I}}
        %&= \sum_{I\in\+B\cmid \abs{I\cap \+L}=U} \lambda^{\abs{I\cap \+L}} \lambda^{\abs{I\cap \+R}} \\
        = \lambda^{\abs{U}} \sum_{I\in\+B\cmid \abs{I\cap \+L}=U}  \lambda^{\abs{I\cap \+R}}
        \le \lambda^{(1-1/\zeta)n}\tp{\lambda + 1}^{(1-1/\zeta) n}.
      \end{align*}
    \item Combining the first two steps we obtain
    \begin{align}
      \label{eq:midstep8}
      \sum_{I\in\+B}\lambda^{\abs I} \le n 2^{H(1-1/\zeta)n}\lambda^{(1-1/\zeta)n} (\lambda+1)^{(1-1/\zeta)n}
      = n 2^{H(1-1/\zeta)n}(\lambda^2+\lambda)^{(1-1/\zeta)n}.
    \end{align}
  \end{enumerate}
  Using \Cref{eq:midstep7} and \Cref{eq:midstep8}, we obtain
  \begin{align}
    \label{eq:midstep1}
    \frac{\sum_{I\in \+B}\lambda^{\abs I}}{\sum_{I\in \+I_\+L\cup \+I_\+R} \lambda^{\abs I}}
    \le
    \frac{n2^{H(1-1/\zeta)n}(\lambda^2+\lambda)^{(1-1/\zeta)n}}{(\lambda+1)^n}
    = n(f(\lambda))^n,
  \end{align}
  where
  \begin{align*}
    f(\lambda)=2^{H(1-1/\zeta)}\cdot \frac{\lambda^{1-1/\zeta}}{(\lambda+1)^{1/\zeta}}.
  \end{align*}
  Since $1-1/\zeta < 1/\zeta$, it follows from \Cref{lem:coreMonotoneFunc} that
  \begin{align*}
    f(\lambda) \le f(1) =2^{H(1-1/\zeta)-1/\zeta} <1
  \end{align*}
  for all $\lambda\ge 1$. So there exists some constant $C>1$ such that
  \begin{align*}
    \text{\Cref{eq:midstep1}} \le n (f(1))^n < C ^{-n}
  \end{align*}
  for all $n > N \ge N_1$ where $N=N(\Delta)$ is another sufficiently large constant.
  Using the upper bound on \Cref{eq:midstep1} and $1+x \le \exp(x)$ for any $x\in \=R$ we obtain
  \begin{align*}
    \sum_{I\in \+I_\+L\cup \+I_\+R} \lambda^{\abs I}
    \le Z
    = \sum_{I\in \+I_\+L\cup \+I_\+R} \lambda^{\abs I} + \sum_{I\in \+B}\lambda^{\abs I}
    \le \exp\tp{C^{-n}}\sum_{I\in \+I_\+L\cup \+I_\+R} \lambda^{\abs I}
  \end{align*}
  for all $n > N$.
\end{proof}

\begin{lemma} \label{lem:strucApprox2OfIS}
  For $\Delta\ge\lowerBoundOfDeltaForIS$ and $\lambda \ge 1$, there are constants $C>1$ and $N$ so that for all  $G\in\graphClassAtHighFugacity$ with $n > N$ vertices on both sides, $Z_\+L(G,\lambda) + Z_\+R(G,\lambda)$ is a $C^{-n}$-relative approximation to $\sum_{I\in \+I_\+L(G)\cup \+I_\+R(G)}\lambda^\abs{I}$.
\end{lemma}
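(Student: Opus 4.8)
The plan is to compare $Z_\+L(G,\lambda)+Z_\+R(G,\lambda)$ with the target quantity $Z' := \sum_{I\in\+I_\+L(G)\cup\+I_\+R(G)}\lambda^{\abs I}$: these two differ only by the weight of the independent sets lying in \emph{both} clusters. Since $Z_\+X(G,\lambda)=\sum_{I\in\+I_\+X(G)}\lambda^{\abs I}$, inclusion--exclusion on the index set gives
\[
  Z_\+L(G,\lambda)+Z_\+R(G,\lambda)=Z'+D,\qquad D:=\sum_{I\in\+I_\+L(G)\cap\+I_\+R(G)}\lambda^{\abs I}.
\]
As $D\ge 0$, this already yields the lower bound $Z'\le Z_\+L+Z_\+R$, so the lemma reduces to showing that $D/Z'$ is exponentially small in $n$. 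For the denominator I will reuse $Z'\ge(\lambda+1)^n$, which is exactly \Cref{eq:midstep7}.

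To bound the numerator, observe that every $I\in\+I_\+L(G)\cap\+I_\+R(G)$ satisfies $\abs{I\cap\+L}<\alpha n$ and $\abs{I\cap\+R}<\alpha n$. Writing $\lambda^{\abs I}=\lambda^{\abs{I\cap\+L}}\lambda^{\abs{I\cap\+R}}$ and overcounting by letting $I\cap\+L$ range over all subsets of $\+L$ of size less than $\alpha n$ and $I\cap\+R$ over all such subsets of $\+R$, we get $D\le\bigl(\sum_{k<\alpha n}\binom{n}{k}\lambda^k\bigr)^2$. Since $\lambda\ge 1$ and $\alpha=\tOfIS/\Delta<1/2$ (because $\Delta\ge\lowerBoundOfDeltaForIS$), a routine estimate using \Cref{lem:boundsOfBinomCoefficients} bounds the inner sum by $\poly(n)\cdot 2^{H(\alpha)n}\lambda^{\alpha n}$, whence $D\le\poly(n)\cdot 2^{2H(\alpha)n}\lambda^{2\alpha n}$. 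Combining with $Z'\ge(\lambda+1)^n$,
\[
  \frac{D}{Z'}\le\poly(n)\cdot\bigl(g(\lambda)\bigr)^n,\qquad g(\lambda):=2^{2H(\alpha)}\cdot\frac{\lambda^{2\alpha}}{\lambda+1}.
\]

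The one substantive point --- and the step I expect to be the main (essentially, the only) obstacle, since it is where the numerical condition on $\Delta$ is used --- is to verify $g(\lambda)<1$ uniformly for $\lambda\ge 1$. As $0<2\alpha<1$, \Cref{lem:coreMonotoneFunc} applied with exponents $a=2\alpha$ and $b=1$ shows that $\lambda\mapsto\lambda^{2\alpha}/(\lambda+1)$ increases on $[0,\tfrac{2\alpha}{1-2\alpha}]$ and decreases on $[\tfrac{2\alpha}{1-2\alpha},\infty)$; and $\Delta\ge\lowerBoundOfDeltaForIS$ forces $\alpha<1/4$, so $\tfrac{2\alpha}{1-2\alpha}<1$ and hence this function is decreasing on $[1,\infty)$. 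Therefore $g(\lambda)\le g(1)=2^{2H(\alpha)-1}$ for all $\lambda\ge 1$, and by \Cref{lem:polyUpperBoundOfEntropy} we have $H(\alpha)\le 2\sqrt{\alpha}=2\sqrt{\tOfIS/\Delta}<1/2$ whenever $\Delta\ge\lowerBoundOfDeltaForIS$ (it even suffices that $16\cdot\tOfIS<\Delta$, i.e.\ $\Delta\ge 47$), so $g(1)<1$. With $g(1)<1$ fixed, choose a constant $C>1$ and an integer $N$ so that $\poly(n)\,g(1)^n<C^{-n}$ for all $n>N$; then for all such $n$,
\[
  Z'\le Z_\+L(G,\lambda)+Z_\+R(G,\lambda)=Z'+D\le(1+C^{-n})\,Z'\le\exp\bigl(C^{-n}\bigr)\,Z',
\]
using $1+x\le e^x$, which is precisely the asserted $C^{-n}$-relative approximation.
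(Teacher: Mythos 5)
Your proposal is correct and follows essentially the same route as the paper: the same decomposition $Z_\+L+Z_\+R = Z' + \sum_{I\in\+I_\+L\cap\+I_\+R}\lambda^{\abs I}$, the same lower bound $Z'\ge(\lambda+1)^n$, the same upper bound $\bigl(\sum_{k\le\floor{\alpha n}}\binom{n}{k}\lambda^k\bigr)^2$ on the overlap, and the same use of \Cref{lem:coreMonotoneFunc} to reduce to $\lambda=1$. The only (harmless) difference is that you certify $4^{H(\alpha)}/2<1$ via $H(\alpha)\le 2\sqrt{\alpha}$ rather than by direct numerical evaluation as in the paper.
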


\begin{proof}
  %Fix $\Delta\ge\lowerBoundOfDeltaForIS$, $\lambda \ge 1$ and $G=(\+L,\+R,E) \in \graphClassAtHighFugacity$ with $n$ vertices on both sides.
  For any $I\in \+I_\+L\cap \+I_\+R$, it holds that $\abs{I\cap \+L}< \alpha n$ and $\abs{I\cap \+R}< \alpha n$.
  Clearly $\sum_{I\in \+I_\+L\cup \+I_\+R}\lambda^\abs{I}\ge (\lambda+1)^n$.
  Therefore
  \begin{align}
    \label{eq:midstep6}
    \frac{\sum_{I\in \+I_\+L\cap \+I_\+R}\lambda^\abs{I}}{\sum_{I\in \+I_\+L\cup \+I_\+R}\lambda^\abs{I}}
    \le (\lambda+1)^{-n}\tp{\sum_{k=0}^{\floor{\alpha n}} \binom{n}{k} \lambda^k}^2
    \le n^2\tp{\frac{4^{H(\alpha)} \lambda^{2\alpha}}{\lambda+1}}^{n},
  \end{align}
  where the last inequality follows from \Cref{lem:boundsOfBinomCoefficients}.
  Recall that $\alpha = \tOfIS / \Delta$ and $\Delta \ge \lowerBoundOfDeltaForIS$.
  Then
  \begin{align*}
    \frac{4^{H(\alpha)}\lambda^{2\alpha}}{\lambda+1}\bigg|_{\lambda=1} \le 0.76 < 1.
  \end{align*}
  It follows from \Cref{lem:coreMonotoneFunc} that $4^{H(\alpha)}\lambda^{2\alpha} / (\lambda + 1)$ is monotonically decreasing in $\lambda$ on $[1,\+\infty)$ for all fixed $\Delta \ge \lowerBoundOfDeltaForIS$.
  Thus
  \begin{align*}
    \text{\Cref{eq:midstep6}}
    \le \tp{1/\tp{0.76n^{2/n}}}^{-n} < C^{-n}
  \end{align*}
  for some constant $C>1$ and for all $n > N$ where $N$ is a sufficiently large constant.
  Using the upper bound on \Cref{eq:midstep6} and $1+x\le \exp(x)$ for any $x\in \=R$ we obtain
  \begin{align*}
    \sum_{I\in \+I_\+L\cup \+I_\+R} \lambda^{\abs I}
    \le Z_\+L + Z_\+R
    =\sum_{I\in \+I_\+L\cup \+I_\+R} \lambda^{\abs I} + \sum_{I\in \+I_\+L\cap \+I_\+R} \lambda^{\abs I}
    \le \exp(C^{-n}) \sum_{I\in \+I_\+L\cup \+I_\+R} \lambda^{\abs I}
  \end{align*}
  for all $n > N$.
\end{proof}

\subsection{Approximating $Z_\+X(G,\lambda)$} \label{sec:polymerOfIS}

In this subsection, we discuss how to approximate $Z_\+X(G, \lambda)$ for any graph $G\in \graphClassAtHighFugacity, \+X\in \set{\+L,\+R}$ and $\lambda\ge 1$.
We will use the polymer model (see \Cref{sec:polymer-model}).
First we constructively define the polymers we need.
For any $I\in \+I_\+X(G)$, we can partition the graph $(G^2)[I\cap \+X]$ into connected components $U_1,U_2,\ldots,U_k$ for some $k\ge 0$ (trivially $k=0$ if $I\cap \+X=\emptyset$). There are no edges in $G^2$ between $U_i$ and $U_j$ for any $1\le i \neq j \le k$.
If $k > 0$, let $p(I) = \set{(U_1, \*1_{U_1}), (U_2, \*1_{U_2}), \ldots, (U_k, \*1_{U_k})}$ where $\*1_{U_i}$ is the unique mapping from $U_i$ to $\set{1}$.
If $k=0$, let $p(I)=\emptyset$.
We define the set of all polymers to be
\begin{align*}
  \Gamma^*_{\+X}(G) = \bigcup_{I\in \+I_\+X(G)} p(I)
\end{align*}
and each element in this set is called a polymer. When the graph $G$ and $\+X$ are clear from the context, we simply denote by $\Gamma^*$ the set of polymers.
Clearly, $p$ is a mapping from $\+I_\+X(G)$ to the set $\set{\Gamma\in\+S(\Gamma^*_{\+X}(G))\cmid \aabs{\Gamma} < \alpha n}$ since $\aabs{p(I)} = \abs{I\cap \+X} < \alpha n$ for all $I\in \+I_\+X(G)$.
For each polymer $\gamma$, define its weight function $w(\gamma,\cdot)$ as
\begin{align*}
  w(\gamma,z) = \lambda^{\aabs \gamma}(\lambda+1)^{-\abs{N(\ol\gamma)}} z^{\aabs \gamma},
\end{align*}
where $z$ is a complex variable. The weight function can be computed in polynomial time in $\abs{\ol \gamma}$.

The partition function of the polymer model $(\Gamma^*,w)$ on the graph $G^2$ is the following sum:
\[
    \Xi(z) = \sum_{\Gamma\in\+S(\Gamma^*)} \prod_{\gamma\in\Gamma} w(\gamma,z).
\]
Recall that two polymers $\gamma_1$ and $\gamma_2$ are compatible if $d_{G^2}(\ol{\gamma_1},\ol{\gamma_2})>1$ and this condition is equivalent to $d_G(\ol{\gamma_1}, \ol{\gamma_2}) > 2$.

\begin{lemma}\label{lem:exactRepOfIS}
  For all bipartite graphs $G=(\+L,\+R,E)$ with $n$ vertices on both sides, $\+X\in \set{\+L,\+R}$ and $\lambda\ge 0$,
  \begin{align*}
  Z_\+X(G,\lambda)=(\lambda+1)^n\sum_{\Gamma\in\+S(\Gamma^*_{\+X}(G))\cmid \aabs{\Gamma} < \alpha n} \prod_{\gamma\in\Gamma}w(\gamma,1).
  \end{align*}
\end{lemma}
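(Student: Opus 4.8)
The plan is to show that the left-hand side and the right-hand side are both equal to $\sum_{U}\lambda^{\abs{U}}(\lambda+1)^{n-\abs{N_G(U)}}$, the sum ranging over all $U\subseteq\+X$ with $\abs{U}<\alpha n$. Write $\+Y\in\set{\+L,\+R}$ for the side opposite to $\+X$.

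First I would simplify the right-hand side. Substituting $w(\gamma,1)=\lambda^{\aabs{\gamma}}(\lambda+1)^{-\abs{N_G(\ol\gamma)}}$, a single summand equals $(\lambda+1)^{n}\prod_{\gamma\in\Gamma}\lambda^{\aabs\gamma}(\lambda+1)^{-\abs{N_G(\ol\gamma)}}=\lambda^{\aabs\Gamma}(\lambda+1)^{n-\sum_{\gamma\in\Gamma}\abs{N_G(\ol\gamma)}}$. The key observation is that for a compatible $\Gamma$ the neighborhoods $N_G(\ol\gamma)$, $\gamma\in\Gamma$, are pairwise disjoint: a common vertex $v\in N_G(\ol{\gamma_1})\cap N_G(\ol{\gamma_2})$ would be adjacent in $G$ to some $u_1\in\ol{\gamma_1}$ and some $u_2\in\ol{\gamma_2}$, forcing $d_G(\ol{\gamma_1},\ol{\gamma_2})\le 2$ and contradicting compatibility. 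Since moreover $\ol\gamma\subseteq\+X$ implies $N_G(\ol\gamma)\subseteq\+Y$, we get $N_G(\ol\Gamma)=\bigsqcup_{\gamma\in\Gamma}N_G(\ol\gamma)$ and hence $\sum_{\gamma\in\Gamma}\abs{N_G(\ol\gamma)}=\abs{N_G(\ol\Gamma)}$. Thus the right-hand side is $\sum_{\Gamma}\lambda^{\aabs\Gamma}(\lambda+1)^{n-\abs{N_G(\ol\Gamma)}}$, the sum over compatible $\Gamma\in\+S(\Gamma^*_{\+X}(G))$ with $\aabs\Gamma<\alpha n$.

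Next I would establish that $\Gamma\mapsto\ol\Gamma$ is a bijection from $\set{\Gamma\in\+S(\Gamma^*_{\+X}(G))\cmid\aabs\Gamma<\alpha n}$ onto $\set{U\subseteq\+X\cmid\abs{U}<\alpha n}$. Injectivity holds because, for a compatible $\Gamma$, the supports $\ol\gamma$ are exactly the connected components of $(G^2)[\ol\Gamma]$ and every associated coloring is $\*1$, so $\Gamma=p(\ol\Gamma)$ is recovered from $\ol\Gamma$. For surjectivity, given $U\subseteq\+X$ with $\abs{U}<\alpha n$, note that $U$ is an independent set of $G$ contained in $\+X$, so $U\in\+I_\+X(G)$ and $p(U)=\set{(U_1,\*1_{U_1}),\ldots,(U_k,\*1_{U_k})}$ where $U_1,\ldots,U_k$ are the components of $(G^2)[U]$; this set lies in $\+S(\Gamma^*_{\+X}(G))$ (the $U_i$ are pairwise non-adjacent in $G^2$), has $\aabs{p(U)}=\abs{U}<\alpha n$ (so in particular each $\aabs\gamma\le\abs{U}<\alpha n$, and every $(U_i,\*1_{U_i})=p(U_i)\in\Gamma^*_{\+X}(G)$), and satisfies $\ol{p(U)}=U$. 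Hence $\Gamma\mapsto\ol\Gamma$ and $U\mapsto p(U)$ are mutually inverse.

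Finally I would reduce the left-hand side. Every $I\in\+I_\+X(G)$ decomposes uniquely as $I=U\sqcup W$ with $U=I\cap\+X$ and $W=I\cap\+Y$; since $G$ is bipartite, $U$ and $W$ are automatically independent, and $I$ is independent precisely when no edge joins $U$ to $W$, i.e. when $W\subseteq\+Y\setminus N_G(U)$; the constraint $I\in\+I_\+X(G)$ is exactly $\abs{U}<\alpha n$. Summing over these choices, $Z_\+X(G,\lambda)=\sum_{U\subseteq\+X\cmid\abs{U}<\alpha n}\lambda^{\abs{U}}\sum_{W\subseteq\+Y\setminus N_G(U)}\lambda^{\abs{W}}=\sum_{U}\lambda^{\abs{U}}(\lambda+1)^{\abs{\+Y}-\abs{N_G(U)}}=\sum_{U}\lambda^{\abs{U}}(\lambda+1)^{n-\abs{N_G(U)}}$, which coincides with the simplified right-hand side under the bijection, since $\aabs\Gamma=\abs{\ol\Gamma}$ and $N_G(\ol\Gamma)=N_G(U)$ when $U=\ol\Gamma$. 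The main obstacle is the careful bookkeeping of the bijection — checking that the component decomposition of $(G^2)[U]$ really produces elements of $\Gamma^*_{\+X}(G)$ forming a compatible set with the right size, and that the two maps are mutually inverse — together with the neighborhood-disjointness step; the rest is a direct computation (the degenerate case $\lambda=0$ being handled by the usual convention $0^0=1$).
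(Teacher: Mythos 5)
Your proposal is correct and follows essentially the same route as the paper: both group the independent sets in $\+I_\+X(G)$ according to their intersection $U=I\cap\+X$ (equivalently, by the value of $p(I)$), evaluate the inner sum as $\lambda^{\abs U}(\lambda+1)^{n-\abs{N_G(U)}}$, and use the disjointness of the neighborhoods $N_G(\ol\gamma)$ for compatible polymers to factor this into $\prod_{\gamma}w(\gamma,1)$. Your write-up merely makes explicit the bijection $\Gamma\mapsto\ol\Gamma$ that the paper uses implicitly; the substance is the same.
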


\begin{proof}
  Recall that in the definition of polymers, $p$ is a mapping from $\+I_\+X$ to $\set{\Gamma\in\+S(\Gamma^*)\cmid \aabs{\Gamma}<\alpha n}$.
  Thus
  \begin{align*}
    Z_\+X(G,\lambda) = \sum_{I\in \+I_\+X}\lambda^\abs{I}
    = \sum_{\Gamma\in\+S(\Gamma^*)\cmid \aabs{\Gamma}<\alpha n} \sum_{I\in \+I_\+X\cmid p(I)=\Gamma} \lambda^{\abs I}.
  \end{align*}
  Fix $\Gamma\in\+S(\Gamma^*)$ with $\aabs{\Gamma} < \alpha n$.
  It holds that
  \begin{align}
    \label{eq:midstep9}
    \sum_{I\in \+I_\+X\cmid p(I)=\Gamma} \lambda^{\abs I}
    = \sum_{I\in \+I_\+X\cmid I\cap \+X = \ol\Gamma} \lambda^{\abs I}
    = \lambda^{\aabs{\Gamma}}(\lambda+1)^\abs{(\+L\sqcup\+R)\setminus (\+X\sqcup N_G(\ol \Gamma))},
  \end{align}
  where the last equality follows from $\aabs{\Gamma} < \alpha n$.
%  We remark that the last equality would not be true if $\aabs{\Gamma}\ge \alpha n$, since we are choosing independent sets $I$ from the set $\+I_\+X$, which by the definition of $\+I_\+X$ implies that $\abs{I\cap \+X} < \alpha$.
  Since $\Gamma$ is compatible, $N_G(\ol\Gamma)=\sqcup_{\gamma\in\Gamma}N_G(\ol\gamma)$ and $\abs{(\+L\sqcup\+R)\setminus (\+X\sqcup N_G(\ol\Gamma))}=n-\sum_{\gamma\in\Gamma}\abs{N_G(\ol\gamma)}$.
  Thus
  \begin{align*}
    \text{\Cref{eq:midstep9}}
    &= \lambda^{\sum_{\gamma\in\Gamma}\aabs\gamma} (\lambda+1)^{n-\sum_{\gamma\in\Gamma} N(\ol \gamma)} \\
    & = (\lambda+1)^n \prod_{\gamma\in\Gamma} \lambda^{\aabs\gamma} (\lambda+1)^{-\abs{N(\ol \gamma)}} \\
    &= (\lambda+1)^n \prod_{\gamma\in\Gamma} w(\gamma,1).
  \end{align*}
  This completes the proof.
\end{proof}

\begin{lemma} \label{lem:approxRepOfIS}
  For $\Delta\ge\lowerBoundOfDeltaForIS$ and $\lambda \ge 1$, there are constants $C>1$ and $N$ so that for all $G=(\+L,\+R, E)\in \graphClassAtHighFugacity$ with $n>N$ vertices on both sides and $\+X\in \set{\+L,\+R}$,
  \begin{align*}
    (\lambda+1)^n \Xi(1) = (\lambda+1)^n \sum_{\Gamma\in\+S(\Gamma^*_\+X(G))} \prod_{\gamma\in \Gamma} w(\gamma,1)
  \end{align*}
  is a $C^{-n}$-relative approximation to $Z_\+X(G,\lambda)$.
\end{lemma}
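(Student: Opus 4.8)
The plan is to compare the full polymer partition function $\Xi(1)=\sum_{\Gamma\in\+S(\Gamma^*_\+X)}\prod_{\gamma\in\Gamma}w(\gamma,1)$ with its truncation to small configurations, $\Xi^{<\alpha n}(1):=\sum_{\Gamma\in\+S(\Gamma^*_\+X)\cmid \aabs{\Gamma}<\alpha n}\prod_{\gamma\in\Gamma}w(\gamma,1)$, and to invoke \Cref{lem:exactRepOfIS} which already tells us $(\lambda+1)^n\,\Xi^{<\alpha n}(1)=Z_\+X(G,\lambda)$. So it suffices to show that the tail
\begin{align*}
  \Delta\Xi := \Xi(1)-\Xi^{<\alpha n}(1) = \sum_{\Gamma\in\+S(\Gamma^*_\+X)\cmid \aabs{\Gamma}\ge \alpha n}\ \prod_{\gamma\in\Gamma}w(\gamma,1)
\end{align*}
is exponentially smaller than $\Xi^{<\alpha n}(1)$. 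Since every weight $w(\gamma,1)=\lambda^{\aabs\gamma}(\lambda+1)^{-\abs{N(\ol\gamma)}}$ is nonnegative for $\lambda\ge 1$, both sums are nonnegative, and a lower bound $\Xi^{<\alpha n}(1)\ge (\lambda+1)^{-n}\cdot(\lambda+1)^n=1$ (from the empty configuration, or more precisely from \Cref{lem:exactRepOfIS} together with $Z_\+X\ge(\lambda+1)^n$ coming from the $2^n$ subsets of the opposite side) gives the denominator for free. The whole game is to bound $\Delta\Xi$.

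To bound $\Delta\Xi$ I would first control a single polymer's weight. By the expansion property, a polymer $\gamma$ with $\aabs\gamma\le \alpha n$ has $\abs{N(\ol\gamma)}\ge\beta\aabs\gamma$ (its support lies on one side and has size at most $\alpha n$), so $w(\gamma,1)\le \lambda^{\aabs\gamma}(\lambda+1)^{-\beta\aabs\gamma}\le (\lambda+1)^{(1-\beta)\aabs\gamma}$, which for our $\beta=\Delta/(\tOfIS\zeta)$ is extremely small, roughly $(\lambda+1)^{-c\Delta\aabs\gamma}$. Next, enumerate configurations contributing to $\Delta\Xi$ by the total vertex count $m:=\aabs\Gamma\ge\alpha n$ and by the way $\ol\Gamma$ decomposes into connected subgraphs of $G^2$ (note $G^2$ has maximum degree $\le\Delta^2$). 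Using \Cref{lem:EnumSubgraph}, the number of ways to choose a connected subgraph of $G^2$ of size $k$ through a fixed vertex is at most $(e\Delta^2)^{k-1}$, and a standard cluster-counting argument (choose the "anchor" vertices, then grow the components, paying $(e\Delta^2)^{k-1}$ per component of size $k$) shows that the total contribution of all configurations with $\aabs\Gamma=m$ is at most $\binom{2n}{m}(e\Delta^2)^{m}(\lambda+1)^{(1-\beta)m}$ up to routine polynomial factors; actually one should be slightly careful and absorb the component structure into a single clean bound such as $\big[(2n)\cdot e\Delta^2\cdot(\lambda+1)^{1-\beta}\big]^{m}$ or, better, peel the $m\ge\alpha n$ constraint so the $\binom{2n}{m}$ is itself at most $2^{H(\alpha)\cdot 2n}$-type and is swamped. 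Summing over $m\ge\alpha n$ then gives $\Delta\Xi\le \mathrm{poly}(n)\cdot r^{n}$ for a constant $r=r(\Delta,\lambda)<1$, because $(\lambda+1)^{1-\beta}$ beats every polynomial-in-$\Delta$ factor once $\Delta\ge\lowerBoundOfDeltaForIS$ and $\lambda\ge 1$ (here $\beta$ is order $\Delta$, so $(\lambda+1)^{-\beta}$ is super-polynomially small in $\Delta$ while the enumeration costs are only polynomial in $\Delta$).

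Putting the pieces together: $\Delta\Xi/\Xi^{<\alpha n}(1)\le \mathrm{poly}(n)\,r^n< C^{-n}$ for a suitable constant $C=C(\Delta)>1$ and all $n>N(\Delta)$, whence $\exp(-C^{-n})\,(\lambda+1)^n\Xi^{<\alpha n}(1)\le (\lambda+1)^n\Xi(1)\le \exp(C^{-n})\,(\lambda+1)^n\Xi^{<\alpha n}(1)$, and by \Cref{lem:exactRepOfIS} this is exactly the claim with $Z_\+X(G,\lambda)=(\lambda+1)^n\Xi^{<\alpha n}(1)$. The main obstacle is the middle step: getting the enumeration of large compatible polymer families clean enough that the $\binom{2n}{m}$ (or the entropy term $2^{H(m/2n)\cdot 2n}$) and the per-component $(e\Delta^2)^k$ factors are genuinely dominated by $(\lambda+1)^{(1-\beta)m}$ for all $m\ge\alpha n$; this is where the specific choices $\zeta=\valueOfZeta$, $\alpha=\tOfIS/\Delta$, $\beta=\Delta/(\tOfIS\zeta)$ and the threshold $\Delta\ge\lowerBoundOfDeltaForIS$ have to be used, possibly together with \Cref{lem:polyUpperBoundOfEntropy} to convert the binomial coefficient into something comparable with the geometric decay. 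One subtlety worth flagging is that a polymer of size $>\alpha n$ cannot occur in any compatible family here (its support would exceed the bound, or more to the point the expansion bound $\abs{N(\ol\gamma)}\ge\beta\aabs\gamma$ only needs $\aabs\gamma\le\alpha n$), so in the tail it is the \emph{number} of polymers, not their individual sizes, that grows — which is exactly why the $G^2$-component enumeration with \Cref{lem:EnumSubgraph} is the right tool.
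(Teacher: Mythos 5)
Your overall architecture matches the paper's: invoke \Cref{lem:exactRepOfIS} to identify $Z_\+X(G,\lambda)$ with the truncated sum over $\Gamma$ with $\aabs{\Gamma}<\alpha n$, lower-bound the denominator by $(\lambda+1)^n$, and bound the tail $\sum_{\Gamma\cmid\aabs{\Gamma}\ge\alpha n}\prod_\gamma w(\gamma,1)$ using the per-polymer decay from expansion. The problem is in the step you yourself flag as the main obstacle: your enumeration of the tail is too lossy for the stated parameters. You propose to enumerate compatible families by choosing the vertex set and then growing the $G^2$-components via \Cref{lem:EnumSubgraph}, paying a factor of order $(e\Delta^2)^{m}$ for a family with $\aabs{\Gamma}=m$. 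But here $\beta=\Delta/(\tOfIS\cdot\valueOfZeta)\approx 14.3$ at $\Delta=\lowerBoundOfDeltaForIS$, so the decay per vertex is only about $2^{-\beta}\approx 2^{-14.3}$, whereas $\log_2(e\Delta^2)\approx 12.9$ and the entropy term $2^{H(m/2n)\cdot 2n}$ contributes up to $2^{2\sqrt{2/\alpha}}\approx 2^{12.1}$ per vertex at $m=\alpha n$. Even replacing $\binom{2n}{m}$ by the correct $\binom{n}{m}$ (the supports lie in $\+X$, which has only $n$ vertices), the per-vertex budget is roughly $2^{8.6+12.9-14.3}>1$, so the geometric series diverges: your bound does not close for $\Delta=53$, only for $\Delta$ on the order of $130$ or more.

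The missing observation that makes the paper's proof work at $\Delta\ge\lowerBoundOfDeltaForIS$ is that \emph{no component enumeration is needed at all}. In this polymer model each polymer carries only the trivial labelling $\*1_{U_i}$, and compatibility forces the supports of distinct polymers in $\Gamma$ to be at $G^2$-distance greater than $1$; hence the polymers of $\Gamma$ are exactly the connected components of $(G^2)[\ol\Gamma]$, and $\Gamma$ is uniquely determined by the set $\ol\Gamma\subseteq\+X$. So the tail is bounded by $\sum_{k\ge\ceil{\alpha n}}\binom{n}{k}2^{-\beta k}\le\sum_k\bigl(2^{2\sqrt{1/\alpha}-\beta}\bigr)^k$, and $2\sqrt{1/\alpha}-\beta\approx 8.6-14.3<0$ at $\Delta=53$. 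The cluster-growth machinery of \Cref{lem:EnumSubgraph} is reserved for the KP-condition in \Cref{lem:zeroFreeOfISPolymerPF}, where one sums over single polymers anchored near a fixed $\gamma^*$ rather than over whole families. As written, your proof establishes the lemma only for sufficiently large $\Delta$, not for all $\Delta\ge\lowerBoundOfDeltaForIS$.
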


\begin{proof}
  It is clear that $Z_\+X(G,\lambda)\ge (\lambda+1)^n$.
  Then using \Cref{lem:exactRepOfIS} and \Cref{lem:decayRateOfIS} we obtain
  \begin{align}
    \label{eq:midstep5}
    \frac{(\lambda+1)^n \Xi(1) - Z_\+X(G,\lambda)}{Z_\+X(G,\lambda)}
    &\le \sum_{\Gamma\in\+S(\Gamma^*)\cmid \aabs\Gamma \ge \alpha n}\prod_{\gamma\in\Gamma} w(\gamma,1)
    \le \sum_{\Gamma\in\+S(\Gamma^*)\cmid \aabs\Gamma \ge \alpha n} 2^{-\beta \aabs\Gamma}.
  \end{align}
% For any $\gamma\in \Gamma^*$, since $\abs{\gamma}\le \alpha n$, it follows from the expansion property that $\abs{N_G(\ol \gamma)} \ge \beta \aabs{\gamma} \ge 2\aabs{\gamma}$.
% The compatibility of $\Gamma$ states that $d_G(\ol{\gamma_1}, \ol{\gamma_2})>2$ for any $\gamma_1\neq \gamma_2$ in $\Gamma$, implying $N_G(\ol{\gamma_1})\cap N_G(\ol{\gamma_2}) = \emptyset$.
% Using these two facts, we obtain
% \begin{align*}
%   2\aabs{\Gamma} = 2\sum_{\gamma\in\Gamma} \abs{\ol \gamma} \le
%   \sum_{\gamma\in \Gamma} \abs{N_G(\ol\gamma)}
%   \le n,
% \end{align*}
% implying that $\aabs{\Gamma} \le \floor{n/2}$.
% To enumerate each $\Gamma\in\+S(\Gamma^*)$ at least once, we first enumerate an integer $\alpha n \le k \le n /2$, then since $\ol\Gamma\subseteq \+X$, we choose $k$ vertices from $\+X$.
  To enumerate each $\Gamma\in\+S(\Gamma^*)$ with $\aabs\Gamma \ge \alpha n$ at least once, we first enumerate an integer $\alpha n \le k \le n$, then since $\ol\Gamma\subseteq \+X$, we choose $k$ vertices from $\+X$.
  Therefore
  \begin{align*}
    \text{\Cref{eq:midstep5}}
    \le \sum_{k=\ceil{\alpha n}}^{n} \binom{n}{k} 2^{-\beta k}
    \le \sum_{k=\ceil{\alpha n}}^{n} 2^{H(k/n)n} 2^{-\beta k}
    \le \sum_{k=\ceil{\alpha n}}^{n} \tp{2^{2\sqrt{n/k} - \beta}}^k
    \le \sum_{k=\ceil{\alpha n}}^{n} \tp{2^{2\sqrt{1/\alpha} - \beta}}^k,
  \end{align*}
  where the inequalities follow from \Cref{lem:boundsOfBinomCoefficients} and \Cref{lem:polyUpperBoundOfEntropy}.
  Recall that $\zeta=\valueOfZeta$, $\alpha = \tOfIS / \Delta, \beta = \Delta / (\tOfIS\zeta)$ and $\Delta\ge \lowerBoundOfDeltaForIS$.
  Let $f(\Delta) = 2\sqrt{1/\alpha} - \beta = 2\sqrt{\Delta/\tOfIS}-\Delta/(\tOfIS\zeta)$.
  We obtain
  \begin{align*}
    \text{\Cref{eq:midstep5}}
    \le \frac{2^{f(\Delta)\alpha n}}{1-2^{f(\Delta)}}
    &= \frac{\tp{2^{2\sqrt{\tOfIS/\Delta}-1/\zeta}}^n}{1-2^{f(\Delta)}}
  \end{align*}
  It follows from \Cref{lem:monotoneFunc2} that $f(\Delta)$ is monotonically decreasing in $\Delta$ on $[\lowerBoundOfDeltaForIS, +\infty)$.
  Thus
  \begin{align*}
    \text{\Cref{eq:midstep5}}
    \le
    \frac{\tp{2^{2\sqrt{\tOfIS/\lowerBoundOfDeltaForIS}-1/\valueOfZeta}}^n}{1-2^{2\sqrt{\lowerBoundOfDeltaForIS/\tOfIS}-\lowerBoundOfDeltaForIS/(\tOfIS\times \valueOfZeta)}}
    \le
    0.81^n/0.98
    < C^{-n}
  \end{align*}
  for some constant $C>1$ and for all $n > N$ where $N$ is a sufficiently large constant.
  Using the upper bound on \Cref{eq:midstep5} and $1+x\le \exp(x)$ for any $x\in\=R$ we obtain
  \begin{align*}
    Z_\+X(G,\lambda) &\le (\lambda+1)^n \Xi(1) \\
    &= Z_\+X(G,\lambda) + ((\lambda+1)^n \Xi(1) - Z_\+X(G,\lambda)) \\
    & \le \exp(C^{-n}) Z_\+X(G,\lambda)
  \end{align*}
  for all $n > N$.
\end{proof}

\begin{lemma}\label{lem:monotoneFunc2}
  The function $f(\Delta)=2\sqrt{1/\alpha} - \beta$ is monotonically decreasing on $[\lowerBoundOfDeltaForIS, +\infty)$.
\end{lemma}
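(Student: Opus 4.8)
The plan is to recall that $\alpha = \tOfIS/\Delta$ and $\beta = \Delta/(\tOfIS\zeta)$ with $\zeta = \valueOfZeta$ a fixed constant, so that $f(\Delta) = 2\sqrt{\Delta/\tOfIS} - \Delta/(\tOfIS\zeta)$ is an explicit one-variable function of $\Delta$ on $[\lowerBoundOfDeltaForIS,+\infty)$. First I would substitute $u = \sqrt{\Delta}$, which turns $f$ into the quadratic-type expression $f = (2/\sqrt{\tOfIS})\,u - (1/(\tOfIS\zeta))\,u^2$ in $u \ge \sqrt{\lowerBoundOfDeltaForIS}$; since $u$ is increasing in $\Delta$, it suffices to show this expression is decreasing in $u$ on that range.

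The key computation is the derivative. Differentiating directly in $\Delta$,
\begin{align*}
  \frac{\partial f}{\partial \Delta}
  = \frac{1}{\sqrt{\tOfIS\,\Delta}} - \frac{1}{\tOfIS\,\zeta},
\end{align*}
and this is negative precisely when $\sqrt{\tOfIS\,\Delta} > \tOfIS\,\zeta$, i.e. when $\Delta > \tOfIS\,\zeta^2$. Plugging in the numerical values $\tOfIS = 2.9$ and $\zeta = 1.28$ gives $\tOfIS\,\zeta^2 = 2.9 \times 1.6384 \approx 4.75$, which is far below $\lowerBoundOfDeltaForIS$. Hence $\partial f/\partial\Delta < 0$ for every $\Delta \ge \lowerBoundOfDeltaForIS$, and $f$ is strictly decreasing on $[\lowerBoundOfDeltaForIS,+\infty)$ as claimed.

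There is essentially no obstacle here: the only mild care needed is in tracking the constants, since the statement and its use in \Cref{lem:approxRepOfIS} depend on the threshold $\tOfIS\,\zeta^2$ genuinely being smaller than $\lowerBoundOfDeltaForIS$ — this is the one inequality that must be verified numerically. Everything else is a routine differentiation of an explicit elementary function, so the proof will be a single short display followed by the numerical check that $\lowerBoundOfDeltaForIS > \tOfIS\,\zeta^2$.
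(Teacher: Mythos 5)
Your proposal is correct and follows essentially the same route as the paper: both compute $\partial f/\partial \Delta = \frac{1}{\sqrt{\tOfIS\,\Delta}} - \frac{1}{\tOfIS\,\zeta}$ and check it is negative on $[\lowerBoundOfDeltaForIS,+\infty)$. The only cosmetic difference is that you solve the inequality exactly (threshold $\Delta > \tOfIS\,\zeta^2 \approx 4.75$), whereas the paper simply evaluates the derivative at $\Delta = \lowerBoundOfDeltaForIS$ and gets $\approx -0.19 < 0$; both are valid.
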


\begin{proof}
  Recall that $\zeta=\valueOfZeta$, $\alpha = \tOfIS / \Delta, \beta = \Delta / (\tOfIS\zeta)$. It holds that
  \begin{align*}
    \frac{\partial f}{\partial \Delta} = \frac{1}{\sqrt{\tOfIS\Delta}} - \frac{1}{\tOfIS\zeta} \le\frac{1}{\sqrt{\tOfIS\times \lowerBoundOfDeltaForIS}} - \frac{1}{\tOfIS\times \valueOfZeta} \approx -0.19 < 0
  \end{align*}
  for all $\Delta \ge \lowerBoundOfDeltaForIS$.
\end{proof}

\begin{lemma}\label{lem:decayRateOfIS}
  For all polymers $\gamma\in \Gamma^*$ defined by $G=(\+L,\+R,E)\in \graphClassAtHighFugacity$, $\+X\in\{\+L,\+R\}$ and $\lambda \ge 1$,
  \begin{align*}
    \abs{w(\gamma,z)} \le (2^{-\beta}\abs{z})^{\aabs\gamma}.
  \end{align*}
  As a corollary, $w(\gamma,1) \le 2^{-\beta\aabs\gamma}$ and for all compatible $\Gamma\subseteq \Gamma^*(G)$,
  \begin{align*}
    \prod_{\gamma\in \Gamma}w(\gamma, 1) \le 2^{-\beta\aabs\Gamma}.
  \end{align*}
\end{lemma}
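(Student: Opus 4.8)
The plan is to unfold the definition of $w(\gamma,z) = \lambda^{\aabs\gamma}(\lambda+1)^{-\abs{N(\ol\gamma)}}z^{\aabs\gamma}$ and reduce everything to a one-variable estimate. Taking absolute values, $\abs{w(\gamma,z)} = \lambda^{\aabs\gamma}(\lambda+1)^{-\abs{N(\ol\gamma)}}\abs{z}^{\aabs\gamma}$, so the claim $\abs{w(\gamma,z)} \le (2^{-\beta}\abs{z})^{\aabs\gamma}$ is equivalent to $\lambda^{\aabs\gamma}(\lambda+1)^{-\abs{N(\ol\gamma)}} \le 2^{-\beta\aabs\gamma}$.

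First I would invoke the expansion property to bound $\abs{N(\ol\gamma)}$ from below. By the construction of polymers, $\gamma$ is a connected component of $(G^2)[I\cap\+X]$ for some $I\in\+I_\+X(G)$, so its support satisfies $\ol\gamma\subseteq\+X$ (it lies entirely on one side) and $\aabs\gamma \le \abs{I\cap\+X} < \alpha n$. Since $G\in\graphClassAtHighFugacity$ is an $(\alpha,\beta)$-expander, applying the expansion property to the vertex set of $\ol\gamma$ gives $\abs{N_G(\ol\gamma)} \ge \beta\aabs\gamma$. Because $\lambda+1>1$, this yields $(\lambda+1)^{-\abs{N(\ol\gamma)}} \le (\lambda+1)^{-\beta\aabs\gamma}$, and hence it suffices to prove $\lambda/(\lambda+1)^\beta \le 2^{-\beta}$ for all $\lambda \ge 1$.

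The remaining step --- the only point that is not purely formal --- is this one-variable inequality, and I would obtain it directly from \Cref{lem:coreMonotoneFunc} with $a=1$, $b=\beta$ (legitimate since $\beta>1$): the map $\lambda\mapsto \lambda/(\lambda+1)^\beta$ increases on $[0,\tfrac{1}{\beta-1}]$ and decreases on $[\tfrac{1}{\beta-1},+\infty)$. Since $\beta = \Delta/(\tOfIS\zeta) \ge \lowerBoundOfDeltaForIS/(\tOfIS\cdot\valueOfZeta) > 2$, we have $\tfrac{1}{\beta-1}<1$, so the function is decreasing on $[1,+\infty)$ and thus bounded above by its value at $\lambda=1$, which is exactly $2^{-\beta}$. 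Substituting back establishes $\abs{w(\gamma,z)} \le (2^{-\beta}\abs{z})^{\aabs\gamma}$.

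For the two corollaries I would specialize to $z=1$: since $w(\gamma,1)\ge 0$, the bound becomes $w(\gamma,1)\le 2^{-\beta\aabs\gamma}$; and for a compatible set $\Gamma\subseteq\Gamma^*(G)$ the supports of distinct polymers are vertex-disjoint (in fact at $G$-distance greater than $2$), so $\aabs\Gamma = \sum_{\gamma\in\Gamma}\aabs\gamma$ and therefore $\prod_{\gamma\in\Gamma}w(\gamma,1) \le \prod_{\gamma\in\Gamma}2^{-\beta\aabs\gamma} = 2^{-\beta\aabs\Gamma}$. I expect no genuine obstacle here; the one subtlety worth stating explicitly is that the expander hypothesis is applied with the bound $\aabs\gamma < \alpha n$, which holds because $\ol\gamma\subseteq I\cap\+X$ with $I\in\+I_\+X(G)$, and the expansion condition covers every vertex set of size at most $\alpha n$.
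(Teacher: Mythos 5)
Your proof is correct and follows essentially the same route as the paper: expansion gives $\abs{N(\ol\gamma)}\ge\beta\aabs\gamma$, and \Cref{lem:coreMonotoneFunc} reduces the rest to the value at $\lambda=1$. If anything you are slightly more careful than the paper, which justifies the last step only by ``$\beta>1$'' even though one needs $\beta\ge 2$ (so that $\tfrac{1}{\beta-1}\le 1$) for the function to be decreasing on all of $[1,\infty)$ --- a condition you verify explicitly and which holds comfortably here.
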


\begin{proof}
  Let $n=\abs{\+L}=\abs{\+R}$ and let $\gamma$ be any polymer. It follows from the definition of polymers that $\aabs\gamma \le \alpha n$ and by the expansion property, $\abs{N(\ol \gamma)} \ge \beta \aabs{\gamma}$.
  Thus we have
  \begin{align*}
    \abs{w(\gamma,z)} &= \lambda^{\aabs\gamma}(\lambda+1)^{-\abs{N(\ol\gamma)}}\abs{z}^{\aabs\gamma} \\
    &\le (\lambda(\lambda+1)^{-\beta})^{\aabs \gamma}\abs{z}^{\aabs\gamma}
    \le (2^{-\beta}\abs{z})^{\aabs\gamma}
  \end{align*}
  where the last inequality follows from \Cref{lem:coreMonotoneFunc} since $\beta>1$ and $\lambda \ge 1$. In particular, $w(\gamma,1) \le 2^{-\beta\aabs\gamma}$. For any compatible $\Gamma$, it holds that $\aabs{\Gamma} = \sum_{\gamma\in\Gamma} \aabs{\gamma}$. Thus
  $
    \prod_{\gamma\in \Gamma}w(\gamma, 1) \le
    \prod_{\gamma\in \Gamma}2^{-\beta \aabs \gamma} =
    2^{-\beta\aabs\Gamma}$.
\end{proof}

\subsection{Approximating the partition function of the polymer model}

\begin{lemma} \label{lem:algOfISPolymerPF}
  For $\Delta\ge \lowerBoundOfDeltaForIS$ and $\lambda\ge 1$, there is an FPTAS for $\Xi(1)$ for all $G=(\+L,\+R,E)\in\graphClassAtHighFugacity$ and $\+X\in\set{\+L,\+R}$.
\end{lemma}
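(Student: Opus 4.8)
The plan is to apply \Cref{thm:alg} to the polymer model $(\Gamma^*_\+X(G),w)$ constructed above, viewed as living on $G^2$, whose maximum degree is at most $\Delta^2$. Three of the four hypotheses are routine. The degree of $\Xi(z)$ in $z$ equals $\max_{\Gamma\in\+S(\Gamma^*)}\aabs\Gamma\le n\le\abs{G^2}$, so the first bullet holds with $C=1$. Each weight already has the form $w(\gamma,z)=a_\gamma z^{\aabs\gamma}$ with $a_\gamma=\lambda^{\aabs\gamma}(\lambda+1)^{-\abs{N_G(\ol\gamma)}}\neq 0$, computable in time polynomial in $\abs G$ once $N_G(\ol\gamma)$ is found. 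And for every connected subgraph $G'$ of $G^2$ there is \emph{at most one} polymer with support $G'$ (it carries the constant labeling $\*1$, and exists only when $G'\subseteq (G^2)[\+X]$ and $\abs{G'}<\alpha n$), so it is listable in time $O(\abs{G'})$. Everything thus reduces to the zero-freeness hypothesis: finding a constant $R>1$ with $\Xi(z)\neq 0$ whenever $\abs z<R$, uniformly over $G\in\graphClassAtHighFugacity$ and $\+X\in\set{\+L,\+R}$.

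For zero-freeness I would verify the Koteck\'y--Preiss condition of \Cref{lem:KP-condition} with $a(\gamma)=\tfrac12\aabs\gamma$. Two weight estimates feed in. First, \Cref{lem:decayRateOfIS} gives $\abs{w(\gamma,z)}\le(2^{-\beta}\abs z)^{\aabs\gamma}$ for every polymer. Second --- the refinement that makes the constants close --- because $G$ is $\Delta$-regular we have $\abs{N_G(\ol\gamma)}\ge\Delta$ for every nonempty polymer, so for the small polymers with $\aabs\gamma\le 3$ (where $\aabs\gamma<\Delta/2$ and $\beta\aabs\gamma<\Delta$, since $\Delta/\beta=\tOfIS\cdot\valueOfZeta$) applying \Cref{lem:coreMonotoneFunc} to $\lambda\mapsto\lambda^{\aabs\gamma}/(\lambda+1)^{\Delta}$ on $[1,\infty)$ yields the sharper $\abs{w(\gamma,z)}\le 2^{-\Delta}\abs z^{\aabs\gamma}$. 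On the combinatorial side, a polymer incompatible with $\gamma^*$ must have its support meet the set of vertices within distance $1$ of $\ol{\gamma^*}$ in $G^2$, which has at most $(\Delta^2+1)\aabs{\gamma^*}$ elements, and by \Cref{lem:EnumSubgraph} the number of connected $k$-vertex subgraphs of $G^2$ containing a fixed vertex is at most $(e\Delta^2)^{k-1}$. Together these give
\begin{align*}
  \sum_{\gamma\cmid\gamma\not\sim\gamma^*}e^{a(\gamma)}\abs{w(\gamma,z)}
  \le (\Delta^2+1)\aabs{\gamma^*}\tp{\sum_{k=1}^{3}(e\Delta^2)^{k-1}e^{k/2}2^{-\Delta}\abs{z}^{k} + \sum_{k\ge 4}(e\Delta^2)^{k-1}\tp{e^{1/2}2^{-\beta}\abs{z}}^{k}}.
\end{align*}
The first sum is a polynomial in $\Delta$ times $2^{-\Delta}$, hence negligible for $\Delta\ge\lowerBoundOfDeltaForIS$. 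For the second, $\beta=\Delta/(\tOfIS\cdot\valueOfZeta)$, so $e\Delta^2 2^{-\beta}$ is decreasing in $\Delta$ and below its value (about $0.38<1$) at $\Delta=\lowerBoundOfDeltaForIS$; hence for $\abs z\le R$ with $R$ slightly above $1$ (e.g.\ $R=11/10$) the ratio $e\Delta^2 e^{1/2}2^{-\beta}\abs z$ stays below $1$ and the tail is at most a small constant times $(e\Delta^2)^{-1}$. A short computation then bounds the whole right-hand side by $\tfrac12\aabs{\gamma^*}=a(\gamma^*)$ for all $\Delta\ge\lowerBoundOfDeltaForIS$, so \Cref{lem:KP-condition} gives $\Xi(z)\neq 0$ for $\abs z<R$. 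Since $R>1$, \Cref{thm:alg} then yields an FPTAS for $\Xi(1)$, which is the claim.

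I expect the only real obstacle to be the constant-chasing just described. If one uses only the uniform estimate $\abs{w(\gamma,z)}\le(2^{-\beta}\abs z)^{\aabs\gamma}$ for \emph{all} polymers, the bound above collapses to $(\Delta^2+1)\aabs{\gamma^*}(e\Delta^2)^{-1}\rho/(1-\rho)$ with $\rho=e\Delta^2 e^{c}2^{-\beta}\abs z$ and prefactor $\approx 1/e$; even after optimising $c$ in $a(\gamma)=c\aabs\gamma$ one only gets the Koteck\'y--Preiss inequality for $\abs z<1$ at $\Delta=\lowerBoundOfDeltaForIS$, which is useless since $z=1$ must lie inside the zero-free disk. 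It is precisely the sharper bound $\abs{w(\gamma,z)}\le 2^{-\Delta}\abs z^{\aabs\gamma}$ for the $\aabs\gamma\le 3$ polymers --- where the expansion estimate $\beta\aabs\gamma$ is weaker than the trivial regularity estimate $\Delta$ --- that removes the offending low-order terms and pushes $R$ past $1$. For $\Delta$ well above $\lowerBoundOfDeltaForIS$ the argument has ample slack, since $2^{-\beta}$ decays geometrically in $\Delta$.
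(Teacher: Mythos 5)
Your overall route (feed the polymer model on $G^2$ into \Cref{thm:alg} and establish zero-freeness via the Koteck\'y--Preiss condition) is the same as the paper's, and your verification of the first three hypotheses is fine. But the constant-chasing in your KP verification has a genuine gap: the ``refinement'' $\abs{N_G(\ol\gamma)}\ge\Delta$ for nonempty polymers does not follow from $G\in\graphClassAtHighFugacity$. The paper's graphs are multigraphs (unions of $\Delta$ perfect matchings, with multiple edges kept), and membership in $\graphClassAtHighFugacity$ guarantees only the $(\alpha,\beta)$-expansion with $\beta=\Delta/(\tOfIS\cdot\valueOfZeta)\approx\Delta/3.7$; a vertex may well have far fewer than $\Delta$ distinct neighbours. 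For simple $\Delta$-regular graphs your claim would hold, but the lemma quantifies over all of $\graphClassAtHighFugacity$, so the bound $\abs{w(\gamma,z)}\le 2^{-\Delta}\abs{z}^{\aabs\gamma}$ for $\aabs\gamma\le 3$ is unjustified. This matters because, by your own accounting, with $a(\gamma)=\tfrac12\aabs\gamma$ and $R=11/10$ the uniform estimate alone gives roughly $0.84\,\aabs{\gamma^*}>\tfrac12\aabs{\gamma^*}$ at $\Delta=\lowerBoundOfDeltaForIS$, so the small-$k$ refinement is carrying real weight and cannot simply be dropped.

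Your diagnosis that the uniform estimate cannot close the KP condition for any $R>1$ is also mistaken, and this is exactly where the paper's proof differs. You overlooked the factor $\tfrac12$ in \Cref{lem:EnumSubgraph} for connected subgraphs of size $k\ge 2$. The paper uses only the uniform bound $\abs{w(\gamma,z)}\le(2^{-\beta}\abs z)^{\aabs\gamma}$, takes $a(\gamma)=t\aabs\gamma$ with $t=(-1+\sqrt{1+8e})/(4e)\approx 0.346$ and $R=1.001$, bounds the $k\ge2$ terms with the extra $\tfrac12$, and arrives at $\frac{x(2-x)}{2e(1-x)}\le 0.33<t$ with $x=e^{t+1}\Delta^2 2^{-\beta}R\le 0.545$ at $\Delta=\lowerBoundOfDeltaForIS$. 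So the fix is not a sharper weight bound for small polymers but a sharper enumeration bound together with a better choice of $t$ and a smaller $R$; as written, your argument does not go through for the stated graph class.
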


\begin{proof}
  We use the FPTAS in \Cref{thm:alg} to design the FPTAS we need.
  To this end, we generate a graph $G^2$ in polynomial time in $\abs{G}$ for any $G\in\graphClassAtHighFugacity$.
  We use this new graph $G^2$ as input to the FPTAS in \Cref{thm:alg}.
  It is straightforward to verify the first three conditions in \Cref{thm:alg}, only with the exception that the information of $G^2$ may not be enough because certain connectivity information in $G$ is discarded in $G^2$.
  Nevertheless, we can use the original graph $G$ whenever needed and thus the first three conditions are satisfied.
  For the last condition, \Cref{lem:zeroFreeOfISPolymerPF} verifies it.
\end{proof}

\begin{lemma} \label{lem:zeroFreeOfISPolymerPF}
  There is a constant $R>1$ so that for $\Delta\ge\lowerBoundOfDeltaForIS$ and $\lambda\ge 1$, $\Xi(z)\neq 0$ for all $G\in\graphClassAtHighFugacity$, $\+X\in\set{\+L,\+R}$ and $z\in\=C$ with $\abs{z} <R$, .
\end{lemma}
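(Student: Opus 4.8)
The plan is to verify the Koteck\`y--Preiss condition from \Cref{lem:KP-condition} for the polymer model $(\Gamma^*_\+X(G), w)$ on $G^2$, which immediately yields a zero-free disk. Concretely, I would choose the test function $a(\gamma) = \aabs{\gamma}$ (or a small constant multiple thereof, say $a(\gamma) = c\aabs\gamma$ for a convenient $c>0$ to be fixed at the end) and show that for some absolute constant $R>1$, whenever $\abs z < R$ we have, for every fixed polymer $\gamma^*\in\Gamma^*$,
\begin{align*}
  \sum_{\gamma\cmid \gamma\not\sim\gamma^*} e^{a(\gamma)}\abs{w(\gamma,z)} \le a(\gamma^*).
\end{align*}

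The key steps, in order. First, I would use \Cref{lem:decayRateOfIS} to bound each term: $\abs{w(\gamma,z)} \le (2^{-\beta}\abs z)^{\aabs\gamma}$, so $e^{a(\gamma)}\abs{w(\gamma,z)} \le \tp{e^c 2^{-\beta}\abs z}^{\aabs\gamma}$. Second, I would control the number of incompatible polymers by size: a polymer $\gamma$ with $\gamma\not\sim\gamma^*$ has its support within $G^2$-distance $1$ of $\ol{\gamma^*}$, hence (as $\ol\gamma$ is connected in $G^2$, which has maximum degree at most $\Delta^2$) its support contains one of at most $\aabs{\gamma^*}\cdot(\Delta^2+1)$ vertices; for each such anchor vertex $v$ and each size $k\ge 1$, \Cref{lem:EnumSubgraph} bounds the number of connected induced subgraphs of $G^2$ of size $k$ through $v$ by $(e\Delta^2)^{k-1}$, and each such support carries exactly one polymer (the constant-$1$ map). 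Third, I would assemble these bounds into a geometric-type sum:
\begin{align*}
  \sum_{\gamma\cmid \gamma\not\sim\gamma^*} e^{a(\gamma)}\abs{w(\gamma,z)}
  \le \aabs{\gamma^*}(\Delta^2+1)\sum_{k\ge 1}(e\Delta^2)^{k-1}\tp{e^c 2^{-\beta}\abs z}^{k},
\end{align*}
and observe that since $\beta = \Delta/(\tOfIS\zeta)$ grows linearly in $\Delta$ while $\log_2(e\Delta^2)$ grows only logarithmically, the factor $e\Delta^2 \cdot e^c 2^{-\beta}\abs z$ is well below $1$ for $\Delta\ge\lowerBoundOfDeltaForIS$ and $\abs z<R$ with a suitable fixed $R>1$ and small fixed $c$; the sum is then at most $\aabs{\gamma^*}(\Delta^2+1)\cdot \frac{e^c2^{-\beta}\abs z}{1 - e\Delta^2 e^c 2^{-\beta}\abs z}$, and one checks this is $\le c\aabs{\gamma^*} = a(\gamma^*)$ because $(\Delta^2+1)2^{-\beta}$ is exponentially small in $\Delta$. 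Finally, \Cref{lem:KP-condition} gives $\Xi(z)\neq 0$.

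The main obstacle is purely quantitative: making the constants line up so that a \emph{single} $R>1$ (independent of $\Delta$, $\lambda$, $\+X$, and $G$) works uniformly for all $\Delta\ge\lowerBoundOfDeltaForIS$. This is not really delicate because the decay $2^{-\beta}$ from \Cref{lem:decayRateOfIS} beats the $e\Delta^2$ branching factor with enormous room to spare once $\Delta\ge 53$ (indeed $2^{-\beta}\le 2^{-53/(2.9\cdot1.28)}$ is already tiny), so one can afford to take, e.g., $R = 2$ and $c$ a small constant; I would just record the elementary inequality $e\Delta^2 \cdot 2e \cdot 2^{-\beta} < 1$ and the absorption bound explicitly. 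One bookkeeping point to be careful about: compatibility in this model means $d_{G^2}(\ol{\gamma_1},\ol{\gamma_2})>1$, equivalently $d_G(\ol{\gamma_1},\ol{\gamma_2})>2$, and $\ol\gamma$ is connected in $G^2$ (not in $G$), so all enumeration must be done in $G^2$, whose degree bound $\Delta^2$ is the relevant one throughout.
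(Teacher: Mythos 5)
Your approach is exactly the paper's: verify the Koteck\`y--Preiss condition with $a(\gamma)=c\aabs{\gamma}$, bound $\abs{w(\gamma,z)}$ by \Cref{lem:decayRateOfIS}, anchor each incompatible polymer at a vertex of $\+X\cap(\ol{\gamma^*}\cup N_{G^2}(\ol{\gamma^*}))$, count supports via \Cref{lem:EnumSubgraph} applied in $G^2$, and sum a geometric series. The structure and all the bookkeeping (compatibility and connectivity measured in $G^2$, degree bound $\Delta^2$) are right.

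However, your quantitative assessment is wrong, and the specific constants you propose do not work at the boundary case $\Delta=\lowerBoundOfDeltaForIS$. There $\beta=\lowerBoundOfDeltaForIS/(\tOfIS\times\valueOfZeta)\approx 14.28$, so $2^{-\beta}\approx 5.0\times 10^{-5}$ and $e\Delta^2\,2^{-\beta}\approx 0.38$ --- small, but nowhere near ``enormous room to spare.'' Your proposed check $e\Delta^2\cdot 2e\cdot 2^{-\beta}<1$ is false (the left side is about $2.1$), and with $R=2$ the ratio of your geometric series is $e\Delta^2 e^{c}2^{-\beta}\abs{z}\approx 0.77e^{c}$, whose sum divided by $e$ exceeds $0.75\aabs{\gamma^*}$ even as $c\to 0$; since the KP-condition demands the sum be at most $c\aabs{\gamma^*}$, no choice of $c$ rescues $R=2$ at $\Delta=53$. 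The lemma only asserts existence of \emph{some} $R>1$, so the fix is simply to take $R$ very close to $1$: the paper uses $R=1.001$ and $c=t=(-1+\sqrt{1+8e})/(4e)\approx 0.346$ (chosen to optimize the final inequality), getting $x=e^{t+1}\Delta^2 2^{-\beta}R\le 0.545$ and $\frac{x(2-x)}{2e(1-x)}<0.33<t$ --- a margin of about $0.03$, not an enormous one. You should replace your claimed inequality with this (or an equivalent) explicit numerical verification.
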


\begin{proof}
  Set $R=1.001$.
  For any $\gamma\in\Gamma^*$, let $a(\gamma)=t\aabs{\gamma}$ where $t=\tp{-1+\sqrt{1+8e}}/(4e)\approx 0.346$.
  We will verify that the KP-condition
  \begin{align}
    \sum_{\gamma:\gamma\not\sim\gamma^*} e^{t\aabs\gamma} \abs{w(\gamma,z)} \le t\aabs{\gamma^*} \label{eq:ISKPCondition1}
  \end{align}
  holds for any $\gamma^*\in \Gamma^*$ and any $\abs{z} < R$. It then follows from \Cref{lem:KP-condition} that $\Xi(z)\neq 0$ for any $\abs{z} < R$.
  Recall that $d_{G^2}(\ol\gamma, \ol{\gamma^*})\le 1$ for all $\gamma\not\sim\gamma^*$.
  Thus there is always a vertex $v\in \ol\gamma\subseteq \+X$ such that $v\in \ol{\gamma^*}\sqcup N_{G^2}(\ol{\gamma^*})$.
  The number of such vertices $v$ is at most $\Delta^2\aabs{\gamma^*}$.
  So to enumerate each $\gamma\not\sim\gamma^*$ at least once, we can
  \begin{enumerate}[label=\alph*)]
    \item first enumerate a vertex $v$ in $\+X\cap \tp{\ol{\gamma^*}\cup N_{G^2}(\ol{\gamma^*})}$;
    \item then enumerate an integer $k$ from $1$ to $\floor{\alpha n}$;
    \item finally enumerate $\gamma$ with $v\in \ol\gamma$ and $\aabs{\gamma}=k$.
  \end{enumerate}
  Since $\ol\gamma$ is connected in $G^2$, applying \Cref{lem:EnumSubgraph} and using \Cref{lem:decayRateOfIS} to bound $\abs{w(\gamma,z)}$ we obtain
    \begin{align}
    \label{eq:midstep2}
    \sum_{\gamma:\gamma\not\sim\gamma^*}e^{t\aabs\gamma} \abs{w(\gamma,z)}
    &\le \Delta^2\aabs{\gamma^*} \tp{
        e^t 2^{-\beta}\abs{z} +
        \sum_{k=2}^{\floor{\alpha n}}\tp{e\Delta^2}^{k-1}2^{-1} e^{tk} 2^{-\beta k} \abs{z}^k}.
  \end{align}
  Let $x=e^{t+1}\Delta^2 2^{-\beta} R$. Since $\abs{z}<R$, we obtain
  \begin{align*}
    \sum_{\gamma:\gamma\not\sim\gamma^*}e^{t\aabs\gamma} \abs{w(\gamma,z)}
    \le \frac{x}{e} \aabs{\gamma^*} \tp{1+\frac{1}{2}\sum_{k=2}^{\infty} x^{k-1}}
    = \frac{x(2-x)}{2e(1-x)} \cdot \aabs{\gamma^*}.
  \end{align*}
  Recall that $\zeta=\valueOfZeta$, $\beta = \Delta / (\tOfIS\zeta)$ and $\Delta \ge \lowerBoundOfDeltaForIS$.
  It follows from \Cref{lem:monotoneFunc1} that $\Delta^2 2^{-\beta}$ is monotonically decreasing in $\Delta$ on $[\lowerBoundOfDeltaForIS,+\infty)$.
  Thus it holds that
  \begin{align*}
    x=e^{t+1}\Delta^2 2^{-\beta} R
    \le \tp{e^{t+1}\Delta^2 2^{-\beta} R}\bigg|_{\Delta=\lowerBoundOfDeltaForIS} \le 0.545,
  \end{align*}
  and hence
  \begin{align*}
    \frac{x(2-x)}{2e(1-x)} < 0.33 <t.
  \end{align*}
  This completes the proof.
\end{proof}

\begin{lemma}\label{lem:monotoneFunc1}
  The function $f(\Delta)=\Delta^2 2^{-\beta}$ is monotonically decreasing on $[\lowerBoundOfDeltaForIS, +\infty)$.
\end{lemma}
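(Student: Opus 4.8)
The plan is to show directly that the derivative of $f(\Delta) = \Delta^2 2^{-\beta}$ is negative on $[\lowerBoundOfDeltaForIS, +\infty)$, exactly as was done for the companion statements \Cref{lem:monotoneFunc2} and the implicit monotonicity used in \Cref{lem:strucApprox2OfIS}. Recall that $\beta = \Delta/(\tOfIS\zeta)$ with $\zeta = \valueOfZeta$ fixed, so $\beta$ is linear in $\Delta$ and $f(\Delta) = \Delta^2 \exp\!\big(-(\ln 2)\Delta/(\tOfIS\zeta)\big)$. First I would take logarithms: writing $g(\Delta) = \ln f(\Delta) = 2\ln\Delta - (\ln 2)\Delta/(\tOfIS\zeta)$, it suffices to show $g'(\Delta) < 0$ on the relevant range, since $f = e^{g}$ and the exponential is increasing.

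Next I would compute $g'(\Delta) = 2/\Delta - (\ln 2)/(\tOfIS\zeta)$. This is manifestly decreasing in $\Delta$, so it attains its maximum over $[\lowerBoundOfDeltaForIS, +\infty)$ at the left endpoint $\Delta = \lowerBoundOfDeltaForIS$. Hence it suffices to check the single numerical inequality
\begin{align*}
  g'(\lowerBoundOfDeltaForIS) = \frac{2}{\lowerBoundOfDeltaForIS} - \frac{\ln 2}{\tOfIS \times \valueOfZeta} < 0,
\end{align*}
i.e.\ that $2/53 \approx 0.0377$ is smaller than $(\ln 2)/(2.9 \times 1.28) \approx 0.693/3.712 \approx 0.187$. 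This holds comfortably, so $g'(\Delta) < 0$ for all $\Delta \ge \lowerBoundOfDeltaForIS$, and therefore $f'(\Delta) = f(\Delta)\,g'(\Delta) < 0$ on the same range since $f(\Delta) > 0$.

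There is no real obstacle here: the only mild subtlety is remembering that $\beta$ depends on $\Delta$ (it is $\Delta/(\tOfIS\zeta)$, not a constant), so that the exponential factor contributes a linear-in-$\Delta$ term to the log-derivative that eventually dominates the $2/\Delta$ coming from the polynomial prefactor. Once that is written out, the claim reduces to a one-line numerical check at $\Delta = \lowerBoundOfDeltaForIS$, which I would state explicitly as in the proofs of \Cref{lem:monotoneFunc2} and \Cref{lem:almostAtHighFugacity}.
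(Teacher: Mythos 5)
Your proposal is correct and matches the paper's proof essentially verbatim: both pass to $\ln f(\Delta) = 2\ln\Delta - (\ln 2)\Delta/(\tOfIS\zeta)$, observe that its derivative $2/\Delta - (\ln 2)/(\tOfIS\zeta)$ is maximized at the left endpoint, and check numerically that $2/\lowerBoundOfDeltaForIS - (\ln 2)/(\tOfIS\times\valueOfZeta)\approx -0.15<0$. No gaps.
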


\begin{proof}
  Recall that $\zeta=\valueOfZeta$, $\beta = \Delta / (\tOfIS\zeta)$.
  It is equivalent to show that $\partial \ln f / \partial \Delta < 0$ for all $\Delta\ge \lowerBoundOfDeltaForIS$.
  It holds that
  \begin{align*}
    \frac{\partial \ln f}{\partial \Delta} = \frac{2}{\Delta} - \frac{\ln 2}{\tOfIS\zeta} \le \frac{2}{\lowerBoundOfDeltaForIS} - \frac{\ln 2}{\tOfIS\times \valueOfZeta} \approx -0.15 < 0
  \end{align*}
  for all $\Delta\ge \lowerBoundOfDeltaForIS$.
\end{proof}

\subsection{Putting things together}

Using the results from previous parts, we obtain our main result for counting independent sets.

{\renewcommand{\thetheorem}{\ref{thm:BIS}}
\begin{theorem}\label{thm:mainAtHighFugacity}
  For $\Delta\ge 53$ and fugacity $\lambda\ge 1$, with high probability (tending to $1$ as $n\to \infty$) for a graph $G$ chosen uniformly at random from $\setG$, there is an FPTAS for the partition function $Z(G,\lambda)$.
\end{theorem}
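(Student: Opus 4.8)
The plan is to compose, with a little bookkeeping, the four ingredients already assembled in this section. By \Cref{lem:almostAtHighFugacity}, a graph drawn uniformly from $\setG$ lies in $\graphClassAtHighFugacity$ with probability tending to $1$ as $n\to\infty$, where $\zeta,\alpha,\beta$ are the parameters fixed at the start of the section; hence it suffices to describe an FPTAS for $Z(G,\lambda)$ that works on \emph{every} $G\in\graphClassAtHighFugacity$ and every $\lambda\ge 1$, and the theorem follows.

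Fix the constants $C_1>1,N_1$ supplied by \Cref{lem:strucApproxOfIS} and $C_2>1,N_2$ supplied by \Cref{lem:approxRepOfIS}, and put $C=\min(C_1,C_2)$ and $N_0=\max(N_1,N_2)$. Given $(G,\lambda,\eps)$ with $G\in\graphClassAtHighFugacity$ having $n$ vertices on each side, the algorithm I would run branches on the size of $n$. If $n\le N_0$ or $3C^{-n}>\eps$, then $n=O_\Delta(\log(1/\eps))$, so one can compute an $\eps$-relative approximation to $Z(G,\lambda)$ directly by enumerating $\+I(G)$, in time $2^{O(n)}\,\mathrm{poly}(\abs G)$, which is polynomial in $\abs G/\eps$ on this regime. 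Otherwise ($n>N_0$ and $3C^{-n}\le\eps$), for each $\+X\in\{\+L,\+R\}$ one forms $G^2$ and runs the FPTAS of \Cref{lem:algOfISPolymerPF} on the corresponding polymer partition function $\Xi(1)$ with error parameter $\eps/3$, obtaining a value $\widetilde\Xi_{\+X}$, and outputs $(\lambda+1)^n\bigl(\widetilde\Xi_{\+L}+\widetilde\Xi_{\+R}\bigr)$.

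The correctness analysis in the second case chains the relative-error bounds, using throughout that relative approximations compose (an $\eps_1$-relative approximation of an $\eps_2$-relative approximation of $c$ is an $(\eps_1+\eps_2)$-relative approximation of $c$) and are preserved under sums of positive quantities. By \Cref{lem:algOfISPolymerPF}, $(\lambda+1)^n\widetilde\Xi_{\+X}$ is an $(\eps/3)$-relative approximation to $(\lambda+1)^n\Xi(1)$, which by \Cref{lem:approxRepOfIS} is a $C_2^{-n}$-relative approximation to $Z_{\+X}(G,\lambda)$, so $(\lambda+1)^n\widetilde\Xi_{\+X}$ approximates $Z_{\+X}(G,\lambda)$ to relative error $\eps/3+C_2^{-n}$. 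Summing over $\+X$, the output $(\lambda+1)^n(\widetilde\Xi_{\+L}+\widetilde\Xi_{\+R})$ approximates $Z_{\+L}(G,\lambda)+Z_{\+R}(G,\lambda)$ to the same relative error, and by \Cref{lem:strucApproxOfIS} the latter approximates $Z(G,\lambda)$ to relative error $C_1^{-n}$. Hence the output is within relative error $\eps/3+C_2^{-n}+C_1^{-n}\le\eps/3+2C^{-n}\le\eps$, using $3C^{-n}\le\eps$. For the running time, \Cref{lem:algOfISPolymerPF} runs in time polynomial in $\abs G/\eps$, and forming $G^2$ and the factor $(\lambda+1)^n$ cost only polynomial time, so the whole algorithm is an FPTAS.

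I do not expect a real obstacle, since all four components are already in hand; the one point requiring care is the error budget. The structural steps (\Cref{lem:strucApproxOfIS} and \Cref{lem:approxRepOfIS}) contribute \emph{additive} $C^{-n}$ errors that are not scaled by $\eps$, so they must be pushed into the brute-force regime $n=O_\Delta(\log(1/\eps))$, where enumeration costs $2^{O(n)}$ and is thus polynomial in $1/\eps$, before the polymer-model FPTAS is invoked; and the $\eps/3$ split has to be chosen so that the accumulated error remains at most $\eps$. Everything else is routine.
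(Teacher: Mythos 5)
Your proposal is correct and follows essentially the same route as the paper: reduce via \Cref{lem:almostAtHighFugacity} to graphs in $\graphClassAtHighFugacity$, chain the $C^{-n}$-relative approximations of \Cref{lem:strucApproxOfIS} and \Cref{lem:approxRepOfIS} with the polymer FPTAS of \Cref{lem:algOfISPolymerPF}, and brute-force when $n=O(\log(1/\eps))$ so the unscaled additive $C^{-n}$ errors are absorbed. The only differences are cosmetic bookkeeping (you split the budget as $\eps/3+2C^{-n}\le\eps$ with threshold $3C^{-n}>\eps$, while the paper uses $\eps'=\eps-C^{-n}$ with threshold $\eps\le 2C^{-n}$), and both are valid.
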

\addtocounter{theorem}{-1}}

\begin{proof}
  This theorem follows from \Cref{lem:almostAtHighFugacity} and \Cref{lem:algOfIS}.
\end{proof}

\begin{algorithm}[htbp]
  \caption{Counting independent sets at fugacity $\lambda\ge 1$ for $\Delta\ge\lowerBoundOfDeltaForIS$}
  \label{alg:IS}
  \begin{algorithmic}[1]
    \State \textbf{Input:} \emph{A graph $G=(\+L,\+R,E)\in \graphClassAtHighFugacity$ with $n$ vertices on both sides and $\eps>0$}
    \State \textbf{Output:} \emph{$\widehat Z$ such that $\exp(-\eps)\widehat Z \le Z(G, \lambda)\le \exp(\eps)\widehat Z$}
    \If {$n\le N$ or $\eps \le 2C^{-n}$}
      \State Use the brute-force algorithm to compute $\widehat Z \gets Z(G, \lambda)$;
      \State Exit;
    \EndIf
    \State $\eps'\gets \eps - C^{-n}$;
    \State Use the FPTAS in \Cref{lem:algOfISPolymerPF} to obtain $\widehat{Z}_\+L$, an $\eps'$-relative approximation to the partition function $\Xi(z)$ at $z=1$ of the polymer model $(\Gamma^*_\+L(G),w)$.
    \State Use the FPTAS in \Cref{lem:algOfISPolymerPF} to obtain $\widehat{Z}_\+R$, an $\eps'$-relative approximation to the partition function $\Xi(z)$ at $z=1$ of the polymer model $(\Gamma^*_\+R(G),w)$.
    \State $\widehat Z\gets (\lambda+1)^n\tp{\widehat{Z}_\+L+\widehat{Z}_\+R}$;
  \end{algorithmic}
\end{algorithm}

\begin{lemma} \label{lem:algOfIS}
  For $\Delta\ge\lowerBoundOfDeltaForIS$ and $\lambda\ge 1$, there is an FPTAS for $Z(G, \lambda)$ for all $G\in\graphClassAtHighFugacity$.
\end{lemma}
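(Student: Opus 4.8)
The statement is proved by exhibiting \Cref{alg:IS} as the desired FPTAS and checking its correctness and running time; all the mathematical substance is already packaged in \Cref{lem:strucApproxOfIS}, \Cref{lem:approxRepOfIS} and \Cref{lem:algOfISPolymerPF}. First I would fix the constants: let $C_1>1$ be valid for \Cref{lem:strucApproxOfIS} and $C_2>1$ be valid for \Cref{lem:approxRepOfIS} simultaneously for $\+X=\+L$ and $\+X=\+R$, and choose a constant $C$ with $1<C<\min(C_1,C_2)$ together with an integer $N$ large enough that $C_1^{-n}+C_2^{-n}\le C^{-n}$ and the hypotheses of the two lemmas hold for all $n>N$; this is possible since $C_1^{-n}+C_2^{-n}\le 2\min(C_1,C_2)^{-n}=\tp{\min(C_1,C_2)/2^{1/n}}^{-n}$ and $\min(C_1,C_2)/2^{1/n}>C$ for $n$ large. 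On input $G=(\+L,\+R,E)\in\graphClassAtHighFugacity$ with $n$ vertices per side and $\eps>0$, \Cref{alg:IS} returns $Z(G,\lambda)$ computed by brute force if $n\le N$ or $\eps\le 2C^{-n}$; otherwise it sets $\eps'=\eps-C^{-n}$, constructs $G^2$, runs the FPTAS of \Cref{lem:algOfISPolymerPF} on $G^2$ with accuracy $\eps'$ for the two polymer models $(\Gamma^*_{\+L}(G),w)$ and $(\Gamma^*_{\+R}(G),w)$ to obtain $\widehat Z_\+L,\widehat Z_\+R$, and returns $\widehat Z=(\lambda+1)^n(\widehat Z_\+L+\widehat Z_\+R)$.

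Correctness is a telescoping of relative approximations. The brute-force branch is exact, so assume $n>N$ and $\eps>2C^{-n}$, whence $\eps'>C^{-n}>0$ is a legitimate accuracy parameter. Fix $\+X\in\set{\+L,\+R}$: the FPTAS guarantee makes $(\lambda+1)^n\widehat Z_\+X$ an $\eps'$-relative approximation to $(\lambda+1)^n\Xi(1)$, which by \Cref{lem:approxRepOfIS} is a $C_2^{-n}$-relative approximation to $Z_\+X(G,\lambda)$; composing the two chains of one-sided inequalities, $(\lambda+1)^n\widehat Z_\+X$ is an $(\eps'+C_2^{-n})$-relative approximation to $Z_\+X(G,\lambda)$. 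As all the quantities involved are nonnegative, adding the inequalities for $\+X=\+L$ and $\+X=\+R$ gives that $\widehat Z$ is an $(\eps'+C_2^{-n})$-relative approximation to $Z_\+L(G,\lambda)+Z_\+R(G,\lambda)$, and then \Cref{lem:strucApproxOfIS} upgrades this to an $(\eps'+C_2^{-n}+C_1^{-n})$-relative approximation to $Z(G,\lambda)$. By the choice of $C$ and $N$, $\eps'+C_2^{-n}+C_1^{-n}\le\eps'+C^{-n}=\eps$, so $\widehat Z$ is an $\eps$-relative approximation to $Z(G,\lambda)$.

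For the running time: in the brute-force branch, if $n\le N$ then $G$ has at most $2N=O(1)$ vertices and $Z(G,\lambda)$ is computed in constant time, while if $\eps\le 2C^{-n}$ then $1/\eps\ge C^n/2$ is itself exponentially large in $n$, so the trivial $O(2^{2n})$-time enumeration runs in time $\mathrm{poly}(1/\eps)=\mathrm{poly}(\abs G/\eps)$ (it suffices that the exponent of the polynomial exceed $2/\log_2 C$). In the main branch, $\eps>2C^{-n}$ forces $\eps'>\eps/2$; constructing $G^2$ takes $\mathrm{poly}(\abs G)$ time, and since $G^2$ has $2n$ vertices and maximum degree at most $\Delta^2$, each invocation of the FPTAS of \Cref{lem:algOfISPolymerPF} runs in time $\mathrm{poly}(\abs{G^2}/\eps')=\mathrm{poly}(\abs G/\eps)$, with the remaining arithmetic negligible, so \Cref{alg:IS} runs in time $\mathrm{poly}(\abs G/\eps)$ overall. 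I do not expect a genuine obstacle in this lemma — the real work lives in the structural approximation and zero-freeness results already established — and the only points needing care are the bookkeeping that folds the finitely many exponentially small errors into the single constant $C$, and verifying that the brute-force fallback is indeed polynomial in $\abs G/\eps$ on the range where it is invoked.
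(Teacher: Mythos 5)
Your proposal is correct and follows essentially the same route as the paper: exhibit \Cref{alg:IS}, fold the two exponentially small structural errors from \Cref{lem:strucApproxOfIS} and \Cref{lem:approxRepOfIS} into a single constant $C>1$, run the polymer-model FPTAS of \Cref{lem:algOfISPolymerPF} with accuracy $\eps'=\eps-C^{-n}$, and fall back to brute force when $n\le N$ or $\eps\le 2C^{-n}$. Your accounting of the composed relative errors and of the brute-force branch's running time matches the paper's argument (and is, if anything, slightly more explicit about why the fallback is polynomial in $\abs{G}/\eps$).
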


\begin{proof}
  First we state our algorithm.
  See \Cref{alg:IS} for a pseudocode description.
  The input is a graph $G=(\+L,\+R,E)\in \graphClassAtHighFugacity$ and an approximation parameter $\eps > 0$.
  The output is a number $\widehat Z$ to approximate $Z(G, \lambda)$. We use $\Xi_{\+ X}(z)$ to denote the partition function of the polymer model $(\Gamma^*_\+X(G),w)$ for $\+X\in\{\+L,\+R\}$.
  Let $N_1,C_2,N_2,C_2$ be the constants in \Cref{lem:strucApproxOfIS} and \Cref{lem:approxRepOfIS}, respectively.
  These two lemmas show that $(\lambda+1)^n \tp{\Xi_\+L(1) + \Xi_\+R(1)}$ is a $C_1^{-n}+C_2^{-n}\le 2\min(C_1,C_2)^{-n}\le C^{-n}$-relative approximation to $Z(G, \lambda)$ for another constant $C > 1$ and all $n > N \ge \max(N_1,N_2)$ where $N$ is another sufficiently large constant.
  If $n\le N$ or $\eps \le 2C^{-n}$, we use the brute-force algorithm to compute $Z(G, \lambda)$.
  If $\eps > 2C^{-n}$, we apply the FPTAS in \Cref{lem:algOfISPolymerPF} with approximation parameter $\eps' = \eps-C^{-n}$ to obtain outputs $\widehat{Z}_\+L$ and $\widehat{Z}_\+R$ which approximate $\Xi_\+L(1)$ and $\Xi_\+R(1)$ , respectively.
  Let $\widehat Z = (\lambda+1)^n(\widehat{Z}_\+L + \widehat{Z}_\+R)$ be the output.
  It is clear that $\exp(-\eps) \widehat Z \le Z(G, \lambda)\le \exp(\eps)\widehat Z$.

  Then we show that \Cref{alg:IS} is indeed an FPTAS.
  It is required that the running time of our algorithm is bounded by $\tp{n/\eps}^{C_3}$ for some constant $C_3$ and for all $n > N_3$ where $N_3$ is a constant.
  Let $N_3=N$.
  If $\eps \le 2C^{-n}$, the running time of the algorithm would be $2.1^n \le (nC^n/2)^{C_3} \le \tp{n/\eps}^{C_3}$ for sufficient large $C_3$.
  If $\eps > 2C^{-n}$, the running time of the algorithm would be $\tp{n/\eps'}^{C_4}=\tp{n/(\eps - C^{-n})}^{C_4}\le \tp{2n/\eps}^{C_4} \le \tp{n/\eps}^{C_3}$ for sufficient large $C_3$, where $C_4$ is a constant from the FPTAS in \Cref{lem:algOfISPolymerPF}.
\end{proof}

\section{Counting independent sets for $\lambda=\lowerBoundOfLambdaInOmegaForm$}

Let $\lambda_l =\valueOfLambdaL=\lowerBoundOfLambdaInOmegaForm$.
Throughout this section, we consider sufficiently large integers $\Delta$, fugacity $\lambda > \lambda_l$ and set parameters $\alpha,\beta$ to be
\begin{align*}
  \alpha=\valueOfAlphaAtLowFugacity,
  \beta=\valueOfBetaAtLowFugacity.
\end{align*}
We define a set $\graphClassAtLowFugacity$ of graphs as
\begin{align*}
  \graphClassAtLowFugacity = \set{G\in \graphClassAtHighFugacity\cmid \text{$G$ has the $(\alpha,\alpha)$-cover property}}.
\end{align*}

\begin{lemma} \label{lem:almostAtLowFugacity}
  For all sufficiently large integers $\Delta$,
  $\displaystyle\lim_{n\to\infty}\Pr_{G\sim \setG}\sqtp{G\in \graphClassAtLowFugacity}=1$.
\end{lemma}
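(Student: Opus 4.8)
The plan is to establish the two defining properties of $\graphClassAtLowFugacity$ separately and then combine them by a union bound. Recall that $\graphClassAtLowFugacity = \set{G\in\graphClassAtHighFugacity \cmid G \text{ has the }(\alpha,\alpha)\text{-cover property}}$ with $\alpha = \valueOfAlphaAtLowFugacity$ and $\beta = \valueOfBetaAtLowFugacity$, so that $\alpha\beta = \tfrac13$ and $1/\beta = 3\alpha$. It suffices to show that (i) $\Pr_{G\sim\setG}[G \in \graphClassAtHighFugacity] \to 1$ and (ii) $\Pr_{G\sim\setG}[G \text{ has the }(\alpha,\alpha)\text{-cover property}] \to 1$; their intersection then also has probability tending to $1$.

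For (i) I would verify the hypotheses of \Cref{lem:conditionOfExpanders}. The inequalities $0 < \alpha < 1/\beta < 1$ hold for all large $\Delta$, since $\alpha < 3\alpha = 1/\beta$ and $3\alpha\to 0$. For the main inequality, I lower bound the denominator $H(\alpha) - \alpha\beta H(1/\beta) = H(\alpha) - \tfrac13 H(3\alpha)$ using \Cref{lem:monoOfEntropyVariant} with $a=3$: it is at least $\alpha(\ln 3 - \alpha)\log_2 e \ge \alpha\log_2 e$ once $\Delta$ is large, while the numerator $H(\alpha) + H(1/3)$ is at most $2$ for large $\Delta$. Hence $\frac{H(\alpha)+H(\alpha\beta)}{H(\alpha)-\alpha\beta H(1/\beta)} = O\!\left(\tfrac{1}{\alpha}\right) = O\!\left(\tfrac{\Delta}{(\ln\Delta)^2}\right) < \Delta$ once $\Delta$ is large, so \Cref{lem:conditionOfExpanders} applies and gives (i).

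For (ii) I would run a first-moment argument in the matching model. If $G$ fails the $(\alpha,\alpha)$-cover property, there exist $U$ on one side and $W$ on the other with $\abs U,\abs W \ge \alpha n$ and no edge of $G$ between them; restricting $U$ and $W$ to subsets of size $m := \ceil{\alpha n}$ preserves non-adjacency, so it is enough to bound the probability that some pair $(U,W)$ with $U\subseteq\+L$, $W\subseteq\+R$, $\abs U = \abs W = m$ has no edge between its two parts. Viewing each of the $\Delta$ random perfect matchings as a uniform bijection $\+L\to\+R$, the probability that $U$ is mapped into the complement of $W$ is $\prod_{i=0}^{m-1}\frac{n-m-i}{n-i} \le (1-m/n)^m \le e^{-m^2/n}$, and by independence the probability that $G$ has no $U$–$W$ edge is at most $e^{-\Delta m^2/n}$. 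A union bound over the at most $\binom{n}{m}^2 \le 2^{2H(m/n)n}$ such pairs gives failure probability at most $\exp(2H(m/n)n\ln 2 - \Delta m^2/n)$. Using $H(x)\le -2x\log_2 x$ (\Cref{lem:nonPolyUpperBoundOfEntropy}) together with $\alpha \le m/n \le 2\alpha$ (valid for $n$ large since $\alpha\to 0$), the entropy term is $O(\alpha n\ln\Delta)$, whereas $\Delta m^2/n \ge \Delta\alpha^2 n = (\ln\Delta)^2\alpha n$ dominates it once $\Delta$ is large; so the failure probability is at most $e^{-c(\Delta)n}$ with $c(\Delta)>0$, which tends to $0$.

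Finally, (i) and (ii) each have probability tending to $1$, so the same holds for their intersection, which is exactly $\set{G\in\graphClassAtLowFugacity}$. I expect the only step requiring genuine care to be (ii): pinning down the probability that two linear-sized sets are non-adjacent in a union of $\Delta$ random perfect matchings, and then confirming that the $\Delta m^2/n$ gain genuinely beats the $2H(m/n)n$ entropy cost for the specific choice $\alpha = \valueOfAlphaAtLowFugacity$ — this is precisely why $\alpha$ is taken of order $(\ln\Delta)^2/\Delta$ rather than $\Theta(1/\Delta)$.
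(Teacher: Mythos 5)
Your proposal is correct and follows essentially the same route as the paper: verify the expander condition via \Cref{lem:conditionOfExpanders} and \Cref{lem:monoOfEntropyVariant}, then a union bound over pairs of sets of size $\ceil{\alpha n}$ combined with the independence of the $\Delta$ matchings for the cover property. The only (cosmetic) difference is that you bound the single-pair non-adjacency probability directly by $\prod_{i}\frac{n-m-i}{n-i}\le e^{-m^2/n}$, whereas the paper writes it as $\bigl(\binom{n-\ceil{\alpha n}}{\ceil{\alpha n}}/\binom{n}{\ceil{\alpha n}}\bigr)^{\Delta}$ and invokes \Cref{lem:upperBoundOfAVariantOfEntropy} to extract the same $e^{-\Delta\alpha^2 n}$ gain.
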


\begin{proof}
  In this proof we only consider sufficiently large integers $\Delta$.
  Recall that $\alpha = \valueOfAlphaAtLowFugacity$ and $\beta = \valueOfBetaAtLowFugacity$.
  It suffices to show that
  \begin{align}
    \label{eq:midstep21}
    \lim_{n\to\infty}\Pr_{G\sim \setG}\sqtp{G\in \graphClassAtHighFugacity} &= 1, \\
    \label{eq:midstep22}
    \lim_{n\to\infty}\Pr_{G\sim \setG}\sqtp{\text{$G$ has the $(\alpha,\alpha)$-cover property}} &= 1.
  \end{align}
  First we verify that the conditions in \Cref{lem:conditionOfExpanders} are satisfied and then \Cref{eq:midstep21} follows.
  Clearly, $0 < \alpha < 1/\beta < 1$.
  Let $f(\Delta) = \Delta - \frac{H(\alpha) + H(\alpha \beta)}{H(\alpha) - \alpha\beta H(1/\beta)}$.
  Recall that $\Delta$ is sufficiently large.
  Thus $\alpha$ can be sufficiently small.
  Using \Cref{lem:monoOfEntropyVariant} we obtain
  \begin{align*}
    H(\alpha)-\alpha \beta H(1/\beta)
    =   H(\alpha) - 1/3H(3\alpha)
    \ge \alpha (\ln 3 - \alpha)\log_2 e
    \ge \alpha = \valueOfAlphaAtLowFugacity.
  \end{align*}
  Hence
  \begin{align*}
    f(\Delta) \ge \Delta - \frac{H(0.01)+H(1/3)}{(\ln \Delta)^2 / \Delta} \ge \Delta - \frac{\Delta}{(\ln \Delta)^2} > 0.
  \end{align*}
  Then we show that \Cref{eq:midstep22} is satisfied.
  It is equivalent to show that
  \[
    \lim_{n\to\infty}\Pr_{G\sim \setG}\sqtp{\text{$G$ does not have the $(\alpha,\alpha)$-cover property}}\to 0.
  \]
  Assume that a $\Delta$-regular bipartite graph $G=(\+L,\+R,E)$ with $n$ vertices on both sides does not have this property.
  Then there is a pair $(U,V)$ with $U\subseteq \+L,V\subseteq \+R$ or $U\subseteq \+R,V\subseteq \+L$ that $\abs{U}=\ceil{\alpha n},\abs{V}=\ceil{\alpha n}$ and $N(U)\cap V=\emptyset$.
  Applying union bound we obtain
  \begin{align}
    \label{eq:midstep18}
    &~\phantom{\le}~\Pr_{G\sim \setG}\sqtp{\text{$G$ does not have the $(\alpha,\alpha)$-cover property}} \\
    &\le 2\sum_{U\subseteq \+L\cmid \abs{U}=\ceil{\alpha n}}\sum_{V\subseteq\+R\cmid \abs{V}=\ceil{\alpha n}} \Pr_{G\sim\setG}\sqtp{N(U)\cap V = \emptyset}. \nonumber
  \end{align}
  Using \Cref{lem:boundsOfBinomCoefficients} and the perfect matching generation procedure of the distribution $\setG$, we obtain
  \begin{align*}
    \text{\Cref{eq:midstep18}}
    \le 2\binom{n}{\ceil{\alpha n}}^2 \tp{\binom{n-\ceil{\alpha n}}{\ceil{\alpha n}}\bigg/ \binom{n}{\ceil{\alpha n}} }^\Delta.
  \end{align*}
  It then follows from \Cref{lem:polyUpperBoundOfEntropy} that
  \begin{align*}
    \text{\Cref{eq:midstep18}}
    &\le 2\cdot 2^{\tp{2H(\alpha) + o(1)}n} \tp{2^{\tp{H\tp{\frac{\alpha}{1-\alpha}} + o(1)}\tp{1-\alpha}n-\tp{H(\alpha) + o(1)}n }(n+1)}^\Delta \\
    &\le 2(n+1)^\Delta\cdot
    \tp{ 2^ {2H(\alpha)+\Delta\tp{H\tp{\frac{\alpha}{1-\alpha}} \tp{1-\alpha} - H(\alpha)} +o(1) } }^n
  \end{align*}
  as $n\to\infty$.
  Recall that $\Delta$ is sufficiently large.
  Using \Cref{lem:nonPolyUpperBoundOfEntropy} and \Cref{lem:upperBoundOfAVariantOfEntropy} we obtain
  \begin{align*}
    &~\phantom{\le}~2H(\alpha)+\Delta\tp{H\tp{\frac{\alpha}{1-\alpha}} \tp{1-\alpha} - H(\alpha)}+o(1) \\
    &\le 4\alpha \log_2 \frac{1}{\alpha} - \Delta \alpha^2\log_2e + o(1) \\
    &= \frac{4(\ln \Delta)^2}{\Delta}\log_2 \frac{\Delta}{(\ln \Delta)^2} - \Delta \tp{\frac{(\ln \Delta)^2}{\Delta}}^2\log_2e + o(1) \\
    &\le \tp{\frac{4(\ln \Delta)^3}{\Delta} - \frac{(\ln \Delta)^4}{\Delta}}\log_2e + o(1)
    < C < 0
  \end{align*}
  for some constant $C=C(\Delta)<0$ as $n\to\infty$.
  Therefore
  \begin{align*}
    \text{\Cref{eq:midstep18}}
    \le 2(n+1)^\Delta 2^{C n}
    \to 0
  \end{align*}
  as $n\to\infty$.
\end{proof}

Putting together \Cref{thm:mainAtHighFugacity} and the result in this section, we obtain the following.

{\renewcommand{\thetheorem}{\ref{thm:weightedBIS}}
\begin{theorem}
  For all sufficiently large integers $\Delta$ and fugacity $\lambda=\lowerBoundOfLambdaInOmegaForm$, with high probability (tending to $1$ as $n\to \infty$) for a graph $G$ chosen uniformly at random from $\setG$, there is an FPTAS for the partition function $Z(G,\lambda)$.
\end{theorem}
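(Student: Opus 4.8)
\addtocounter{theorem}{-1}}

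\begin{proof}
The plan is to dispatch on the size of $\lambda$. For $\lambda\ge 1$ the statement is precisely \Cref{thm:BIS}, which holds already for $\Delta\ge 53$ and hence for all sufficiently large $\Delta$. So it remains to treat $\lambda_l\le\lambda<1$, and for this regime I would build a polymer model exactly as in \Cref{sec:polymerOfIS}, but with the low-fugacity parameters $\alpha=\valueOfAlphaAtLowFugacity$, $\beta=\valueOfBetaAtLowFugacity$ of the present section, and run the resulting FPTAS on graphs in $\graphClassAtLowFugacity$; by \Cref{lem:almostAtLowFugacity} this class is typical. (The final algorithm simply looks at $\lambda$ and invokes the high- or low-fugacity routine accordingly.) Throughout, $\+I_\+X(G)$, $Z_\+X(G,\lambda)$ and $\Gamma^*_\+X(G)$ keep their meaning from \Cref{sec:polymerOfIS} with these parameters.

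The first step is the structural approximation: that $Z_\+L(G,\lambda)+Z_\+R(G,\lambda)$ is a $C^{-n}$-relative approximation to $Z(G,\lambda)$ for $G\in\graphClassAtLowFugacity$. What is new, and what the cover property is for, is that now $\+I(G)=\+I_\+L(G)\cup\+I_\+R(G)$ \emph{exactly}: if an independent set $I$ had both $|I\cap\+L|\ge\alpha n$ and $|I\cap\+R|\ge\alpha n$, the $(\alpha,\alpha)$-cover property would give $|N(I\cap\+L)|>(1-\alpha)n$, forcing $|I\cap\+R|\le n-|N(I\cap\+L)|<\alpha n$, a contradiction. Hence $Z(G,\lambda)=\sum_{I\in\+I_\+L(G)\cup\+I_\+R(G)}\lambda^{|I|}$ with no error, and the only discrepancy is the double count $\sum_{I\in\+I_\+L(G)\cap\+I_\+R(G)}\lambda^{|I|}$. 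As in \Cref{lem:strucApprox2OfIS} this is at most $\bigl(\sum_{k<\alpha n}\binom nk\lambda^k\bigr)^2\le n^2\lambda^{-2}\bigl(2^{2H(\alpha)}\lambda^{2\alpha}\bigr)^n$, while $Z(G,\lambda)\ge(\lambda+1)^n$, so the relative error is $\mathrm{poly}(n)\cdot\bigl(2^{2H(\alpha)}\lambda^{2\alpha}/(1+\lambda)\bigr)^n$. Using \Cref{lem:nonPolyUpperBoundOfEntropy} to write $2^{2H(\alpha)}\le\alpha^{-4\alpha}=\exp(O((\ln\Delta)^3/\Delta))$ together with $(1+\lambda)^{-1}\le e^{-\lambda/2}\le\exp(-(\ln\Delta)^4/(2\Delta))$, the base is below $1$ with exponential room once $\Delta$ is large; this is the single place where the hypothesis $\lambda\ge\lambda_l=\valueOfLambdaL$ is consumed.

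The second step approximates $Z_\+X(G,\lambda)$ by the polymer partition function $\Xi_\+X(1)$. The exact identity $Z_\+X(G,\lambda)=(\lambda+1)^n\sum_{\Gamma\cmid\aabs\Gamma<\alpha n}\prod_{\gamma\in\Gamma}w(\gamma,1)$ carries over verbatim from \Cref{lem:exactRepOfIS}. For the truncation, the key observation is that every compatible family already obeys $\aabs\Gamma\le n/\beta=3\alpha n$: each polymer support has fewer than $\alpha n$ vertices, so expansion gives $|N(\overline\gamma)|\ge\beta\aabs\gamma$ for each polymer, these neighbourhoods are pairwise disjoint by compatibility, and $|N(\overline\Gamma)|\le n$. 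Feeding in additionally the cover bound $|N(\overline\Gamma)|>(1-\alpha)n$ whenever $\aabs\Gamma\ge\alpha n$, one gets $\sum_{\Gamma\cmid\aabs\Gamma\ge\alpha n}\prod_{\gamma\in\Gamma}w(\gamma,1)\le\mathrm{poly}(n)\cdot\bigl(2^{H(3\alpha)}\lambda^{3\alpha}/(1+\lambda)^{1-\alpha}\bigr)^n$, again exponentially small by the same polylog bookkeeping. Zero-freeness is comparatively easy: $|w(\gamma,z)|\le\bigl(\lambda(1+\lambda)^{-\beta}|z|\bigr)^{\aabs\gamma}$ and, by \Cref{lem:coreMonotoneFunc}, $\lambda(1+\lambda)^{-\beta}\le\lambda_l(1+\lambda_l)^{-\beta}\le e^{-(\ln\Delta)^2/6}$, which for large $\Delta$ is far below any fixed power of $1/\Delta^2$, so the KP-condition of \Cref{lem:KP-condition} with $a(\gamma)=t\aabs\gamma$ and a suitable constant $t$ holds for all $|z|<R$ with, say, $R=1.001$ — this is \Cref{lem:zeroFreeOfISPolymerPF} with the tiny base absorbing the $(e\Delta^2)^{k-1}$ enumeration factor of \Cref{lem:EnumSubgraph}. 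Then \Cref{thm:alg} supplies an FPTAS for $\Xi_\+X(1)$, and assembling $(\lambda+1)^n\bigl(\Xi_\+L(1)+\Xi_\+R(1)\bigr)$ with the error-budget argument of \Cref{lem:algOfIS} gives the FPTAS for $Z(G,\lambda)$.

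The main obstacle is the tension between the two exponential-decay estimates above: enumerating the deviating vertices — a side of size up to $\alpha n$ in the structural step, a polymer family of total size up to $3\alpha n$ in the polymer step — costs an entropy factor $2^{H(O(\alpha))}=\exp(\Theta(\alpha\log(1/\alpha)))=\exp(\Theta((\ln\Delta)^3/\Delta))$ per vertex, which must be outweighed by the $(1+\lambda)^{-1}=\exp(-\Theta(\lambda))$ saved per neighbour vertex. This forces $\lambda=\Omega((\ln\Delta)^3/\Delta)$, so the choice $\lambda_l=(\ln\Delta)^4/\Delta$ keeps a $\ln\Delta$ factor of slack; arranging all the inequalities to close simultaneously for every large enough $\Delta$ (rather than for a concrete numeric range as in \Cref{sec:highFugacity}) is where most of the technical work lies, but it is routine asymptotic estimation once the scheme is laid out.
\end{proof}
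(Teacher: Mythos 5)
Your proposal is correct and follows essentially the same route as the paper: split on $\lambda\ge 1$ versus $\lambda_l<\lambda<1$, use the $(\alpha,\alpha)$-cover property to make $\+I(G)=\+I_\+L(G)\cup\+I_\+R(G)$ exact and to control both the double count and the polymer tail (together with the $\aabs{\Gamma}\le n/\beta$ observation), and verify the KP-condition with the decay rate $(\lambda+1)^{-\beta}$; these are precisely \Cref{lem:strucApprox1AtLowFugacity}, \Cref{lem:strucApproxOfGeneralIS}, \Cref{lem:approxRepAtLowFugacity}, and \Cref{lem:zeroFreeOfLowFugacityPolymerPF}. The only cosmetic difference is that you retain the harmless factors $\lambda^{2\alpha n}$ and $\lambda^{3\alpha n}$ that the paper simply bounds by $1$.
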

\addtocounter{theorem}{-1} }

\begin{proof}
  Let $\alpha',\beta'$ be the parameters in \Cref{sec:highFugacity}.
  Let $\+G = \+G^\Delta_{\alpha',\beta'}\cap \graphClassAtLowFugacity $.
  It then follows from \Cref{lem:almostAtHighFugacity} and \Cref{lem:almostAtLowFugacity} that $\displaystyle\lim_{n\to\infty}\Pr_{G\in \setG}\sqtp{G\in \+G} = 1$.
  For $\lambda \ge 1$, we apply the algorithm from \Cref{thm:mainAtHighFugacity}.
  For $\lambda_l < \lambda < 1$, we apply the algorithm from \Cref{lem:algOfISAtLowFugacity}.
\end{proof}

Therefore, in the rest of this section, we only consider fugacity $\lambda_l < \lambda < 1$.
The notations and definitions in the rest of this section would be identical to those in \Cref{sec:highFugacity}.
So we only review needed materials briefly and state results different from those in \Cref{sec:highFugacity}.

\subsection{Approximating $Z(G, \lambda)$}

Recall that
\begin{align*}
  \clusterOfIS &= \set{I\in \+I(G)\cmid \abs{I\cap \+X}< \alpha n},
  Z_{\+X}(G,\lambda) = \sum_{I\in \clusterOfIS} \lambda^{\abs I}.
\end{align*}

The main result in this part is that we can use $Z_\+L(G,\lambda) + Z_\+R(G,\lambda)$ to approximate $Z(G, \lambda)$ for all $\lambda_l < \lambda < 1$.

\begin{lemma}\label{lem:strucApproxOfGeneralIS}
  For all sufficiently large integers $\Delta$, there are constants $C=C(\Delta)>1$ and $N=N(\Delta)$ so that for all $G\in\graphClassAtLowFugacity$ with $n > N$ vertices on both sides and $\lambda_l<\lambda<1$,
  $Z_\+L(G,\lambda)+Z_\+R(G,\lambda)$ is a $C^{-n}$-relative approximation to $Z(G, \lambda)$.
\end{lemma}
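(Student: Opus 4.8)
The plan is to reuse the two-step skeleton behind \Cref{lem:strucApprox1OfIS} and \Cref{lem:strucApprox2OfIS}, but to observe that the $(\alpha,\alpha)$-cover property built into $\graphClassAtLowFugacity$ makes the analogue of the first step vacuous. First I would prove that every independent set of $G$ already belongs to $\+I_\+L(G)\cup\+I_\+R(G)$, so that $Z(G,\lambda)=\sum_{I\in\+I_\+L(G)\cup\+I_\+R(G)}\lambda^{\abs I}$ on the nose. Then inclusion--exclusion on sums gives $Z_\+L(G,\lambda)+Z_\+R(G,\lambda)=Z(G,\lambda)+\sum_{I\in\+I_\+L(G)\cap\+I_\+R(G)}\lambda^{\abs I}$, and since $Z(G,\lambda)\ge(\lambda+1)^n$ it suffices to show the overlap sum is at most $C^{-n}(\lambda+1)^n$; the relative-approximation conclusion then follows from $1+x\le e^x$ exactly as in \Cref{lem:strucApprox2OfIS}.

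For the vanishing of the ``balanced'' sets: if $I\in\+I(G)$ had $\abs{I\cap\+L}\ge\alpha n$, then the $(\alpha,\alpha)$-cover property would force $\abs{N_G(I\cap\+L)}>(1-\alpha)n$; as $G$ is bipartite, $N_G(I\cap\+L)\subseteq\+R$, and as $I$ is independent, $I\cap\+R$ is disjoint from $N_G(I\cap\+L)$, so $\abs{I\cap\+R}\le n-\abs{N_G(I\cap\+L)}<\alpha n$. Hence no independent set has $\abs{I\cap\+L}\ge\alpha n$ and $\abs{I\cap\+R}\ge\alpha n$ at once, i.e.\ $\+I(G)\setminus\tp{\+I_\+L(G)\cup\+I_\+R(G)}=\emptyset$.

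For the overlap, every $I\in\+I_\+L(G)\cap\+I_\+R(G)$ has $\abs{I\cap\+L}<\alpha n$ and $\abs{I\cap\+R}<\alpha n$, so
\[
  \sum_{I\in\+I_\+L(G)\cap\+I_\+R(G)}\lambda^{\abs I}
  \le\tp{\sum_{k=0}^{\floor{\alpha n}}\binom{n}{k}\lambda^{k}}^{2}
  \le\tp{(n+1)\,2^{H(\alpha)n}\lambda^{\alpha n-1}}^{2},
\]
where the last step uses \Cref{lem:boundsOfBinomCoefficients}, the monotonicity of $k\mapsto\binom{n}{k}\lambda^{k}$ on $[0,\floor{\alpha n}]$ (which holds once $n>N(\Delta)$ because $\alpha<\tfrac{1}{2}\lambda_l\le\tfrac{1}{2}\lambda<\lambda/(1+\lambda)$ for $\Delta$ large), and $\lambda^{\floor{\alpha n}}\le\lambda^{\alpha n-1}$ since $\lambda<1$. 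Dividing by $(\lambda+1)^n$ leaves $(n+1)^2\lambda^{-2}\,\phi(\lambda)^n$ with $\phi(\lambda)=2^{2H(\alpha)}\lambda^{2\alpha}/(\lambda+1)$. By \Cref{lem:coreMonotoneFunc} (taking $a=2\alpha$, $b=1$), $\lambda^{2\alpha}/(\lambda+1)$ is decreasing on $[\tfrac{2\alpha}{1-2\alpha},\infty)$, and $\tfrac{2\alpha}{1-2\alpha}\le\lambda_l$ for $\Delta$ large, so $\phi(\lambda)\le\phi(\lambda_l)$ for all $\lambda_l<\lambda<1$. Finally \Cref{lem:nonPolyUpperBoundOfEntropy} gives
\begin{align*}
  \log_2\phi(\lambda_l)
  &\le 4\alpha\log_2\tfrac{1}{\alpha}+2\alpha\log_2\lambda_l-\tfrac{1}{2}\lambda_l\log_2 e \\
  &= 2\alpha\log_2\tfrac{\lambda_l}{\alpha^2}-\tfrac{1}{2}\lambda_l\log_2 e
   = 2\alpha\log_2\Delta-\tfrac{1}{2}\lambda_l\log_2 e,
\end{align*}
using the identity $\lambda_l/\alpha^2=\Delta$ forced by the choices $\lambda_l=\tfrac{(\ln\Delta)^4}{\Delta}$ and $\alpha=\tfrac{(\ln\Delta)^2}{\Delta}$; this is negative for $\Delta$ large because $2\alpha\log_2\Delta$ is of order $(\ln\Delta)^3/\Delta$ while $\lambda_l=(\ln\Delta)^4/\Delta$ carries the extra factor $\ln\Delta$. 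So $\phi(\lambda_l)<1$ is a constant, the polynomial prefactor $(n+1)^2\lambda_l^{-2}$ is absorbed once $n>N(\Delta)$, and $Z_\+L+Z_\+R$ is a $C^{-n}$-relative approximation to $Z$.

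The main obstacle is this last estimate, and more precisely the separation of scales it encodes: the $(\ln\Delta)^4$ in $\lambda_l$ must be large enough that $\lambda_l$ beats the entropy cost $2H(\alpha)$, yet small enough that $\tfrac{2\alpha}{1-2\alpha}<\lambda_l$, which is what lets \Cref{lem:coreMonotoneFunc} make $\phi$ decreasing on the whole window $(\lambda_l,1)$ and reduces the uniform-in-$\lambda$ claim to the single evaluation $\phi(\lambda_l)$. I would take care over the boundary regime $\lambda\to1^-$ (there $\phi(\lambda)\to2^{2H(\alpha)-1}<1$ for free, so nothing is lost) and over how large $\Delta$ must be, which is governed solely by the inequality $2\alpha\log_2\Delta<\tfrac{1}{2}\lambda_l\log_2 e$.
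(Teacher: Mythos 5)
Your proposal is correct and follows essentially the same route as the paper: the paper also first shows (via the $(\alpha,\alpha)$-cover property, in \Cref{lem:strucApprox1AtLowFugacity}) that $\+I(G)\setminus(\+I_\+L(G)\cup\+I_\+R(G))=\emptyset$, and then bounds the overlap sum $\sum_{I\in\+I_\+L\cap\+I_\+R}\lambda^{\abs I}$ against $(\lambda+1)^n$ using \Cref{lem:boundsOfBinomCoefficients}, \Cref{lem:nonPolyUpperBoundOfEntropy} and $\ln(1+x)\ge x/2$. The only (harmless) difference is that the paper simply drops $\lambda^k\le 1$ inside the binomial sum and lower-bounds $(\lambda+1)^n\ge(\lambda_l+1)^n$, which avoids your monotonicity detour through \Cref{lem:coreMonotoneFunc} and the $\lambda^{-2}$ prefactor while yielding the same order of estimate.
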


\begin{proof}
  In this proof we only consider sufficiently large integers $\Delta$.
  Applying \Cref{lem:strucApprox1AtLowFugacity}, it suffices to show that $Z_\+L(G,\lambda) + Z_\+R(G,\lambda)$ is a $C^{-n}$-relative approximation to $\sum_{I\in \+I_\+L\cup \+I_\+R} \lambda^{\abs I}$.
  For any $I\in \+I_\+L\cap \+I_\+R$, it holds that $\abs{I\cap \+L}< \alpha n$ and $\abs{I\cap \+R}< \alpha n$.
  Clearly $\sum_{I\in \+I_\+L\cup \+I_\+R}\lambda^\abs{I}\ge (\lambda+1)^n$.
  Using $\alpha \le 1/2$, \Cref{lem:boundsOfBinomCoefficients} and $\lambda_l < \lambda < 1$ we obtain
  \begin{align}
    \label{eq:midstep23}
    \frac{\sum_{I\in \+I_\+L\cap \+I_\+R}\lambda^\abs{I}}{\sum_{I\in \+I_\+L\cup \+I_\+R}\lambda^\abs{I}}
    \le (\lambda+1)^{-n}\tp{\sum_{k=0}^{\floor{\alpha n}} \binom{n}{k} \lambda^k}^2
    \le (\lambda+1)^{-n}\tp{\sum_{k=0}^{\floor{\alpha n}} \binom{n}{k}}^2
    \le n^2 \tp{\frac{4^{H(\alpha)}}{\lambda_l+1}}^n.
  \end{align}
  Recall that $\Delta$ is sufficiently large, $\alpha = \valueOfAlphaAtLowFugacity$ and $\lambda_l = \valueOfLambdaL$.
  Using \Cref{lem:nonPolyUpperBoundOfEntropy} and $\ln(x+1)\ge x/2 $ for any $0\le x \le 1$ we obtain
  \begin{align*}
    \ln \frac{4^{H(\alpha)}}{\lambda_l+1}
    = H(\alpha)\ln 4 - \ln(\lambda_l+1)
    &\le 2\alpha \log_2 \frac{1}{\alpha} \ln 4 - \lambda_l /2 \\
    &= \frac{4(\ln \Delta)^2}{\Delta}\ln \frac{\Delta}{(\ln \Delta)^2} - \frac{(\ln\Delta)^4}{2\Delta} \\
    &\le \frac{4(\ln \Delta)^3}{\Delta} - \frac{(\ln\Delta)^4}{2\Delta} \\
    &< C_1 < 0
  \end{align*}
  for some constant $C_1=C_1(\Delta)<0$.
  Therefore
  \begin{align*}
    \text{\Cref{eq:midstep23}}
    < n^2 \tp{e^{-C_1}}^{-n} = \tp{\frac{e^{-C_1}}{n^{2/n}}}^{-n} < C^{-n}
  \end{align*}
  for another constant $C=C(\Delta)>1$ and for all $n > N$ where $N=N(\Delta)$ is a sufficiently large constant.
  Using the upper bound on \Cref{eq:midstep23} and $1+x\le \exp(x)$ for any $x\in \=R$ we obtain
  \begin{align*}
    \sum_{I\in \+I_\+L\cup \+I_\+R} \lambda^{\abs I}
    \le Z_\+L + Z_\+R
    =\sum_{I\in \+I_\+L\cup \+I_\+R} \lambda^{\abs I} + \sum_{I\in \+I_\+L\cap \+I_\+R} \lambda^{\abs I}
    \le \exp(C^{-n}) \sum_{I\in \+I_\+L\cup \+I_\+R} \lambda^{\abs I}
  \end{align*}
  for all $n > N$.
\end{proof}

\begin{lemma} \label{lem:strucApprox1AtLowFugacity}
  For $\Delta\ge 3$, $G\in\graphClassAtLowFugacity$ and $\lambda\in\=R$,
  $\sum_{I\in \+I_\+L(G)\cup \+I_\+R(G)}\lambda^{\abs{I}}=Z(G,\lambda)$.
\end{lemma}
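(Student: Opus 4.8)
The plan is to establish the stronger set identity $\+I(G) = \+I_\+L(G) \cup \+I_\+R(G)$, from which the stated equality follows at once, since the two sides then sum $\lambda^{\abs I}$ over the same index set (no hypothesis on $\lambda$ is used — this is a purely combinatorial fact). The inclusion $\+I_\+L(G) \cup \+I_\+R(G) \subseteq \+I(G)$ is immediate from the definitions, so only the reverse inclusion needs an argument.

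For the reverse inclusion I would fix an arbitrary $I \in \+I(G)$ and split on the size of $\abs{I \cap \+L}$. If $\abs{I \cap \+L} < \alpha n$, then $I \in \+I_\+L(G)$ by definition. Otherwise $\abs{I \cap \+L} \ge \alpha n$, and I invoke the $(\alpha,\alpha)$-cover property of $G$ (valid because $G \in \graphClassAtLowFugacity$) applied to the set $U = I \cap \+L \subseteq \+L$, which gives $\abs{N_G(U)} > (1-\alpha)n$. Since $I$ is independent and $U \subseteq I$, no vertex of $I$ lies in $N_G(U)$; in particular $I \cap \+R$ is disjoint from $N_G(U)$. Because $U \subseteq \+L$ and $G$ is bipartite, $N_G(U) \subseteq \+R$, and $\abs{\+R} = n$, so $\abs{I \cap \+R} \le n - \abs{N_G(U)} < \alpha n$, i.e. $I \in \+I_\+R(G)$. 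In either case $I \in \+I_\+L(G) \cup \+I_\+R(G)$, which completes the proof.

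There is essentially no obstacle here: the entire content is that the $(\alpha,\alpha)$-cover property precisely forbids an independent set from having at least $\alpha n$ vertices on both sides simultaneously. The only points to be mildly careful about are that the cover property as stated is symmetric in $\+L$ and $\+R$ (so the same reasoning would also dispatch the case $\abs{I\cap\+R}\ge\alpha n$, although the single branch above already suffices), and that $N_G(U)\cap U = \emptyset$ together with $U\subseteq\+L$ forces $N_G(U)\subseteq\+R$ in a bipartite graph, which is what legitimizes the cardinality bound $\abs{I\cap\+R}\le n - \abs{N_G(U)}$.
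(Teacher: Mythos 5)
Your proposal is correct and follows essentially the same route as the paper: the paper also reduces the claim to showing that no independent set can have at least $\alpha n$ vertices on both sides, by applying the $(\alpha,\alpha)$-cover property to $I\cap\+L$ to force $\abs{I\cap\+R}\le\abs{\+R\setminus N(I\cap\+L)}<\alpha n$. The only cosmetic difference is that the paper phrases it as a contradiction ($\+B=\+I\setminus(\+I_\+L\cup\+I_\+R)=\emptyset$) while you argue the inclusion directly.
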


\begin{proof}
  Let $\+B = \+I \setminus \tp{\+I_\+L\cup \+I_\+R}$.
  If suffices to show that $\+B=\emptyset$.
  Suppose $\+B$ is not empty.
  Then there is an independent set $I\in \+B$ such that $\abs{I\cap \+L}\ge \alpha n$ and $\abs{I\cap \+R}\ge \alpha n$.
  Applying the cover property, we obtain that $\abs{I\cap \+R} \le \abs{\+R\setminus N(I\cap \+L)} < \alpha n$, which contradicts that $\abs{I\cap \+R}\ge \alpha n$.
  Thus $\+B=\emptyset$.
\end{proof}

% \begin{lemma} \label{lem:tradeOff}
%   Let $f_a(x) = 2^{H(x)}a^x$ for all $0 < x < 1$ and for all $0 < a < 1$.
%   Then $f_a(x)$ is monotonically increasing on $(0,\frac{a}{a+1})$ and $f_a(x)$ is monotonically decreasing on $(\frac{a}{a+1},1)$.
% \end{lemma}

% \begin{proof}
%   It is equivalent to analyze the monotonicity of the function $\ln f_a(x)$ on $[0,1]$.
%   It holds that
%   \begin{align*}
%     \frac{\partial \ln f_a(x)}{\partial x}
%     =\ln \frac{a(1-x)}{x}
%   \end{align*}
%   If $0 < x < \frac{a}{1+a}$, then $\frac{\partial \ln f_a(x)}{\partial x} > 0$.
%   If $\frac{a}{1+a} < x < 1$, then $\frac{\partial \ln f_a(x)}{\partial x} < 0$.
% \end{proof}

\subsection{Approximating $Z_\+X(G,\lambda)$}

Recall that for all $G=(\+L,\+R,E)\in\graphClassAtLowFugacity$ with $n$ vertices on both sides and $\+X\in\set{\+L,\+R}$, we defined a polymer model $(\Gamma^*_\+X(G),w)$ of the graph $G^2$. The partition function of this model is denoted by
\begin{align*}
  \Xi(z) = \sum_{\Gamma\in \+S(\Gamma^*_\+X(G))}\prod_{\gamma\in\Gamma}w(\gamma,1)
\end{align*}
where $z$ is a complex variable and $w(\gamma,1) = \lambda^{\aabs{\gamma}}(\lambda+1)^{-\abs{N(\ol\gamma)}} z^\aabs{\gamma}$.

\begin{lemma} \label{lem:approxRepAtLowFugacity}
  For all sufficiently large integers $\Delta$,
  there are constants $C=C(\Delta)>1$ and $N=N(\Delta)$ so that for all $G=(\+L,\+R, E)\in \graphClassAtLowFugacity$ with $n>N$ vertices on both sides, $\+X\in \set{\+L,\+R}$ and $\lambda_l < \lambda < 1$,
  \begin{align*}
    (\lambda+1)^n \Xi(1) = (\lambda+1)^n \sum_{\Gamma\in\+S(\Gamma^*_\+X(G))} \prod_{\gamma\in \Gamma} w(\gamma,1)
  \end{align*}
  is a $C^{-n}$-relative approximation to $Z_\+X(G,\lambda)$.
\end{lemma}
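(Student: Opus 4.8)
The plan is to mirror the high-fugacity argument of \Cref{lem:approxRepOfIS}, the only genuinely new ingredient being a polymer weight-decay estimate tailored to the range $\lambda_l<\lambda<1$. First I would apply \Cref{lem:exactRepOfIS} at $z=1$ to write $Z_\+X(G,\lambda)=(\lambda+1)^n\sum_{\Gamma\cmid\aabs\Gamma<\alpha n}\prod_{\gamma\in\Gamma}w(\gamma,1)$. Since $\lambda>0$ every $w(\gamma,1)$ is positive, so the $\Gamma=\emptyset$ term already gives $Z_\+X(G,\lambda)\ge(\lambda+1)^n$, while
\begin{align*}
  0\le(\lambda+1)^n\Xi(1)-Z_\+X(G,\lambda)=(\lambda+1)^n\sum_{\Gamma\in\+S(\Gamma^*)\cmid\aabs\Gamma\ge\alpha n}\prod_{\gamma\in\Gamma}w(\gamma,1).
\end{align*}
Dividing, the relative error is at most $\sum_{\Gamma\cmid\aabs\Gamma\ge\alpha n}\prod_{\gamma\in\Gamma}w(\gamma,1)$, and it suffices to bound this tail by $C^{-n}$ for a constant $C=C(\Delta)>1$ and all $n>N(\Delta)$; the relative-approximation conclusion then follows from $1+x\le e^x$ exactly as in \Cref{lem:approxRepOfIS}.

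Second I would prove the decay bound $\prod_{\gamma\in\Gamma}w(\gamma,1)\le g(\lambda_l)^{\aabs\Gamma}$ for every compatible $\Gamma$, where $g(\lambda)=\lambda/(\lambda+1)^\beta$; this is the $\lambda<1$ counterpart of \Cref{lem:decayRateOfIS}. Every polymer $\gamma$ has $\aabs\gamma<\alpha n$, so the expansion property of $G\in\graphClassAtLowFugacity$ gives $\abs{N(\ol\gamma)}\ge\beta\aabs\gamma$ and hence $w(\gamma,1)=\lambda^{\aabs\gamma}(\lambda+1)^{-\abs{N(\ol\gamma)}}\le g(\lambda)^{\aabs\gamma}$, whence the product bound by summing exponents over $\gamma\in\Gamma$ (using $\aabs\Gamma=\sum_{\gamma\in\Gamma}\aabs\gamma$). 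For large $\Delta$ we have $\beta=\frac{1}{3\alpha}=\frac{\Delta}{3(\ln\Delta)^2}>1$, so by \Cref{lem:coreMonotoneFunc} the function $g$ increases on $[0,\frac{1}{\beta-1}]$ and decreases on $[\frac{1}{\beta-1},\infty)$; since $\frac{1}{\beta-1}\sim\frac{3(\ln\Delta)^2}{\Delta}$ while $\lambda_l=\frac{(\ln\Delta)^4}{\Delta}$, one has $\frac{1}{\beta-1}\le\lambda_l$ once $\Delta$ is large, so $g$ is decreasing on $[\lambda_l,1)$ and $g(\lambda)\le g(\lambda_l)$ throughout our range. Finally, $\ln(1+x)\ge x/2$ on $[0,1]$ yields the quantitative estimate $g(\lambda_l)\le\lambda_l e^{-\beta\lambda_l/2}=\frac{(\ln\Delta)^4}{\Delta}e^{-(\ln\Delta)^2/6}$.

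Third I would carry out the enumeration. Each polymer support lies in $\+X$ and $\abs{\+X}=n$, so a compatible $\Gamma$ with $\aabs\Gamma=k$ is determined by its vertex set, a $k$-subset of $\+X$ (the polymers are then the connected components of that set in $G^2$). Writing $r=g(\lambda_l)$ and using \Cref{lem:boundsOfBinomCoefficients},
\begin{align*}
  \sum_{\Gamma\cmid\aabs\Gamma\ge\alpha n}\prod_{\gamma\in\Gamma}w(\gamma,1)\le\sum_{k=\ceil{\alpha n}}^{n}\binom{n}{k}r^{k}\le\sum_{k=\ceil{\alpha n}}^{\floor{n/2}}2^{nH(k/n)}r^k+\sum_{k=\floor{n/2}+1}^{n}2^{nH(k/n)}r^k.
\end{align*}
For $\alpha n\le k\le n/2$, \Cref{lem:nonPolyUpperBoundOfEntropy} and $k/n\ge\alpha$ give $H(k/n)\le-\frac{2k}{n}\log_2(k/n)\le\frac{2k}{n}\log_2(1/\alpha)$, hence $2^{nH(k/n)}r^k\le\tp{r/\alpha^{2}}^{k}$; and $r/\alpha^{2}\le\Delta\,e^{-(\ln\Delta)^2/6}<1$ for $\Delta$ large, so this part is at most $\tp{r/\alpha^2}^{\alpha n}/\tp{1-r/\alpha^2}$, which decays exponentially in $n$. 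For $n/2<k\le n$, $H(k/n)\le 1$ gives $2^{nH(k/n)}r^k\le\tp{4r}^{n/2}$, and $4r<1$ for $\Delta$ large, so this part is at most $n\tp{4r}^{n/2}$, again exponentially small. Combining, the tail is at most $C^{-n}$ for a suitable $C>1$ and all large $n$, which finishes the argument.

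The step I expect to be the main obstacle is the entropy-versus-decay comparison of the third paragraph. Because the polymers occupy only the $n$ vertices of one side, there are of order $2^{\Theta(n\log\Delta)}$ large compatible configurations to sum over, and the per-vertex decay factor $r=g(\lambda_l)$ must be small enough to overwhelm this. That is exactly what pins down the exponent $4$ in $\lambda_l=\frac{(\ln\Delta)^4}{\Delta}$: on the one hand $\lambda_l$ must be large enough that $\beta\lambda_l=\frac{(\ln\Delta)^2}{3}\to\infty$, making $(\lambda_l+1)^{-\beta}$ super-polynomially small in $\Delta$ so that $r/\alpha^2<1$ and $4r<1$; on the other hand it must still exceed the critical point $\frac{1}{\beta-1}$ of $g$ so that the monotonicity step of the second paragraph applies, and it must lie inside the window where $\graphClassAtLowFugacity$ (with its cover property, used in the structural step \Cref{lem:strucApproxOfGeneralIS}) is typical. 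Checking these constraints simultaneously is where the delicate parameter bookkeeping resides.
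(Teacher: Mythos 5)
Your proof is correct, and its skeleton matches the paper's: the exact representation from \Cref{lem:exactRepOfIS}, the lower bound $Z_\+X(G,\lambda)\ge(\lambda+1)^n$ from the empty term, a tail sum over compatible $\Gamma$ with $\aabs\Gamma\ge\alpha n$, and enumeration of such $\Gamma$ by their vertex sets in $\+X$ (which is a valid injection, since the polymers of a compatible $\Gamma$ are exactly the connected components of $(G^2)[\ol\Gamma]$). The genuinely different ingredient is the decay estimate used to beat the entropy. The paper invokes the $(\alpha,\alpha)$-cover property of $G\in\graphClassAtLowFugacity$: since $\abs{\ol\Gamma}\ge\alpha n$, it gets $\abs{N(\ol\Gamma)}>(1-\alpha)n$ and hence the global bound $\prod_{\gamma\in\Gamma}w(\gamma,1)\le\lambda^{\aabs\Gamma}(\lambda+1)^{(\alpha-1)n}$, and it caps the entropy by observing $\aabs\Gamma\le n/\beta$, so the comparison is $H(3\alpha)$ versus $(1-\alpha)\log_2(1+\lambda_l)$, i.e.\ roughly $6(\ln\Delta)^3/\Delta$ versus $(\ln\Delta)^4/(4\Delta)$. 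You instead use only the expansion property, applied per polymer (legitimate, since every individual polymer has $\aabs\gamma<\alpha n$), to get the per-vertex factor $g(\lambda_l)^{\aabs\Gamma}$ — your monotonicity argument via \Cref{lem:coreMonotoneFunc} and the estimate $g(\lambda_l)\le\lambda_l e^{-(\ln\Delta)^2/6}$ both check out — and you absorb the entropy with the split at $k=n/2$, the binding comparison being $r/\alpha^2\le\Delta e^{-(\ln\Delta)^2/6}<1$. Both close for sufficiently large $\Delta$. What your route buys is that the cover property is not needed in this lemma at all (it remains essential in \Cref{lem:strucApprox1AtLowFugacity}); what the paper's route buys is a single geometric-type bound with no case split. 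One minor remark: the condition $\lambda_l\ge 1/(\beta-1)$ that you flag as a constraint is an artifact of your particular monotonicity step, not a genuine requirement — one could always bound $\sup_{\lambda\in(\lambda_l,1)}g(\lambda)$ by the value of $g$ at its critical point.
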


\begin{proof}
  In this proof we only consider sufficiently large integers $\Delta$.
  It is clear that $Z_\+X(G,\lambda)\ge (\lambda+1)^n$.
  Then using \Cref{lem:exactRepOfIS} and the cover property we obtain
  \begin{align}
    \label{eq:midstep24}
    \frac{(\lambda+1)^n \Xi(1) - Z_\+X(G,\lambda)}{Z_\+X(G,\lambda)}
    &\le \sum_{\Gamma\in\+S(\Gamma^*)\cmid \aabs\Gamma \ge \alpha n}
    \prod_{\gamma\in\Gamma} w(\gamma,1)
    \le \sum_{\Gamma\in\+S(\Gamma^*)\cmid \aabs\Gamma \ge \alpha n}
    \lambda^{\aabs{\Gamma}} \tp{\lambda+1}^{(\alpha-1) n}.
  \end{align}
  For any $\gamma$, since $\abs{\gamma}< \alpha n$, it follows from the expansion property that $\abs{N_G(\ol \gamma)} \ge \beta \aabs{\gamma}$.
  The compatibility of $\Gamma$ states that $d_G(\ol{\gamma_1}, \ol{\gamma_2})>2$ for any $\gamma_1\neq \gamma_2$ in $\Gamma$, implying $N_G(\ol{\gamma_1})\cap N_G(\ol{\gamma_2}) = \emptyset$.
  Using these two facts, for any $\Gamma\in\+S(\Gamma^*)$,
  \begin{align*}
    \beta\aabs{\Gamma}
    = \beta\sum_{\gamma\in\Gamma} \abs{\ol \gamma} \le
    \sum_{\gamma\in \Gamma} \abs{N_G(\ol\gamma)}
    \le n,
  \end{align*}
  implying that $\aabs{\Gamma} \le n/\beta$.
  To enumerate each $\Gamma\in\+S(\Gamma^*)$ with $\aabs\Gamma \ge \alpha n$ at least once, we first enumerate an integer $\alpha n \le k \le n/\beta$, then since $\ol\Gamma\subseteq \+X$, we choose $k$ vertices from $\+X$.
  Recall that $\Delta$ is sufficiently large.
  Using \Cref{lem:boundsOfBinomCoefficients}, $\alpha < 1/\beta\le 1/2$, $\alpha\beta = 1/3$ and $\lambda_l < \lambda < 1$ we obtain
  \begin{align*}
    \text{\Cref{eq:midstep24}}
    \le \sum_{k=\ceil{\alpha n}}^{\floor{n/\beta}} \binom{n}{k}  \lambda^{k}\tp{\lambda+1}^{(\alpha-1) n}
    \le n \tp{\frac{2^{H(1/\beta)}}{(\lambda+1)^{1-\alpha}}}^n
    &\le n \tp{\frac{2^{H(3\alpha)}}{(\lambda_l+1)^{1-\alpha}}}^n.
  \end{align*}
  Recall that $\alpha = \valueOfAlphaAtLowFugacity$ and $\lambda_l = \valueOfLambdaL$.
  Using \Cref{lem:nonPolyUpperBoundOfEntropy} and $\ln(x+1)\ge x/2$ for any $0\le x\le 1$ we obtain
  \begin{align*}
    \ln \frac{2^{H(3\alpha)}}{(\lambda_l + 1)^{1-\alpha}}
    = H(3\alpha) \ln 2 - (1-\alpha)\ln(\lambda_l+1)
    &\le 6\alpha \log_2 \frac{1}{3\alpha}\ln 2 - \lambda_l / 4 \\
    &= \frac{6(\ln \Delta)^2}{\Delta}\ln \frac{\Delta}{3(\ln\Delta)^2} - \frac{(\ln\Delta)^4}{4\Delta} \\
    &\le \frac{6(\ln\Delta)^3}{\Delta} - \frac{(\ln\Delta)^4}{4\Delta} \\
    &< C_1 < 0
  \end{align*}
  for some constant $C_1=C_1(\Delta) < 0$.
  Therefore
  \begin{align*}
    \text{\Cref{eq:midstep24}}
    < n\tp{e^{-C_1}}^{-n} = \tp{\frac{e^{-C_1}}{n^{2/n}}}^{-n} < C^{-n}
  \end{align*}
  for some constant $C=C(\Delta) >1$ and for all $n > N$ where $N=N(\Delta)$ is a sufficiently large constant.
  Using the upper bound on \Cref{eq:midstep24} and $1+x\le \exp(x)$ for any $x\in\=R$ we obtain
  \begin{align*}
    Z_\+X(G,\lambda) &\le (\lambda+1)^n \Xi(1)
    = Z_\+X(G,\lambda) + ((\lambda+1)^n \Xi(1) - Z_\+X(G,\lambda))
    \le \exp(C^{-n}) Z_\+X(G,\lambda)
  \end{align*}
  for all $n > N$.
\end{proof}

\begin{lemma}\label{lem:decayRateAtLowFugacity}
  For all polymers $\gamma\in \Gamma^*_{\+X}(G)$ defined by $G=(\+L,\+R,E)\in \graphClassAtLowFugacity, \+X\in \set{\+L,\+R}$ and $\lambda_l < \lambda < 1$,
  $ w(\gamma,1) \le (\lambda+1)^{-\beta\aabs\gamma}$.
\end{lemma}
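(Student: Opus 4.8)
The plan is to bound the polymer weight $w(\gamma,1) = \lambda^{\aabs\gamma}(\lambda+1)^{-\abs{N(\ol\gamma)}}$ by pure monotonicity, mirroring the argument of \Cref{lem:decayRateOfIS} but exploiting $\lambda < 1$ in place of $\lambda \ge 1$. First I would recall from the construction in \Cref{sec:polymerOfIS} that every polymer $\gamma\in\Gamma^*_{\+X}(G)$ arises as a connected component of $(G^2)[I\cap\+X]$ for some $I\in\+I_{\+X}(G)$, so its support satisfies $\ol\gamma\subseteq\+X$ and $\aabs\gamma\le\abs{I\cap\+X}<\alpha n$. Since $G\in\graphClassAtLowFugacity\subseteq\graphClassAtHighFugacity$ is an $(\alpha,\beta)$-expander and $\aabs\gamma<\alpha n$, the expansion property applies to $\ol\gamma$ and gives $\abs{N(\ol\gamma)}\ge\beta\aabs\gamma$.

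Then I would finish with two elementary inequalities. Because $\lambda+1>1$ and $\abs{N(\ol\gamma)}\ge\beta\aabs\gamma$, we have $(\lambda+1)^{-\abs{N(\ol\gamma)}}\le(\lambda+1)^{-\beta\aabs\gamma}$; and because $0<\lambda<1$, we have $\lambda^{\aabs\gamma}\le1$. Multiplying these yields $w(\gamma,1)\le(\lambda+1)^{-\beta\aabs\gamma}$, which is exactly the claim. There is essentially no obstacle here: the only things to verify are that the expansion bound may be legitimately invoked (which needs $\aabs\gamma<\alpha n$, guaranteed by the definition of the polymer set) and that the signs of the exponents make the monotonicity go in the right direction. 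In particular, unlike the $\lambda\ge1$ regime of \Cref{lem:decayRateOfIS}, where one had to feed $\lambda(\lambda+1)^{-\beta}$ into \Cref{lem:coreMonotoneFunc} to extract a uniform bound, here the factor $\lambda^{\aabs\gamma}$ is already at most $1$ and can simply be dropped.
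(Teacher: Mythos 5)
Your proof is correct and is essentially the paper's own argument: invoke $\aabs\gamma<\alpha n$ from the polymer definition, apply the expansion property to get $\abs{N(\ol\gamma)}\ge\beta\aabs\gamma$, and drop the factor $\lambda^{\aabs\gamma}\le 1$ since $\lambda<1$. The paper states this more tersely, but the steps are identical.
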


\begin{proof}
  For every $\gamma\in \Gamma^*$, it follows from the definition of polymers that $\aabs\gamma < \alpha n$.
  Using the expansion property we obtain
  \[
    w(\gamma,1)= \lambda^{\aabs\gamma} (\lambda+1)^{-N(\ol\gamma)}
    \le (\lambda+1)^{-\beta\aabs\gamma}. \qedhere
  \]
\end{proof}

\subsection{Approximating the partition function of the polymer model}

\begin{lemma} \label{lem:algOfLowFugacityPolymerPF}
  For all sufficiently large integers $\Delta$ and $\lambda_l < \lambda < 1$, there is an FPTAS for $\Xi(1)$ for all $G=(\+L,\+R,E)\in\graphClassAtLowFugacity$ and $\+X\in\set{\+L,\+R}$.
\end{lemma}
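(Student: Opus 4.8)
The plan is to follow the strategy of \Cref{lem:algOfISPolymerPF} verbatim: we feed the graph $G^2$ into the FPTAS of \Cref{thm:alg} for the polymer model $(\Gamma^*_\+X(G),w)$, keeping the original graph $G$ available so that the connectivity information lost in passing to $G^2$ can always be recovered. The first three hypotheses of \Cref{thm:alg} are then checked exactly as in \Cref{lem:algOfISPolymerPF}: the weight $w(\gamma,z)=\lambda^{\aabs\gamma}(\lambda+1)^{-\abs{N(\ol\gamma)}}z^{\aabs\gamma}$ has the required monomial form and is computable in the allotted time; all polymers with a given connected support can be listed quickly via \Cref{lem:EnumSubgraph}; and since $\ol\Gamma\subseteq\+X$ for every compatible $\Gamma$ we have $\aabs{\Gamma}\le n\le\abs{G}$, so the degree of $\Xi(z)$ in $z$ is at most $\abs{G}$. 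Only the fourth hypothesis --- zero-freeness of $\Xi(z)$ on a disk of radius $R>1$ --- needs a separate argument, which I would isolate as a lemma analogous to \Cref{lem:zeroFreeOfISPolymerPF}.

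To prove that zero-freeness lemma I would verify the Koteck\`y--Preiss (KP) condition of \Cref{lem:KP-condition} with $a(\gamma)=t\aabs\gamma$ for a suitable small constant $t>0$ (the value $t=(-1+\sqrt{1+8e})/(4e)$ from \Cref{lem:zeroFreeOfISPolymerPF} works again). Fix $\gamma^*$ and $\abs{z}<R$. Any $\gamma\not\sim\gamma^*$ meets $\ol{\gamma^*}\cup N_{G^2}(\ol{\gamma^*})$, a set of at most $(\Delta^2+1)\aabs{\gamma^*}$ vertices of $\+X$; enumerating such a vertex $v$, a size $k\ge 1$, and a connected subgraph of $G^2$ of size $k$ through $v$ (at most $(e\Delta^2)^{k-1}$ of them, by \Cref{lem:EnumSubgraph}), and bounding $\abs{w(\gamma,z)}\le\bigl((\lambda+1)^{-\beta}\abs{z}\bigr)^{\aabs\gamma}$ via the expansion property and $\lambda<1$ (cf.\ \Cref{lem:decayRateAtLowFugacity}), one obtains, exactly as in \Cref{lem:zeroFreeOfISPolymerPF},
\begin{align*}
  \sum_{\gamma:\gamma\not\sim\gamma^*}e^{t\aabs\gamma}\abs{w(\gamma,z)}
  \le \frac{x(2-x)}{2e(1-x)}\cdot\aabs{\gamma^*},
  \qquad x:=e^{t+1}\Delta^2(\lambda+1)^{-\beta}R,
\end{align*}
provided $x<1$.

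The only real content beyond this bookkeeping is the estimate $x\to 0$ as $\Delta\to\infty$. Indeed, since $\lambda\ge\lambda_l$ and $\ln(1+\lambda_l)\ge\lambda_l/2$ for all large $\Delta$ (as $\lambda_l\le 1$), we get $(\lambda+1)^{-\beta}\le(\lambda_l+1)^{-\beta}\le\exp(-\beta\lambda_l/2)=\exp(-(\ln\Delta)^2/6)$, using $\beta\lambda_l=\tfrac{1}{3\alpha}\lambda_l=(\ln\Delta)^2/3$; hence $\Delta^2(\lambda+1)^{-\beta}\le\exp\bigl(2\ln\Delta-(\ln\Delta)^2/6\bigr)\to 0$, so for all sufficiently large $\Delta$ one may take, say, $x<1/2$ and then $\frac{x(2-x)}{2e(1-x)}<t$, establishing the KP-condition for every $\abs{z}<R$ and hence $\Xi(z)\neq 0$ there. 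With all four hypotheses of \Cref{thm:alg} in hand the FPTAS for $\Xi(1)$ follows. I expect the main obstacle to be purely one of parameter bookkeeping: confirming that the expansion ratio $\beta=\Theta(\Delta/(\ln\Delta)^2)$ makes the polymer weight decay fast enough to dominate the $\Delta^2$ branching factor of $G^2$ --- which it does, with room to spare, precisely because $\beta\lambda_l$ grows like $(\ln\Delta)^2$.
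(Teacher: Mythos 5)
Your proposal is correct and follows essentially the same route as the paper: reduce to the four hypotheses of \Cref{thm:alg} on $G^2$ (retaining $G$ for connectivity information), and verify zero-freeness via the KP-condition with the decay bound $\abs{w(\gamma,z)}\le((\lambda+1)^{-\beta}\abs z)^{\aabs\gamma}$ and the key estimate $\beta\ln(1+\lambda_l)\ge(\ln\Delta)^2/6\gg 2\ln\Delta$. The only cosmetic differences are your choice $a(\gamma)=t\aabs\gamma$ with $t\approx 0.346$ and the sharper $\frac{x(2-x)}{2e(1-x)}$ bound, where the paper takes $a(\gamma)=\aabs\gamma$, $R=2$, and a plain geometric series.
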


\begin{proof}
  We use the FPTAS in \Cref{thm:alg} to design the FPTAS we need.
  To this end, we generate a graph $G^2$ in polynomial time in $\abs{G}$ for any $G\in\graphClassAtLowFugacity$.
  We use this new graph $G^2$ as input to the FPTAS in \Cref{thm:alg}.
  It is straightforward to verify the first three conditions in \Cref{thm:alg}, only with the exception that the information of $G^2$ may not be enough because certain connectivity information in $G$ is discarded in $G^2$.
  Nevertheless, we can use the original graph $G$ whenever needed and thus the first three conditions are satisfied.
  For the last condition, \Cref{lem:zeroFreeOfLowFugacityPolymerPF} verifies it.
\end{proof}

\begin{lemma} \label{lem:zeroFreeOfLowFugacityPolymerPF}
  There is a constant $R>1$ so that for all sufficiently large integers $\Delta,G=(\+L,\+R,E)$$\in\graphClassAtLowFugacity,\+X\in \set{\+L,\+R}$ and $z\in\=C$ with $\abs{z} <R$, $\Xi(z)\neq 0$.
\end{lemma}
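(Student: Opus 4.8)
The plan is to follow the proof of \Cref{lem:zeroFreeOfISPolymerPF} almost verbatim, with the weight bound $2^{-\beta}$ used there replaced by the low-fugacity bound of \Cref{lem:decayRateAtLowFugacity}. Fix a constant $R>1$ (say $R=1.001$) and a constant $t>0$, set $a(\gamma)=t\aabs\gamma$ for every $\gamma\in\Gamma^*$, and verify the KP-condition of \Cref{lem:KP-condition}: for every $\gamma^*\in\Gamma^*$ and every $z\in\=C$ with $\abs z<R$,
\begin{align*}
  \sum_{\gamma:\gamma\not\sim\gamma^*} e^{t\aabs\gamma}\abs{w(\gamma,z)}\le t\aabs{\gamma^*}.
\end{align*}
Zero-freeness of $\Xi(z)$ for all $z$ with $\abs z<R$ then follows at once from \Cref{lem:KP-condition}.

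For the weight bound, \Cref{lem:decayRateAtLowFugacity} gives $w(\gamma,1)\le(\lambda+1)^{-\beta\aabs\gamma}$, so since $w(\gamma,z)=w(\gamma,1)\,z^{\aabs\gamma}$ we get $\abs{w(\gamma,z)}=w(\gamma,1)\abs z^{\aabs\gamma}\le\tp{(\lambda+1)^{-\beta}\abs z}^{\aabs\gamma}$ for all $\gamma\in\Gamma^*$. For the enumeration, exactly as in \Cref{lem:zeroFreeOfISPolymerPF}, any $\gamma\not\sim\gamma^*$ contains a vertex of the set $\+X\cap\tp{\ol{\gamma^*}\cup N_{G^2}(\ol{\gamma^*})}$, which has at most $\Delta^2\aabs{\gamma^*}$ elements since $G^2$ has maximum degree at most $\Delta^2$, and $\ol\gamma$ is connected in $G^2$. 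Hence, writing $x=e^{t+1}\Delta^2(\lambda+1)^{-\beta}R$ and using $\abs z<R$ together with \Cref{lem:EnumSubgraph},
\begin{align*}
  \sum_{\gamma:\gamma\not\sim\gamma^*} e^{t\aabs\gamma}\abs{w(\gamma,z)}
  &\le \Delta^2\aabs{\gamma^*}\tp{e^t(\lambda+1)^{-\beta}R+\frac12\sum_{k\ge2}(e\Delta^2)^{k-1}\tp{e^t(\lambda+1)^{-\beta}R}^k}\\
  &=\frac{x(2-x)}{2e(1-x)}\aabs{\gamma^*},
\end{align*}
whenever $x<1$, just as in the high-fugacity case (one safely extends the sum over $k$ from $\floor{\alpha n}$ to $\infty$ for an upper bound).

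It remains to show that $x\to0$ as $\Delta\to\infty$; then for all sufficiently large $\Delta$ the quantity $x$ is arbitrarily small, so in particular $x<1$ and $\frac{x(2-x)}{2e(1-x)}<t$, which verifies the KP-condition (any fixed $t>0$ works, e.g.\ $t=1/2$). Recall that $\beta=\frac{1}{3\alpha}=\frac{\Delta}{3(\ln\Delta)^2}$ and $\lambda+1>1+\lambda_l=1+\frac{(\ln\Delta)^4}{\Delta}$; using $\ln(1+y)\ge y/2$ for $0\le y\le1$,
\begin{align*}
  \ln\tp{(\lambda+1)^{-\beta}}=-\beta\ln(\lambda+1)\le-\frac{\beta\lambda_l}{2}=-\frac{(\ln\Delta)^2}{6},
\end{align*}
and therefore $x\le e^{t+1}R\exp\tp{2\ln\Delta-\frac{(\ln\Delta)^2}{6}}\to0$ as $\Delta\to\infty$. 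The one step needing genuine care---and where the parameter choices $\lambda_l=(\ln\Delta)^4/\Delta$, $\alpha=(\ln\Delta)^2/\Delta$, $\beta=1/(3\alpha)$ are really being exploited---is this comparison: the decay of $(\lambda+1)^{-\beta}$, which is exponential in $(\ln\Delta)^2$, must dominate the polynomial factor $\Delta^2$ coming from the maximum degree of $G^2$ in the polymer enumeration. Everything else is a transcription of the proof of \Cref{lem:zeroFreeOfISPolymerPF}.
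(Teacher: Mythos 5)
Your proposal is correct and follows essentially the same route as the paper's proof: verify the Koteck\`y--Preiss condition with $a(\gamma)$ proportional to $\aabs\gamma$, enumerate incompatible polymers through a vertex of $\ol{\gamma^*}\cup N_{G^2}(\ol{\gamma^*})$ via \Cref{lem:EnumSubgraph}, bound weights by \Cref{lem:decayRateAtLowFugacity}, and observe that $\beta\ln(1+\lambda_l)\ge(\ln\Delta)^2/6$ overwhelms the $2\ln\Delta$ from the $\Delta^2$ enumeration factor. The only differences are cosmetic choices of constants (the paper takes $R=2$ and $t=1$ and sums the cruder geometric series $\sum_k(e^2\Delta^2(\lambda_l+1)^{-\beta}R)^k$ rather than your $\frac{x(2-x)}{2e(1-x)}$ form), which do not affect correctness.
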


\begin{proof}
  In this proof we only consider sufficiently large integers $\Delta$.
  Set $R=2$.
  For any $\gamma$, let $a(\gamma)=\aabs{\gamma}$.
  We will verify that the KP-condition
  \begin{align}
    \sum_{\gamma:\gamma\not\sim\gamma^*} e^{\aabs\gamma} \abs{w(\gamma,z)} \le \aabs{\gamma^*} \label{eq:midstep16}
  \end{align}
  holds for any $\gamma^*$ and any $\abs{z} < R$.
  It then follows from \Cref{lem:KP-condition} that $\Xi(z)\neq 0$ for any $\abs{z} < R$.
  Recall that $d_{G^2}(\ol\gamma, \ol{\gamma^*})\le 1$ for all $\gamma\not\sim\gamma^*$.
  Thus there is always a vertex $v\in \ol\gamma\subseteq \+X$ such that $v\in \ol{\gamma^*}\sqcup N_{G^2}(\ol{\gamma^*})$.
  The number of such vertices $v$ is at most $\Delta^2\aabs{\gamma^*}$.
  So to enumerate each $\gamma\not\sim\gamma^*$ at least once, we can
  \begin{enumerate}[label=\alph*)]
    \item first enumerate a vertex $v$ in $\+X\cap \tp{\ol{\gamma^*}\cup N_{G^2}(\ol{\gamma^*})}$;
    \item then enumerate an integer $k$ from $1$ to $\floor{\alpha n}$;
    \item finally enumerate $\gamma$ with $v\in \ol\gamma$ and $\aabs{\gamma}=k$.
  \end{enumerate}
  Since $\ol\gamma$ is connected in $G^2$, using \Cref{lem:EnumSubgraph} and \Cref{lem:decayRateAtLowFugacity} and $\lambda_l < \lambda < 1$ we obtain
  \begin{align}
    \label{eq:midstep17}
    \sum_{\gamma:\gamma\not\sim\gamma^*}e^{\aabs\gamma} \abs{w(\gamma,z)}
    \le \sum_{\gamma:\gamma\not\sim\gamma^*}e^{\aabs\gamma} \abs{w(\gamma,1)}\cdot\abs{z}^{\aabs \gamma}
    &\le \Delta^2 \aabs{\gamma^*} \sum_{k=1}^{\floor{\alpha n}}(e\Delta^2)^{k-1}e^k (\lambda+1)^{-\beta k}R^k \\
    &\le \aabs{\gamma^*}\sum_{k=1}^\infty \tp{e^2\Delta^2(\lambda_l+1)^{-\beta}R}^k. \nonumber
  \end{align}
  Recall that $\Delta$ is sufficiently large, $\beta = \valueOfBetaAtLowFugacity = \frac{\Delta}{3(\ln \Delta)^2}$ and $\lambda_l = \valueOfLambdaL$.
  Using $\ln (x+1)\ge x/2$ for any $0\le x\le 1$ we obtain
  \begin{align*}
    \ln \tp{e^2\Delta^2 (\lambda_l + 1)^{-\beta}R}
    &= 2 + 2\ln \Delta - \beta \ln (\lambda_l + 1) + \ln R \\
    &\le 2 + 2 \ln \Delta - \frac{\Delta}{3(\ln \Delta)^2}\cdot \frac{(\ln\Delta)^4}{2\Delta} + \ln R \\
    &= 2 \ln \Delta - \frac{(\ln \Delta)^2}{6} + 2 + \ln 2 \\
    &< -1.
  \end{align*}
  Therefore
  \begin{align*}
    \text{\Cref{eq:midstep17}} \le \aabs{\gamma^*} \sum_{k=1}^\infty e^{-k} = \frac{1}{e-1} \aabs{\gamma^*} < \aabs{\gamma^*},
  \end{align*}
  which proves \Cref{eq:midstep16}.
\end{proof}

% \subsection{Putting things together}

% Using the results from previous parts, we obtain our main result for counting independent sets for $\lambda_l < \lambda < 1$.

% \begin{algorithm}[htbp]
%   \caption{Counting independent sets at fugacity $\lambda_l < \lambda < 1$ for all sufficiently large integers $\Delta$}
%   \label{alg:ISAtLowFugacity}
%   \begin{algorithmic}[1]
%     \State \textbf{Input:} \emph{A graph $G=(\+L,\+R,E)\in \graphClassAtLowFugacity$ with $n$ vertices on both sides and $\eps>0$}
%     \State \textbf{Output:} \emph{$\widehat Z$ such that $\exp(-\eps)\widehat Z \le Z(G, \lambda)\le \exp(\eps)\widehat Z$}
%     \If {$n\le N$ or $\eps \le 2C^{-n}$}
%       \State Use the brute-force algorithm to compute $\widehat Z \gets Z(G, \lambda)$;
%       \State Exit;
%     \EndIf
%     \State $\eps'\gets \eps - C^{-n}$;
%     \State Use the FPTAS in \Cref{lem:algOfLowFugacityPolymerPF} to obtain $\widehat{Z}_\+L$, an $\eps'$-relative approximating to the partition function $\Xi(z)$ at $z=1$ of the polymer model $(\Gamma^*_\+L(G),w)$.
%     \State Use the FPTAS in \Cref{lem:algOfLowFugacityPolymerPF} to obtain $\widehat{Z}_\+R$, an $\eps'$-relative approximating to the partition function $\Xi(z)$ at $z=1$ of the polymer model $(\Gamma^*_\+R(G),w)$.
%     \State $\widehat Z\gets (\lambda+1)^n\tp{\widehat{Z}_\+L+\widehat{Z}_\+R}$;
%   \end{algorithmic}
% \end{algorithm}

\begin{lemma} \label{lem:algOfISAtLowFugacity}
  For all sufficiently large integers $\Delta$ and $\lambda_l < \lambda < 1$, there is an FPTAS for $Z(G, \lambda)$ for all $G\in\graphClassAtLowFugacity$.
\end{lemma}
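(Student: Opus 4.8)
The plan is to assemble the FPTAS exactly as in the proof of \Cref{lem:algOfIS}, substituting the low-fugacity ingredients for the high-fugacity ones. First I would invoke \Cref{lem:strucApproxOfGeneralIS} to obtain constants $C_1=C_1(\Delta)>1$ and $N_1=N_1(\Delta)$ such that $Z_\+L(G,\lambda)+Z_\+R(G,\lambda)$ is a $C_1^{-n}$-relative approximation to $Z(G,\lambda)$ for all $G\in\graphClassAtLowFugacity$ with $n>N_1$ and $\lambda_l<\lambda<1$; and then \Cref{lem:approxRepAtLowFugacity} to obtain constants $C_2=C_2(\Delta)>1$ and $N_2=N_2(\Delta)$ such that $(\lambda+1)^n\Xi_\+X(1)$ is a $C_2^{-n}$-relative approximation to $Z_\+X(G,\lambda)$, where $\Xi_\+X$ denotes the partition function of the polymer model $(\Gamma^*_\+X(G),w)$ on $G^2$. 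Chaining these two relative approximations, exactly as in \Cref{lem:strucApproxOfIS} using $1+x\le\exp(x)$, shows that $(\lambda+1)^n\bigl(\Xi_\+L(1)+\Xi_\+R(1)\bigr)$ is a $C^{-n}$-relative approximation to $Z(G,\lambda)$ for some constant $C=C(\Delta)>1$ and all $n>N\ge\max(N_1,N_2)$, where $N=N(\Delta)$ is another sufficiently large constant.

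Next I would describe the algorithm on input $(G,\eps)$ with $G\in\graphClassAtLowFugacity$. If $n\le N$ or $\eps\le 2C^{-n}$, the algorithm computes $Z(G,\lambda)$ by the brute-force method, whose running time is bounded by $(n/\eps)^{O(1)}$ in this regime precisely as argued in \Cref{lem:algOfIS}. Otherwise it sets $\eps'=\eps-C^{-n}>0$ and calls the FPTAS of \Cref{lem:algOfLowFugacityPolymerPF} twice with approximation parameter $\eps'$ to obtain $\eps'$-relative approximations $\widehat Z_\+L$ and $\widehat Z_\+R$ to $\Xi_\+L(1)$ and $\Xi_\+R(1)$, respectively, and outputs $\widehat Z=(\lambda+1)^n(\widehat Z_\+L+\widehat Z_\+R)$. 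Since $\widehat Z_\+L+\widehat Z_\+R$ is an $\eps'$-relative approximation to $\Xi_\+L(1)+\Xi_\+R(1)$, combining with the $C^{-n}$-relative approximation from the previous paragraph yields $\exp(-\eps)\widehat Z\le Z(G,\lambda)\le\exp(\eps)\widehat Z$.

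It then remains to verify the time bound. The polymer FPTAS of \Cref{lem:algOfLowFugacityPolymerPF} runs in time $(n/\eps')^{C'}$ for a constant $C'$ inherited from \Cref{thm:alg}; since $\eps>2C^{-n}$ forces $\eps'=\eps-C^{-n}\ge\eps/2$, this is at most $(2n/\eps)^{C'}\le(n/\eps)^{C''}$ for a constant $C''$, while the brute-force branch is bounded as above. Hence the algorithm is an FPTAS for $Z(G,\lambda)$.

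I do not anticipate any genuine obstacle: every substantive ingredient — the structural collapse of $Z(G,\lambda)$ onto the two near-ground-cluster sums $Z_\+L$, $Z_\+R$ (via the $(\alpha,\alpha)$-cover property in \Cref{lem:strucApprox1AtLowFugacity} and \Cref{lem:strucApproxOfGeneralIS}), the exact and approximate polymer representations (\Cref{lem:exactRepOfIS} and \Cref{lem:approxRepAtLowFugacity}), and the zero-freeness of $\Xi$ needed to apply \Cref{thm:alg} (\Cref{lem:zeroFreeOfLowFugacityPolymerPF}, hence \Cref{lem:algOfLowFugacityPolymerPF}) — has already been established for $\graphClassAtLowFugacity$ and $\lambda_l<\lambda<1$. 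The only thing requiring care is the bookkeeping of the relative-error parameter $\eps'$ against the brute-force threshold, which is handled identically to the high-fugacity case.
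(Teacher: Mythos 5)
Your proposal is correct and matches the paper's intent exactly: the paper's own proof of \Cref{lem:algOfISAtLowFugacity} simply says to rerun the argument of \Cref{lem:algOfIS} with the low-fugacity substitutes, and you have carried out precisely that substitution (\Cref{lem:strucApproxOfGeneralIS} for \Cref{lem:strucApproxOfIS}, \Cref{lem:approxRepAtLowFugacity} for \Cref{lem:approxRepOfIS}, and \Cref{lem:algOfLowFugacityPolymerPF} for \Cref{lem:algOfISPolymerPF}), including the correct handling of the $\eps'=\eps-C^{-n}$ bookkeeping and the brute-force threshold.
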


\begin{proof}
  This can be readily obtained by replacing facts used in the proof of \Cref{lem:algOfIS} with corresponding results obtained in this section.
\end{proof}

\section{Counting colorings}

Throughout this section, we consider integers $q\ge 3, \Delta\ge \lowerBoundOfDeltaForColorings$ and set parameters $\sOfColorings, \alpha, \beta$ to be
\begin{align*}
  \sOfColorings = \valueOfS, \alpha = \valueOfAlphaForColorings, \beta = \valueOfBetaForColorings.
\end{align*}
We define a set $\graphClassOfColorings$ of graphs as
\begin{align*}
  \graphClassOfColorings = \set{G\in\graphClassAtHighFugacity\cmid \text{$G$ has the $(s,\alpha / q)$-cover property}}.
\end{align*}

\begin{lemma}\label{lem:almostOfColorings}
  For $q\ge 3$ and $\Delta\ge \lowerBoundOfDeltaForColorings$, $\displaystyle\lim_{n\to\infty}\Pr_{G\sim \setG}\sqtp{G\in \graphClassOfColorings}=1$.
\end{lemma}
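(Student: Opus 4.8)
The plan is to follow the template of the proof of \Cref{lem:almostAtLowFugacity}: I would show separately that a random $G\sim\setG$ lies in $\graphClassAtHighFugacity$ with high probability and that $G$ has the $(s,\alpha/q)$-cover property with high probability, and then intersect the two high-probability events.

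For the expansion part I would verify the hypotheses of \Cref{lem:conditionOfExpanders}. With $\alpha=\valueOfAlphaForColorings$ and $\beta=\valueOfBetaForColorings$ one has $\alpha\beta=1/3$ and $1/\beta=3\alpha$, so $0<\alpha<1/\beta<1$ once $\Delta>9$, which is guaranteed by $\Delta\ge\lowerBoundOfDeltaForColorings$. Applying \Cref{lem:monoOfEntropyVariant} with $a=3$ and $x=\alpha$ gives $H(\alpha)-\alpha\beta H(1/\beta)=H(\alpha)-\tfrac13H(3\alpha)\ge\alpha(\ln3-\alpha)\log_2e\ge\alpha$, since $\alpha\le\Delta^{-1/2}$ is tiny. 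As $H(\alpha)+H(\alpha\beta)\le 2$, the threshold in \Cref{lem:conditionOfExpanders} is at most $2/\alpha=2\Delta^{1/2}<\Delta$ for $\Delta\ge\lowerBoundOfDeltaForColorings$, so the expansion part follows.

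For the cover property I would first observe, using monotonicity of $N_G(\cdot)$ to pin down cardinalities, that $G$ lacks the $(s,\alpha/q)$-cover property if and only if there are sets $U,V$ on opposite sides with $\abs U=\ceil{sn}$, $\abs V=\ceil{\alpha n/q}$ and $N_G(U)\cap V=\emptyset$. A union bound over such pairs, together with the independence of the $\Delta$ matchings in the definition of $\setG$ and the identity $\Pr_{M}[M(U)\cap V=\emptyset]=\binom{n-\abs V}{\abs U}/\binom{n}{\abs U}$ for a single uniformly random perfect matching $M$ of $K_{n,n}$, gives
\begin{align*}
  \Pr_{G\sim\setG}\sqtp{\text{$G$ lacks the $(s,\alpha/q)$-cover property}}
  \le 2\binom{n}{\ceil{sn}}\binom{n}{\ceil{\alpha n/q}}\tp{\binom{n-\ceil{\alpha n/q}}{\ceil{sn}}/\binom{n}{\ceil{sn}}}^{\Delta}.
\end{align*}
Bounding the binomials with \Cref{lem:boundsOfBinomCoefficients} and the ratio with \Cref{lem:upperBoundOfAVariantOfEntropy} (with $x=s$, $y=\alpha/q$), the right-hand side is at most $2(n+1)^{\Delta}2^{\Lambda n}$ where $\Lambda=H(s)+H(\alpha/q)-\Delta s\tfrac{\alpha}{q}\log_2e+o(1)$.

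It remains to check that $\Lambda<0$ with constant slack. Here $\Delta s\tfrac{\alpha}{q}=\frac{\Delta^{1/2}}{18q\ol q^5}\ge\frac{5}{9q}$ since $\Delta\ge\lowerBoundOfDeltaForColorings$, whereas $H(s)+H(\alpha/q)\le 2\sqrt s+2\sqrt{\alpha/q}=O(\ol q^{-5/2})$ by \Cref{lem:polyUpperBoundOfEntropy}; using $q\le 2\ol q$ and $\ol q\ge 2$ a short computation shows $2\sqrt s+2\sqrt{\alpha/q}<\frac{5\log_2e}{9q}$, hence $\Lambda\le C<0$ for a constant $C=C(q)$ and the probability is at most $2(n+1)^{\Delta}2^{Cn}\to0$. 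The main obstacle is exactly this last inequality: the values $s=\valueOfS$ and $\alpha=\valueOfAlphaForColorings$ are calibrated precisely so that the ``expansion gain'' $\Delta s\alpha/q\ge\frac{5}{9q}$ dominates the entropy terms $H(s),H(\alpha/q)=O(\ol q^{-5/2})$; the bound is tight and just barely holds at $q=3$ (i.e.\ $\ol q=2$), so I would have to be careful to absorb the polynomial factor $(n+1)^{\Delta}$ and all ceiling-induced $o(1)$ terms into the constant slack.
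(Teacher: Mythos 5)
Your proposal is correct and follows the same route as the paper's proof: expansion via \Cref{lem:conditionOfExpanders} together with \Cref{lem:monoOfEntropyVariant}, and the cover property via a union bound over pairs $(U,V)$, the per-matching probability $\binom{n-|V|}{|U|}/\binom{n}{|U|}$ raised to the power $\Delta$, and the entropy estimates of \Cref{lem:boundsOfBinomCoefficients} and \Cref{lem:upperBoundOfAVariantOfEntropy}. The one place you diverge is the final numerical comparison, and there your version is actually the more careful one: the paper bounds $H(s)+H(\alpha/q)\le 1$ and then needs $\Delta s\alpha\log_2 e/q>1$, but at the threshold $\Delta=\lowerBoundOfDeltaForColorings$ one has $\Delta s\alpha/q=\Delta^{1/2}s/q=\tfrac{5}{9q}<1$ (the paper's displayed line $1-\lowerBoundOfDeltaForColorings\cdot\valueOfS/q\cdot\log_2e/2$ silently drops the factor $\alpha=\Delta^{-1/2}$), so the crude ``$\le 1$'' bound does not close the argument as written. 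Your sharper comparison $H(s)+H(\alpha/q)\le 2\sqrt{s}+2\sqrt{\alpha/q}=O(\ol q^{-5/2})$ against $\tfrac{5}{9q}\log_2 e=\Theta(\ol q^{-1})$ does close it (at $q=3$ the two sides are roughly $0.15$ versus $0.27$, and the gap only widens for larger $q$), so your write-up is complete provided you carry the ceiling-induced $o(1)$ terms and the $(n+1)^{\Delta}$ factor as you indicate.
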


\begin{proof}
  Recall that $s = \valueOfS, \alpha = \valueOfAlphaForColorings$ and $\beta = \valueOfBetaForColorings$.
  It suffices to show that
  \begin{align}
    \label{eq:midstep25}
    &\lim_{n\to\infty} \Pr_{G\sim \setG}\sqtp{G\in \graphClassAtHighFugacity} = 1, \\
    \label{eq:midstep26}
    &\lim_{n\to\infty} \Pr_{G\sim \setG}\sqtp{\text{$G$ has the $(s,\alpha/q)$-cover property}} = 1.
  \end{align}
  First we verify that the conditions in \Cref{lem:conditionOfExpanders} are satisfied and then \Cref{eq:midstep25} follows.
  Let $f(\Delta) = \Delta - \frac{H(\alpha) + H(\alpha \beta)}{H(\alpha) - \alpha\beta H(1/\beta)}$.
  It follows from \Cref{lem:monoOfEntropyVariant} that
  \begin{align*}
    H(1/\Delta^{1/2}) - 1/3 H(3/\Delta^{1/2}) \ge 1/\Delta^{1/2} \tp{\ln 3 - 1/\Delta^{1/2}}\log_2 e > 1.2/\Delta^{1/2}
  \end{align*}
  for any $\Delta \ge 4$.
  Then
  \begin{align*}
    f(\Delta) \ge \Delta - \frac{H(1/100) + H(1/3)}{H(1/\Delta) - 1/3H(3/\Delta)} \ge 0.1\Delta > 0
  \end{align*}
  for any $\Delta \ge 100$.
  Then we show that \Cref{eq:midstep26} is satisfied. It is equivalent to show that
  \begin{align*}
    \lim_{n\to\infty} \Pr_{G\sim \setG}\sqtp{\text{$G$ does not have the $(s,\alpha/q)$-cover property}} = 0.
  \end{align*}
  Assume that a $\Delta$-regular bipartite graph $G=(\+L,\+R,E)$ with $n$ vertices on both sides does not have the $(s,\alpha / q)$-cover property.
  Then there is a pair $(U,V)$ with $U\subseteq \+L,V\subseteq \+R$ or $U\subseteq \+R,V\subseteq \+L$ that $\abs{U}=\ceil{sn},\abs{V}=\ceil{\alpha / q n}$ and $N(U)\cap V=\emptyset$.
  Thus
  \begin{align}
    \label{eq:midstep15}
    &~\phantom{\le}~\Pr_{G\sim \setG}\sqtp{\text{$G$ does not have the $(s,\alpha / q)$-cover property}} \\
    &\le 2\sum_{U\subseteq \+L\cmid \abs{U}=\ceil{sn}}\sum_{V\subseteq\+R\cmid \abs{V}=\ceil{\alpha / qn}} \Pr_{G\sim\setG}\sqtp{N(U)\cap V = \emptyset}. \nonumber
  \end{align}
  Using \Cref{lem:boundsOfBinomCoefficients} and the perfect matching generation procedure of the distribution $\setG$, we obtain
  \begin{align*}
    \text{\Cref{eq:midstep15}}
    \le 2\binom{n}{\ceil{sn}}\binom{n}{\ceil{\alpha / q n}} \tp{\binom{n-\ceil{\alpha / qn}}{\ceil{sn}}\bigg/ \binom{n}{\ceil{sn}} }^\Delta.
  \end{align*}
  Recall that $s=\valueOfS$ and $\alpha = \valueOfAlphaForColorings$.
  It then follows from \Cref{lem:polyUpperBoundOfEntropy} that
  \begin{align*}
    \text{\Cref{eq:midstep15}}
    &\le 2\cdot 2^{\tp{H(s)+H(\frac{\alpha}{q}) + o(1)}n} \tp{2^{\tp{H\tp{\frac{s}{1-\alpha / q}} + o(1)}\tp{1-\frac{\alpha}{q} + o(1)}n-\tp{H(s) + o(1)}n }(n+1)}^\Delta \\
    &\le 2(n+1)^\Delta\cdot
    \tp{ 2^ {1+\Delta\tp{H\tp{\frac{s}{1-\alpha / q}} \tp{1-\frac{\alpha}{q}} - H(s)+o(1)} } }^n
  \end{align*}
  for all sufficiently large $n$.
  Using \Cref{lem:upperBoundOfAVariantOfEntropy} we obtain
  \begin{align*}
    1+\Delta\tp{H\tp{\frac{s}{1-\alpha / q}} \tp{1-\frac{\alpha}{q}} - H(s)+o(1)}
    &\le
    1 - \Delta \tp{s\alpha / q \log_2 e + o(1)}\\
    &\le 1 - \lowerBoundOfDeltaForColorings \valueOfS / q \log_2e /2\\
    &\le 1 - \frac{25\ol q^3}{9}\log_2 e\\
    & < 1/C
  \end{align*}
  for some constant $C>1$ and for all sufficiently large $n$.
  Therefore
  \begin{align*}
    \text{\Cref{eq:midstep15}}
    \le 2(n+1)^\Delta C^{-n}
    \to 0
  \end{align*}
  as $n\to\infty$.
\end{proof}

In the rest of this section, whenever possible, we will simplify notations by omitting superscripts, subscripts and brackets with the symbols between (but this will not happen in the statement of lemmas and theorems).
For example, $\+C(G)$ may be written as $\+C$ if $G$ is clear from context.

\subsection{Approximating $\abs{\+C(G)}$}

For all $q\ge 3, \Delta\ge 3, G=(\+L,\+R,E)\in \graphClassOfColorings$ and $\emptyset\subsetneq X\subsetneq [q]$, we define
\begin{align*}
  \clusterOfColoring = \set{\sigma\in \+C(G)\cmid d_X(\sigma) < \alpha n}
\end{align*}
where $d_X(\sigma) = \abs{\sigma_\+L^{-1}([q]\setminus X)} + \abs{\sigma_\+R^{-1}(X)}$ (recall that $\sigma_\+L = \sigma|_\+L$ and $\sigma_\+R=\sigma|_\+R$).
The main result of this subsection is that we can use $\sum_{X\cmid \abs{X}\in \set{\ul q, \ol q}}\abs{\clusterOfColoring}$ to approximate $\abs{\+C(G)}$.

\begin{lemma} \label{lem:strucApproxOfColorings}
  For $q\ge 3$ and $\Delta\ge\lowerBoundOfDeltaForColorings$, there are constants $C=C(q) > 1$ and $N=N(q)$ such that for all $G\in \graphClassOfColorings$ with $n>N$ vertices on both sides,
  $Z$ is a $C^{-n}$-relative approximation to $\abs{\+C(G)}$, where $Z = \binom{q}{\ul{q}}\abs{\clusterOfColoring[{[\ul q]}]}$ if $q$ is even, otherwise $Z = \binom{q}{\ul q}\tp{\abs{\clusterOfColoring[{[\ul q]}]} + \abs{\clusterOfColoring[{[\ol q]}]}}$.
\end{lemma}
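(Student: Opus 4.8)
\emph{Overall approach.} The plan follows the two-step scheme of \Cref{lem:strucApprox1OfIS,lem:strucApprox2OfIS}, but now with many ground clusters: besides discarding the colorings far from every cluster --- the set $\+B:=\+C(G)\setminus\bigcup_{\emptyset\subsetneq X\subsetneq[q]}\clusterOfColoring[X]$ --- we must also discard the ``unbalanced'' clusters and the overlaps between balanced clusters. First, a symmetry identity: for any permutation $\pi$ of $[q]$ the relabeling $\sigma\mapsto\pi\circ\sigma$ is a bijection of $\+C(G)$ with $d_{\pi(X)}(\pi\circ\sigma)=d_X(\sigma)$, so it carries $\clusterOfColoring[X]$ bijectively onto $\clusterOfColoring[\pi(X)]$; hence $\abs{\clusterOfColoring[X]}$ depends only on $\abs X$, so $\sum_{\abs X=k}\abs{\clusterOfColoring[X]}=\binom qk\abs{\clusterOfColoring[{[k]}]}$ for each $k$, and since $\binom q{\ol q}=\binom q{\ul q}$ we get $Z=\sum_{X\cmid\abs X\in\set{\ul q,\ol q}}\abs{\clusterOfColoring[X]}$ exactly. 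Moreover, colouring $\+L$ arbitrarily with colors from $[\ul q]$ and $\+R$ arbitrarily with colors from $[q]\setminus[\ul q]$ gives $(\ul q\,\ol q)^n$ proper colorings $\sigma$ with $d_{[\ul q]}(\sigma)=0$, so $\abs{\+C(G)}\ge Z\ge\binom q{\ul q}(\ul q\,\ol q)^n$; all error terms below will be measured against this.

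\emph{A dichotomy forced by the cover property.} Call a color $c$ left-heavy (resp.\ right-heavy) for $\sigma$ if it is used on at least $sn$ vertices of $\+L$ (resp.\ of $\+R$). Properness together with the $(s,\alpha/q)$-cover property forces a left-heavy color to be used on fewer than $(\alpha/q)n$ vertices of $\+R$, and symmetrically; hence no color is both left-heavy and right-heavy, and (since the $q$ classes on each side sum to $n$) there is at least one of each kind. Let $X$ be the left-heavy set. If the right-heavy set is exactly $[q]\setminus X$, then $d_X(\sigma)<\abs X\cdot(\alpha/q)n+\abs{[q]\setminus X}\cdot(\alpha/q)n=\alpha n$, i.e.\ $\sigma\in\clusterOfColoring[X]$. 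Consequently every $\sigma\in\+B$ has a \emph{globally light} color, one used on fewer than $sn$ vertices on each side.

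\emph{Three counting bounds.} Each is proved by first choosing the ``defect'' vertices and their colors via \Cref{lem:boundsOfBinomCoefficients,lem:polyUpperBoundOfEntropy}, then colouring the rest of each side from a restricted palette (using $r_c<(\alpha/q)n$ for left-heavy $c$ when needed). (i)~If $\sigma\in\+B$ has left-heavy set $X$ with $\abs X=k$ and a globally light color $c_0$, then $c_0\notin X$; all but fewer than $qsn$ vertices of $\+L$ are coloured inside $X$ (a palette of size $k$), and all but fewer than $\alpha n+qsn$ vertices of $\+R$ are coloured inside the right-heavy set, which avoids $X$ and $c_0$ and so has size at most $q-k-1$. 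Summing over the $O_q(1)$ choices of $(X,c_0)$ yields $\abs{\+B}\le n^{O_q(1)}\,2^{O_q((\sqrt s+\sqrt\alpha)n)}\,\floor{(q-1)^2/4}^{\,n}$. (ii)~If $\abs X=k\notin\set{\ul q,\ol q}$, every $\sigma\in\clusterOfColoring[X]$ has all but fewer than $\alpha n$ vertices of $\+L$ inside $X$ and all but fewer than $\alpha n$ of $\+R$ inside $[q]\setminus X$, so $\abs{\clusterOfColoring[X]}\le n^{O_q(1)}\,2^{O_q(\sqrt\alpha n)}\,(k(q-k))^n$ with $k(q-k)\le\ul q\,\ol q-1$. (iii)~If $X\neq X'$ with $\abs X,\abs{X'}\in\set{\ul q,\ol q}$, every $\sigma\in\clusterOfColoring[X]\cap\clusterOfColoring[X']$ has all but fewer than $2\alpha n$ of $\+L$ inside $X\cap X'$ and all but fewer than $2\alpha n$ of $\+R$ inside $[q]\setminus(X\cup X')$, and $\abs{X\cap X'}\cdot\abs{[q]\setminus(X\cup X')}\le\ul q\,\ol q-1$. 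In each case the base is a fixed positive integer strictly below $\ul q\,\ol q$, so the ratio to $\binom q{\ul q}(\ul q\,\ol q)^n$ carries a factor $(1-\Omega_q(\ol q^{-2}))^n$; meanwhile, by $H(x)\le2\sqrt x$, the choices $s=\valueOfS$, $\alpha=\valueOfAlphaForColorings$, $\Delta\ge\lowerBoundOfDeltaForColorings$ make the defect factor have exponent only $O_q(\sqrt s+\sqrt\alpha)=O_q(\ol q^{-5/2})$, which is dominated. Hence each of the three quantities is at most $2^{-\Omega_q(n)}Z$.

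\emph{Assembly and the main obstacle.} Writing $\abs{\+C(G)}=\abs{\bigcup_X\clusterOfColoring[X]}+\abs{\+B}$, bounding $\abs{\bigcup_X\clusterOfColoring[X]}$ above by $\sum_X\abs{\clusterOfColoring[X]}=Z+\sum_{\abs X\notin\set{\ul q,\ol q}}\abs{\clusterOfColoring[X]}$ and below by the Bonferroni inequality $\sum_X\abs{\clusterOfColoring[X]}-\sum_{\set{X,X'}}\abs{\clusterOfColoring[X]\cap\clusterOfColoring[X']}$, one sees that $\bigl|\,\abs{\+C(G)}-Z\,\bigr|$ is at most an $O_q(1)$-fold sum of terms of types (i)--(iii), hence at most $2^{-\Omega_q(n)}\abs{\+C(G)}$; this gives $\exp(-C^{-n})\abs{\+C(G)}\le Z\le\exp(C^{-n})\abs{\+C(G)}$ for a suitable constant $C=C(q)>1$ and all $n>N(q)$. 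I expect the only delicate point to be (i): one must see that a coloring lying in no ground cluster is forced to ``waste'' a whole color, so that its effective palette on one side drops to $q-k-1$ colors and such colorings become exponentially rarer than the balanced clusters. Everything else reduces to comparing the fixed integers $k(q-k)$, $k(q-1-k)$, and $\abs{X\cap X'}\cdot\abs{[q]\setminus(X\cup X')}$ against $\ul q\,\ol q$, and to checking that the $2^{O_q((\sqrt s+\sqrt\alpha)n)}$ defect factors are swamped --- which is precisely what the numerical values of $s$, $\alpha$ and the lower bound on $\Delta$ are designed to ensure.
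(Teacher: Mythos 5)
Your proposal follows the paper's proof essentially step for step: the same three-way decomposition (colorings far from every cluster, overlaps between clusters, unbalanced clusters), the same key idea that the cover property forces any coloring outside all clusters to "waste" a color so that its palette product drops to $\ul q(\ol q-1)$, the same color-permutation symmetry to reduce to $X=[\ul q],[\ol q]$, and the same comparison against $(\ul q\,\ol q)^n$; the paper packages the three bounds as \Cref{lem:strucApprox1OfColorings,lem:strucApprox2OfColorings,lem:strucApprox3OfColorings}. One quantitative caution: the defect entropy in case (i) is $H(qs)\le 2\sqrt{qs}=\Theta(\ol q^{-2})$, not $O(\sqrt s)=O(\ol q^{-5/2})$, so the blanket comparison "defect $O_q(\ol q^{-5/2})$ versus gain $\Omega_q(\ol q^{-2})$" is not the right one there --- you need the full gain $1-1/\ol q$ in that case --- and since the lemma must hold at $q=3,4$ (i.e.\ $\ol q=2$), domination cannot be read off from powers of $\ol q$ alone; the explicit constant checks (as in the paper's bound \Cref{eq:midstep11}, which is quite tight) are genuinely needed, as you anticipate in your closing remark.
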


\begin{proof}
  %Fix $q\ge 3, \Delta \ge \lowerBoundOfDeltaForColorings$ and $G=(\+L,\+R,E)\in\graphClassOfColorings$ with $n$ vertices on both sides.
  Let $N_1,C_1,N_2,C_2$ and $N_3,C_3$ be the constants in \Cref{lem:strucApprox1OfColorings}, \Cref{lem:strucApprox2OfColorings} and \Cref{lem:strucApprox3OfColorings}, respectively.
  It follows from these lemmas that
  \begin{align*}
    \exp(-(C_1^{-n}+C_2^{-n}+C_3^{-n}))Z \le \abs{\+C} \le \exp(C_1^{-n}+C_2^{-n}+C_3^{-n})Z
  \end{align*}
  for all $n > \max(N_1,N_2,N_3)$.
  It is clear that
  \[C_1^{-n} + C_2^{-n} + C_3^{-n} \le 3\min(C_1,C_2,C_3)^{-n} = \tp{\frac{\min(C_1,C_2,C_3)}{3^{1/n}}}^{-n} < C^{-n}\]
  for another constant $C=C(q)>1$ and for all $n > N \ge \max(N_1,N_2,N_3)$ where $N=N(q)$ is another sufficiently large constant.
  Therefore we obtain
  \begin{align*}
    \exp(-C^{-n}) Z \le \abs{\+C} \le \exp(C^{-n})Z
  \end{align*}
  for all $n >N$.
\end{proof}

\begin{lemma} \label{lem:strucApprox1OfColorings}
  For $q\ge 3$ and $\Delta\ge\lowerBoundOfDeltaForColorings$, there are constants $C=C(q) > 1$ and $N=N(q)$ such that for all $G\in \graphClassOfColorings$ with $n>N$ vertices on both sides,
  $\abs{\bigcup_{X\cmid\emptyset\subsetneq X\subsetneq [q]} \clusterOfColoring}$ is a $C^{-n}$-relative approximation to $\abs{\+C(G)}$.
\end{lemma}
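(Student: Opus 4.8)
The plan is to bound the set of ``bad'' colorings
$\+B:=\+C(G)\setminus\bigcup_{\emptyset\subsetneq X\subsetneq[q]}\clusterOfColoring$
and to show it is exponentially smaller than $\bigcup_{X}\clusterOfColoring$, so that
$\abs{\+C(G)}=\abs{\bigcup_X\clusterOfColoring}+\abs{\+B}\le\exp(C^{-n})\abs{\bigcup_X\clusterOfColoring}$
for a suitable $C=C(q)>1$; the reverse inequality is trivial since $\bigcup_X\clusterOfColoring\subseteq\+C(G)$, so this is the required relative approximation. For a lower bound on the main term, any coloring that uses only colors of $[\ul q]$ on $\+L$ and only colors of $[q]\setminus[\ul q]$ on $\+R$ is proper and has $d_{[\ul q]}(\sigma)=0<\alpha n$, so it lies in $\clusterOfColoring[{[\ul q]}]$; there are $(\ul q\,\ol q)^n$ such colorings, hence $\abs{\bigcup_X\clusterOfColoring}\ge(\ul q\,\ol q)^n$. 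It therefore suffices to prove $\abs{\+B}\le\mathrm{poly}(n)\cdot 3^q\cdot\rho^n$ for some $\rho<\ul q\,\ol q$ independent of $n$.

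First I would classify proper colorings via the $(\sOfColorings,\alpha/q)$-cover property. For $\sigma\in\+C(G)$ write $A_i=\sigma_\+L^{-1}(i)$ and $B_i=\sigma_\+R^{-1}(i)$; since $\sigma$ is proper there is no edge between $A_i$ and $B_i$, so $B_i\subseteq\+R\setminus N_G(A_i)$ and $A_i\subseteq\+L\setminus N_G(B_i)$. Set $L=\{i:\abs{A_i}\ge\sOfColorings n\}$, $R=\{i:\abs{B_i}\ge\sOfColorings n\}$ and $F=[q]\setminus(L\cup R)$. From $\Delta\ge\lowerBoundOfDeltaForColorings$ (hence $\alpha\le 1/(10\ol q^5)$) and $q\ge3$ one gets $\alpha/q<\sOfColorings$, so the cover property forces $\abs{B_i}<\alpha n/q$ for $i\in L$ and $\abs{A_i}<\alpha n/q$ for $i\in R$; in particular $L\cap R=\emptyset$. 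Since $\sum_i\abs{A_i}=n$ and $q\sOfColorings<1$, we cannot have $\abs{A_i}<\sOfColorings n$ for all $i$, so $L\ne\emptyset$, and symmetrically $R\ne\emptyset$. Finally, if $F=\emptyset$ then $X:=L$ is a nonempty proper subset of $[q]$ with $d_X(\sigma)=\sum_{i\in R}\abs{A_i}+\sum_{i\in L}\abs{B_i}<(\abs R+\abs L)\cdot\alpha n/q=\alpha n$, so $\sigma\in\clusterOfColoring[L]$ and $\sigma\notin\+B$. Hence every $\sigma\in\+B$ induces a partition $[q]=L\sqcup R\sqcup F$ with $L,R,F$ all nonempty, in which the colors outside $L$ are each used fewer than $\sOfColorings n$ times on $\+L$ and the colors outside $R$ fewer than $\sOfColorings n$ times on $\+R$.

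To bound $\abs{\+B}$ I would sum, over the at most $3^q$ partitions $[q]=L\sqcup R\sqcup F$ with $L,R,F\ne\emptyset$, the number of maps $\sigma:\+L\sqcup\+R\to[q]$ (properness is no longer needed) realizing the size pattern above. On $\+L$ at most $(q-\abs L)\sOfColorings n<q\sOfColorings n$ vertices receive a color outside $L$, so by \Cref{lem:boundsOfBinomCoefficients} the number of colorings of $\+L$ is at most $\mathrm{poly}(n)\cdot 2^{(H(q\sOfColorings)+q\sOfColorings\log_2 q)n}\cdot\abs L^n$, and symmetrically for $\+R$ with $\abs R^n$. Multiplying gives $\abs{\+B}\le\mathrm{poly}(n)\cdot 3^q\cdot(\abs L\abs R)^n\,2^{E(q)n}$ with $E(q)\le 2H(q\sOfColorings)+2q\sOfColorings\log_2 q$. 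Since $\abs L+\abs R\le q-1$ we have $\abs L\abs R\le\ul q(\ol q-1)$, so $(\abs L\abs R)\,2^{E(q)}<\ul q\,\ol q$ as soon as $E(q)<\log_2\!\bigl(\ol q/(\ol q-1)\bigr)$; I would verify this for every $q\ge3$ using $\sOfColorings=\valueOfS$ and \Cref{lem:nonPolyUpperBoundOfEntropy} to bound $H$. Together with the lower bound above and $1+x\le e^x$, this gives the claim.

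The main obstacle is quantitative: for large $q$ the ratio $\ul q\,\ol q/(\abs L\abs R)$ can be as small as $\ol q/(\ol q-1)=1+\Theta(1/q)$, so the entropy overhead $2^{E(q)}$ must be controlled very tightly. It works precisely because the parameters shrink polynomially in $q$: with $\sOfColorings=\valueOfS$, $\alpha=\valueOfAlphaForColorings$ and $\Delta\ge\lowerBoundOfDeltaForColorings$, one has $E(q)=O\!\bigl((\log q)/q^{4}\bigr)$, which is far below $\log_2(1+\Theta(1/q))=\Theta(1/q)$. A secondary point to handle with care is the $F=\emptyset$ case of the classification, which is exactly where a single ground cluster $\clusterOfColoring[L]$ already absorbs $\sigma$, and which explains why only clusters indexed by nonempty proper subsets $X$ are ever needed.
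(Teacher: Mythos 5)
Your proposal is correct and follows essentially the same route as the paper's proof: you classify each proper coloring by its majority color sets on the two sides (the paper's $\-{maj}(\sigma_\+L),\-{maj}(\sigma_\+R)$), use the $(\sOfColorings,\alpha/q)$-cover property to show these sets are disjoint and that a coloring whose majority sets partition $[q]$ already lies in some $\clusterOfColoring$, and then bound the remaining colorings by the same entropy/enumeration argument, comparing against the lower bound $\ul q^n\ol q^n$. The only differences are cosmetic (summing over three-part partitions $L\sqcup R\sqcup F$ rather than pairs $A,B$ with $\abs{A\sqcup B}=q-1$, and a slightly looser exponent $q\sOfColorings$ in place of $(q-\abs{A})\sOfColorings$), and your quantitative verification that $E(q)=O((\log q)/q^4)$ beats $\log_2(\ol q/(\ol q-1))=\Theta(1/q)$ matches the paper's bound on $4^{H(q\sOfColorings)}q^{(q+1)\sOfColorings}(1-1/\ol q)$.
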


\begin{proof}%[Proof of \Cref{lem:strucApprox1OfColorings}]
  %Fix $q\ge 3, \Delta\ge \lowerBoundOfDeltaForColorings$ and $G=(\+L,\+R,E)\in \graphClassOfColorings$ with $n$ vertices on both sides.
  For any coloring $\omega$, let
  \begin{align*}
    \-{maj}(\omega) = \set{c\in [q] \cmid \abs{\omega^{-1}(c)}\ge sn}.
  \end{align*}
  Fix $\sigma\in \+C$.
  If $\-{maj}(\sigma_\+L)\cap \-{maj}(\sigma_\+R) \neq \emptyset$, then there exists a color $c\in [q]$ that $\abs{\sigma_\+L^{-1}(c)} \ge sn$ and $\abs{\sigma_\+R^{-1}(c)} \ge sn$.
  Since $\abs{\sigma^{-1}_\+L(c)}\ge sn$, it follows from the cover property that $\abs{N(\sigma^{-1}_\+L(c))}> (1-\alpha / q)n$.
  Since $\sigma$ is proper, then $\abs{\sigma_\+R^{-1}(c)} \le n - \abs{N(\sigma^{-1}_\+L(c))} <\alpha / q n < sn$, which contradicts that $\abs{\sigma_\+R^{-1}(c)} \ge sn$.
  Therefore, $\-{maj}(\sigma_\+L)\cap \-{maj}(\sigma_\+R) =\emptyset$ for any $\sigma\in \+C$.
  Let $\+B = \set{\sigma \in \+C \cmid \sigma\not\in \cup_X\+C_X}$.
  We claim that $\abs{\-{maj}(\sigma_\+{L})} + \abs{\-{maj}(\sigma_\+{R})} \le q - 1$ for any $\sigma\in \+B$. Suppose that $\abs{\-{maj}(\sigma_\+{L})} + \abs{\-{maj}(\sigma_\+{R})} = q$ for some $\sigma$. Let $X= \-{maj}(\sigma_\+{L})$. Then we have
  \begin{align*}
    d_X(\sigma)
    = \abs{\sigma_\+L^{-1}([q]\setminus X)} + \abs{\sigma_\+R^{-1}(X)}
    &= \sum_{c\in \-{maj}(\sigma_\+R)}\abs{\sigma_\+L^{-1}(c)} + \sum_{c\in \-{maj}(\sigma_\+L)} \abs{\sigma_\+R^{-1}(c)} \\
    &\le \sum_{c\in \-{maj}(\sigma_\+R)}\tp{n - \abs{N(\sigma_\+R^{-1}(c))}} + \sum_{c\in \-{maj}(\sigma_\+L)} \tp{n-\abs{N(\sigma_\+L^{-1}(c))}} \\
    &< \alpha n.
  \end{align*}
  By definition $\sigma\in C_{X}(G)$ and thus $\sigma\notin \+B$.

  We give an upper bound of $\abs{\+B}$ via the following procedure which enumerates each $\sigma \in \+B$ at least once.
  \begin{enumerate}[label=\alph*)]
    \item Recall that $\abs{\-{maj}(\sigma_\+{L})\sqcup \-{maj}(\sigma_\+{R})}\le q-1$ for any $\sigma\in \+B$.
    Thus we enumerate two sets $A,B\subseteq [q]$ such that $\abs{A\sqcup B}= q-1$.
    Clearly, there are at most $q 2^q$ ways to enumerate such sets.

    \item Assume that $A$ and $B$ have been enumerated out.
    Then we enumerate colorings $\sigma\in \+B$ with $\-{maj}(\sigma_\+{L})\subseteq A$ and $\-{maj}(\sigma_\+{R})\subseteq B$.
    To this end, we can enumerate $\sigma_\+L$ and $\sigma_\+R$ independently and combine them together.

    \item Consider $\sigma_\+L$ with $\-{maj}(\sigma_\+L)\subseteq A$.
    Clearly, $\abs{\sigma_\+L^{-1}([q]\setminus A)}\le (q-\abs{A})sn$.
    Thus we enumerate a set $\+L_\-{minor}\subseteq \+L$ with size $\floor{(q-\abs{A})sn}$.
    Since $qs\le 1/2$, there are at most $\binom{n}{\floor{(q-\abs{A})sn}} \le \binom{n}{\floor{qsn}}$ ways to enumerate such a set.

    \item Assume that $\+L_\-{minor}$ has been enumerated out.
    Then we count colorings $\sigma\in \+B$ with $\sigma_\+L^{-1}([q]\setminus A)\subseteq \+L_\-{minor}$.
    The number of such colorings is upper bounded by $q^{(q-\abs{A})sn}\abs{A}^n$.

    \item Putting c) and d) together, there are at most $\binom{n}{\floor{qsn}} q^{(q-\abs{A})sn} \abs{A}^n$ ways to enumerate colorings $\sigma_\+{L}$ with $\-{maj}(\sigma_\+L)\subseteq A$.
    Analogously, there are at most $\binom{n}{\floor{qsn}}q^{(q-\abs{B})sn} \abs{B}^n$ ways to enumerate colorings $\sigma_\+R$ with $\-{maj}(\sigma_\+R)\subseteq B$.

    \item Combining all the previous steps, we obtain that
    \begin{align*}
      \abs{\+B} \le q2^q \binom{n}{\floor{qsn}}^2 q^{(2q-\abs{A}-\abs{B})sn} \abs{A}^n \abs{B}^n
      \le q2^q 4^{H(q s)n} q^{(q+1)sn}\ul q^n (\ol q - 1)^n,
    \end{align*}
    where the inequality follows from \Cref{lem:boundsOfBinomCoefficients}.
  \end{enumerate}
  Clearly $\abs{\cup_X \+C_X}\ge \ul q^n \ol q^n$ and we obtain
  \begin{align}
    \label{eq:midstep10}
    \frac{\abs{\+B}}{\abs{\cup_X \+C_X}}
    &\le q2^q\tp{4^{H(qs)}q^{(q+1)s} (1-1/\ol q)}^n.
  \end{align}
  Recall that $s = \valueOfS$.
  It holds that $q s \le \frac{1}{9\ol q^4}$.
  Using \Cref{lem:polyUpperBoundOfEntropy}, $\ln(1+x)\le x$ for any $x>-1$ and $\ol q \ge 2$ we obtain
  \begin{align}
    \label{eq:midstep11}
    4^{H(q s)}q^{(q+1) s}(1-1/\ol q)
    &\le 16^\frac{1}{3\ol q^2}q^{\frac{1}{9\ol q^4}+\valueOfS}(1-1/\ol q) \\
    & \le \exp\tp{\frac{\ln 16}{3\ol q^2} + \frac{\ln q}{9 \ol q^4} + \frac{\ln q}{18\ol q^5}- \frac{1}{\ol q}} \nonumber\\
    & \le \exp\tp{\tp{\frac{\ln 16}{3\times 2} + \frac{1}{9\times 4} + \frac{1}{18\times 8}- 1}\frac{1}{\ol q}} \nonumber \\
    & < \exp\tp{-\frac{1}{2\ol q}} \nonumber\\
    & < 1/C_1 \nonumber
  \end{align}
  for some constant $C_1=C_1(q) > 1$.
  Therefore,
  \begin{align*}
    \text{\Cref{eq:midstep10}}
    \le q2^q C_1^{-n} = \tp{\frac{C_1}{\tp{q2^q}^{1/n}}}^{-n} < C^{-n}
  \end{align*}
  for another constant $C=C(q)>1$ and $n>N$ where $N=N(q)$ is a sufficiently large constant.
  Using the upper bound on \Cref{eq:midstep10} and $1+x\le \exp(x)$ for any $x\in \=R$ we obtain
  \begin{align*}
    \abs{\cup_X\+C_X}
    \le \abs{\+C}
    = \abs{\cup_X\+C_X} + \abs{\+B}
    \le \exp(C^{-n})\abs{\cup_X\+C_X}
  \end{align*}
  for all $n > N$.
\end{proof}

\begin{lemma} \label{lem:strucApprox2OfColorings}
  For $q\ge 3$ and $\Delta\ge\lowerBoundOfDeltaForColorings$, there are constants $C=C(q) > 1$ and $N=N(q)$ such that for all $G\in \graphClassOfColorings$ with $n>N$ vertices on both sides,
  $\sum_{X\cmid\emptyset\subsetneq X\subsetneq [q]}\abs{\clusterOfColoring}$ is a $C^{-n}$-relative approximation to $\abs{\bigcup_{X\cmid\emptyset\subsetneq X\subsetneq [q]} \clusterOfColoring}$.
\end{lemma}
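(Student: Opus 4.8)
The plan is to bound the over‑counting of $\sum_{X}\abs{\clusterOfColoring}$ relative to $\abs{\bigcup_{X}\clusterOfColoring}$ (the index $X$ always ranging over $\emptyset\subsetneq X\subsetneq[q]$), the upper bound $\abs{\bigcup_X\clusterOfColoring}\le\sum_X\abs{\clusterOfColoring}$ being trivial. For $\sigma\in\+C$ let $m(\sigma)$ be the number of sets $X$ with $\sigma\in\clusterOfColoring$; then $\sum_X\abs{\clusterOfColoring}=\sum_\sigma m(\sigma)$ and $\abs{\bigcup_X\clusterOfColoring}=\sum_{\sigma:\,m(\sigma)\ge1}1$, so the gap is $\sum_{\sigma:\,m(\sigma)\ge2}(m(\sigma)-1)$, which, using $m-1\le\binom m2$ for integers $m\ge2$ (and $\binom02=\binom12=0$), is at most $\sum_\sigma\binom{m(\sigma)}2=\sum_{\{X,Y\}:\,X\neq Y}\abs{\clusterOfColoring[X]\cap\clusterOfColoring[Y]}$, the sum over unordered pairs of distinct admissible sets. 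Thus the whole lemma reduces to
\[
  \sum_{\{X,Y\}:\,X\neq Y}\abs{\clusterOfColoring[X]\cap\clusterOfColoring[Y]}\ \le\ C^{-n}\,\Big|\bigcup_{X}\clusterOfColoring\Big| ,
\]
after which it follows from $1+x\le e^x$ exactly as in the earlier structural lemmas. For the denominator I would just use $\abs{\bigcup_X\clusterOfColoring}\ge\abs{\clusterOfColoring[{[\ul q]}]}\ge\ul q^{n}\ol q^{n}$: every pair of maps $\sigma_\+L:\+L\to[\ul q]$, $\sigma_\+R:\+R\to[q]\setminus[\ul q]$ glues to a proper coloring with $d_{[\ul q]}(\sigma)=0$, since the two palettes are disjoint.

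The heart of the argument is a bound on a single pairwise intersection with $X\neq Y$. Pick a color $c\in X\triangle Y$; exchanging the names of $X$ and $Y$ if necessary, assume $c\in X\setminus Y$. For $\sigma\in\clusterOfColoring[X]\cap\clusterOfColoring[Y]$ the bound $d_X(\sigma)<\alpha n$ gives $\abs{\sigma_\+L^{-1}([q]\setminus X)}<\alpha n$ and $\abs{\sigma_\+R^{-1}(X)}<\alpha n$, while $d_Y(\sigma)<\alpha n$ together with $c\in[q]\setminus Y$ gives $\abs{\sigma_\+L^{-1}(c)}<\alpha n$. Hence fewer than $2\alpha n$ vertices of $\+L$ avoid the palette $X\setminus\{c\}$ (of size $\abs X-1$), and fewer than $\alpha n$ vertices of $\+R$ avoid $[q]\setminus X$. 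Enumerating $\sigma_\+L$ and $\sigma_\+R$ independently — first the exceptional set, then at most $q$ colors per exceptional vertex, then the rest from the small palette — and invoking \Cref{lem:boundsOfBinomCoefficients}, I get
\[
  \abs{\clusterOfColoring[X]\cap\clusterOfColoring[Y]}\ \le\ (n+1)^2\,2^{(H(2\alpha)+H(\alpha))n}\,q^{3\alpha n}\,(\abs X-1)^{n}\,\abs{[q]\setminus X}^{n},
\]
and by symmetry the same holds with $Y$ in place of $X$, so it holds with the product $(\abs X-1)\abs{[q]\setminus X}$ replaced by $\max_{1\le a\le q-1}(a-1)(q-a)$.

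Finally I sum over the fewer than $2^{2q}$ pairs and divide by $\ul q^{n}\ol q^{n}$. The elementary fact to record is $\max_{1\le a\le q-1}(a-1)(q-a)=\ul q(\ol q-1)$ (the maximum is at $a=\ol q$ when $q$ is odd and at $a\in\{\ul q,\ul q+1\}$ when $q$ is even), so
\[
  \frac{\sum_{\{X,Y\}:\,X\neq Y}\abs{\clusterOfColoring[X]\cap\clusterOfColoring[Y]}}{\big|\bigcup_X\clusterOfColoring\big|}\ \le\ 2^{2q}(n+1)^2\left(2^{H(2\alpha)+H(\alpha)}\,q^{3\alpha}\cdot\frac{\ol q-1}{\ol q}\right)^{\!n},
\]
and it only remains to check that the base is strictly below $1$. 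Since $\alpha=\valueOfAlphaForColorings\le\frac{1}{10\ol q^{5}}$ (because $\Delta\ge\lowerBoundOfDeltaForColorings$), \Cref{lem:nonPolyUpperBoundOfEntropy} gives $H(2\alpha)+H(\alpha)\le 6\alpha\log_2(1/\alpha)$, so $2^{H(2\alpha)+H(\alpha)}q^{3\alpha}=1+O\!\big(\ol q^{-5}\log\ol q\big)$, which stays below $\frac{\ol q}{\ol q-1}=1+\frac{1}{\ol q-1}$ for every $\ol q\ge2$ (equivalently $q\ge3$); substituting the explicit $\alpha$ turns this into a finite verification whose worst case is $\ol q=2$ and in which there is ample room. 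Hence the base is at most $1/C$ for some $C=C(q)>1$, the displayed ratio is below $C^{-n}$ for all $n>N(q)$ with $N(q)$ large, and the lemma follows.

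I expect the main obstacle to be conceptual rather than computational: noticing that the $m-1\le\binom m2$ step correctly linearizes the over-counting into pairwise cluster intersections, and spotting that lying in two distinct clusters forces some color $c$ to be sparse on \emph{both} sides — which is precisely what lets the left-side enumeration run over the smaller palette $X\setminus\{c\}$ and produces the genuine exponential saving $(\abs X-1)\abs{[q]\setminus X}\le\ul q(\ol q-1)<\ul q\ol q$. After that, the only thing to watch is that the sub-exponential overheads $2^{H(2\alpha)}q^{2\alpha}$, $2^{H(\alpha)}q^{\alpha}$ and the prefactor $2^{2q}(n+1)^2$ do not erode the gap $\frac{\ol q}{\ol q-1}$, which the hypothesis $\Delta\ge\lowerBoundOfDeltaForColorings$ (making $\alpha$ tiny) guarantees.
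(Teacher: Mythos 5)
Your proposal is correct and follows essentially the same route as the paper: reduce the sum-versus-union gap to the pairwise intersections $\+C_X(G)\cap\+C_Y(G)$, bound each one by an enumeration argument showing the effective palette product drops from $\ul q\,\ol q$ to at most $\ul q(\ol q-1)$, and compare against $\abs{\bigcup_X \+C_X(G)}\ge \ul q^{\,n}\ol q^{\,n}$, with $\Delta\ge\lowerBoundOfDeltaForColorings$ making $\alpha$ small enough that the subexponential overheads do not erode the $(1-1/\ol q)$ saving. The only (immaterial) difference is how the lost color is extracted: you fix a single color $c\in X\setminus Y$ and restrict $\+L$ to the palette $X\setminus\{c\}$, whereas the paper restricts $\+L$ to $X\cap Y$ and $\+R$ to $[q]\setminus(X\cup Y)$, whose sizes sum to at most $q-1$; both give the same base up to $1+o(1)$ factors.
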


\begin{proof}%[Proof of \Cref{lem:strucApprox2OfColorings}]
  %Fix $q\ge 3, \Delta\ge \lowerBoundOfDeltaForColorings$ and $G=(\+L,\+R,E)\in\graphClassOfColorings$ with $n$ vertices on both sides.
  Fix two sets $\emptyset \subsetneq X\neq Y\subsetneq [q]$.
  Clearly, $\abs{X\cap Y} + \abs{[q]\setminus (X\cup Y)}\le (\max(\abs X,\abs Y)-1) +(q-\max(\abs X,\abs Y))= q -1$.
  For any $\sigma\in \+C_X\cap \+C_Y$, it holds that
  \begin{align*}
    \abs{\sigma_\+L^{-1}([q]\setminus (X\cap Y))} + \abs{\sigma_\+R^{-1}(X\cup Y)}
    &\le \tp{\abs{\sigma_\+L^{-1}([q]\setminus X)} + \abs{\sigma_\+L^{-1}([q]\setminus Y)}} + \tp{\abs{\sigma_\+R^{-1}(X)} + \abs{\sigma_\+R^{-1}(Y)}}\\
    &= \tp{\abs{\sigma_\+L^{-1}([q]\setminus X)} + \abs{\sigma_\+R^{-1}(X)}} + \tp{\abs{\sigma_\+L^{-1}([q]\setminus Y)} + \abs{\sigma_\+R^{-1}(Y)}} \\
    &< 2\alpha n.
  \end{align*}
  This shows that for $\sigma\in \+C_X\cap\+C_Y$ most of the vertices in $\+L$ are colored using colors from $X\cap Y$ and most of the vertices in $\+R$ are colored using colors from $[q]\setminus (X\cup Y)$.
  According to this,
  we can upper bound $\abs{\+C_X\cap \+C_Y}$ via the following procedure which enumerates each $\sigma\in\+C_X\cap\+C_Y$ at least once.
  First we enumerate a set $\+B\subseteq \+L\cup \+R$ with $\abs{\+B}=\floor{2\alpha n}$.
  Then the vertices in $\+B$ can be colored arbitrarily, but the vertices in $\+L\setminus \+B$ can only be colored with colors from $X\cap Y$ and the vertices in $\+R\setminus \+B$ can only be colored with colors from $[q]\setminus (X\cup Y)$.
  Thus we obtain
  \begin{align*}
    \abs{\+C_X\cap \+C_Y}
    \le
    \binom{2n}{\floor{2\alpha n}}q^{2\alpha n}\abs{X\cap Y}^n \abs{[q]\setminus(X\cup Y)}^n
    &\le \tp{4^{H(\alpha)} q^{2\alpha} \ul q (\ol q - 1)}^n,
  \end{align*}
  where the inequality follows from \Cref{lem:boundsOfBinomCoefficients} and $\abs{X\cap Y} + \abs{[q]\setminus (X\cup Y)}\le q-1$.
  It is clear that $\abs{\cup_X \+C_X}\ge \ul q^n \ol q^n$ and we obtain
  \begin{align*}
    \frac{\abs{\+C_X\cap \+C_Y}}{\abs{\cup_X\+C_X}}
    &\le \tp{4^{H(\alpha)}q^{2\alpha}(1-1/\ol q)}^n.
  \end{align*}
  Recall that $s=\valueOfS$ and $\alpha = \valueOfAlphaForColorings \le \upperBoundOfAlphaForColorings$.
  Since $\alpha\le q s \le 1/2$ and $2\alpha \le (q+1)s$, it follows from the upper bound on \Cref{eq:midstep11} that
  \begin{align*}
    4^{H(\alpha)}q^{2\alpha}(1-1/\ol q) \le 4^{H(q s)}q^{(q+1)s}(1-1/\ol q) < 1/C_1
  \end{align*}
  for some constant $C_1=C_1(q) > 1$.
  Therefore
  \begin{align}
    \label{eq:midstep14}\frac{\sum_{X\neq Y} \abs{\+C_X\cap \+C_Y}}{\abs{\cup_X\+C_X}}
    \le 4^q C_1^{-n} \le \tp{\frac{C_1}{4^{q/n}}}^{-n} <C^{-n}
  \end{align}
  for another constant $C=C(q)>1$ and $n > N$ where $N=N(q)$ is a sufficiently large constant.
  Using the upper bound on \Cref{eq:midstep14} and $1+x\le \exp(x)$ for any $x\in \=R$ we obtain
  \begin{align*}
    \abs{\cup_X \+C_X}
    \le \sum_X\abs{\+C_X}
    \le \abs{\cup_X \+C_X} + \sum_{X\neq Y} \abs{\+C_X\cap \+C_Y}
    &\le \exp(C^{-n}) \abs{\cup_X\+C_X}
  \end{align*}
  for all $n>N$.
\end{proof}

\begin{lemma} \label{lem:strucApprox3OfColorings}
  For $q\ge 3$ and $\Delta\ge\lowerBoundOfDeltaForColorings$, there are constants $C=C(q) > 1$ and $N=N(q)$ such that for all $G\in \graphClassOfColorings$ with $n>N$ vertices on both sides,
  $Z$ is a $C^{-n}$-relative approximation to $\sum_{X\cmid\emptyset\subsetneq X\subsetneq [q]}$ $\abs{\clusterOfColoring}$, where $Z = \binom{q}{\ul q}\abs{\clusterOfColoring[{[\ul q]}]}$ if $q$ is even, otherwise $Z = \binom{q}{\ul q}\tp{\abs{\clusterOfColoring[{[\ul q]}]} + \abs{\clusterOfColoring[{[\ol q]}]}}$.
\end{lemma}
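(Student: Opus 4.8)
The plan is to show that the total mass $\sum_{\emptyset\subsetneq X\subsetneq[q]}\abs{\+C_X(G)}$ concentrates on the balanced splits $\abs X\in\set{\ul q,\ol q}$, so that restricting to those splits loses only an exponentially small fraction.

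\textbf{Reduction by symmetry.} First I would observe that $\abs{\+C_X(G)}$ depends only on $\abs X$: for any permutation $\pi$ of $[q]$ the map $\sigma\mapsto\pi\circ\sigma$ is a bijection of $\+C(G)$, and since $d_X(\pi\circ\sigma)=d_{\pi^{-1}(X)}(\sigma)$ it restricts to a bijection of $\+C_{\pi^{-1}(X)}(G)$ onto $\+C_X(G)$. Hence $\sum_{\emptyset\subsetneq X\subsetneq[q]}\abs{\+C_X(G)}=\sum_{k=1}^{q-1}\binom qk\abs{\+C_{[k]}(G)}$, and, using $\binom q{\ul q}=\binom q{\ol q}$, the quantity $Z$ is exactly the sub-sum over $k\in\set{\ul q,\ol q}$. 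If $q=3$ there are no further values of $k$ and the lemma is immediate, so assume $q\ge 4$ and put $K=\set{1,\dots,q-1}\setminus\set{\ul q,\ol q}$, so that $\sum_{\emptyset\subsetneq X\subsetneq[q]}\abs{\+C_X(G)}-Z=\sum_{k\in K}\binom qk\abs{\+C_{[k]}(G)}$.

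\textbf{A lower bound for $Z$ and an upper bound for a bad term.} Coloring $\+L$ arbitrarily from $[\ul q]$ and $\+R$ arbitrarily from $[q]\setminus[\ul q]$ always produces a proper coloring of the bipartite graph $G$ with $d_{[\ul q]}=0$, so $\abs{\+C_{[\ul q]}(G)}\ge\ul q^{\,n}\ol q^{\,n}$ and thus $Z\ge\binom q{\ul q}\ul q^{\,n}\ol q^{\,n}$. For $k\in K$ and $\sigma\in\+C_{[k]}(G)$, the ``exceptional'' set consisting of the vertices of $\+L$ colored outside $[k]$ together with the vertices of $\+R$ colored inside $[k]$ has size $d_{[k]}(\sigma)\le\floor{\alpha n}$; enumerating a set $S\subseteq\+L\cup\+R$ of size $\floor{\alpha n}$ that covers it, coloring $S$ freely, $\+L\setminus S$ within $[k]$, and $\+R\setminus S$ within $[q]\setminus[k]$, and using \Cref{lem:boundsOfBinomCoefficients}, gives
\[
  \abs{\+C_{[k]}(G)}\le\binom{2n}{\floor{\alpha n}}q^{\floor{\alpha n}}k^{\,n}(q-k)^{\,n}\le\tp{4^{H(\alpha/2)}\,q^{\alpha}\,k(q-k)}^{n}.
\]
Since $k\mapsto k(q-k)$ is a downward parabola, its maximum over $K$ is attained at $k=\ul q-1$, and a one-line check (separately for $q$ even and odd) gives $\max_{k\in K}k(q-k)\le\ul q\ol q-1\le\ul q\ol q\tp{1-1/\ol q^{2}}$.

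\textbf{Combining; the main obstacle.} Putting the bounds together and using $\sum_{k\in K}\binom qk\le 2^{q}$,
\[
  \frac{\sum_{\emptyset\subsetneq X\subsetneq[q]}\abs{\+C_X(G)}-Z}{Z}\;\le\;\frac{2^{q}}{\binom q{\ul q}}\tp{4^{H(\alpha/2)}\,q^{\alpha}\tp{1-1/\ol q^{2}}}^{n}.
\]
The step I expect to be the real work is verifying that the base of this power is at most $1/C_1$ for some constant $C_1=C_1(q)>1$; this is exactly where $\Delta\ge\lowerBoundOfDeltaForColorings$ enters, forcing $\alpha=\valueOfAlphaForColorings\le\upperBoundOfAlphaForColorings$ to be polynomially small in $\ol q$. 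Indeed, by \Cref{lem:nonPolyUpperBoundOfEntropy}, $\ln\tp{4^{H(\alpha/2)}q^{\alpha}}=2\ln 2\cdot H(\alpha/2)+\alpha\ln q\le 2\alpha\ln(2/\alpha)+\alpha\ln q$, which for $\alpha\le\frac{1}{10\ol q^{5}}$ and $q\le 2\ol q$ is $O\tp{\ol q^{-5}\ln\ol q}$, negligible next to $1/\ol q^{2}$; hence $4^{H(\alpha/2)}q^{\alpha}\tp{1-1/\ol q^{2}}<1$ uniformly over the admissible $\Delta$. In contrast to the earlier structural lemmas, the imbalance gap here is only $1-1/\ol q^{2}$, so one genuinely needs the inflation factor $4^{H(\alpha/2)}q^{\alpha}$ to be $1+o(1/\ol q^{2})$, i.e.\ $\alpha$ as small as $\widetilde O(\ol q^{-5})$ and not merely $O(\ol q^{-4})$. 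Absorbing the polynomial prefactor as in the earlier lemmas, $\frac{2^{q}}{\binom q{\ul q}}C_1^{-n}<C^{-n}$ for another constant $C=C(q)>1$ and all $n>N=N(q)$; since moreover $Z\le\sum_{\emptyset\subsetneq X\subsetneq[q]}\abs{\+C_X(G)}$, the inequality $1+x\le e^{x}$ gives $Z\le\sum_{\emptyset\subsetneq X\subsetneq[q]}\abs{\+C_X(G)}\le\exp(C^{-n})Z$, i.e.\ $Z$ is a $C^{-n}$-relative approximation to $\sum_{\emptyset\subsetneq X\subsetneq[q]}\abs{\+C_X(G)}$.
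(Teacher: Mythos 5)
Your proposal is correct and takes essentially the same route as the paper's own proof: reduce by color symmetry, lower-bound $Z$ by the $\ul q^{\,n}\ol q^{\,n}$ product colorings, upper-bound each unbalanced class by enumerating the at most $\floor{\alpha n}$ exceptional vertices, and use $k(q-k)\le \ul q\ol q(1-1/\ol q^2)$ together with the smallness of $\alpha\le\upperBoundOfAlphaForColorings$ to make the base of the exponential strictly less than $1$. The only (immaterial) difference is that you bound $4^{H(\alpha/2)}$ via \Cref{lem:nonPolyUpperBoundOfEntropy} where the paper uses \Cref{lem:polyUpperBoundOfEntropy}.
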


\begin{proof}
  %Fix $q\ge 3, \Delta\ge\lowerBoundOfDeltaForColorings$ and $G=(\+L,\+R,E)\in\graphClassAtHighFugacity$ with $n$ vertices on both sides.
  It follows from the symmetry of colors that $\abs{\+C_X} = \abs{\+C_Y}$ for any $X$ and $Y$ with $\abs{X}=\abs{Y}$.
  Fix $Y$ with $\abs{Y} < \ul q$ or $\abs{Y} > \ol q$.
  We upper bound $\abs{\+C_Y}$ via the following procedure which enumerates each coloring $\sigma\in \+C_Y$ at least once.
  For each $\sigma\in \+C_Y$, it holds that $d_Y(\sigma) < \alpha n$.
  Thus we can enumerate a set $\+B\subseteq\+L\cup\+R$ with $\abs{\+B} = \floor{\alpha n}$.
  The vertices in $\+B$ can be colored arbitrarily, but the colors of the vertices in $\+L\setminus \+B$ can only be chosen from $Y$ and the vertices in $\+R\setminus \+B$ can only be colored with colors from $[q]\setminus Y$.
  Thus we obtain
  \begin{align*}
    \abs{\+C_Y}
    \le
    \binom{2n}{\floor{\alpha n}}q^{\alpha n} \abs{Y}^n \abs{[q]\setminus Y}^n \le \tp{4^{H(\alpha / 2)}q^{\alpha} (\ul q- 1)(\ol q + 1)}^n,
  \end{align*}
  where the inequality follows from \Cref{lem:boundsOfBinomCoefficients} and $\abs{Y} \cdot\abs{[q]\setminus Y} \le (\ul q-1) (\ol q + 1)$.
  Clearly $Z\ge \ul q^n \ol q^n$ and we obtain
  \begin{align*}
    \frac{\abs{\+C_Y}}{Z}
    \le \tp{4^{H(\alpha / 2)}q^{\alpha} (1-1/\ul q)(1+1/\ol q)}^n
    \le \tp{4^{H(\alpha / 2)}q^{\alpha} (1-1/\ol q^2)}^n.
  \end{align*}
  Recall that $\alpha = \valueOfAlphaForColorings \le \upperBoundOfAlphaForColorings$.
  Using \Cref{lem:polyUpperBoundOfEntropy}, $\ln(1+x)\le x$ for any $x>-1$ and $\ol q\ge 2$ we obtain
  \begin{align*}
    4^{H(\alpha / 2)} q^{\alpha} (1 - 1/ \ol q^2)
    \le 16^{\frac{1}{\ol q^2\sqrt{20\ol q}}} q^{\frac{1}{10\ol q^5}} (1-1/\ol q^2)
    &\le \exp\tp{\frac{\ln 16}{\ol q^2\sqrt{20\ol q}} + \frac{\ln q}{10\ol q^5} - \frac{1}{\ol q^2}} \\
    &\le \exp\tp{\tp{\frac{\ln 16}{\sqrt{20 \times 2}} + \frac{1}{10\times 4} - 1}\frac{1}{\ol q^2}} \\
    &\le \exp\tp{-\frac{1}{2\ol q^2}}\\
    &< 1/C_1
  \end{align*}
  for some constant $C_1=C_1(q) > 1$.
  Therefore
  \begin{align}
    \label{eq:midstep12}
    \frac{\sum_{Y\cmid \abs{Y} < \ul q \lor \abs{Y} > \ol q}\abs{\+C_Y}}{Z}
    \le 2^q C_1^{-n} &\le \tp{\frac{C_1}{2^{q / n}}}^{-n} < C^{-n}
  \end{align}
  for another constant $C=C(q)>1$ and $n>N$ where $N=N(q)$ is a sufficiently large constant.
  Using the upper bound on \Cref{eq:midstep12} and $1+x\le \exp(x)$ for any $x\in\=R$ we obtain
  \begin{align*}
    Z \le \sum_X\abs{\+C_X}
    = Z + \sum_{Y\cmid \abs{Y} < \ul q \lor \abs Y > \ol q} \abs{\+C_Y} &\le \exp(C^{-n}) Z
  \end{align*}
  for all $n > N$.
\end{proof}

\subsection{Approximating $\abs{\clusterOfColoring}$} \label{sec:polymerOfColorings}

In this subsection, we discuss how to approximate $\abs{\clusterOfColoring}$ for $G=(\+L,\+R,E)\in \graphClassOfColorings$ and $X\subseteq [q]$ with $\abs{X}\in \set{\ul q, \ol q}$.
We will use the polymer model (see \Cref{sec:polymer-model}).
First we constructively define the polymers we need.
For any $\sigma\in \clusterOfColoring$, let $U = \set{v\in \+L \cmid \sigma(v) \not\in X} \cup \set{v\in \+R \cmid \sigma(v)\not\in [q]\setminus X}$.
We can partition the graph $(G^2)[U]$ into connected components $U_1,U_2,\ldots,U_k$ for some $k\ge 0$. There are no edges in $G^2$ between $U_i$ and $U_j$ for any $1\le i \neq j \le k$.
If $k > 0$, let $p(\sigma) = \set{\tp{U_1, \sigma|_{U_1}}, \tp{U_2, \sigma|_{U_2}}, \ldots, \tp{U_k, \sigma|_{U_k}}}$.
If $k = 0$, let $p(\sigma)=\emptyset$.
We define the set of all polymers to be
\begin{align*}
  \Gamma^*_X(G) = \bigcup_{\sigma\in \clusterOfColoring} p(\sigma),
\end{align*}
and each element in this set is called a polymer.
When the graph $G$ and $X$ are clear from the context, we simply denote by $\Gamma^*$ the set of polymers.
For each polymer $\gamma\in \Gamma^*$, define its weight function $w(\gamma,\cdot)$ as
\begin{align*}
  w(\gamma, z) = \frac{\abs{\clusterOfColoring[\gamma]}}{\abs{X}^n \tp{q - \abs X}^n}z^{\aabs{\gamma}},
\end{align*}
where $z$ is a complex variable and
\begin{align*}
  \clusterOfColoring[\gamma] =
  \set{\sigma\in \+C_X(G)\cmid
  \sigma|_{\ol \gamma} = \conf \land
  \sigma(\+L\setminus \ol\gamma) \subseteq X \land
  \sigma(\+R\setminus \ol\gamma) \subseteq [q]\setminus X}.
\end{align*}
The number of colorings in $\clusterOfColoring[\gamma]$ can be computed in polynomial time in $\aabs{\gamma}$ since $\abs{N(\ol \gamma)}\le \beta\aabs\gamma$ and
\begin{align*}
  \abs{\+C_\gamma(G)} = \tp{\prod_{v\in \+L} \abs{X \setminus \omega_{\ol \gamma}(N(v)\cap V(\ol \gamma))}}\tp{\prod_{v\in \+R} \abs{([q]\setminus X) \setminus \omega_{\ol \gamma}(N(v)\cap V(\ol \gamma))}},
\end{align*}
where $V(\ol\gamma)$ is the set of vertices of the subgraph $\ol \gamma$.
The partition function of the polymer model $(\Gamma^*, w)$ on the graph $G^2$ is the following sum:
\begin{align*}
  \Xi(z) = \sum_{\Gamma\in\+S(\Gamma^*)}\prod_{\gamma\in\Gamma}w(\gamma,z).
\end{align*}
Recall that two polymers $\gamma_1$ and $\gamma_2$ are compatible if $d_{G^2}(\ol{\gamma_1},\ol{\gamma_2}) > 1$ and this condition is equivalent to $d_G(\ol{\gamma_1},\ol{\gamma_2})> 2$.
We also extend the definition of $\+C_\gamma(G)$ to $\Gamma\in \+S(\Gamma^*(G))$:
\begin{align*}
  \clusterOfColoring[\Gamma] = \set{\sigma\in \+C_X(G)\cmid
  \sigma|_{\ol{\Gamma}} = \conf[\Gamma]
  \land \sigma(\+L \setminus \ol\Gamma) \subseteq X
  \land \sigma(\+R \setminus \ol\Gamma) \subseteq [q]\setminus X}.
\end{align*}
% It turns out that we have the following representation of $\abs{\clusterOfColoring}$ using the polymer model defined above.

\begin{lemma} \label{lem:exactRepOfColorings} %% level-1
  For $q\ge 3$, all bipartite graphs $G=(\+L,\+R,E)$ with $n$ vertices on both sides and $\emptyset\subsetneq X\subsetneq [q]$,
  \begin{align}
  \abs{\clusterOfColoring} = \abs{X}^n(q-\abs{X})^n\sum_{\Gamma\in\+S(\Gamma^*_X(G)): \aabs{\Gamma} < \alpha n} \prod_{\gamma\in \Gamma} w(\gamma, 1).\label{eq:exact-rep}
  \end{align}
\end{lemma}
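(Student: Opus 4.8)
The plan is to follow the template of \Cref{lem:exactRepOfIS}: partition $\clusterOfColoring$ according to the compatible set of polymers that $p$ assigns, then evaluate the contribution of each part. First I would note that, by construction, $p$ sends $\clusterOfColoring$ into $\set{\Gamma\in\+S(\Gamma^*_X(G))\cmid\aabs\Gamma<\alpha n}$, since for $\sigma\in\clusterOfColoring$ the deviated set $U$ has $\abs U=d_X(\sigma)<\alpha n$ and its connected components in $G^2$ are pairwise $G^2$-non-adjacent, hence compatible. The core of this step is the identity
\[
  \set{\sigma\in\clusterOfColoring\cmid p(\sigma)=\Gamma}=\clusterOfColoring[\Gamma]
  \qquad\text{for every }\Gamma\in\+S(\Gamma^*_X(G)).
\]
The inclusion ``$\subseteq$'' is immediate from the definition of $p$: if $p(\sigma)=\Gamma$ then the deviated set of $\sigma$ equals $\ol\Gamma$, $\sigma|_{\ol\Gamma}=\conf[\Gamma]$, and $\sigma$ is non-deviated on the complement, so $\sigma\in\clusterOfColoring[\Gamma]$. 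For ``$\supseteq$'', take $\sigma\in\clusterOfColoring[\Gamma]$: by definition $\sigma$ is non-deviated outside $\ol\Gamma$, and $\sigma|_{\ol\Gamma}=\conf[\Gamma]$ is deviated at \emph{every} vertex of $\ol\Gamma$ --- this is exactly why polymers are defined as connected components of the deviated set, so their supports contain no non-deviated vertex --- hence the deviated set of $\sigma$ is precisely $\ol\Gamma$. Since $\Gamma$ is compatible, each $\ol\gamma$ is connected in $G^2$ and the supports are pairwise $G^2$-non-adjacent, so they are exactly the connected components of $(G^2)[\ol\Gamma]$; as $\sigma|_{\ol\gamma}=\conf$ for each $\gamma$, we conclude $p(\sigma)=\Gamma$. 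Summing over all such $\Gamma$ --- the restriction $\aabs\Gamma<\alpha n$ being inherited from the range of $p$ --- gives $\abs{\clusterOfColoring}=\sum_{\Gamma\cmid\aabs\Gamma<\alpha n}\abs{\clusterOfColoring[\Gamma]}$.

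It then suffices to show that, writing $B=\abs X^n(q-\abs X)^n$, one has $\abs{\clusterOfColoring[\Gamma]}=B\prod_{\gamma\in\Gamma}\bigl(\abs{\clusterOfColoring[\gamma]}/B\bigr)=B\prod_{\gamma\in\Gamma}w(\gamma,1)$, which is the claimed identity. Compatibility does the work here too: if a vertex $v$ were $G$-adjacent to two distinct supports $\ol{\gamma_1},\ol{\gamma_2}$, these would be at $G$-distance at most $2$, hence $G^2$-adjacent, contradicting $\gamma_1\sim\gamma_2$; likewise there is no $G$-edge between distinct supports, so $\conf[\Gamma]$ is automatically proper on $G[\ol\Gamma]$. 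Consequently, in a coloring of $\clusterOfColoring[\Gamma]$ each vertex of $\+L\setminus\ol\Gamma$ is forced to avoid the $\conf[\Gamma]$-colors of at most one polymer, and similarly on the right. Partitioning $\+L\setminus\ol\Gamma$ (and $\+R\setminus\ol\Gamma$) into the vertices touched by each $\gamma$ and the vertices touched by none, $\abs{\clusterOfColoring[\Gamma]}$ factorizes into a product over $\gamma\in\Gamma$ of exactly the local factors appearing in the given formula for $\abs{\clusterOfColoring[\gamma]}$, times $\abs X$ (resp.\ $q-\abs X$) raised to the number of left (resp.\ right) vertices touched by no polymer. Performing the same expansion for each $\abs{\clusterOfColoring[\gamma]}/B$ and comparing exponents, each division by $B$ cancels exactly one surplus copy of the all-ground-state count $\abs X^n(q-\abs X)^n$, so the product over $\gamma$, multiplied back by $B$, collapses to $\abs{\clusterOfColoring[\Gamma]}$.

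I expect the bookkeeping in this last step to be the main obstacle: one has to define carefully, for each $\gamma\in\Gamma$, the sets of left and right vertices it touches (those that lie in $\ol\gamma$ or are $G$-adjacent to it), use compatibility to check these sets are disjoint across distinct $\gamma$, and then verify that the exponents of $\abs X$ and $q-\abs X$ telescope correctly in the product. The remaining ingredients --- the grouping identity and the observation that membership in $\clusterOfColoring[\Gamma]$ pins the deviated set to $\ol\Gamma$ --- transcribe the argument of \Cref{lem:exactRepOfIS} almost verbatim, with ``$I\cap\+X=\ol\Gamma$'' there replaced by the conjunction of the three conditions defining $\clusterOfColoring[\Gamma]$.
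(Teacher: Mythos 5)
Your proposal is correct and follows essentially the same route as the paper: the paper also partitions $\clusterOfColoring$ into the fibers $\clusterOfColoring[\Gamma]$ (observing that membership in $\clusterOfColoring[\Gamma]$ pins the deviated set to $\ol\Gamma$, so these sets are disjoint and exhaust $\clusterOfColoring$), and then proves the product identity $\abs{\clusterOfColoring[\Gamma]}=\abs{X}^n(q-\abs X)^n\prod_{\gamma\in\Gamma}w(\gamma,1)$ as a separate lemma (\Cref{lem:polymerIntuition}) by introducing the disjoint neighborhoods $V_\gamma=\ol\gamma\sqcup N_G(\ol\gamma)$ and carrying out exactly the telescoping of exponents you describe. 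The only difference is organizational: the paper isolates the factorization bookkeeping into its own lemma, which is the cleanup you anticipate as the main remaining work.
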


\begin{proof}%[Proof of \Cref{lem:exactRepOfColorings}]
  %Fix $q\ge 3, G=(\+L,\+R,E)$ with $n$ vertices on both sides and $\emptyset\subsetneq X\subsetneq [q]$.
  Rewrite the right hand side of \Cref{eq:exact-rep} as
  \begin{align*}
    \-{RHS} =  \sum_{\Gamma\in\+S(\Gamma^*)\cmid\aabs{\Gamma} < \alpha n}\abs{X}^n(q-\abs{X})^n \prod_{\gamma\in \Gamma} w(\gamma, 1)
    =\sum_{\Gamma\in\+S(\Gamma^*)\cmid\aabs{\Gamma} < \alpha n} \abs{\+C_\Gamma},
  \end{align*}
  where the last step follows from \Cref{lem:polymerIntuition}. It is now sufficient to show that the set
  \begin{align*}
    \+P\triangleq \set{\+C_\Gamma\cmid \Gamma\in\+S(\Gamma^*)\land \aabs{\Gamma} < \alpha n}
  \end{align*}
  is a partition of $\+C_X$.
  It follows from the definition of $\+C_\Gamma$ that $\+C_{\Gamma_1} \cap \+C_{\Gamma_2}= \emptyset$ if $\Gamma_1\neq \Gamma_2$.
  For any $\sigma\in \+C_X$, it follows from the definition of $p(\sigma)$ that $p(\sigma)$ is compatible and $\aabs{p(\sigma)} < \alpha n$, which shows that $p(\sigma)\in \+P$ and thus $\+C_X\subseteq \cup_{\+C_\Gamma\in \+P}\+C_\Gamma$.
  For any $\sigma\in \+C_\Gamma \in \+P$, it follows from the definition of $\+C_\Gamma$ that $d_X(\sigma)< \alpha n$, which implies that $\sigma\in \+C_X$ and thus $\cup_{\+C_\Gamma\in \+P}C_\Gamma\subseteq \+C_X$.
\end{proof}

%By simply discarding the global constraint $\abs{\Gamma}<q\delta n$, we obtain a good approximation of $\abs{\clusterOfColoring}$ using the partition function $Z_X(G^2,z)$ of the polymer model $(\Gamma^*,w)$.

\begin{lemma} \label{lem:approxRepOfColorings}
  For $q\ge 3$ and $\Delta\ge\lowerBoundOfDeltaForColorings$, there are constants $C=C(q) > 1$ and $N=N(q)$ such that for all $G\in \graphClassOfColorings$ with $n>N$ vertices on both sides and $X\subseteq [q]$ with $\abs{X}\in \set{\ul q, \ol q}$,
  \begin{align*}
    \abs{X}^n(q-\abs X)^n \Xi(1)=
    \abs{X}^n(q-\abs{X})^n\sum_{\Gamma\in\+S(\Gamma^*_X(G))} \prod_{\gamma\in \Gamma} w(\gamma, 1)
  \end{align*}
  is a $C^{-n}$-relative approximation to $\abs{\clusterOfColoring}$.
\end{lemma}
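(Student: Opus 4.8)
The plan is to follow the blueprint of \Cref{lem:approxRepOfIS}. By \Cref{lem:exactRepOfColorings}, $\abs{\clusterOfColoring}=\abs X^n(q-\abs X)^n\sum_{\Gamma\in\+S(\Gamma^*_X(G))\cmid\aabs\Gamma<\alpha n}\prod_{\gamma\in\Gamma}w(\gamma,1)$, whereas $\abs X^n(q-\abs X)^n\Xi(1)$ is the same sum taken over \emph{all} compatible $\Gamma$. Since $w(\gamma,1)=\abs{\clusterOfColoring[\gamma]}/\bigl(\abs X^n(q-\abs X)^n\bigr)\ge 0$, the difference of these two quantities is $\abs X^n(q-\abs X)^n$ times the tail sum $\sum_{\Gamma\cmid\aabs\Gamma\ge\alpha n}\prod_{\gamma\in\Gamma}w(\gamma,1)\ge 0$; in particular $\abs X^n(q-\abs X)^n\Xi(1)\ge\abs{\clusterOfColoring}$, so the whole argument reduces to bounding this tail sum from above. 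For the denominator I would use the trivial lower bound $\abs{\clusterOfColoring}\ge\abs X^n(q-\abs X)^n=\ul q^n\ol q^n$, obtained by observing that every coloring with $\sigma(\+L)\subseteq X$ and $\sigma(\+R)\subseteq[q]\setminus X$ is proper and has $d_X(\sigma)=0<\alpha n$. Dividing then shows the relative error is at most $\sum_{\Gamma\in\+S(\Gamma^*_X(G))\cmid\aabs\Gamma\ge\alpha n}\prod_{\gamma\in\Gamma}w(\gamma,1)$.

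The heart of the proof is a per-polymer decay estimate of the form $w(\gamma,1)\le 2^{-c\beta\aabs\gamma}$ for some absolute constant $c>0$; since $\aabs\Gamma=\sum_{\gamma\in\Gamma}\aabs\gamma$ for compatible $\Gamma$, this gives $\prod_{\gamma\in\Gamma}w(\gamma,1)\le 2^{-c\beta\aabs\Gamma}$. Starting from the explicit product formula for $\abs{\clusterOfColoring[\gamma]}$ recorded in \Cref{sec:polymerOfColorings}, each vertex of $\+L\setminus\ol\gamma$ has at most $\abs X$ admissible colors and each vertex of $\+R\setminus\ol\gamma$ at most $q-\abs X$, so $w(\gamma,1)\le\abs X^{-\abs{\ol\gamma\cap\+L}}(q-\abs X)^{-\abs{\ol\gamma\cap\+R}}$ times a factor bounded by $1$ coming from the colors forbidden on $N_G(\ol\gamma)$. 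The base counts alone yield only $2^{-\aabs\gamma}$ (and only when $\min(\abs X,q-\abs X)\ge 2$), which is far too weak against the enumeration below, so the genuine decay must come from the expansion property: since $G\in\graphClassOfColorings$ is an $(\alpha,\beta)$-expander and $\abs{\ol\gamma\cap\+L},\abs{\ol\gamma\cap\+R}<\alpha n$, a $\beta$-fold neighborhood on the opposite side loses a color --- a vertex of $\+R\setminus\ol\gamma$ adjacent to a vertex of $\ol\gamma\cap\+L$ (which is colored outside $X$) has only $q-\abs X-1$ admissible colors, and symmetrically on the $\+L$ side. Bookkeeping these forbidden colors against $\abs X$ and $q-\abs X$, and separately treating the degenerate case $\abs X=1$ (which occurs only for $q=3$, $X=[\ul q]$, where the existence of such a neighbor already forces $\abs{\clusterOfColoring[\gamma]}=0$), gives the claimed bound. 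I expect this step to be the main obstacle: unlike the deviation from a single ground state in \cite{JKP19}, deviation from a whole ground cluster produces no multiplicative loss from the fixed colors on $\ol\gamma$ itself (which moreover would not be amplified by $\beta$), so the decay has to be extracted entirely from the conflicts that $\ol\gamma\cap\+L$ imposes on $\+R$ and vice versa via the expander, and the bookkeeping has to be arranged so that the two sides combine additively in the exponent without being double counted.

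Granting the decay estimate, the rest parallels \Cref{lem:approxRepOfIS}. To enumerate each $\Gamma\in\+S(\Gamma^*_X(G))$ with $\aabs\Gamma=k$ at least once it suffices to choose the $k$ vertices of $\ol\Gamma\subseteq\+L\cup\+R$ and then one of at most $q$ colors for each, so there are at most $\binom{2n}{k}q^k$ of them, and the tail sum is at most $\sum_{k=\ceil{\alpha n}}^{2n}\binom{2n}{k}q^k 2^{-c\beta k}$. By \Cref{lem:boundsOfBinomCoefficients} and \Cref{lem:polyUpperBoundOfEntropy}, $\binom{2n}{k}\le 2^{H(k/(2n))\cdot 2n}\le 2^{2\sqrt{2/\alpha}\cdot k}$ whenever $k\ge\alpha n$, so each summand is at most $\bigl(2^{\,2\sqrt{2/\alpha}+\log_2 q-c\beta}\bigr)^k$. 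With $\alpha=\valueOfAlphaForColorings$ and $\beta=\valueOfBetaForColorings$ the exponent equals $2\sqrt2\,\Delta^{1/4}+\log_2 q-c\Delta^{1/2}/3$, which is negative for $\Delta\ge\lowerBoundOfDeltaForColorings$ and in fact tends to $-\infty$; hence the geometric series is bounded by $n\bigl(f(\Delta)\bigr)^{\alpha n}$ with $f(\Delta)<1$, which is $<C^{-n}$ for a suitable constant $C=C(q)>1$ and all sufficiently large $n$. Finally, combining this with $1+x\le e^x$ gives $\abs{\clusterOfColoring}\le\abs X^n(q-\abs X)^n\Xi(1)\le\exp(C^{-n})\abs{\clusterOfColoring}$ for all large $n$, which is the desired $C^{-n}$-relative approximation.
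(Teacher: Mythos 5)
Your proposal follows the paper's proof essentially step for step: the same reduction to the tail sum over compatible $\Gamma$ with $\aabs{\Gamma}\ge\alpha n$ via \Cref{lem:exactRepOfColorings}, the same lower bound $\abs{\clusterOfColoring}\ge\ul q^n\ol q^n$, the same enumeration of such $\Gamma$ by choosing $k$ vertices and a color per vertex, and the same source of decay --- the paper's \Cref{lem:decayRateOfColorings} together with \Cref{lem:betaMinus1Expansion} gives $w(\gamma,1)\le(1-1/\ol q)^{(\beta-1)\aabs{\gamma}}$ by exactly the mechanism you describe (each vertex of $N_G(\ol\gamma)$ loses one admissible color, and $\abs{N_G(\ol\gamma)}\ge(\beta-1)\aabs{\gamma}$). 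The one imprecision is that your decay constant $c$ is not absolute but of order $1/\ol q$, since the per-neighbor loss is only a factor $1-1/\ol q$; this does not break your final estimate, because $\Delta\ge\lowerBoundOfDeltaForColorings$ makes $c\Delta^{1/2}/3\gtrsim\ol q^4$ still dominate $2\sqrt 2\,\Delta^{1/4}\approx 8.9\,\ol q^{5/2}$, which is precisely the computation carried out in the paper.
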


\begin{proof}
  %Fix $q\ge 3, \Delta \ge \lowerBoundOfDeltaForColorings $ and $G=(\+L,\+R,E)\in \graphClassAtHighFugacity$ with $n$ vertices on both sides.
  Clearly $\abs{\+C_X} \ge \ul q^n \ol q^n$.
  Then using \Cref{lem:exactRepOfColorings} and \Cref{lem:decayRateOfColorings} we obtain
  \begin{align}
    \label{eq:midstep-3}
    \frac{\abs{X}^n(q-\abs X)^n \Xi(1)-\abs{\+C_X}}{\abs{\+C_X}}
    &\le \sum_{\Gamma\in\+S(\Gamma^*)\cmid \aabs\Gamma \ge \alpha n}\prod_{\gamma\in \Gamma}w(\gamma,1)\\
    &\le  \sum_{\Gamma\in\+S(\Gamma^*)\cmid \aabs\Gamma \ge \alpha n}(1-1/\ol q)^{(\beta-1)\aabs\Gamma}. \nonumber
  \end{align}
  To enumerate each $\Gamma\in\+S(\Gamma^*)$ with $\aabs{\Gamma}\ge \alpha n$ at least once, we first enumerate an integer $\alpha n \le k \le 2n$, then we choose $k$ first vertices from $\+L\cup \+R$ and enumerate all possible colorings over these $k$ vertices.
  Therefore
  \begin{align*}
    \text{\Cref{eq:midstep-3}}
    \le\sum_{k=\ceil{\alpha n}}^{2n} \binom{2n}{k} \ol q^k(1-1/\ol q)^{(\beta-1)k}
    &\le \sum_{k=\ceil{\alpha n}}^{2n} 2^{H(k/(2n))2n} \ol q^k(1-1/\ol q)^{(\beta-1)k} \\
    &\le \sum_{k=\ceil{\alpha n}}^{2n} \tp{4^{\sqrt{2n/k}} \ol q (1-1/\ol q)^{\beta - 1}}^k \\
    &\le \sum_{k=\ceil{\alpha n}}^{2n} \tp{4^{\sqrt{2/\alpha}} \ol q (1- 1/ \ol q)^{\beta-1}}^k,
  \end{align*}
  where the inequalities follow from \Cref{lem:boundsOfBinomCoefficients} and \Cref{lem:polyUpperBoundOfEntropy}.
  Recall that $\alpha = \valueOfAlphaForColorings$ and $\beta = \valueOfBetaForColorings$.
  Let $f(\Delta) = 4^{\sqrt{2/\alpha}} \ol q (1-1/\ol q)^{\beta - 1}$.
  Using $\Delta\ge\lowerBoundOfDeltaForColorings$, $\ol q \ge 2$, and the inequality $\ln(1+x)\le x$ for any $x>-1$,  we obtain
  \begin{align*}
    f(\Delta)
    %4^{\sqrt{2/\alpha}}\ol q(1-1/\ol q)^{\beta - 1}
    &\le
    \exp\tp{\sqrt{2}\Delta^{1/4}\ln 4 + \ln \ol q - \tp{\frac{\Delta^{1/2}}{3}- 1}\frac{1}{\ol q}} \\
    &= \exp\tp{\Delta^{1/4}\tp{\sqrt{2}\ln 4 - \frac{\Delta^{1/4}}{3\ol q}} + \ln \ol q + \frac{1}{\ol q}} \\
    &\le \exp\tp{\Delta^{1/4}\tp{\sqrt{2}\ln 4 - \frac{\sqrt{10}}{3}\ol q \sqrt{\ol q}} + \ln \ol q + \frac{1}{\ol q}} \\
    &\le \exp\tp{\Delta^{1/4}\tp{\sqrt{2}\ln 4 - \frac{2}{3}\sqrt{20}} + \ln \ol q + \frac{1}{\ol q}}.
  \end{align*}
  Since $\sqrt{2}\ln 4 - \frac{2}{3}\sqrt{20} \approx -1.02 < -1$, we obtain
  \begin{align*}
    f(\Delta)
    \le \exp\tp{-\Delta^{1/4} + \ln \ol q + 1/\ol q}
    &\le \exp\tp{-\sqrt{10}\ol q^2\sqrt{\ol q} + \ln \ol q + 1/ \ol q} \\
    &\le \exp\tp{-\sqrt{10}\times 4 \times \sqrt{2} + \ln 2 + 1/2} \\
    &\approx \exp\tp{-16.7} <1.
  \end{align*}
  Therefore, we have
  \begin{align*}
    \text{\Cref{eq:midstep-3}}
    \le \sum_{k=\ceil{\alpha n}}^\infty f(\Delta) ^k
    \le \frac{f(\Delta)^{\alpha n}}{1-f(\Delta)}
    \le \tp{\frac{f(\Delta)^{-\alpha}}{\tp{1-f(\Delta)}^{1/n}}}^{-n}
    < C^{-n}
  \end{align*}
  for some constant $C=C(q)>1$ and for all $n > N$ where $N=N(q)$ is a sufficiently large constant.
  Using the upper bound on \Cref{eq:midstep-3} and $1+x\le \exp(x)$ for any $x\in \=R$ we obtain
  \begin{align*}
    \abs{\+C_X} \le \abs{X}^n(q-\abs X)^n \Xi(1)
    = \abs{\+C_X} + \tp{\abs{X}^n(q-\abs X)^n \Xi(1) -\abs{\+C_X}}
    \le \exp(C^{-n})\abs{\+C_X}
  \end{align*}
  for all $n > N$.
\end{proof}

% \begin{lemma} \label{lem:monotoneFunc5}
%   The function $f(x) = x^2\sqrt{20x} \ln 4 + \ln x + \frac{1}{x} - \frac{10}{3}x^4$ is monotonically decreasing on $[2,+\infty)$.
% \end{lemma}

% \begin{proof}
%   We verify that
%   \begin{align*}
%     \frac{\partial f}{\partial x}
%     = 5\sqrt{5} \ln 4 x^\frac{3}{2} + \frac{1}{x} - \frac{1}{x^2} - \frac{40}{3}x^3
%     &\le -\frac{40}{3}\tp{x^\frac{3}{2}}^2 + 5\sqrt{5} \ln 4 x^\frac{3}{2} + \frac{1}{2} \\
%     &\le -\frac{40}{3}\tp{2^\frac{3}{2}}^2 + 5\sqrt{5} \ln 4 \tp{2^\frac{3}{2}} + \frac{1}{2} \\
%     &\approx -74 \\
%     &< 0
%   \end{align*}
%   for all $x \ge 2$.
% \end{proof}

\begin{lemma}\label{lem:polymerIntuition}
  For $q\ge 3$, all bipartite graphs $G=(\+L,\+R,E)$ with $n$ vertices on both sides, $\emptyset\subsetneq X\subsetneq [q]$ and $\Gamma\in\+S(\Gamma^*_X(G))$,
  \begin{align}
    \abs{X}^n(q-\abs X)^n \prod_{\gamma\in\Gamma} w(\gamma,1) = \abs{\clusterOfColoring[\Gamma]}. \label{eq:midstep-1}
  \end{align}
\end{lemma}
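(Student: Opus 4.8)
The plan is to unfold the definition of $w$ so as to reduce \Cref{eq:midstep-1} to an identity among the counting sets $\+C_\gamma(G)$ and $\clusterOfColoring[\Gamma]$, and then prove that identity by a vertex‑by‑vertex comparison of the two sides. Since $w(\gamma,1)=\abs{\+C_\gamma(G)}/\tp{\abs X^n(q-\abs X)^n}$, the claim \Cref{eq:midstep-1} is equivalent to
\begin{align*}
  \prod_{\gamma\in\Gamma}\abs{\+C_\gamma(G)} = \abs{\clusterOfColoring[\Gamma]}\cdot\tp{\abs X^n(q-\abs X)^n}^{\abs\Gamma-1}.
\end{align*}
The degenerate cases $\abs\Gamma\in\{0,1\}$ are immediate: for $\abs\Gamma=1$ this is the definition of $w$, and for $\abs\Gamma=0$ one only needs that any map sending $\+L$ into $X$ and $\+R$ into $[q]\setminus X$ is automatically a proper coloring of the bipartite graph $G$, so $\abs{\clusterOfColoring[\emptyset]}=\abs X^n(q-\abs X)^n$.

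For general $\Gamma$ I would first record the product formula
\begin{align*}
  \abs{\clusterOfColoring[\Gamma]}=\prod_{v\in\+L}g_\+L(v)\cdot\prod_{v\in\+R}g_\+R(v),
\end{align*}
where $g_\+L(v)=1$ for $v\in\ol\Gamma$ and $g_\+L(v)=\abs{X\setminus\conf[\Gamma]\tp{N_G(v)\cap\ol\Gamma}}$ otherwise, and symmetrically $g_\+R(v)=1$ for $v\in\ol\Gamma$ and $g_\+R(v)=\abs{([q]\setminus X)\setminus\conf[\Gamma]\tp{N_G(v)\cap\ol\Gamma}}$ otherwise; the $\Gamma=\{\gamma\}$ instance of this recovers the count of $\abs{\+C_\gamma(G)}$ used in \Cref{sec:polymerOfColorings}. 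The justification is short: a coloring $\sigma\in\clusterOfColoring[\Gamma]$ is forced to equal $\conf[\Gamma]$ on $\ol\Gamma$, and this is consistent because each $\conf$ is the restriction of a proper coloring and, by compatibility of $\Gamma$, $G$ has no edge joining the supports of two distinct polymers of $\Gamma$; off $\ol\Gamma$, bipartiteness implies a vertex $v\in\+L\setminus\ol\Gamma$ has all its neighbours in $\+R$, and those outside $\ol\Gamma$ carry colors of $[q]\setminus X$ and so never clash with a color of $X$, hence $v$ may independently take any color of $X$ not used by $\conf[\Gamma]$ on $N_G(v)\cap\ol\Gamma$ (and symmetrically on $\+R\setminus\ol\Gamma$). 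If some factor vanishes both sides are $0$, so the formula holds unconditionally.

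The heart of the argument is the factorization across $\Gamma$. The structural input is compatibility: the supports of two distinct polymers of $\Gamma$ are at $G$‑distance at least $3$, so every vertex $v$ lies in, or is $G$‑adjacent to, the support of at most one polymer of $\Gamma$ (otherwise two supports would be at $G$‑distance at most $2$). Fixing $v\in\+L$ and comparing $g_\+L(v)$ with the factors $g_\+L^{(\gamma)}(v)$ from the $\Gamma=\{\gamma\}$ instance of the same formula (so $g_\+L^{(\gamma)}(v)=1$ if $v\in\ol\gamma$ and $g_\+L^{(\gamma)}(v)=\abs{X\setminus\conf\tp{N_G(v)\cap\ol\gamma}}$ otherwise), one checks in the three cases --- $v\in\ol\Gamma$; $v\notin\ol\Gamma$ but adjacent to exactly one support; $v$ touching no support --- that exactly one of the $\abs\Gamma$ numbers $g_\+L^{(\gamma)}(v)$ equals $g_\+L(v)$ while the rest equal $\abs X$, hence $\prod_{\gamma\in\Gamma}g_\+L^{(\gamma)}(v)=g_\+L(v)\cdot\abs X^{\abs\Gamma-1}$. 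Taking the product over $v\in\+L$, then doing the same over $\+R$ with $q-\abs X$ in place of $\abs X$, and multiplying the two, yields $\prod_{\gamma\in\Gamma}\abs{\+C_\gamma(G)}=\abs{\clusterOfColoring[\Gamma]}\cdot\tp{\abs X^n(q-\abs X)^n}^{\abs\Gamma-1}$, which is the reduced identity.

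I expect the only real obstacle to be the bookkeeping in this last step: stating the ``at most one incident polymer'' dichotomy precisely (in particular that a vertex of $\ol\Gamma$ is adjacent to no other support), and checking the three cases against a single uniform identity while keeping the empty‑product and vanishing‑factor conventions consistent on both sides. Everything else --- independence of the free vertices from bipartiteness and the $X$ versus $[q]\setminus X$ split, properness of $\conf[\Gamma]$ on $\ol\Gamma$, and disjointness of the relevant neighbourhoods --- follows directly from the definitions in \Cref{sec:polymer-model} and \Cref{sec:polymerOfColorings}.
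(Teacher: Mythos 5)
Your proof is correct, and it rests on exactly the same two facts as the paper's own argument: compatibility (i.e.\ $d_G(\ol{\gamma_1},\ol{\gamma_2})>2$ for distinct $\gamma_1,\gamma_2\in\Gamma$) forces every vertex to lie in, or be $G$-adjacent to, the support of at most one polymer, and bipartiteness together with the disjointness of $X$ and $[q]\setminus X$ decouples all the remaining vertices. The only difference is bookkeeping: the paper localizes each weight $w(\gamma,1)$ to the pairwise disjoint zones $V_\gamma=\ol\gamma\sqcup N_G(\ol\gamma)$ and factors $\abs{\clusterOfColoring[\Gamma]}$ over these zones plus the untouched vertices, whereas you keep every count global and cancel the trivial per-vertex contributions $\abs X$ (resp.\ $q-\abs X$) one vertex at a time; the two decompositions are interchangeable and equally rigorous.
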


\begin{proof}%[Proof of \Cref{lem:polymerIntuition}]
  %Fix $q\ge 3, G=(\+L,\+R,E)$ with $n$ vertices on both sides, $\emptyset\subsetneq X\subsetneq [q]$ and $\Gamma\in\Gamma^*$.
  For any $\gamma\in \Gamma$, let $V_\gamma = \ol\gamma\sqcup N_G(\ol\gamma)$. It holds that
  \begin{align}
    w(\gamma,1)=
    \frac{\abs{\clusterOfColoring[\gamma]}}{\abs{X}^n (q-\abs X)^n} =
    \frac{\abs{\+C_{\gamma}(G[V_\gamma])}}{\abs{X}^{\abs{V_\gamma\cap \+L}}(q-\abs{X})^\abs{V_\gamma\cap \+R}},\label{eq:midstep-2}
  \end{align}
  where $\+C_\gamma(G[V_\gamma])$ is the set of colorings $\sigma\in [q]^{V_\gamma}$ that is proper in the graph $G[V_\gamma]$, $\sigma_{\ol \gamma}=\conf$, $\sigma(N(\ol \gamma)\cap \+L)\subseteq X$ and $\sigma(N(\ol \gamma)\cap \+R)\subseteq [q]\setminus X$.
% it only uses colors from $X$ on the vertex set $N(V_\gamma)\cap \+L$ and it only uses colors from $[q]\setminus X$ on the vertex set $N(V_\gamma)\cap \+R$. Formally, it is the set
% \begin{align*}
%   %\+C_\gamma(G[\ol V]) =
%   \set{\sigma\in [q]^{\ol{V_\gamma}}\cmid \text{proper}\land \sigma|_{V_\gamma} = \pi_\gamma \land \sigma(N(V_\gamma)\cap \+L)\subseteq X \land \sigma(N(V_\gamma)\cap \+R)\subseteq ([q]\setminus X)}.
% \end{align*}
  Since $\Gamma$ is compatible, for any different $\gamma_1\in \Gamma$ and $\gamma_2\in\Gamma$, it holds that $d_G(\ol{\gamma_1},\ol{\gamma_2}) >2$ and thus $V_{\gamma_1}\cap V_{\gamma_2}=\emptyset$.
  Let $l = n - \abs{(\sqcup_{\gamma\in\Gamma}V_\gamma)\cap \+L}$ and $r= n - \abs{(\sqcup_{\gamma\in\Gamma}V_\gamma)\cap \+R}$. Then we have
  \begin{align*}
    \abs{\clusterOfColoring[\Gamma]}
    &=\abs{X}^l (q-\abs{X})^r \prod_{\gamma\in \Gamma} \abs{\+C_\gamma(G[V_\gamma])} \\
    &=\abs{X}^n (q-\abs{X})^n \prod_{\gamma\in\Gamma}\frac{\abs{\+C_{\gamma}(G[ V_\gamma])}}{\abs{X}^{\abs{V_\gamma\cap \+L}}(q-\abs{X})^\abs{V_\gamma\cap \+R}}\\
    &=\abs{X}^n (q-\abs{X})^n \prod_{\gamma\in \Gamma}w(\gamma,1),
  \end{align*}
  where the first step follows from the definition of $\+C_\gamma(G[V_\gamma])$, the second step follows from that $V_{\gamma_1}\cap V_{\gamma_2} = \emptyset$ for any different $\gamma_1,\gamma_2\in \Gamma$ and the last step follows from \Cref{eq:midstep-2}.
\end{proof}

\begin{lemma} \label{lem:decayRateOfColorings}
  For $q\ge 3, \Delta\ge \lowerBoundOfDeltaForColorings, G\in \graphClassOfColorings, \emptyset\subsetneq X\subsetneq [q]$ with $\abs{X}\in \set{\ul q, \ol q}$ and $\gamma\in \Gamma^*(G)$,
  \begin{align*}
    w(\gamma,1) \le
    %\max(2,\ul q)^{-\abs{\gamma}} \tp{1-1/{\ol q}}^{(\beta - 1)\abs{\gamma}/2}.
    \tp{1-1/{\ol q}}^{(\beta-1)\aabs{\gamma}}.
  \end{align*}
  As a corollary, for any compatible $\Gamma\subseteq \Gamma^*(G)$,
  \begin{align*}
    %\frac{\abs{\clusterOfColoring[\Gamma]}}{\abs{X}^n (q-\abs X)^n} \le
    %\max(2,\ul q)^{-k} \tp{1-1/{\ol q}}^{(\beta - 1)k/2}.
    \prod_{\gamma\in \Gamma}w(\gamma, 1) \le
    \tp{1-1/{\ol q}}^{(\beta-1)\aabs{\Gamma}}.
  \end{align*}
  %TODO $q=3$ and $q\ge 4$ would be different here, maybe we can optimize the case $q\ge 4$
\end{lemma}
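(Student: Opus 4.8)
The plan is to bound $w(\gamma,1)$ directly from its definition $w(\gamma,1)=\abs{\clusterOfColoring[\gamma]}/\tp{\abs X^{n}(q-\abs X)^{n}}$ by overcounting $\abs{\clusterOfColoring[\gamma]}$ vertex by vertex. Recall that a coloring $\sigma\in\clusterOfColoring[\gamma]$ is pinned to $\conf$ on $\ol\gamma$, must use a color from $X$ on each vertex of $\+L\setminus\ol\gamma$, and a color from $[q]\setminus X$ on each vertex of $\+R\setminus\ol\gamma$. The place where decay is produced is that the polymer support lies in the set of deviated vertices, so $\conf(u)\in X$ for every $u\in\ol\gamma\cap\+R$ and $\conf(u)\in[q]\setminus X$ for every $u\in\ol\gamma\cap\+L$. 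Consequently a vertex $v\in N_G(\ol\gamma)\cap\+L$, which by bipartiteness is $G$-adjacent to some $u\in\ol\gamma\cap\+R$ with $\conf(u)\in X$, has at most $\abs X-1$ admissible colors; symmetrically each vertex of $N_G(\ol\gamma)\cap\+R$ has at most $(q-\abs X)-1$ admissible colors. Writing $a=\abs{\ol\gamma\cap\+L}$, $b=\abs{\ol\gamma\cap\+R}$, $a'=\abs{N_G(\ol\gamma)\cap\+L}$ and $b'=\abs{N_G(\ol\gamma)\cap\+R}$, this overcount gives
\[
  \abs{\clusterOfColoring[\gamma]}\le(\abs X-1)^{a'}\,\abs X^{\,n-a-a'}\,(q-\abs X-1)^{b'}\,(q-\abs X)^{\,n-b-b'},
\]
and hence, using $\abs X\ge1$ and $q-\abs X\ge1$,
\[
  w(\gamma,1)\le\abs X^{-a}\tp{1-\tfrac1{\abs X}}^{a'}(q-\abs X)^{-b}\tp{1-\tfrac1{q-\abs X}}^{b'}\le\tp{1-\tfrac1{\abs X}}^{a'}\tp{1-\tfrac1{q-\abs X}}^{b'}.
\]

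Next I would use $\abs X\in\set{\ul q,\ol q}$, which forces both $\abs X\le\ol q$ and $q-\abs X\le\ol q$; since $\ol q\ge2$, each of $1-\tfrac1{\abs X}$ and $1-\tfrac1{q-\abs X}$ lies in $(0,1)$ and is at most $1-\tfrac1{\ol q}$, so $w(\gamma,1)\le\tp{1-\tfrac1{\ol q}}^{a'+b'}=\tp{1-\tfrac1{\ol q}}^{\abs{N_G(\ol\gamma)}}$. The remaining ingredient is $\abs{N_G(\ol\gamma)}\ge(\beta-1)\aabs\gamma$: since $G$ is bipartite, $N_G(\ol\gamma)\cap\+R=N_G(\ol\gamma\cap\+L)\setminus(\ol\gamma\cap\+R)$ and $N_G(\ol\gamma)\cap\+L=N_G(\ol\gamma\cap\+R)\setminus(\ol\gamma\cap\+L)$, so applying the expansion property (valid because $G\in\graphClassAtHighFugacity$) to $\ol\gamma\cap\+L$ and to $\ol\gamma\cap\+R$, each of size $\le\aabs\gamma<\alpha n$, yields $b'\ge\beta a-b$ and $a'\ge\beta b-a$, whence $a'+b'\ge(\beta-1)(a+b)=(\beta-1)\aabs\gamma\ge0$. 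As $1-\tfrac1{\ol q}\in(0,1)$, enlarging the exponent only shrinks the bound, so $w(\gamma,1)\le\tp{1-\tfrac1{\ol q}}^{(\beta-1)\aabs\gamma}$. For the corollary, a compatible $\Gamma$ has pairwise disjoint supports, so $\aabs\Gamma=\sum_{\gamma\in\Gamma}\aabs\gamma$ and multiplying the per-polymer bounds gives $\prod_{\gamma\in\Gamma}w(\gamma,1)\le\tp{1-\tfrac1{\ol q}}^{(\beta-1)\aabs\Gamma}$.

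I expect the only delicate step to be the vertex-by-vertex overcount: one must check that the $-1$ saving at each boundary vertex of $\ol\gamma$ is legitimate, i.e.\ that the $\ol\gamma$-neighbor responsible for the restriction is $\conf$-colored from exactly the side-palette ($X$ on the left, $[q]\setminus X$ on the right) that the boundary vertex is itself confined to — this is where bipartiteness and the definition of the deviated set enter. One should also check that the bound survives the degenerate cases $\abs X=1$ (possible only for $q=3$) and $q-\abs X=1$: there a factor such as $(\abs X-1)^{a'}$ or $(q-\abs X-1)^{b'}$ vanishes whenever the matching exponent is positive, forcing $w(\gamma,1)=0$, so the claimed inequality holds trivially.
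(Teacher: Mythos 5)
Your argument is correct and is essentially the paper's own proof: bound $\abs{\clusterOfColoring[\gamma]}$ by giving each vertex of $N_G(\ol\gamma)$ one fewer color than its side-palette (the paper does this for $\abs{X}=\ul q$ and invokes symmetry, yielding $(1-1/\ul q)^{l}(1-1/\ol q)^{r}\le(1-1/\ol q)^{l+r}$), and then apply the $(\beta-1)$-expansion of $\ol\gamma$ in $G$, which is exactly \Cref{lem:betaMinus1Expansion} that you re-derive inline. Your extra care about why each boundary vertex legitimately loses a color, and about the degenerate palettes of size $1$, is sound but not a different method.
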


\begin{proof}%[Proof of \Cref{lem:decayRateOfColorings}]
  %Fix $q\ge 3,\Delta\ge \lowerBoundOfDeltaForColorings$ and $G=(\+L,\+R,E)\in\graphClassOfColorings$ with $n$ vertices on both sides.
  With out loss of generality, we fix $\emptyset\subsetneq X\subsetneq [q]$ with $\abs X = \ul q$ and the other case (if exist) is symmetric.
  Fix $\gamma\in\Gamma^*$.
  Since $G$ is an $(\alpha,\beta)$-expander and $\abs{\ol\gamma} \le \alpha n$, it follows from \Cref{lem:betaMinus1Expansion} that $\abs{N(\ol\gamma)} \ge (\beta-1)\abs{\ol \gamma}$.
  Let $l = \abs{N(\ol\gamma)\cap \+L}$ and $r = \abs{N(\ol\gamma)\cap \+R}$. Then
  \begin{align*}
    w(\gamma,1)
    = \frac{\abs{\clusterOfColoring[\gamma]}}{\abs{X}^n(q-\abs X)^n}
    \le \frac{\ul q^{n-l}(\ul q - 1)^l \ol q^{n-r} (\ol q - 1)^r}{\ul q^n \ol q^n}
    &\le (1-1/\ol q)^{l+r}\\
    &\le (1-1/\ol q)^{(\beta-1) \aabs{\gamma}}.
  \end{align*}
  For any compatible $\Gamma$, it holds that $\aabs{\Gamma} = \sum_{\gamma\in \Gamma}\aabs{\gamma}$. Thus
  \[
    \prod_{\gamma\in \Gamma} w(\gamma,1)\le \prod_{\gamma\in\Gamma}(1-1/\ol q)^{(\beta -1)\aabs{\gamma}}= (1-1/\ol q)^{(\beta-1)\aabs{\Gamma}}.
    \qedhere
  \]
\end{proof}

\begin{lemma}\label{lem:betaMinus1Expansion}
  For $\Delta \ge 3$ and $G=(\+L,\+R,E)\in \graphClassAtHighFugacity$ with $n$ vertices on both sides, $\abs{N_G(U)}\ge (\beta - 1)\abs{U}$ for all $U\subseteq \+L\cup\+R$ with $\abs{U}\le \alpha n$.
  %In particular, if $\beta \ge 2$, then $\abs{N_G(U)} \ge \abs{U}$.
\end{lemma}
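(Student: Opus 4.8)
The plan is to reduce the claim for a set $U$ that may straddle both sides of the bipartition to the one-sided expansion built into the definition of $\graphClassAtHighFugacity$. First I would split $U$ as $U = U_\+L \sqcup U_\+R$ with $U_\+L = U \cap \+L$ and $U_\+R = U \cap \+R$, and note $\abs{U_\+L}, \abs{U_\+R} \le \abs{U} \le \alpha n$, so the $(\alpha,\beta)$-expansion property applies to each piece separately and yields $\abs{N_G(U_\+L)} \ge \beta \abs{U_\+L}$ and $\abs{N_G(U_\+R)} \ge \beta \abs{U_\+R}$.

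Next I would relate $N_G(U)$ to these two one-sided neighbourhoods. Because $G$ is bipartite, $N_G(U_\+L) \subseteq \+R$ and $N_G(U_\+R) \subseteq \+L$, so the two sets are disjoint and lie on opposite sides. A vertex outside $U$ is at distance $1$ from $U$ precisely when it is at distance $1$ from $U_\+L$ or from $U_\+R$; hence $N_G(U) = \bigl(N_G(U_\+L) \cup N_G(U_\+R)\bigr) \setminus U$. Removing at most $\abs{U}$ vertices and using disjointness of the two sides then gives
\begin{align*}
  \abs{N_G(U)} \;\ge\; \abs{N_G(U_\+L)} + \abs{N_G(U_\+R)} - \abs{U} \;\ge\; \beta\bigl(\abs{U_\+L}+\abs{U_\+R}\bigr) - \abs{U} \;=\; (\beta-1)\abs{U},
\end{align*}
which is exactly the desired bound.

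There is no real obstacle; the only point that needs care is the one responsible for the loss of a full unit in the neighbour-expansion factor: although $N_G(U_\+L)$ is disjoint from $U_\+L$ by the paper's convention, it may still intersect $U_\+R$ (and symmetrically), so it is the subtraction of $\abs{U}$ that accounts for this, and this is precisely why the guaranteed expansion drops from factor $\beta$ to factor $\beta-1$. The hypothesis $\Delta \ge 3$ plays no role here beyond ensuring the class $\graphClassAtHighFugacity$ and the parameters $\alpha,\beta$ are in the intended regime.
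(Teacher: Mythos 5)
Your proof is correct and follows essentially the same route as the paper: split $U$ into its two sides, apply the one-sided expansion to each piece, use that the two neighbourhoods lie on opposite sides (hence are disjoint), and absorb the loss from vertices of $N_G(U_\+L)\cup N_G(U_\+R)$ that fall back inside $U$, which costs at most $\abs{U}$ and accounts for the drop from $\beta$ to $\beta-1$. The paper's computation is identical in substance, merely organized as $\abs{N(U)}=\abs{N(U\cap\+L)\setminus U}+\abs{N(U\cap\+R)\setminus U}$ with the subtraction split side by side.
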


\begin{proof}
  It follows from the expansion property that
  \begin{align*}
    \abs{N(U)} &= \abs{N(U\cap \+L)\setminus U} + \abs{N(U\cap \+R)\setminus U} \\
    &\ge \tp{\abs{N(U\cap \+L)} - \abs{U\cap \+R}} + \tp{\abs{N(U\cap \+R)} - \abs{U\cap \+L}} \\
    &\ge \tp{\beta \abs{U\cap \+L} - \abs{U\cap \+R}} + \tp{\beta \abs{U\cap \+R} - \abs{U\cap \+L}} \\
    &= (\beta - 1)\abs{U}. \qedhere
  \end{align*}
\end{proof}

\subsection{Approximating the partition function of the polymer model}

%In this section, we discuss how to approximate $Z_X(G,1)$ for $X$ with $\abs{X}=\ol q$ or $\abs X = \ul q$. We have the following result.

\begin{lemma}\label{lem:algOfColoringsPolymerPF}
  For $q\ge 3$ and $\Delta \ge \lowerBoundOfDeltaForColorings$, there is an FPTAS for $\Xi(1)$ for all $G\in \graphClassOfColorings$ and $X \subseteq [q]$ with $\abs{X}\in \set{\ul q, \ol q}$.
\end{lemma}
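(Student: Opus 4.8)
The plan is to follow the proof of \Cref{lem:algOfISPolymerPF}: we invoke the generic algorithmic result \Cref{thm:alg}, feeding it the graph $G^2$ (constructible from $G$ in polynomial time) together with the polymer model $(\Gamma^*_X(G),w)$, and we check its four hypotheses.

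First I would dispatch the three computational conditions. The degree of $\Xi(z)$ in $z$ is $\max_{\Gamma\in\+S(\Gamma^*)}\aabs{\Gamma}\le 2n=O(\abs G)$. For the coefficient $a_\gamma$ of $z^{\aabs\gamma}$ in $w(\gamma,z)$: by \Cref{lem:polymerIntuition} this equals the ratio $\abs{\+C_\gamma(G[V_\gamma])}\big/\bigl(\abs X^{\abs{V_\gamma\cap\+L}}(q-\abs X)^{\abs{V_\gamma\cap\+R}}\bigr)$ with $V_\gamma=\ol\gamma\sqcup N_G(\ol\gamma)$, and the numerator is computed by the explicit product formula over $V_\gamma$, so $a_\gamma$ is computable in time $\exp\bigl(O(\aabs\gamma+\log_2\abs G)\bigr)$; moreover $a_\gamma>0$ since every $\gamma\in\Gamma^*$ arises from an actual proper coloring. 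To list all $\gamma\in\Gamma^*$ with $\ol\gamma=G'$, we enumerate the at most $q^{\abs{G'}}$ maps $\conf\colon V(G')\to[q]$ sending $\+L\cap V(G')$ into $[q]\setminus X$ and $\+R\cap V(G')$ into $X$, keeping those that are proper on $G'$; as in \Cref{lem:algOfISPolymerPF} we consult the original graph $G$ for adjacencies, since $G^2$ may have lost the relevant connectivity information. This runs in $\exp(O(\abs{G'}))$ time.

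The remaining condition---that there is $R>1$ with $\Xi(z)\neq 0$ whenever $\abs z<R$, for all the relevant $G$ and $X$---is the heart of the matter, and I would isolate it as an analogue of \Cref{lem:zeroFreeOfISPolymerPF}, proved through the Koteck\'y--Preiss condition (\Cref{lem:KP-condition}) with $a(\gamma)=\aabs\gamma$. Fix $\gamma^*$. Any $\gamma\not\sim\gamma^*$ has $d_{G^2}(\ol\gamma,\ol{\gamma^*})\le 1$, so $\ol\gamma$ meets $\ol{\gamma^*}\cup N_{G^2}(\ol{\gamma^*})$, a set of at most $\Delta^2\aabs{\gamma^*}$ vertices; for a fixed such vertex $v$ and size $k$, \Cref{lem:EnumSubgraph} bounds by $(e\Delta^2)^{k-1}$ the connected supports of size $k$ in $G^2$ through $v$, each support carries at most $\ol q^{\,k}$ admissible colorings (a deviated vertex has at most $\max(\abs X,q-\abs X)\le\ol q$ allowed colors), and $\abs{w(\gamma,z)}=w(\gamma,1)\abs z^{\aabs\gamma}\le(1-1/\ol q)^{(\beta-1)\aabs\gamma}\abs z^{\aabs\gamma}$ by \Cref{lem:decayRateOfColorings}. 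Combining,
\[
  \sum_{\gamma:\gamma\not\sim\gamma^*} e^{\aabs\gamma}\abs{w(\gamma,z)}
  \le \Delta^2\aabs{\gamma^*}\sum_{k\ge 1}(e\Delta^2)^{k-1}\bigl(e\,\ol q\,(1-1/\ol q)^{\beta-1}R\bigr)^{k}
  = \frac{y}{e(1-y)}\,\aabs{\gamma^*},
\]
where $y\triangleq e^2\Delta^2\,\ol q\,(1-1/\ol q)^{\beta-1}R$, so it suffices to show $y<1/2$, whence the right side is $<\aabs{\gamma^*}$, as required. Using $\ln(1-1/\ol q)\le -1/\ol q$ and $\beta=\valueOfBetaForColorings$,
\[
  \ln y\le 2+2\ln\Delta+\ln\ol q+\ln R-\frac{\Delta^{1/2}/3-1}{\ol q},
\]
and since $\Delta\ge\lowerBoundOfDeltaForColorings$ gives $\Delta^{1/2}\ge 10\ol q^{5}$, the negative term $\Delta^{1/2}/(3\ol q)\ge\tfrac{10}{3}\ol q^{4}$ swamps $2\ln\Delta+\ln\ol q+O(1)$; moreover $\Delta^{1/2}/(3\ol q)-2\ln\Delta$ is increasing on $[\lowerBoundOfDeltaForColorings,\infty)$ (its derivative is positive once $\Delta>144\ol q^{2}$), so $\ln y<\ln(1/2)$ holds for all $\Delta\ge\lowerBoundOfDeltaForColorings$ with, say, $R=2$.

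The step I expect to fight hardest is exactly this last inequality. In contrast with the independent-set case, enumerating polymers with a prescribed support incurs the extra combinatorial factor $\ol q^{\,k}$ from the colors placed on the deviation cluster, so the decay $(1-1/\ol q)^{\beta-1}$ delivered by \Cref{lem:decayRateOfColorings} must now beat $e^{2}\Delta^{2}\ol q R$ rather than $e^{2}\Delta^{2}R$; this is why $\beta$---and hence $\Delta$---must be taken large relative to $\ol q$, and verifying the resulting inequality uniformly over $\Delta\ge\lowerBoundOfDeltaForColorings$ and $q\ge 3$ is the only genuinely non-routine computation. Granting the four hypotheses, \Cref{thm:alg} immediately yields the desired FPTAS for $\Xi(1)$.
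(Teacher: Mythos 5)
Your proposal is correct and follows essentially the same route as the paper: invoke \Cref{thm:alg} on $G^2$, dispatch the three computational conditions using the explicit product formula for $\abs{\+C_\gamma(G)}$ and enumeration of at most $\ol q^{\abs{G'}}$ admissible colorings per support, and verify zero-freeness via the Koteck\'y--Preiss condition with $a(\gamma)=\aabs\gamma$, $R=2$, the same enumeration scheme (\Cref{lem:EnumSubgraph} plus the $\ol q^k$ color factor), and the same decay bound from \Cref{lem:decayRateOfColorings}, which is exactly the paper's \Cref{lem:zeroFreeOfColoringsPolymerPF}. The only (harmless) discrepancies are that you bound $\abs{\ol{\gamma^*}\cup N_{G^2}(\ol{\gamma^*})}$ by $\Delta^2\aabs{\gamma^*}$ where the paper uses $(\Delta^2+1)\aabs{\gamma^*}$, and you settle for $y<1/2$ where the paper shows $y<2^{-10}$; both gaps are absorbed by the ample slack in the final numerical estimate.
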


\begin{proof}
  We use the FPTAS in \Cref{thm:alg} to design the FPTAS we need.
  To this end, we generate a graph $G^2$ in polynomial time in $\abs{G}$ for any $G\in\graphClassOfColorings$.
  We use this new graph $G^2$ as input to the FPTAS in \Cref{thm:alg}.
  It is straightforward to verify the first three conditions in \Cref{thm:alg}, only with the exception that the information of $G^2$ may not be enough because certain connectivity information in $G$ is discarded in $G^2$.
  Nevertheless, we can use the original graph $G$ whenever needed and thus the first three conditions are satisfied.
  For the last condition, \Cref{lem:zeroFreeOfColoringsPolymerPF} verifies it.
\end{proof}

\begin{lemma} \label{lem:zeroFreeOfColoringsPolymerPF}
  There is a constant $R>1$ such that for all $q\ge 3$, $\Delta\ge \lowerBoundOfDeltaForColorings$, $G\in\graphClassOfColorings$ and $X\subseteq [q]$ with $\abs{X}\in \set{\ul q, \ol q}$, $\Xi(z)\neq 0$ for all $z\in \=C$ with $\abs{z} < R$.
\end{lemma}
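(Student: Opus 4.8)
The plan is to verify the Kotecký--Preiss condition of \Cref{lem:KP-condition} for the polymer model $(\Gamma^*,w)$ on $G^2$, exactly as in the corresponding zero-freeness lemmas for independent sets; this immediately gives $\Xi(z)\neq 0$. I would fix the universal constant $R=2$ and take $a(\gamma)=\aabs{\gamma}$ for every $\gamma\in\Gamma^*$. It then suffices to show that for every $\gamma^*\in\Gamma^*$ and every $z\in\=C$ with $\abs z<R$,
\[
  \sum_{\gamma\cmid\gamma\not\sim\gamma^*}e^{\aabs\gamma}\abs{w(\gamma,z)}\;\le\;\aabs{\gamma^*}.
\]

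To estimate the left-hand side I would enumerate each incompatible polymer $\gamma\not\sim\gamma^*$ at least once in four steps. Since $d_{G^2}(\ol\gamma,\ol{\gamma^*})\le 1$, the support $\ol\gamma$ contains a vertex $v\in\ol{\gamma^*}\cup N_{G^2}(\ol{\gamma^*})$, and there are at most $(\Delta^2+1)\aabs{\gamma^*}$ such vertices because $G^2$ has maximum degree at most $\Delta^2$; pick such a $v$. Next pick the size $k\in\set{1,\dots,\floor{\alpha n}}$, then the connected subgraph $\ol\gamma$ of $G^2$ of size $k$ containing $v$, of which there are at most $(e\Delta^2)^{k-1}$ by \Cref{lem:EnumSubgraph}. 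Finally pick the coloring $\conf$ on $\ol\gamma$: this is the one genuinely new point relative to the independent-set argument, but it is harmless, since for any polymer every vertex of $\ol\gamma\cap\+L$ receives a color from $[q]\setminus X$ and every vertex of $\ol\gamma\cap\+R$ a color from $X$, both sets of size at most $\ol q$, so there are at most $\ol q^{\,k}$ such colorings. Using $\abs{w(\gamma,z)}=w(\gamma,1)\abs z^{\aabs\gamma}$ together with the decay estimate $w(\gamma,1)\le(1-1/\ol q)^{(\beta-1)\aabs\gamma}$ from \Cref{lem:decayRateOfColorings} (and $\abs z<R$), I obtain
\[
  \sum_{\gamma\cmid\gamma\not\sim\gamma^*}e^{\aabs\gamma}\abs{w(\gamma,z)}
  \;\le\;(\Delta^2+1)\aabs{\gamma^*}\sum_{k\ge 1}(e\Delta^2)^{k-1}\tp{e\,\ol q\,(1-1/\ol q)^{\beta-1}R}^{k}
  \;=\;\frac{\Delta^2+1}{e\Delta^2}\cdot\frac{y}{1-y}\cdot\aabs{\gamma^*},
\]
where $y=e^2\Delta^2\ol q\,(1-1/\ol q)^{\beta-1}R$ and the last equality is valid provided $y<1$. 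Since $(\Delta^2+1)/(e\Delta^2)\le 2/e$, it is enough to prove $y\le 1/2$, for then $\tfrac2e\cdot\tfrac{y}{1-y}\le\tfrac2e<1$ and the KP-condition holds.

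The remaining step is the elementary estimate $y\le 1/2$. Using $\ln(1-1/\ol q)\le -1/\ol q$, $R=2$ and $\beta=\valueOfBetaForColorings$, one has $\ln y\le 2+\ln 2+\ln\ol q+2\ln\Delta-(\Delta^{1/2}/3-1)/\ol q$. The map $\Delta\mapsto 2\ln\Delta-(\Delta^{1/2}/3-1)/\ol q$ has derivative $2/\Delta-1/(6\ol q\sqrt\Delta)$, which is negative once $\sqrt\Delta>12\ol q$, hence on the whole range $\Delta\ge\lowerBoundOfDeltaForColorings=100\ol q^{10}$ (as $\ol q\ge 2$); so I would evaluate it at $\Delta=\lowerBoundOfDeltaForColorings$, where $\sqrt\Delta=10\ol q^5$, to bound $\ln y$ above by $-\tfrac{10}{3}\ol q^{4}$ plus a term of size $O(\ln\ol q)$, which is comfortably negative (in fact $\ln y<-25$). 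Thus $y\le 1/2$ with enormous room to spare, completing the verification. The only care needed anywhere is this monotonicity bookkeeping and the insertion of the $\ol q^{\,k}$ factor in the enumeration; neither is a real obstacle, precisely because $\Delta^{1/2}/\ol q$ dwarfs $\log\Delta$ whenever $\Delta\ge 100\ol q^{10}$, which is also what makes the precise value of $R$ immaterial as long as $R>1$.
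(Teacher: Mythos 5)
Your proposal is correct and follows essentially the same route as the paper: the same choice $R=2$ and $a(\gamma)=\aabs{\gamma}$, the same four-step enumeration of incompatible polymers (anchor vertex, size, connected support via \Cref{lem:EnumSubgraph}, then at most $\ol q^{\,k}$ colorings), the same decay bound from \Cref{lem:decayRateOfColorings}, and the same final estimate showing $e^2\Delta^2\ol q(1-1/\ol q)^{\beta-1}R$ is tiny because $\Delta^{1/2}/(3\ol q)$ dominates $2\ln\Delta$ when $\Delta\ge\lowerBoundOfDeltaForColorings$. The only differences are cosmetic (you stop at $y\le 1/2$ where the paper pushes to $2^{-10}$, and you make the monotonicity check explicit), so nothing further is needed.
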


\begin{proof}
  Set $R=2$.
  %Fix $q\ge 3, \Delta\ge \lowerBoundOfDeltaForColorings,G=(\+L,\+R,E)\in \graphClassOfColorings$ with $n$ vertices on both sides and $\emptyset\subsetneq X\subsetneq [q]$ with $\abs{X}\in \set{\ul q, \ol q}$.
  For any $\gamma\in\Gamma^*$, let $a(\gamma)=\aabs{\gamma}$.
  We will verify that the KP-condition
  \begin{align}
    \sum_{\gamma\cmid \gamma\not\sim\gamma^*} e^{\aabs\gamma} \abs{w(\gamma,z)} \le \aabs{\gamma^*} \label{eq:KP-condition-1}
  \end{align}
  holds for any $\gamma^*\in \Gamma^*$ and any $\abs{z} < R$.
  It then follows from \Cref{lem:KP-condition} that $\Xi(z)\neq 0$ for any $\abs{z} < R$.
  Fix $\gamma^*\in \Gamma^*$.
  Recall that $d_{G^2}(\ol{\gamma}, \ol{\gamma^*}) \le 1$ for all $\gamma\not\sim \gamma^*$.
  Thus there is always a vertex $v\in \ol \gamma$ such that $v\in \ol{\gamma^*}\sqcup N_{G^2}(\ol{\gamma^*})$.
  The number of such vertices $v$ is at most $(\Delta^2+1) \aabs{\gamma^*}$.
  So to enumerate each $\gamma\neq \gamma^*$ at least once, we can
  \begin{enumerate}[label=\alph*)]
    \item first enumerate a vertex $v\in \ol{\gamma^*}\sqcup N_{G^2}(\ol{\gamma^*})$;
    \item then enumerate an integer $k$ from $1$ to $\floor{\alpha n}$;
    \item finally enumerate $\gamma$ with $v\in \ol \gamma$ and $\ol{\gamma}=k$.
  \end{enumerate}
  Since $\ol{\gamma}$ is connected in $G^2$, applying \Cref{lem:EnumSubgraph} and using \Cref{lem:decayRateOfColorings} to bound $\abs{w(\gamma,z)}$ we obtain
  \begin{align}
    \label{eq:midstep13}
    \sum_{\gamma:\gamma\not\sim\gamma^*} e^{\aabs{\gamma}}\abs{w(\gamma,z)}
    &\le
    (\Delta^2+1)\aabs{\gamma^*}\sum_{k=1}^{\floor{\alpha n}}(e\Delta^2)^{k-1} \ol q^k e^k (1-1/\ol q)^{(\beta - 1)k}\abs{z}^k.
  \end{align}
  Adding some extra nonnegative terms and using $\abs{z} < R$, we obtain
  \begin{align*}
    \text{\Cref{eq:midstep13}}
    &\le \frac{\Delta^2+1}{e\Delta^2}\aabs{\gamma^*}\sum_{k=1}^\infty \tp{e^2\Delta^2 \ol q (1-1/\ol q)^{\beta - 1}R}^k.
  \end{align*}
  Recall that $\beta=\valueOfBetaForColorings$ and $\Delta\ge \lowerBoundOfDeltaForColorings$.
  It holds that
  \begin{align*}
    e^2\Delta^2 \ol q (1-1/\ol q)^{\beta - 1}R
    &=
    \exp\tp{2 + 2\ln \Delta + \ln \ol q + \frac{1}{\ol q} +\ln R - \frac{\Delta^{1/2}}{3\ol q}} \\
    &\le
    \exp\tp{2 + 2\ln 100 + 21\ln \ol q + \frac{1}{\ol q} +\ln R - \frac{10}{3}\ol q^4} \\
    &\le
    \exp\tp{2 + 2\ln 100 + 21\ln 2 + \frac{1}{2} +\ln 1.1 - \frac{10}{3}2^4} \\
    &< 2^{-10},
  \end{align*}
  where the inequalities follow from the monotonicity of corresponding functions.
  Therefore
  \begin{align*}
    \text{\Cref{eq:midstep13}}
    \le \frac{\Delta^2+1}{e\Delta^2}\aabs{\gamma^*} \sum_{k=1}^\infty 2^{-10k}
    \le 2\aabs{\gamma^*} \frac{2^{-10}}{1-2^{-10}} < \aabs{\gamma^*},
  \end{align*}
  which proves \Cref{eq:KP-condition-1}.
\end{proof}

\subsection{Putting things together}

Using the results from previous parts, we obtain our main result for counting colorings.

{\renewcommand{\thetheorem}{\ref{thm:q-coloring}}
\begin{theorem}
  For $q\ge 3$ and $\Delta \ge \lowerBoundOfDeltaForColorings$, with high probability (tending to $1$ as $n\to \infty $) for a graph chosen uniformly at random from $\setG$, there is an FPTAS to count the number of $q$-colorings.
\end{theorem}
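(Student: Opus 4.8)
The plan is to mirror, almost verbatim, the three-layer argument used for independent sets in \Cref{sec:highFugacity}: first reduce from random graphs to the deterministic class $\graphClassOfColorings$, then approximate $\abs{\+C(G)}$ by polymer partition functions, and finally invoke the abstract algorithm of \Cref{thm:alg}. By \Cref{lem:almostOfColorings}, a uniformly random $G\sim\setG$ lies in $\graphClassOfColorings$ with probability tending to $1$, so it suffices to exhibit, for every fixed $q\ge 3$ and $\Delta\ge\lowerBoundOfDeltaForColorings$, an FPTAS for $\abs{\+C(G)}$ valid for all $G\in\graphClassOfColorings$. This is the coloring analog of \Cref{lem:algOfIS}, and I would isolate it as a separate lemma with its own pseudocode.

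To build that FPTAS I would chain the approximations already in hand. By \Cref{lem:strucApproxOfColorings}, the quantity $Z$ --- which equals $\binom{q}{\ul q}\abs{\clusterOfColoring[{[\ul q]}]}$ when $q$ is even and $\binom{q}{\ul q}\tp{\abs{\clusterOfColoring[{[\ul q]}]}+\abs{\clusterOfColoring[{[\ol q]}]}}$ when $q$ is odd --- is a $C^{-n}$-relative approximation to $\abs{\+C(G)}$. Each cluster $\clusterOfColoring$ appearing there has $\abs{X}\in\set{\ul q,\ol q}$, so by \Cref{lem:approxRepOfColorings} the number $\abs{X}^n(q-\abs X)^n\,\Xi(1)$ is a $C^{-n}$-relative approximation of $\abs{\clusterOfColoring}$, where $\Xi$ is the partition function of the polymer model $(\Gamma^*_X(G),w)$ defined in \Cref{sec:polymerOfColorings} on the graph $G^2$. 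Finally, by \Cref{lem:algOfColoringsPolymerPF} --- which rests on the zero-freeness of \Cref{lem:zeroFreeOfColoringsPolymerPF} together with \Cref{thm:alg} --- there is an FPTAS for $\Xi(1)$ on all such graphs.

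Assembling these pieces exactly as in the proof of \Cref{lem:algOfIS}: on input $(G,\eps)$ with $G\in\graphClassOfColorings$, if $n\le N$ or $\eps$ is below a fixed constant multiple of $C^{-n}$, compute $\abs{\+C(G)}$ by brute force in time $\tp{q^2}^n$, which is polynomial in $n/\eps$ in that regime since $C>1$ depends only on $q$. Otherwise, run the FPTAS for $\Xi(1)$ for each of the (at most two) relevant sets $X$ with error parameter $\eps'=\eps-O(C^{-n})$, multiply each output by the explicit prefactor $\abs{X}^n(q-\abs X)^n$, add the results, and multiply by $\binom{q}{\ul q}$. The $C^{-n}$ errors contributed by \Cref{lem:strucApproxOfColorings} and \Cref{lem:approxRepOfColorings} are absorbed by the slack between $\eps$ and $\eps'$, so the output lies within $e^{\pm\eps}$ of $\abs{\+C(G)}$; the running time is $\tp{n/\eps'}^{O(1)}=\tp{n/\eps}^{O(1)}$, so the procedure is an FPTAS. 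I would write this by reference to \Cref{lem:algOfIS}, pointing out only the two differences: color symmetry collapses the many clusters $\clusterOfColoring$ into the two representatives of sizes $\ul q$ and $\ol q$ (this is precisely the content packaged into $Z$ by \Cref{lem:strucApproxOfColorings}), and the polymer weights and decay bounds are the coloring versions of \Cref{sec:polymerOfColorings}.

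The main obstacle is essentially that there is no new obstacle: every genuinely technical estimate --- the structural fact that colorings far from all ground clusters and colorings double-counted by two clusters are exponentially rare (\Cref{lem:strucApprox1OfColorings}, \Cref{lem:strucApprox2OfColorings}, \Cref{lem:strucApprox3OfColorings}), the exact polymer representation (\Cref{lem:exactRepOfColorings}), the weight decay (\Cref{lem:decayRateOfColorings}), and the KP zero-freeness (\Cref{lem:zeroFreeOfColoringsPolymerPF}) --- has already been proved. What remains demands only careful bookkeeping: correctly accounting for the color-symmetry factor $\binom{q}{\ul q}$ so that only $\Xi$ for $\abs X\in\set{\ul q,\ol q}$ is needed, and verifying the error-budget and running-time accounting in the brute-force fallback regime, both of which are identical to the independent-set case.
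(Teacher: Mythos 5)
Your proposal is correct and follows exactly the paper's route: \Cref{lem:almostOfColorings} reduces to the class $\graphClassOfColorings$, and the FPTAS for $\abs{\+C(G)}$ is assembled by chaining \Cref{lem:strucApproxOfColorings}, \Cref{lem:approxRepOfColorings} and \Cref{lem:algOfColoringsPolymerPF} with the same brute-force fallback and error-budget bookkeeping as in \Cref{lem:algOfIS}, which is precisely the content of \Cref{lem:algOfColorings} and \Cref{alg:Colorings}.
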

\addtocounter{theorem}{-1} }

\begin{proof}
  This theorem follows from \Cref{lem:almostOfColorings} and \Cref{lem:algOfColorings}.
\end{proof}

\begin{algorithm}[htbp]
  \caption{Counting colorings for $q\ge 3$ and $\Delta\ge \lowerBoundOfDeltaForColorings$}
  \label{alg:Colorings}
  \begin{algorithmic}[1]
    \State \textbf{Input:} \emph{A graph $G=(\+L,\+R,E)\in \graphClassOfColorings$ with $n$ vertices on both sides and $\eps>0$}
    \State \textbf{Output:} \emph{$\widehat Z$ such that $\exp(-\eps)\widehat Z \le \abs{\+C(G)}\le \exp(\eps)\widehat Z$}
    \If {$n\le N$ or $\eps \le 2C^{-n}$}
      \State Use the brute-force algorithm to compute $\widehat Z \gets \abs{\+C(G)}$;
      \State Exit;
    \EndIf
    \State $\eps'\gets \eps - C^{-n}$;
    \State Use the FPTAS in \Cref{lem:algOfColoringsPolymerPF} to obtain $\widehat Z_1$, an $\eps'$-relative approximation to the partition function $\Xi(z)$ at $z=1$ of the polymer model $(\Gamma^*_{[\ul q]}(G),w)$.
    \If {$q$ is even}
      \State $\widehat  Z \gets \binom{q}{\ul q} {\ul q}^{2n}\widehat Z_1$;
    \Else
      \State Use the FPTAS in \Cref{lem:algOfColoringsPolymerPF} to obtain $\widehat Z_2$, an $\eps'$-relative approximation to the partition function $\Xi(z)$ at $z=1$ of the polymer model $(\Gamma^*_{[\ol q]}(G),w)$.
      \State $\widehat Z \gets \binom{q}{\ul q} \tp{\ul q \ol q}^{n}\tp{\widehat Z_1 + \widehat Z_2}$;
    \EndIf
  \end{algorithmic}
\end{algorithm}

\begin{lemma}\label{lem:algOfColorings}
  For $q\ge 3$ and $\Delta\ge \lowerBoundOfDeltaForColorings$, there is an FPTAS for $\abs{\+C(G)}$ for all $G\in \graphClassOfColorings$.
\end{lemma}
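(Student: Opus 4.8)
The plan is to mirror the proof of \Cref{lem:algOfIS}, with \Cref{alg:Colorings} as the underlying algorithm. First I would assemble the structural and polymer approximations. \Cref{lem:strucApproxOfColorings} provides constants $C_1=C_1(q)>1$ and $N_1=N_1(q)$ so that the quantity $Z$ defined there — equal to $\binom{q}{\ul q}\abs{\clusterOfColoring[{[\ul q]}]}$ when $q$ is even and to $\binom{q}{\ul q}\tp{\abs{\clusterOfColoring[{[\ul q]}]}+\abs{\clusterOfColoring[{[\ol q]}]}}$ when $q$ is odd — is a $C_1^{-n}$-relative approximation to $\abs{\+C(G)}$. \Cref{lem:approxRepOfColorings} provides constants $C_2=C_2(q)>1$ and $N_2=N_2(q)$ so that $\abs{X}^n(q-\abs X)^n\Xi_X(1)$ is a $C_2^{-n}$-relative approximation to $\abs{\clusterOfColoring}$ for each $X$ with $\abs X\in\set{\ul q,\ol q}$, where $\Xi_X$ is the partition function of the polymer model $(\Gamma^*_X(G),w)$. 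Since $\abs{[\ul q]}\tp{q-\abs{[\ul q]}}=\ul q\,\ol q$ (which is $\ul q^{2n}$-ready when $q$ is even, as then $\ul q=\ol q$) and likewise for $[\ol q]$, composing these two lemmas shows that the quantity $\widehat Z$ computed in \Cref{alg:Colorings} — namely $\binom{q}{\ul q}\ul q^{2n}\Xi_{[\ul q]}(1)$ in the even case and $\binom{q}{\ul q}\tp{\ul q\,\ol q}^n\tp{\Xi_{[\ul q]}(1)+\Xi_{[\ol q]}(1)}$ in the odd case — is a $\tp{C_1^{-n}+C_2^{-n}}$-relative, hence a $C^{-n}$-relative, approximation to $\abs{\+C(G)}$ for a suitable $C=C(q)>1$ and all $n>N\ge\max(N_1,N_2)$, exactly as the constants are merged in the proof of \Cref{lem:strucApproxOfIS}.

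Next I would turn this into an FPTAS by the error-budgeting device used in \Cref{lem:algOfIS}. On input $(G,\eps)$ with $G\in\graphClassOfColorings$: if $n\le N$ or $\eps\le 2C^{-n}$, run the brute-force algorithm to compute $\abs{\+C(G)}$ exactly, which costs $q^{2n}=2^{O(n)}$ and is bounded by $\tp{n/\eps}^{O(1)}$ in this regime since then $1/\eps\ge C^n/2$; otherwise set $\eps'=\eps-C^{-n}>\eps/2$ and invoke the FPTAS of \Cref{lem:algOfColoringsPolymerPF} to obtain an $\eps'$-relative approximation $\widehat Z_1$ to $\Xi_{[\ul q]}(1)$ and, in the odd case, an $\eps'$-relative approximation $\widehat Z_2$ to $\Xi_{[\ol q]}(1)$, then output $\widehat Z$ as in the algorithm. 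Multiplying an $\eps'$-relative approximation of $\Xi$ by the exact scalars $\binom{q}{\ul q}\ul q^{2n}$ or $\binom{q}{\ul q}\tp{\ul q\,\ol q}^n$, and (in the odd case) adding two such approximations, preserves an $\eps'$-relative approximation to $\widehat Z$; combined with the $C^{-n}$-relative approximation of $\widehat Z$ to $\abs{\+C(G)}$ and the fact that relative errors compose additively in the exponent, this yields an $\tp{\eps'+C^{-n}}=\eps$-relative approximation to $\abs{\+C(G)}$. The running time in the main branch is $\tp{n/\eps'}^{O(1)}\le\tp{2n/\eps}^{O(1)}=\tp{n/\eps}^{O(1)}$, so \Cref{alg:Colorings} is an FPTAS.

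The points I would spell out carefully are essentially bookkeeping: that the two structural lemmas' constants merge into one pair; that the scaling factors in \Cref{alg:Colorings} are precisely $\abs{X}^n(q-\abs X)^n$ for $X\in\set{[\ul q],[\ol q]}$, equal to $\ul q^n\ol q^n$ and to $\ul q^{2n}$ when $q$ is even; and that summing the two polymer partition functions in the odd case is legitimate because $\binom{q}{\ul q}$ counts the subsets of $[q]$ of size $\ul q$ (equivalently of size $\ol q$) and $\abs{\clusterOfColoring}$ depends on $X$ only through $\abs X$ by color symmetry, as already used in \Cref{lem:strucApprox3OfColorings}. No new idea is needed; the only mild obstacle is tracking the relative error through the exact scalar multiplications and the two-term sum, for which the proof of \Cref{lem:algOfIS} is the template.
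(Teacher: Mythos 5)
Your proposal is correct and follows essentially the same route as the paper's own proof: it combines \Cref{lem:strucApproxOfColorings} and \Cref{lem:approxRepOfColorings} into a single $C^{-n}$-relative approximation, runs \Cref{alg:Colorings} with the brute-force fallback for $n\le N$ or $\eps\le 2C^{-n}$ and the error budget $\eps'=\eps-C^{-n}$ otherwise, and closes with the same running-time bookkeeping as in \Cref{lem:algOfIS}. No substantive differences.
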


\begin{proof}
  First we state our algorithm.
  See \Cref{alg:Colorings} for a pseudocode description.
  Fix $q\ge 3$ and $\Delta \ge \lowerBoundOfDeltaForColorings$.
  The input is a graph $G=(\+L,\+R,E)\in \graphClassOfColorings$ and an approximation parameter $\eps > 0$.
  The output is a number $\widehat Z$ to approximate $\abs{\+C(G)}$.
  We use $\Xi_{1}(z)$ and $\Xi_{2}(z)$ to denote the partition functions of the polymer models $(\Gamma^*_{[\ul q]}(G),w)$ and $(\Gamma^*_{[\ol q]}(G),w)$, respectively.
  Let $N_1,C_2,N_2,C_2$ be the constants in \Cref{lem:strucApproxOfColorings} and \Cref{lem:approxRepOfColorings}, respectively.
  Let $Z= \binom{q}{\ul q}\abs{\+C_{[\ul q]}(G)}$ if $q$ is even, otherwise $Z = \binom{q}{\ul q}\abs{\+C_{[\ul q]}(G) + \+C_{[\ol q]}(G)}$.
  These two lemmas show that $Z$ is a $C_1^{-n}+C_2^{-n}\le 2\min(C_1,C_2)^{-n}\le C^{-n}$-relative approximation to $\abs{\+C(G)}$ for another constant $C > 1$ and all $n > N \ge \max(N_1,N_2)$ where $N$ is another sufficiently large constant.
  If $n\le N$ or $\eps \le 2C^{-n}$, we use the brute-force algorithm to compute $\abs{\+C(G)}$.
  If $\eps > 2C^{-n}$, we apply the FPTAS in \Cref{lem:algOfColoringsPolymerPF} with approximation parameter $\eps' = \eps-C^{-n}$ to obtain $\widehat Z_1$, an $\eps'$-relative approximation to $\Xi_1(1)$. If $q$ is even, then $\widehat Z=\binom{q}{\ul q} {\ul q}^{2n}\widehat Z_1$ is the output of the algorithm. Otherwise, we apply again the FPTAS in \Cref{lem:algOfColoringsPolymerPF} with approximation parameter $\eps' = \eps-C^{-n}$ to obtain $\widehat Z_2$, an $\eps'$-relative approximation to $\Xi_2(1)$. And the output is $\widehat Z = \binom{q}{\ul q} \tp{\ul q \ol q}^{n}\tp{\widehat Z_1 + \widehat Z_2}$. It is clear that $\exp(-\eps) \widehat Z \le \abs{\+C(G)}\le \exp(\eps)\widehat Z$.

  Then we show that \Cref{alg:Colorings} is indeed an FPTAS.
  It is required that the running time of our algorithm is bounded by $\tp{n/\eps}^{C_3}$ for some constant $C_3$ and for all $n > N_3$ where $N_3$ is a constant.
  Let $N_3=N$.
  If $\eps \le 2C^{-n}$, the running time of the algorithm would be $q^n \le (nC^n/q)^{C_3} \le \tp{n/\eps}^{C_3}$ for sufficient large $C_3$.
  If $\eps > 2C^{-n}$, the running time of the algorithm would be $\tp{n/\eps'}^{C_4}=\tp{n/(\eps - C^{-n})}^{C_4}\le \tp{2n/\eps}^{C_4} \le \tp{n/\eps}^{C_3}$ for sufficient large $C_3$, where $C_4$ is a constant from the FPTAS in \Cref{lem:algOfColoringsPolymerPF}.
\end{proof}

%\section{Concluding Remarks}

\bibliographystyle{alpha}
\bibliography{refs}

\end{document}